\documentclass[12pt]{article}
\usepackage{anyfontsize}
\usepackage{authblk}
\usepackage{amstext}
\usepackage{amssymb}
\usepackage{mathtools}
\usepackage{amsthm}
\usepackage{bibentry}
\usepackage{comment}
\usepackage{array}
\usepackage{booktabs}
\usepackage{multirow}
\usepackage{xfrac}
\usepackage{subfig}
\usepackage{graphicx}
\usepackage{tabularx}
\usepackage{float}
\usepackage[compact]{titlesec}
\usepackage{makecell}

\usepackage{natbib}
\usepackage{caption}
\usepackage{enumerate}
\usepackage{enumitem}

\usepackage{amsmath}
\usepackage{mathrsfs}
\usepackage{bm}
\usepackage{dsfont}
\usepackage{setspace}
\usepackage{color,xcolor}                          

\usepackage[colorlinks=true,allcolors=blue]{hyperref}

\usepackage[capitalize,noabbrev]{cleveref}
\crefname{assumption}{assumption}{assumptions}
\Crefname{assumption}{Assumption}{Assumptions}
\crefname{equation}{Eqn}{Eqns}
\Crefname{equation}{Equation}{Equations}

\DeclareMathOperator*{\argmax}{arg\  max}

\newtheorem{theorem}{Theorem}
\newtheorem{proposition}{Proposition}

\newtheorem{lemma}{Lemma}
\newtheorem{assumption}{Assumption}
\newtheorem{definition}{Definition}

\newtheorem{remark}{Remark}

\usepackage[toc,page,header]{appendix}
\usepackage{minitoc}
\noptcrule 

\usepackage{xparse}

\newcommand{\ppi}{\bm\pi}

\newcommand{\varub}{\widehat{\mathrm{Var}}^{\mathrm{UB}}}
\newcommand{\mat}[1]{{\bm{#1}}}

\newcommand{\lWt}[1][]{\overline{\bm W}_{t#1}}

\newcommand{\lHt}[1][]{\overline{H}_{t#1}}

\newcommand{\lct}[2]{\overline{#1}_{#2}}

\newcommand{\dto}{\overset{d}{\to}}
\newcommand{\pto}{\overset{p}{\to}}
\newcommand{\shiftmean}{\frac{1}{T-M+1}\sum_{t=M}^T}
\DeclareMathOperator*{\plim}{plim}
\newcommand{\ttheta}{\bm \theta}
\newcommand{\ggamma}{\bm \gamma}
\newcommand{\bbeta}{\bm \beta}
\newcommand{\mmu}{\bm \mu}

\newcommand{\aat}{\bm A_t}

\newcommand{\ft}{\mathcal{F}_{t-1}}

\newcommand{\Var}{\mathbb{V}}
\newcommand{\Cov}{\mathrm{Cov}}
\newcommand{\E}{\mathbb{E}}
\renewcommand{\P}{\mathbb{P}}

\DeclareDocumentCommand{\hatV}{ O{} O{\bbeta} O{\ppi^{(M)}} }{%
  \hat{V}^{#1(#2)} (#3)
}
\DeclareDocumentCommand{\hatVPS}{ O{} O{\bbeta} O{\ppi^{(M)}} O{\ttheta} O{\hat}}{%
  \hat{V}^{#1(#2)} (#3;#5 #4)
}
\DeclareDocumentCommand{\hatVTruePS}{ O{} O{\bbeta} O{\ppi^{(M)}} O{\ttheta^\ast} }{%
  \hat{V}^{#1(#2)} (#3;#4)
}

\DeclareDocumentCommand{\tildeV}{ O{\bbeta} O{\ppi^{(M)}} O{}}{%
  \widetilde{V}^{(#1)}#3 (#2)
}
\DeclareDocumentCommand{\tildeVPS}{ O{\bbeta} O{\ppi^{(M)}} O{\ttheta} O{} O{\hat}}{%
  \widetilde{V}^{(#1)}#4(#2;#5 #3)
}
\DeclareDocumentCommand{\tildeVTruePS}{ O{\bbeta} O{\ppi^{(M)}} O{\ttheta^\ast} O{} }{%
  \widetilde{V}^{(#1)}#4(#2;#3)
}

\DeclareDocumentCommand{\projVt}{ O{\bbeta} O{\ppi^{(M)}} O{-M} }{%
  V_t^{#2,(#1)}(\lHt[#3])
}
\DeclareDocumentCommand{\projV}{ O{\bbeta} O{\ppi^{(M)}} O{-M} }{%
  V^{#2,(#1)}(\lHt[#3])
}

\DeclareMathOperator{\expit}{expit}

\newcommand{\rhobar}{\bar\rho}
\DeclareDocumentCommand{\rhot}{ O{\bbeta} O{\ppi^{(M)}} O {} }{%
  \rho_{t#3}^{(#1)}(#2)
}
\DeclareDocumentCommand{\rhotPS}{ O{\bbeta} O{\ppi^{(M)}} O{\ttheta} O{\hat}}{%
  \rho_t^{(#1)}(#2;#4 #3)
}

\DeclareDocumentCommand{\rhotTruePS}{ O{\bbeta} O{\ppi^{(M)}} O{\ttheta^\ast}}{%
  \rho_t^{(#1)}(#2;#3)
}

\usepackage{xparse}

\renewenvironment{proof}[1][\proofname]{{\noindent\bfseries #1. }}{\qed}
\begin{document}

\def\spacingset#1{\renewcommand{\baselinestretch}%
{#1}\small\normalsize} \spacingset{1}

%%%%%%%%%%%%%%%%%%%%%%%%%%%%%%%%%%%%%%%%%%%%%%%%%%%%%%%%%%%%%%%%%%%%%%%%%%%%%%

\title{\bf Dynamic Policy Evaluation and Learning with Spatio-temporal Data}

  \author{Lingxiao Zhou\thanks{Department of Biostatistics, Yale University} \quad
    Kosuke Imai\thanks{Department of Government and Department of Statistics, Harvard University} \quad
    Jason Lyall\thanks{Department of Government, Dartmouth College} \quad
    Georgia Papadogeorgou\thanks{Department of Statistics, University of Florida}}
  \maketitle
 
\date{}

\maketitle

\begin{abstract}
Although sequential decision-making is ubiquitous across domains, policy evaluation and learning with spatio-temporal data remain challenging due to spatial spillover and temporal carryover effects. We develop methods for evaluating and learning individualized dynamic policies under spatio-temporal interference.  Under a semiparametric additive outcome model that allows for complex spillover and carryover effects, we consider a family of stabilized estimators for evaluating the performance of a given individualized policy.  From this family, we select a data-adaptive optimal estimator that minimizes the asymptotic variance.  We then derive the asymptotic distribution of the proposed policy evaluation estimator, and establish the finite-sample regret bounds of our policy learning estimator. We further propose a statistical test to select the complexity of the semiparametric additive model by determining the appropriate order of interactions. Through simulations, we assess the finite-sample performance of our estimators and the validity of the proposed test. Our motivating application examines the optimal allocation of economic aid in Iraq from February 2007 to July 2008. Drawing on declassified conflict data, we study the weekly assignment of aid projects across districts and find that reallocating aid away from regions with a persistently high level of violence can substantially reduce insurgent attacks.

\end{abstract}

\noindent%
{\it Keywords:}  carryover effects, causal inference, interference, spillover effects
\vfill

\newpage
\spacingset{1.9} % DON'T change the spacing!
\section{Introduction}

A number of scholars across a wide range of disciplines have long studied how to design optimal policies to maximize an expected outcome \citep[e.g.,][]{imai2011estimation,zhao2012estimating,swaminathan2015counterfactual,kitagawa2018should,kallus2021minimax,athey2021policy,zhang2022safe,cui2025policy}. 

Building on this extensive literature, a growing body of work considers dynamic policy learning, where decision rules are updated adaptively as new information becomes available \citep[e.g.,][]{Murphy2003,moodie2007demystifying,kasy2021adaptive,weltz2022reinforcement,kock2022functional,mcdonald2024sequential,sakaguchi2025estimation,cesa2025adaptive}. 
%Such problems naturally arise when decisions are made repeatedly over time. For instance, designing dynamic treatment regimes is common in healthcare \citep{chakraborty2011dynamic} and education \citep{almirall2018developing}, where treatment is assigned at each time point according to evolving covariates and past outcomes.  
Although existing studies address methodological challenges arising from carryover effects, where past treatments have lingering impacts, they often do not account for spillover effects across units \citep[e.g.,][]{kitagawa2022policy,moodie2012q}. 

A small number of recent works develop policy evaluation and learning methods when the outcome of one unit is affected by another unit's treatment \citep[e.g.][]{viviano2025policy,park2024minimum,zhang2023individualized}. However, they impose structural assumptions, such as anonymous interference, where spillover effects depend only on the number of treated units, or partial interference, where no spillover effects occur across clusters.
Importantly, existing approaches neither accommodate dynamic policies nor account for carryover effects.

These limitations highlight the need for methods that handle {\it both} temporal carryover and spillover effects at the same time under flexible structural assumptions.
In this paper, we develop a methodological framework for individualized dynamic policy evaluation and learning under spatio-temporal interference (\cref{sec:setting}). Our approach builds on a semiparametric additive outcome model \citep{yu2022estimating,zhang2023individualized} to reduce the high-dimensionality of the treatment assignment in the presence of interference across units. In its basic form, the potential outcomes at each time period are assumed to be additive in the current treatments of all units, given the information observed up to that time period, while allowing outcomes to depend arbitrarily on past treatments and covariates. Under our framework, the treatment of a unit can affect the outcome of any unit at any time period. 

Leveraging the semiparametric additivity structure, we construct estimators for evaluating the empirical performance of a given policy using linearized inverse probability weighting (IPW) (\cref{sec:estimation}). However, even under the semiparametric additive model, the variance of the weighting estimator can remain high, particularly when evaluating policies that span multiple time periods. 
To further improve estimation efficiency, we introduce a family of mixing estimators that combine two types of normalized weights, and select the optimal mixing weight by minimizing the asymptotic variance of the resulting estimator. We then employ this evaluation estimator for policy learning (\cref{sec:policy_learning}). The proposed mixing construction yields stable performance in policy evaluation and improves efficiency in policy learning. This new estimator also mitigates the propensity overfitting problem discussed in \cite{swaminathan2015self}, where if unnormalized weights are used directly in the policy learning step, the optimization algorithm tends to favor larger weights by selecting policies that closely mimic the observed treatment assignments.

We establish the consistency of the estimator and derive its asymptotic distribution.
We find the optimal policy by maximizing the estimated value of policy over a prespecified policy class \citep{kitagawa2018should,zhou2023offline}.  We establish that the average outcome under the learned policy converges to that of the optimal policy at the rate of $1/\sqrt{T}$, matching the optimal regret rate for policy learning under the i.i.d. setting \citep{kitagawa2018should,athey2021policy}.
We also extend the semiparametric additive model to include interactions and develop a statistical test to determine the order of interactions (\cref{sec:interaction}). 

Finally, we evaluate the methodology through simulations (\cref{sec:simulation}), which show that policies learned from the proposed estimator outperform those based on the standard IPW estimator. We further apply our approach to wartime data from Iraq (\cref{sec:application}), and derive an optimal allocation of aid over multiple time periods between districts to reduce insurgent attacks. The resulting optimal policies shift aid away from regions with persistently high levels of violence and are estimated to perform significantly better than the realized aid allocation.

\paragraph{Related Literature}

The methods for policy learning in the spatio-temporal setting remain underdeveloped. To the best of our knowledge, the only existing work in the literature is that of \citet{laber2018optimal}. The authors focus on maximizing a cumulative outcome across a long time horizon, using a Bayesian online method that continuously updates the policy as new data become available. Their method relies on a parametric, Markovian state-space model, in which spatial interference and temporal carryover effects are explicitly encoded through low-dimensional transition dynamics. 

In contrast, our approach learns the optimal policy over relatively short time periods from an offline dataset, and does not require specifying a parametric model for the entire data-generating process. Instead, spatial interference and temporal carryover effects are accommodated through an additive semiparametric outcome model.

Most existing work on policy learning considers static or cross-sectional settings with independent units and no interference \citep[e.g.,][]{manski2004statistical,dehejia2005program,qian2011performance,zhang2012estimating,zhao2012estimating,imai2023experimental,kitagawa2018should,jin2022policy,kallus2018balanced,mbakop2021model,athey2021policy,zhou2023offline}. A smaller body of research addresses cross-sectional policy learning with interference across units, typically imposing structural assumptions on the interference structure \citep[e.g.,][]{ananth2020optimal,zhang2023individualized,park2024minimum,viviano2025policy}. Among these, \citet{zhang2023individualized} is particularly relevant. They use a similar additive structure on the potential outcomes to stabilize IPW weights. However, like other works, their method is not applicable to temporal data with carryover effects.

There is growing interest in policy learning for time series settings. \citet{adusumilli2019dynamically} studies optimal allocation with budget constraints when subjects arrive sequentially, and \citet{kitagawa2022policy} considers optimal one-period policies based on information observed in prior periods. Closely related is the literature on dynamic treatment regimes (DTRs), which examines treatment strategies that unfold over multiple time points. At each time point, the treatment assigned to an individual may depend on the observed history up to that time, including past treatments, covariates, and outcomes. Most existing works, however, assume independent units and do not allow for spillover effects \citep[e.g.,][]{laber2014dynamic,barrett2014doubly,tao2018tree,simoneau2020estimating,sakaguchi2025estimation}. 

Recent work extends DTR methods to accommodate interference, allowing outcomes to depend on the treatments received by neighboring units in a known network \citep{su2019modelling,jiang2023dynamic,rizi2024dynamic}. These approaches typically rely on a fixed and correctly specified network structure applied to data observed over a short time horizon. In contrast, we consider a single, long time series with arbitrarily connected units and do not require knowledge of the underlying network. 

Another related area is multi-armed bandit algorithms, which optimize cumulative outcomes over many rounds by balancing exploration and exploitation \citep{lattimore2020bandit,kock2022functional,adusumilli2025risk}. Importantly, standard bandit models do not consider interference or temporal carryover effects, assuming each action only affects its immediate reward and has no impact on other units or future rewards.

There also exists the broader causal inference literature on interference and temporal dependence.
In the general interference setting, many studies impose structural assumptions \citep[e.g.,][]{liu2016inverse,aronow2017estimating,basse2018analyzing,athey2018exact,wallace2019model,wang2020design,li2022random,giffin2023generalized,gao2023causal}. Relatedly, \citet{harshaw2023design} introduce a linear exposure-response model for bipartite experiments to address interference in two-sided marketplaces.
In time series settings, some studies explicitly incorporate temporal carryover effects in the outcome model \citep{almirall2010structural,brodersen2015inferring,liang2025randomization}, while others avoid modeling outcome dynamics \citep{sun2021estimating,callaway2021difference,kitagawa2022policy}.
In spatio-temporal settings, \citet{papadogeorgou2022causal} proposed a framework that accommodates both spillover effects across units and temporal carryover effects.
However, all of these works focus solely on estimating causal effects of potential interventions, rather than deriving optimal policy rules.

In addition, our work relates to the literature on treatment effect heterogeneity, which is often used to inform policy learning. In the standard setting with i.i.d. data, a large body of work studies heterogeneous treatment effects using flexible modeling and machine learning approaches \citep[e.g.,][]{athey2016recursive,wager2018estimation,chernozhukov2018double,kunzel2019metalearners}. 
Recently, these frameworks have been adapted to handle complex dependencies by imposing specific structural constraints. In interference and temporal settings, many studies rely on pre-defined structures, such as clustered interference \citep{bargagli2025heterogeneous}, social network ties \citep{park2024groupwise}, fixed temporal lead-lag patterns \citep{sun2021estimating}, or group-level fixed effects \citep{kattenberg2023causal}. In spatio-temporal contexts, \citet{zhang2023spatiotemporal} utilizes a structural outcome model to incorporate interference, whereas \citet{zhou2024estimating} allow for arbitrary spatial spillover and temporal carryover effects. While these works provide tools for identifying where treatment effects vary, they do not directly target the learning of optimal policies.

\section{The Problem Formulation}
\label{sec:setting}

We begin by describing the setup and formalizing the problem of policy evaluation and learning with spatio-temporal data.

\subsection{Spatio-Temporal Data Structure}

We consider data over $T$ time periods and $n$ arbitrarily connected units. For each unit $i \in \{1,\ldots,n\}$ and time $t \in \mathcal{T}=\{1,\ldots,T\}$, let $W_{ti}$ be the binary treatment assigned to unit $i$ at time $t$, and let $\bm W_t = (W_{t1},\dots,W_{tn})^\top$ be the treatment vector for all units at that time. We use $\overline{\bm W}_t$ to denote the treatment path from time $1$ to $t$ for all units, where $\overline{\bm W}_t = \left(\bm W_1,\dots,\bm W_t\right)\in\mathcal{W}_t=\{0,1\}^{n\times t}$. A particular realization of $\overline{\bm W}_t$ is denoted by $\overline{\bm w}_t$.

We define $Y_{ti}(\overline{\bm w}_t)$ as the potential outcome for unit $i$ at time $t$ under treatment history $\overline{\bm w}_t$. Therefore, potential outcomes may depend on the entire past treatment path for all units, and we do not make any structural assumptions about spatial spillover or temporal carryover effects.  The observed outcome for unit $i$ at time $t$ is $Y_{ti} = Y_{ti}(\overline{\bm W}_t)$, and $\bm Y_t = (Y_{t1},\dots,Y_{tn})^\top$ denotes the observed outcome vector across all units at that time. Let $\overline{\mathcal{Y}}_T = \{\bm Y_{t}(\overline{\bm w}_t): \overline{\bm w}_t \in \mathcal{W}_t, t \in \mathcal{T}\}$ denote the collection of all potential outcomes across all time periods and treatment paths.

Similarly, let $\bm X_{ti}(\overline{\bm w}_{t-1})$ be a $p$-dimensional covariate vector for unit $i$ at time $t$, with observed value $\bm X_{ti} = \bm X_{ti}(\overline{\bm W}_{t-1})$. Then, a unit's covariates are allowed to depend on past treatments of any unit. Define $\bm X_t = (\bm X_{t1},\dots,\bm X_{tn})$ as the $p \times n$ covariate matrix at time $t$ across the $n$ units. The covariate matrix $\bm X_t$ is realized after $\bm W_{t-1}$ but prior to $\bm W_{t}$.  The covariate history up to time $t$ is denoted by $\overline{\bm X}_t$, and $\overline{\mathcal{X}}_T = \{\bm X_t(\overline{\bm w}_{t-1}): \overline{\bm w}_{t-1} \in \mathcal{W}_{t-1}, t \in \mathcal{T}\}$ is the collection of potential covariate values for all time periods and treatment paths. 

Finally, let $\lHt = \{\overline{\bm W}_t, \overline{\bm Y}_t, \overline{\bm X}_{t+1}\}$ denote the observed history preceding treatment assignment at time $t+1$, and the complete history $\lHt^\ast = \{\lWt,\overline{\mathcal{Y}}_T,\overline{\mathcal{X}}_T\}$ which includes all counterfactual values of the outcome and the covariates. Our setting involves a time series of $n$ connected units, for which potential outcomes and potential covariate values are viewed as fixed quantities. Therefore, in our framework, randomness arises only from the treatment assignment over time.

% {\bf [[GP: 1) Discuss that the notation we use for pot out implies that we allow arbitrary spatial spillover and temporal carryover effects. 2) Last paragraph: you defined observed history but then discuss the history including counterfactual quantities. Also the full history should include past and future outcomes and covariates.]]}

\subsection{Policies and their Value}\label{subsec: policyvalue}

A one-time-period policy $\ppi$ specifies the assignment of treatments across the units for a single time period given what has been observed. It is a vector-valued function that maps the observed history $\lHt[-1]$ to a treatment decision $\bm W_t^{\ppi}$ over all the units, i.e.,
\[
\bm W_t^{\ppi} = \ppi\left(\lHt[-1]\right) = \left(\pi_{1}\left(\lHt[-1]\right),\dots,\pi_{n}\left(\lHt[-1]\right)\right) \in \{0,1\}^n.
\] 
We use $\pi_{i}\left(\lHt[-1]\right)\in\{0,1\}$ to denote the policy decision for unit $i.$ Since the outcome for each unit may depend on its own treatment and the treatments assigned to others, one may use a different treatment rule for each unit.

The policy rule $\ppi$ is assumed to be fixed across time. To ensure that it is well defined across all time periods, we consider policies that depend only on information from a fixed number of most recent time periods. Instead, if a policy was allowed to depend on the entire history, the dimension of inputs to the treatment rule $\ppi$ would grow over time. 
%, preventing a single fixed mapping $\ppi$ from being used throughout. 
Importantly, this restriction applies only to the policies we will learn and does not imply any restriction on the underlying potential outcomes and realized treatments, which may depend on the full treatment history.

% We define the value of policy $\ppi$ for unit $i$ at time $t$ as the potential outcome that would be realized if the policy were to be implemented at time $t$ conditional on the history,
% \begin{equation}\label{eq: policy_value_single}
% Y_{ti}^{\ppi}\left(\lHt[-1]\right) = Y_{ti}\left(\overline{\bm W}_{t-1}, \bm w_t = \ppi\left(\lHt[-1]\right)\right).    
% \end{equation}
We define the potential outcome for unit $i$ at time $t$ under policy $\ppi$ as the outcome that would be realized if the policy were implemented at time $t$ conditional on the history,
\begin{equation}\label{eq: policy_value_single}
Y_{ti}^{\ppi}\left(\lHt[-1]\right) 
= Y_{ti}\left(\overline{\bm W}_{t-1}, \bm w_t = \ppi\left(\lHt[-1]\right)\right).    
\end{equation}
Averaging across all units and time periods, we define the policy value $V^{\ppi}$ as:
\begin{equation}\label{eq:policy_value}
 V^{\ppi}=\frac{1}{Tn} \sum_{t=1}^T\sum_{i=1}^n Y_{ti}^{\ppi} \left(\lHt[-1]\right) .
\end{equation}
Thus, the policy value measures the overall performance of a policy across all units and over all time periods. 

We next consider dynamic policies that specify individualized treatment rules over multiple time periods. 
Let 
$
\ppi^{(M)} = (\ppi_1,\dots,\ppi_M)
$
denote a policy that determines the treatment decision over $M$ consecutive periods.
 Under the  policy $\ppi^{(M)}$, the treatment for unit $i$ at time $t$ is
assigned according to $\pi_{1i}\left(\lHt[-1]\right)$, at time $t-1$ according to $\pi_{2i}(\lHt[-2])$, continuing until time period $t-M+1$, for
which treatment is assigned according to $\pi_{Mi}(\lHt[-M])$.

To define the value of policy $\ppi^{(M)}$, we first consider a two-time-period policy $(\ppi_1,\ppi_2)$ for simplicity. We define the potential outcome for unit $i$ at time $t$ under this policy as
\begin{align*}
    Y_{ti}^{\ppi_1,\ppi_2}\left(\lHt[-2]\right)  = Y_{ti}^{\ppi_1}\left(\lHt[-1]^{\ppi_2}\right)
     = Y_{ti}(\lWt[-2],\bm w_{t-1}=\ppi_2\left(\lHt[-1]\right),\bm w_t =\ppi_1(\lHt[-1]^{\ppi_2}))
\end{align*}
where $\lHt[-1]^{\ppi_2} = \left\{\lHt[-2],\ppi_2(\lHt[-2]),\overline{\bm Y}_{t-1}(\lWt[-2],\bm w_{t-1} = \ppi_2(\lHt[-2])),\overline{\bm X}_{t}\left(\lWt[-2],\bm w_{t-1} = \ppi_2\left(\lHt[-2]\right)\right)\right\}$ denotes the history that would have been observed up to time $t-1$ under the scenario, in which
$W_{t-1}$ were to be assigned according to $\ppi_2$.
Therefore, the quantity $Y_{ti}^{\ppi_1,\ppi_2}\left(\lHt[-2]\right)$ represents the potential outcome at time $t$ for unit $i$ when the treatment at time $t-1$ is assigned according to the policy $\ppi_2$ based on $\lHt[-2]$, and the treatment at time $t$ is assigned according to the policy $\ppi_1$ based on $\lHt[-1]^{\ppi_2}$. Continuing this process, we can define the value of a policy $\ppi$ that is assigned over $M$ time periods, $\ppi^{(M)} = (\ppi_1, \ppi_2, \dots, \ppi_M)$, on unit $i$ at time $t$, $Y_{ti}^{\ppi^{(M)}}\left(\lHt[-M]\right)$. Then, the overall value of this policy is defined as
$$
V^{\ppi^{(M)}} = \frac{1}{(T-M+1)n}\sum_{t=M}^T\sum_{i=1}^n Y_{ti}^{\ppi^{(M)}}\left(\lHt[-M]\right).
$$
%where $Y_{ti}^{\ppi}\left(\lHt[-M]\right)$ is the policy value for each unit $i$ and time period $t$.  

Therefore, our formulation of the policy value for a multiple-time-period policy follows a backward sequential manner in agreement with the dynamic treatment regime literature \citep[e.g.][]{moodie2012q,chakraborty2014dynamic,sakaguchi2024policy}.
This formulation considers the outcome only in the last of the $M$ time periods. Alternatively, one can define the policy value to incorporate outcomes from all $M$ periods (see \cref{a:sec:alternative_policy_value}).

For policy learning, we define a class of policies, which determine its structure. For one-time-period policies, we denote this class by $\Pi$. For policies over $M$ time periods, one may consider a different policy class at a different time period. For notational simplicity, however, we focus on the case where the same policy class $\Pi$ is used at each period, and we denote it  by $\Pi^{M}$.  Given a policy class, an optimal policy, denoted by $\ppi^{(M)\ast}$, is given by
\begin{equation}
\ppi^{{(M)}\ast} = \argmax_{\ppi^{(M)} \in \Pi^M} V^{\ppi^{(M)}}.
\end{equation}
Finally, the regret of a candidate policy $\ppi^{(M)}$ is defined as the difference between its value and that of optimal policy $\ppi^{{(M)}\ast}$, i.e.,
$
R(\ppi) = V^{\ppi^{{(M)}\ast}} - V^{\hat\ppi^{(M)}}.
$

There are many possible choices of policy classes in the literature. In this paper, we focus on a linear policy class, which has been widely used due to its simplicity and interpretability \citep{zhang2013robust,zhang2023individualized,kitagawa2022policy}. Specifically, we consider linear policies that depend only on information from the previous time period:
\begin{equation}\label{eq: linearclass}
 \Pi = \left\{\ppi:\pi_{i}\left(\lHt[-1]\right) = \mathds{1}(\bm X_{ti}^P\ggamma> 0),\ggamma\in\mathbb{R}^p\right\},   
\end{equation}
where $\bm X_{ti}^P$ is constructed from $\lHt[-1]$. 

% \[
% \Pi = \Big\{ \ppi : \pi_{i}\left(\lHt[-1]\right) = \mathds{1}\big(\gamma_0 + \bm \gamma_1^\top \bm X_{ti} + \gamma_2 Y_{t-1,i} + \gamma_3 W_{t-1,i} \ge 0\big), \; i=1,\dots,n \Big\}.
% \]
Other commonly used policy classes include decision trees and decision tables \citep{athey2021policy,ben2025safe,jia2023bayesian,zhou2023offline}.

% {[[GP: 1) Include the examples of policy classes in the end of the section. 2) we need a little more fluff. For example, you define the policy value, but you should give some intuition about what that means, and discuss how interference plays a role. 3) we say that we restrict the time horizon for optimal policy. A careless reader might think that we say that we restrict temporal carryover effects. We need to be clear that we do not restrict the dependence of potential outcomes on all previous treatments.]]}

\section{Policy Evaluation}
\label{sec:estimation}

Before discussing policy learning, we consider policy evaluation, that is, the estimation of the policy value $V^{\ppi^{(M)}}$ for any candidate policy $\ppi^{(M)}$. 
% We next introduce an estimator that leverages semiparametric outcome model assumptions to improve the efficiency of the standard IPW estimator. 

\subsection{Assumptions}

For any given unit and time period, we observe only one potential outcome under the realized treatment. Thus, learning causal estimands from observed data requires non-verifiable assumptions. We consider the unconfounded treatment assignment mechanism.
\begin{assumption}[Unconfoundedness]\label{assump: unconfoundedness} \singlespacing
Conditional on the observed history $\lHt[-1]$, the assignment of treatments at time $t$ is independent of all counterfactual outcomes and covariates across all time periods, i.e.,
  $\P(\bm W_t\mid\lHt[-1],\lct{\mathcal{Y}}{T},\lct{\mathcal{X}}{T}) = \P(\bm W_t\mid \lHt[-1])$.
\end{assumption}
Assumption~\ref{assump: unconfoundedness} is more restrictive than the standard sequential ignorability assumption, which only requires that the treatment assignment at time $t$ is independent of future potential outcomes \citep[e.g.,][]{robins2000marginal}.
This stronger restriction is necessary since we only observe a single time series of arbitrarily connected units.

We refer to $e_t(\bm w_t) := \P(\bm W_t = \bm w_t\mid \lHt[-1])$ in Assumption~\ref{assump: unconfoundedness} as the {\it propensity score}. 
In addition, the unit-level propensity score at time $t$ is defined as $e_{ti}(W_{ti}) := \P(W_{ti} = w_{ti}\mid \lHt[-1]),$ where $\lHt[-1]$ includes the observed history of all units. We make the following factorization assumption on the propensity score.
\begin{assumption}[Factored time-specific propensity score]\label{assump: factor_ps} \singlespacing At time period $t$, the propensity score satisfies 
$e_t(\bm w_t) = \prod_{i=1}^ne_{ti}(w_{ti})$. In addition, there exists $\eta>0$ such that $\eta<e_{ti}(w_{ti})<1-\eta$ for any $w_{ti}\in\{0,1\}.$
\end{assumption}

\Cref{assump: factor_ps} imposes two conditions. 
First, given the observed history, the treatment assignment at each time period is independent across units.
We do not assume marginal independence, allowing the treatment assignment of one unit to depend on the entire observed history involving other units $\lHt[-1]$.
% The assumption therefore restricts the assignment mechanism given the available information rather than asserting that units act independently in general.
Second, the positivity condition applies to each unit separately rather than the entire treatment vector across all units at time $t$, making the assumption realistic even if the number of units is large.

In the current setting, an additional assumption about the structure of interference is required for efficient estimation. We consider a semiparametric additive outcome model \citep{zhang2023individualized}. For a policy over a single time period, we assume that, conditional on the observed history, the outcome for each time period $t$ and unit $i$ depends on the treatment vector only through the interactions of treatments up to degree $\beta$.

\begin{assumption}[Semiparametric additive outcome model]\label{assump: additive_model} \singlespacing
    For $1\le t\le T$, $1\le i\le n$ and $\bm w_t\in\{0,1\}^n$, the potential outcomes satisfy
    \begin{equation}\label{eq: additive_model}
      Y_{ti}(\lWt[-1],\bm w_t) = \bm g_{ti}\left(\lHt[-1]\right)^\top\phi^{(\beta)}(\bm w_t),
   % \E\left[ Y_{ti}(\lWt)\mid \lHt[-1],\bm W_t = \bm w_t\right] = \bm g_{ti}\left(\lHt[-1]\right)^\top\phi^{(\beta)}(\bm w_t),
\end{equation}
where $\bm g_{ti}\left(\lHt[-1]\right)$ is a vector of unknown functions and $\phi^{(\beta)}(\bm w_t)$ is an augmented treatment vector at time $t$ that includes interactions up to $\beta$-order between treatments of different units with $\beta\le n$, i.e.,
$\phi^{(\beta)}(\bm w_t) = \left(1,\{w_{ti}\}_i,\{w_{ti_1}w_{ti_2}\}_{i_1\neq i_2},\dots,\{w_{ti_1}w_{ti_2}\dots w_{ti_\beta}\}_{i_1\neq\dots \neq i_\beta}\right)^\top$.
\end{assumption}
\Cref{assump: additive_model} allows for flexible outcome models.  Each function $g_{ti}\left(\lHt[-1]\right)$, which characterizes treatment effect heterogeneity, is allowed to vary across time periods and units, while no restriction is imposed on its dependence on the observed history.  The order of interaction can be adjusted based on the desired levels of model complexity and interpretability (see \cref{sec:interaction} for a data-driven order selection methodology). When $\beta=n$, the assumption does not impose any restriction on the outcome model.

For policies over multiple time periods, we impose a sequence of additive outcome
models, % since the policy value is defined iteratively through the expected value of shorter policy sequences and the additive structure may not be preserved in this process. 
similarly to the policy value which is defined iteratively.
\begin{assumption}[Sequential additive outcome models]\label{assump: additive_model_multi} \singlespacing
Let $\Pi$ be a policy class. For all $M\le t\le T$, $1\le i\le n,$ $\bm w_{t-M+1},\dots, \bm w_t\in\{0,1\}^n$ and $(\ppi_1,\dots,\ppi_{M-1})\in\Pi^{M-1}$, we assume 
\begin{align*}
       Y_{ti}(\lWt[-1],\bm w_t) &= \bm g_{1ti}\left(\lHt[-1]\right)^\top\phi^{(\beta_1)}(\bm w_t)\\
       Y_{ti}^{\ppi_1}(\lHt[-1]^{\bm w_{t-1}}) &= \bm g_{2ti}\left(\lHt[-2]\right)^\top\phi^{(\beta_{2})}(\bm w_{t-1})\\
       \vdots \\
       Y_{ti}^{\ppi_1,\dots,\ppi_{M-1}}(\lHt[-M+1]^{\bm w_{t-M+1}})
       &= \bm g_{Mti}(\lHt[-M])^\top\phi^{(\beta_{M})}(\bm w_{t-M+1}),
\end{align*}
where $\overline{H}_j^{\bm w_j} = \{\overline{H}_{j-1},\bm w_{j},\overline{\bm Y}_j(\overline{\bm W}_{j-1},\bm w_j),\overline{\bm X}_{j+1}(\overline{\bm W}_{j-1},\bm w_j)\}$
denotes the history that would have been observed up to time $j$ under the scenario, in which the treatment assignment at $j$ is $\bm w_j$, $\{\bm g_{mji}(\overline{H}_{j-m})\}$ are vectors of unknown heterogeneous effect functions, and $\{\phi^{(\beta_m)}(\bm w_{j-m+1})\}$ are augmented treatment vectors at time $j-m+1$ that contain up to $\beta_m$-order interactions between treatments of different units with $\beta_m\le n$, defined analogously to \Cref{assump: additive_model}.
\end{assumption}

% \begin{align*}
%        Y_{ti}(\lWt[-1],\bm w_t) &= \bm g_{1ti}\left(\lHt[-1]\right)^\top\phi^{(\beta_1)}(\bm w_t)\\
%        V_{ti}^{\ppi_1}(\lHt[-1]^{\bm w_{t-1}}) &= \bm g_{2ti}\left(\lHt[-2]\right)^\top\phi^{(\beta_{2})}(\bm w_{t-1})\\
%        \vdots \\
%        V_{ti}^{\ppi_1,\dots,\ppi_{M-1}}(\lHt[-M+1]^{\bm w_{t-M+1}})
%        &= \bm g_{Mti}(\lHt[-M])^\top\phi^{(\beta_{M})}(\bm w_{t-M+1}),
% \end{align*}

\Cref{assump: additive_model_multi} represents the potential outcome at time $t$ as a function of the observed history and hypothetical treatment assignments. In particular, the $m$-th model characterizes the dependence of the potential outcome at time $t$ on the treatment vector at time $t-m$. For $Y_{ti}^{\ppi_1,\dots,\ppi_m}(\lHt[-m]^{\bm w_{t-m}})$, only the treatment vector at time $t-m$ is set to $\bm w_{t-m}$ in the counterfactual history, while the subsequent treatments from time $t-m+1$ to $t$ follow the policies $\ppi_m,\dots,\ppi_1$.
Thus, the assumption allows $Y_{ti}^{\ppi_1,\dots,\ppi_m}(\lHt[-m]^{\bm w_{t-m}})$ to depend on intermediate covariates and outcomes between time periods $t - m$ and $t$ that are functions of history including past treatments. 

However, \Cref{assump: additive_model_multi} restricts the way in which intermediate covariates, outcomes, and treatments between $t-m$ and $t$ can affect the outcome at time $t$ by placing a low-order interaction structure. In general, we can use different orders of interaction.  The choice of $\beta_m$ involves a bias–variance tradeoff \cite[e.g.][]{cortez2023exploiting,zhang2023individualized,sengupta2025low}.
In \cref{sec:interaction}, we introduce a  hypothesis test to sequentially select the values of  $\beta_1,\dots,\beta_M$.

Similar assumptions have appeared in the DTR literature \citep{chakraborty2014dynamic,jiang2023dynamic}, where researchers often fit sequential parametric models. 
In contrast, Assumptions~\ref{assump: additive_model}~and~\ref{assump: additive_model_multi} specify a semiparametric additive model on the potential outcomes. In the following section, we leverage this interference structure to construct an efficient estimator for policy values. 
% In contrast, we assume a more flexible structural additive representation and do not require sequential parametric estimation. 

In \cref{subsec:network_restriction}, we consider alternative outcome models of neighborhood interference where the parameter $\beta$ represents the neighborhood size of interference.
Furthermore, in \cref{subsec:block_dependence}, we allow for the dependence of treatment assignments across units.

% Additionally, we explore alternative assumptions where we consider a slightly more restrictive model based on the network structure, and relax the conditional unit-level independence assumption presented in \Cref{assump: factor_ps}. Details are in Appendix \ref{sec:alternative_assumptions}.

\subsection{Efficient estimator of the policy value}\label{subsec:single}

For a single-time-period policy $\ppi\in\Pi$, we define weights $\rho_t^{(\beta)}(\ppi)$ as
\begin{align}\label{eq: addIPW_weight}
\begin{split}
 \rhot[\beta][\ppi] = 1 + \sum_{i=1}^n (\rho_{ti}(\ppi) - 1) + \sum_{1 \leq i_1 < i_2 \leq n} (\rho_{ti_1}(\ppi) - 1)(\rho_{ti_2}(\ppi) - 1) + \cdots \\
 + \sum_{1 \le i_1 < \dots < i_\beta \le n} (\rho_{ti_1}(\ppi) - 1)\cdots(\rho_{ti_\beta}(\ppi) - 1), 
 \end{split}
\end{align} 
where $\rho_{ti}(\ppi) = \mathds{1}(W_{ti} = \pi_{i}\left(\lHt[-1]\right))/e_{ti}(W_{ti})$. Therefore, these weights are additive across groups of units that are allowed to interact in the potential outcome model given in \Cref{assump: additive_model}. 
For policies $\ppi^{(M)}$ defined over multiple time-periods, the weights are given by the following product of $M$ single-period weights,
$$
\rhot = \prod_{m=1}^M \rhot[\beta_m][\ppi_m][-m+1].
$$
% The additive formulation of weights for the single time period policy does not extend to policies over multiple periods. This multiplicative structure across time periods of additive time-specific weights reflects the sequential nature of the dynamic setting, where treatment decisions at each period depend on past history. 
%
We refer to these weights as the additive inverse propensity weights (additive IPW).

We consider a class of estimators that are convex combinations of the following two additive IPW estimators,
$$\hatV[H,]
= \frac{1}{T}\sum_{t=1}^T \frac{\rhot}{\rhobar}\ \overline{Y}_{t} \quad \text{and} \quad
\hatV[S,]
  = \frac{1}{T}\sum_{t=1}^T \left(\rhot
      + (1 - \rhobar)\right)\,\overline{Y}_t,
$$
where $\overline{Y}_t = \frac{1}{n}\sum_{i=1}^nY_{ti}$ is the average outcome across units at time $t$, and
$ \rhobar = \frac{1}{T}\sum_{t=1}^T\rhot
$ is the average additive IPW weight across time.
The first estimator $\hatV[H,]$ uses normalized H\'ajek weights and is expected to yield a low finite-sample variance than its un-normalized counterpart \citep{hajek1971comment,swaminathan2015self}. However, the additive weights $\rho_t^{(\bbeta)}(\ppi^{(M)})$ can be negative, making $\rhobar$ close to 0 and leading to instability. In contrast, the second estimator $\hatV[S,]$ normalizes the weights by adding $1 - \rhobar$ in their expression, making it linear in the weights and potentially more stable. However, this does not have the variance-stabilizing effect associated with weight normalization. 

Since neither estimator uniformly dominates the other, we consider the class of estimators that are their weighted averages, with weights that sum to 1,
\begin{equation}\label{eq: estimator_family}
   \mathcal{C}(\ppi^{(M)}) = \left\{\lambda\hatV[H,]+\left(1-\lambda\right)\hatV[S,]: \lambda\in[0,1]\right\}.
\end{equation}
This normalized weighted structure plays an important role in policy learning (see \cref{sec:policy_learning}).
The following theorem establishes that each addIPW estimator in $\mathcal{C}(\ppi^{(M)})$ is consistent and asymptotically normal.

% It straightforward to show that each estimator in \cref{eq: estimator_family} is shift-invariant, that is, if a constant is added to all outcome values, the estimator shifts by the same
% constant.
% This shift-invariance property helps mitigate propensity overfitting during policy learning \citep{swaminathan2015self}.

% To reduce estimation variance, we assume additive outcome models with optional interaction terms. We first introduce the assumption for estimating policy value over a single time period and then extend it to multiple time periods in \cref{subsec:multi}. We make the following assumption.
% which is expected to have a smaller variance than the standard IPW estimator. 
% Combining \Cref{assump: unconfoundedness,assump: factor_ps,assump: additive_model}, 

% In practice, employing normalized weights often improves the variance of the estimator and  An appealing feature of normalized estimators is that they are shift-invariant: if a constant is added to all outcome values, the estimator shifts by the same constant. We therefore focus on normalized, shift-invariant estimators of the value $V^{\ppi}$.

\begin{theorem}\label{lemma:addIPW_true} \singlespacing
Suppose that \Cref{assump: unconfoundedness,assump: additive_model_multi,assump: factor_ps,assump:Policy_est1} hold. Let
$\hat V^{(\beta)}(\ppi^{(M)},\lambda)$ be an arbitrary estimator in $\mathcal{C}(\ppi^{(M)}).$  Then
$$\sqrt{T-M+1}\left(\hat V^{(\bbeta)}(\ppi^{(M)},\lambda)-V^{\ppi^{(M)}}\right)\overset{d}{\to}N(0,\sigma^2_\lambda),$$ where $\sigma^2_\lambda=\plim_{T\to\infty}\frac{1}{T-M+1}\sum_{t=M}^T\Var\left(\rhot(\overline{Y}_t-\lambda \hatV-(1-\lambda)\overline{Y})\mid \lHt[-M]\right)$, and $\overline{Y} = \shiftmean \overline{Y}_t$.
\end{theorem}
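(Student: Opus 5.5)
Throughout, sums over $t$ run from $M$ to $T$, $T':=T-M+1$, $\rho_t:=\rhot$ and $V:=V^{\ppi^{(M)}}$. The approach is a martingale central limit theorem applied to the ratio/linear structure of the estimators, with the filtration generated by the observed histories.

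\textbf{Step 1 (conditional unbiasedness).} The key identity is
\[
\E\!\left[\rho_t \overline{Y}_t \mid \lHt[-M]\right]=\frac1n\sum_i Y_{ti}^{\ppi^{(M)}}(\lHt[-M]),
\qquad
\E\!\left[\rho_t\mid \lHt[-M]\right]=1 .
\]
I would prove it by backward induction over $m=1,\dots,M$, conditioning successively on $\lHt[-1],\lHt[-2],\dots$.

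At stage $m$, \Cref{assump: additive_model_multi} expresses the relevant (partially policy-imposed) potential outcome as $\bm g_{mti}(\lHt[-m])^\top\phi^{(\beta_m)}(\bm w_{t-m+1})$, and $\bm g_{mti}$ is fixed given the history. By \Cref{assump: unconfoundedness,assump: factor_ps}, the $\rho_{t-m+1,i}-1$ are conditionally independent with mean zero. Hence for any $S$ with $|S|\le\beta_m$,
\[
\E\!\left[\rho^{(\beta_m)}_{t-m+1}(\ppi_m)\prod_{i\in S}W_{t-m+1,i}\,\Big|\,\lHt[-m]\right]=\prod_{i\in S}\pi_{mi}(\lHt[-m]).
\]
This holds because expanding the product $\prod_{i\in S}\bigl((\rho_i-1)+1\bigr)$ reproduces exactly the additive weight terms. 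The additive weight therefore reproduces the conditional expectation of the interaction monomials under the policy, and the outcome becomes the policy-$\ppi_m$ counterfactual. Iterating over $m$ gives the identity; taking $Y\equiv1$ gives $\E[\rho_t\mid\lHt[-M]]=1$.

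\textbf{Step 2 (linearization).} Write $\hat V^{H}-V=T'^{-1}\sum_t \rho_t(\overline{Y}_t-\hat V^{H})/\rhobar$. Since $\rhobar\pto1$, the $\hat V^H$ part is asymptotically equivalent to $T'^{-1}\sum_t\rho_t(\overline{Y}_t-V)$ up to $o_p(T'^{-1/2})$. Similarly,
\[
\hat V^{S}-V=T'^{-1}\sum_t\bigl[\rho_t\overline{Y}_t-\E(\rho_t\overline{Y}_t\mid\lHt[-M])\bigr]+(1-\rhobar)\overline{Y}.
\]
Combining with weights $\lambda$ and $1-\lambda$, $\sqrt{T'}(\hat V(\lambda)-V)=T'^{-1/2}\sum_t\xi_t+o_p(1)$ with
\[
\xi_t=\rho_t\bigl(\overline{Y}_t-\lambda V-(1-\lambda)\overline{Y}\bigr)-\E[\,\cdot\mid\lHt[-M]].
\]
The random centering constants $V$ and $\overline{Y}$ would be replaced by their probability limits (or handled via Slutsky), using $T'^{-1/2}\sum_t(\rho_t-1)=O_p(1)$. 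Bounded outcomes and $\eta$-bounded propensities, which I expect \cref{assump:Policy_est1} supplies, make $\rho_t$ bounded; the bound is of order $\eta^{-\beta M}n^{\beta}$, which is fine for fixed $n$.

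\textbf{Step 3 (CLT) — the main obstacle.} The $\xi_t$ are not martingale differences in $t$, because $\xi_t$ is centered only at $\lHt[-M]$ and consecutive terms overlap across $M$ periods. I would handle this in one of two ways. The first is to decompose each $\xi_t$ telescopically as $\sum_{m=1}^M\bigl(\E[\cdot\mid\lHt[-m+1]]-\E[\cdot\mid\lHt[-m]]\bigr)$ and regroup by conditioning time, which yields a genuine martingale difference array. The second is to split $\{M,\dots,T\}$ into $M$ interleaved subsequences $t\equiv r \pmod M$, each of which is a martingale difference sequence. With the telescoping route, I would apply the martingale CLT (e.g.\ Hall and Heyde). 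The conditional Lindeberg condition follows from boundedness. The conditional variance convergence is exactly the existence of the plim in $\sigma^2_\lambda$, which I would take from \cref{assump:Policy_est1}. The delicate point is showing that the cross-covariances between the overlapping terms either vanish asymptotically or are absorbed so that the limit matches the stated $\sigma_\lambda^2$. I expect this to require a mixing or weak-dependence condition in \cref{assump:Policy_est1}, and I would verify that condition there first.
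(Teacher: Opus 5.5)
The gap is Step 3 for $M\ge2$. It cannot be closed the way you suggest, because for $M\ge2$ the stated variance is itself wrong.

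\textbf{The case $M=1$.} Here your argument is complete and matches the paper's. The paper proves the same conditional unbiasedness, via the projection-matrix form of the weights; your direct expansion of $\prod_{i\in S}\{(\rho_i-1)+1\}$ is an equally valid route. It then applies a martingale CLT to $\bm A_t=(\widetilde V_t-V_t,\ \rho_t-1)^\top$ and uses the delta method with Jacobian $(1,\,-\lambda V-(1-\lambda)\overline{Y})$, which is your Step 2. One correction in Step 2: the numerator should be $\rho_t(\overline{Y}_t-V)$, since $\sum_t\rho_t(\overline{Y}_t-\hat V^{H})$ is identically zero.

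\textbf{Why Step 3 fails for $M\ge2$.} \Cref{assump:Policy_est1} contains no mixing condition. It gives only boundedness and convergence of $\frac{1}{T'}\sum_t\Var(\bm A_t\mid\lHt[-M]^\ast)$. More fundamentally, the dependence you are worried about is short-range, so no weak-dependence assumption removes it. For $1\le k\le M-1$, $\xi_t$ and $\xi_{t+k}$ share the treatments $\bm W_{t+k-M+1},\dots,\bm W_t$. Their covariances $\E[\xi_t\xi_{t+k}]$ are therefore generically nonzero and do not shrink with $T$.

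Your telescoping decomposition is the right device. However, the martingale differences $D_s$ it produces have conditional variances that pick up these lag-$k$ cross-moments. It therefore delivers a CLT, under an extra assumption that their time averages converge, with variance $\sigma^2_\lambda$ plus twice the averaged lag-$1,\dots,M-1$ covariances. The interleaving route fails for the same reason: each residue class mod $M$ is a martingale difference sequence, but the $M$ subsums are correlated through exactly these terms.

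\textbf{A counterexample to the stated variance.} Take $n=1$, $M=2$, $W_t$ i.i.d.\ Bernoulli$(1/2)$, $\ppi_1=\ppi_2\equiv1$, and $Y_t(\lWt)=W_tW_{t-1}(1+W_{t-2})$.
\begin{itemize}
\item All assumptions hold: with $n=1$ the additive models are automatic, and $\widetilde{\bm\Sigma}=\tfrac32\bigl(\begin{smallmatrix}5&3\\3&2\end{smallmatrix}\bigr)$ is positive definite.
\item We have $\rho_t=4W_tW_{t-1}$, $V=\frac1{T'}\sum_t(1+W_{t-2})$ and $\overline{Y}\pto3/8$.
\item For $\lambda=0$,
\[
\hat V^{S}-V=\frac{1}{T'}\sum_t(\rho_t-1)\bigl(1+W_{t-2}-\overline{Y}\bigr).
\]
So $\sqrt{T'}(\hat V^{S}-V)$ is asymptotically the normalized sum of the stationary $2$-dependent variables $\xi_t=(\rho_t-1)(5/8+W_{t-2})$.
\item Here $\frac1{T'}\sum_t\Var(\xi_t\mid\lHt[-2])\pto291/64=\sigma_0^2$.
\item But $\E[\xi_t\xi_{t+1}]=117/64$ and $\E[\xi_t\xi_{t+2}]=0$, so the true limiting variance is $291/64+2\cdot117/64=525/64$.
\end{itemize}

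\textbf{Where the paper's proof stands.} The paper gets past this point only by asserting that $\bm A_t$ is a martingale difference sequence with respect to $\mathcal F_t=\lHt[-M+1]^\ast$. Since $\bm A_t$ depends on $\bm W_t$, it is not adapted to that filtration when $M\ge2$. Your instinct was right: the theorem as stated holds for $M=1$. For $M\ge2$, your telescoping argument can prove asymptotic normality only with the autocovariance-corrected variance.
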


% The estimator $\hatV[H,][\beta][\ppi]$ normalizes the weights $\rho_t^{(\beta)}(\ppi)$ by dividing with their mean. This normalization often leads to lower variance but may be unstable when the mean of the weights $\rhobar$ is close to zero. In contrast, $\hatV[S,][\beta][\ppi]$ shifts the weights by $1 - \rhobar$, leading to improved numerical stability compared to $\hatV[H,][\beta][\ppi]$, though potentially offering less variance reduction compared to $\hatV[H,][\beta][\ppi]$. Since neither estimator uniformly dominates the other, it is natural to consider combining them. 

% Based on \cref{lemma:addIPW_true}, all estimators in $\mathcal{C}(\ppi^{(M)})$ are consistent and asymptotically normal. 
We choose the optimal estimator by minimizing the asymptotic variance $\sigma^2_\lambda$ over $\lambda\in[0,1]$, leading to the following unique minimizer,
\begin{equation*}
 \lambda^\ast=\text{Proj}_{[0,1]}\left(\frac{\plim_{T\to\infty}\shiftmean \E[\rhot(\overline{Y}_t-\overline{Y})(\rhot-1)\mid\lHt[-M]]}{\plim_{T\to\infty}\shiftmean(V^{\ppi^{(M)}}-\overline{Y})\E(\rhot-1)^2\mid\lHt[-M]]}\right), 
\end{equation*}
where $\text{Proj}_{[0,1]}(x) = \min(\max(0,x),1).$ 
% See \cref{a:sec:lambda} for the proof.
Since the optimal value $\lambda^\ast$ depends on unknown population quantities, we use its consistent estimator $\hat\lambda$ obtained by substituting sample analogues of the conditional expectations. The resulting mixing estimator is
\begin{equation}
    \hatV = \hat\lambda\hatV[H,] + (1-\hat\lambda)\hatV[S,],
\label{eq:hatV_lambdahat}
\end{equation}
% where $\hat\lambda$ is a consistent estimator of $\lambda^\ast$. 

\cref{a:sec:lambda} provides a formal derivation of $\lambda^*$ and establishes the consistency of $\hat{\lambda}$.
Importantly, \cref{thm:addIPW_true} shows that replacing $\lambda^\ast$ with $\hat\lambda$ does not affect the asymptotic distribution of the resulting estimator. Therefore, using $\widehat \lambda$ leads to the most efficient estimator within the class $\mathcal{C}(\ppi^{(M)})$, and
%Since, by construction, the asymptotic variance at $\lambda^\ast$ is no larger than that of either endpoint estimator $\hatV[H,]$ or $\hatV[S,]$, 
the proposed mixing estimator $\hatV$ in \cref{eq:hatV_lambdahat} is asymptotically at least as efficient as $\hatV[H,]$ and $\hatV[S,]$.

% Even though replacing $\lambda^\ast$ with $\hat\lambda$ does not change its asymptotic distribution, using the estimated value $\lambda^\ast$ introduces finite-sample variability. Rather than allowing unrestricted linear combinations, we restrict $\lambda$ to $[0,1]$, so that the resulting estimator remains a convex combination of $\hatV[H,]$ and $\hatV[S,].$ This ensures that the mixing estimator remains within the interval determined by $\hatV[H,]$ and $\hatV[S,].$

A classic estimator of policy value is the standard IPW estimator, which has been used for policy evaluation and learning in a wide range of settings \citep{swaminathan2015counterfactual,zhang2012estimating,kitagawa2018should,papadogeorgou2022causal}. 
The additive IPW weights for the single-period policy in \cref{eq: addIPW_weight} correspond to a polynomial approximation to the standard IPW weights, where interaction terms are included up to order $\beta$. This construction exploits the additive 
outcome model (Assumption~\ref{assump: additive_model}). The additive IPW weights have variance no greater than that of the standard IPW weights unless $\beta=n$, in which case they are equivalent (see \cref{a:sec:IPW estimator} for details). Moreover, standard IPW weights rely on overlap for the full joint treatment vector. When the treatment pattern favored by the target policy is rarely or never observed, the standard IPW weights can be zero or nearly zero. The additive IPW weights do not require observing the entire target treatment pattern exactly, and can therefore remain nonzero in such settings.

\subsection{Extension to estimated propensity scores}

If the true propensity score is unknown, we use an estimated propensity score $e_{ti}(W_{ti};\hat\ttheta)$ to obtain $\rhotPS$ and $\hatVPS$
where $e_t(W_t;\ttheta)$ is a parametric propensity score model with $\hat{\ttheta}$ being the MLE of parameter vector $\ttheta \in \mathbb{R}^K$. Under the correct specification of propensity score model, the estimator remains consistent and asymptotically normal, as proved in the next theorem.
The regularity conditions  (\Cref{assump:Policy_est1,assump: ps_model} in \cref{subsec: regularity}) assume the boundedness of outcomes and covariates, moment and stability conditions for score functions, and the smoothness of the propensity score model.

\begin{theorem}[Asymptotic normality of the normalized estimator using the estimated propensity score]\label{thm:addIPW_est} \singlespacing
Suppose that \Cref{assump: unconfoundedness,assump: additive_model_multi,assump: factor_ps,assump:Policy_est1,assump: ps_model} hold. Then, we have,
$$
\sqrt{T-M+1}\Big(\hatVPS-V^{\ppi^{(M)}}\Big) \dto N( 0,\widetilde{\sigma}^2_{\lambda^\ast}),
$$
where $\widetilde{\sigma}^2_{\lambda^\ast} = \sigma^2_{\lambda^\ast}-\mat{U}^\top \mat{\Sigma}_{ps}^{-1}\mat{U}$, with $\sigma^2_{\lambda^\ast}$ defined in \Cref{lemma:addIPW_true}, and $\mat{\Sigma}_{ps}$ and $\mat{U}$ defined in \Cref{assump: ps_model}.
\end{theorem}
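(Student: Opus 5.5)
We prove the result by a first-order expansion in the propensity parameter, a martingale CLT for the stacked score, and a projection-type variance identity. Write $\hatVPS$ as $\hat\lambda\hatV[H,][\bbeta][\ppi^{(M)}]+(1-\hat\lambda)\hatV[S,][\bbeta][\ppi^{(M)}]$ with each component evaluated at $\hat\ttheta$. Each component is a smooth function of the sample averages
\[
\bar\rho(\ttheta)=\tfrac{1}{T-M+1}\textstyle\sum_{t=M}^T\rhotPS[\bbeta][\ppi^{(M)}][\ttheta][],\qquad
\tfrac{1}{T-M+1}\textstyle\sum_{t=M}^T\rhotPS[\bbeta][\ppi^{(M)}][\ttheta][]\,\overline{Y}_t,\qquad
\overline{Y}.
\]
Linearizing the H\'ajek ratio and the shifted estimator around $(\ttheta^\ast,\bar\rho=1)$ gives
\[
\hatVPS - V^{\ppi^{(M)}} = \frac{1}{T-M+1}\sum_{t=M}^T \psi_t(\ttheta^\ast) + \mat{D}^\top(\hat\ttheta-\ttheta^\ast)+o_p\bigl((T-M+1)^{-1/2}\bigr),
\]
with the following ingredients.
\begin{itemize}
\item[(i)] The influence term is $\psi_t(\ttheta^\ast)=\rhotTruePS\bigl(\overline{Y}_t-\lambda^\ast V^{\ppi^{(M)}}-(1-\lambda^\ast)\overline{Y}\bigr)$ minus its conditional mean given $\lHt[-M]$. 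This is exactly the term from \Cref{lemma:addIPW_true}.
\item[(ii)] The drift vector is $\mat{D}=\plim \frac{1}{T-M+1}\sum_t \E[\partial_{\ttheta}\rhotTruePS(\overline{Y}_t-\cdots)\mid\lHt[-M]]$.
\item[(iii)] Replacing $\hat\lambda$ by $\lambda^\ast$ costs $o_p((T-M+1)^{-1/2})$. The reason is that $\hatV[H,]-\hatV[S,]=O_p((T-M+1)^{-1/2})$, since both are consistent with a common limit, and $\hat\lambda-\lambda^\ast=o_p(1)$ (\cref{a:sec:lambda}).
\end{itemize}

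For the propensity part, we use the MLE expansion $\hat\ttheta-\ttheta^\ast=\mat{\Sigma}_{ps}^{-1}\frac{1}{T-M+1}\sum_t S_t(\ttheta^\ast)+o_p(T^{-1/2})$. Here $S_t=\partial_\ttheta\log e_t(\bm W_t;\ttheta^\ast)$ is the score, a martingale difference with respect to $\lHt[-1]$. The smoothness, moment and stability conditions in \Cref{assump: ps_model} deliver uniform convergence of the Hessian to $-\mat{\Sigma}_{ps}$, with the sign absorbed into the definition.

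The key identity is the generalized information equality. Since $\rhotTruePS$ is a product of ratios $\mathds{1}(\cdot)/e_{ti}(\cdot;\ttheta)$ under \Cref{assump: factor_ps}, differentiating gives $\partial_\ttheta\rho=-\rho\cdot(\text{relevant score components})$. Taking conditional expectations and differentiating the identity $\E_\ttheta[\rho(\overline{Y}_t-\cdots)]=\text{target}$, which holds for all $\ttheta$, yields $\mat{D}=-\plim\frac{1}{T-M+1}\sum_t\E[\psi_tS_t\mid\cdot]=-\mat{U}$. Thus the total linear representation is $\frac{1}{T-M+1}\sum_t(\psi_t-\mat{U}^\top\mat{\Sigma}_{ps}^{-1}S_t)$.

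We then apply a martingale CLT to the stacked array $(\psi_t,S_t)$. The conditional Lindeberg condition follows from bounded outcomes and $\eta$-positivity. For the dependence between $\psi_t$ (adapted only at lag $M$) and $S_t$, we use the blocking argument already used for \Cref{lemma:addIPW_true}. The resulting variance is
\[
\sigma^2_{\lambda^\ast}-2\mat{U}^\top\mat{\Sigma}_{ps}^{-1}\mat{U}+\mat{U}^\top\mat{\Sigma}_{ps}^{-1}\mat{\Sigma}_{ps}\mat{\Sigma}_{ps}^{-1}\mat{U}=\sigma^2_{\lambda^\ast}-\mat{U}^\top\mat{\Sigma}_{ps}^{-1}\mat{U}.
\]
This uses that $\mat{\Sigma}_{ps}$ is also the limiting score variance by the information equality.

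The main obstacle is establishing $\mat{D}=-\mat{U}$ rigorously in this setting. The $M$-period weight involves scores from periods $t-M+1,\dots,t$, while $\psi_t$ is centered at lag $M$. So we must check that the cross-covariance of $\psi_t$ with all the scores $S_{t-m+1}$, $m\le M$, is what \Cref{assump: ps_model} defines as $\mat{U}$, and that the time-overlapping covariances are handled consistently in the martingale CLT, for example by regrouping the scores into the $M$-lag blocks. We would handle this by working with $S_t^{(M)}=\sum_{m=1}^M S_{t-m+1}$. We then verify $\E[\partial_\ttheta\rho\,(\cdot)\mid\lHt[-M]]=-\E[\rho\,(\cdot)\,S^{(M)}_t\mid\lHt[-M]]$ by iterated conditioning, using unconfoundedness (\Cref{assump: unconfoundedness}) so that the potential outcomes are fixed given the history.
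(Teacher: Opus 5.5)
Your route is the paper's route written out by hand. The paper stacks $\bigl(\widetilde V_t-V_t-\mu_1,\ \rho_t-1-\mu_2,\ \psi(\bm W_t,\lHt[-1];\ttheta)\bigr)$ as estimating equations. It reads $\widetilde{\mat{\Sigma}}-\mat{U}^\top\mat{\Sigma}_{ps}^{-1}\mat{U}$ off a sandwich whose Jacobian has off-diagonal block $\mat{U}^\top$; that block is exactly your identity $\mat{D}=-\mat{U}$ combined with the MLE expansion. It then applies the delta method and swaps $\hat\lambda$ for $\lambda^\ast$, as you do. For $M=1$ your argument goes through, up to two small repairs noted at the end, and your $\mat{U}$ equals the paper's $\mat{U}\mat{J}^\top$.

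The genuine gap is the $M>1$ step you flag, and it cannot be closed the way you propose. Differentiate $\E_{\ttheta}[\rho_t^{(\bbeta)}(\ppi^{(M)};\ttheta)(\overline Y_t-c)\mid\lHt[-M]]=V_t^{\ppi^{(M)}}-c$, which holds for every $\ttheta$ near $\ttheta^\ast$. In your notation this gives $\mat{D}=-\plim\frac{1}{T-M+1}\sum_t\E[\psi_t S^{(M)}_t\mid\lHt[-M]]$. The long-run cross-covariance of $\sum_t S_t$ with $\sum_t\psi_t$ is the same aggregated quantity, because $\E[S_{t-m+1}\psi_t]$ survives for every $m\le M$. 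But Assumption~\ref{assump: ps_model} defines $\mat{U}$ through the contemporaneous score $\psi(\bm W_t,\lHt[-1];\ttheta^\ast)$ alone. The lagged terms are generally nonzero, since $\rho_t$ contains the factor $\rho^{(\beta_m)}_{t-m+1}(\ppi_m)$, which is correlated with $S_{t-m+1}$. For instance, take $n=1$, $M=2$, a constant propensity $\expit(\theta)$ with $\theta^\ast=0$, treat-always policies, and $Y_t=a_t+W_{t-1}+W_t$ with $a_t$ varying in $t$. There the lag-one term equals the contemporaneous one, so the correct $\mat{U}$ is twice the paper's. Your ``check'' against the paper's $\mat{U}$ therefore fails, and for $M>1$ the formula as literally stated is not the correct limit.

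The paper's proof does not handle this either; its sandwich simply posits $\mat{U}^\top$ in the Jacobian. There is also no blocking argument in the proof of Theorem~\ref{lemma:addIPW_true} to reuse. Lemma~\ref{a: lemma_MDS} asserts a martingale difference property with respect to $\lHt[-M+1]^\ast$, to which $\aat$ is not adapted when $M>1$. You would have to build the martingale approximation yourself, as follows.
\begin{itemize}
\item Telescope $\psi_t=\sum_{k=0}^{M-1}\bigl(\E[\psi_t\mid\overline H^\ast_{t-k}]-\E[\psi_t\mid\overline H^\ast_{t-k-1}]\bigr)$.
\item Regroup the increments by the time $s=t-k$ at which they are revealed, so that they are martingale differences jointly with $S_s$.
\end{itemize}
Your argument then proves the form $\sigma^2-\mat{U}^\top\mat{\Sigma}_{ps}^{-1}\mat{U}$, but with $\mat{U}$ defined through $S^{(M)}_t$. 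Here $\sigma^2$ is the long-run variance, which also contains the autocovariances $\E[\psi_t\psi_{t-k}]$, $1\le k\le M-1$, that the definition in Theorem~\ref{lemma:addIPW_true} omits. State the result that way, or restrict to $M=1$.

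Two smaller repairs are needed.
\begin{itemize}
\item $\partial_{\ttheta}\rho=-\rho\cdot(\text{score})$ is false for additive weights with $\beta_m<n$, which are truncated expansions rather than products. Your differentiate-the-identity argument does not need it.
\item $\hatVPS[H,]-\hatVPS[S,]=O_p(T^{-1/2})$ does not follow from consistency. It does follow from the identity $\hatVPS[H,]-\hatVPS[S,]=(1-\rhobar)\bigl(\hatVPS[H,]-\overline Y\bigr)$ together with $1-\rhobar=O_p(T^{-1/2})$.
\end{itemize}
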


The following theorem establishes that the use of an estimated propensity score yields a more efficient estimator, which is analogous to well-known results in the standard i.i.d. setting \citep{hirano2003efficient} and related findings in settings with interference \citep{liu2016inverse,papadogeorgou2022causal}.

\begin{theorem}[Estimated vs. known propensity scores]\label{thm: Hajek efficiency} \singlespacing
If the propensity score model is correctly specified, the estimator $\hatVPS$ based on the estimated propensity score has asymptotic variance that is no greater than that of the estimator $\hatV$ based on the known propensity score.
\end{theorem}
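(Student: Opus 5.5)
By \Cref{thm:addIPW_est}, the estimator based on the estimated propensity score has asymptotic variance $\widetilde{\sigma}^2_{\lambda^\ast} = \sigma^2_{\lambda^\ast}-\mat{U}^\top \mat{\Sigma}_{ps}^{-1}\mat{U}$. By \Cref{lemma:addIPW_true}, evaluated at the probability limit $\lambda^\ast$ of $\hat\lambda$ (which, as noted after that theorem, does not change the limit law), the known-propensity estimator $\hatV$ has asymptotic variance $\sigma^2_{\lambda^\ast}$. The plan is therefore to show $\mat{U}^\top \mat{\Sigma}_{ps}^{-1}\mat{U}\ge 0$, which reduces to positive definiteness of $\mat{\Sigma}_{ps}$.

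First, I will recall from \Cref{assump: ps_model} what $\mat{\Sigma}_{ps}$ is. Under correct specification, the score of the propensity model at time $t$ is
\[
\bm S_t(\ttheta^\ast)=\sum_{i=1}^n \partial_{\ttheta}\log e_{ti}(W_{ti};\ttheta^\ast).
\]
This is a martingale difference with respect to $\lHt[-1]$, since by \Cref{assump: unconfoundedness,assump: factor_ps} we have $\E[\bm S_t\mid \lHt[-1]]=0$. The matrix $\mat{\Sigma}_{ps}$ is then the probability limit of the averaged conditional information $\frac{1}{T-M+1}\sum_t \E[\bm S_t\bm S_t^\top\mid\lHt[-1]]$. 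By the information identity it equals minus the limit of the averaged Hessian. The regularity condition in \Cref{assump: ps_model}, that $\mat{\Sigma}_{ps}$ is invertible (needed for the MLE expansion $\sqrt{T}(\hat\ttheta-\ttheta^\ast)=\mat{\Sigma}_{ps}^{-1}T^{-1/2}\sum_t\bm S_t+o_p(1)$), together with its being a limit of PSD matrices, makes $\mat{\Sigma}_{ps}$ symmetric positive definite. Hence $\mat{\Sigma}_{ps}^{-1}$ is also positive definite, and $\mat{U}^\top\mat{\Sigma}_{ps}^{-1}\mat{U}\ge0$, with equality only when $\mat{U}=0$.

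For a more interpretable argument, I would also identify $\mat{U}$ as the limiting conditional covariance between the estimator's influence term $\psi_t=\rhot(\overline{Y}_t-\lambda^\ast \hatV-(1-\lambda^\ast)\overline{Y})$ and $\bm S_t$. This uses $\partial_\ttheta\rho_{ti}=-\rho_{ti}\,\partial_\ttheta\log e_{ti}$ and the generalized information equality $\E[\partial_\ttheta\psi_t\mid\cdot]=-\E[\psi_t\bm S_t^\top\mid\cdot]$. With this identification, the influence function of the estimated-score estimator is $\psi_t-\mat{U}^\top\mat{\Sigma}_{ps}^{-1}\bm S_t$, which is the residual of projecting $\psi_t$ onto the span of the scores in the martingale-difference $L^2$ sense. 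The variance reduction $\sigma^2_{\lambda^\ast}-\widetilde\sigma^2_{\lambda^\ast}$ is then the variance of the projection, which is nonnegative.

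The main obstacle is the time-series bookkeeping: $\mat{U}$ and $\mat{\Sigma}_{ps}$ are defined conditionally on different lags. The scores are conditioned on $\lHt[-1]$, while $\sigma^2$ is conditioned on $\lHt[-M]$, and the $M$-period weight involves $M$ scores. I would handle this by stacking the scores $\bm S_{t},\dots,\bm S_{t-M+1}$ into one martingale-difference sequence and applying the martingale CLT jointly. Once the paper's definitions are taken as given, however, the inequality needs only positive definiteness, so the projection interpretation is optional.
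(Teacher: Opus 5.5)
Your proposal is correct and follows the paper's proof. The paper likewise reads off the variance reduction $\mat{U}^\top\mat{\Sigma}_{ps}^{-1}\mat{U}$ from \Cref{thm:addIPW_est} and concludes it is positive semidefinite because $\mat{\Sigma}_{ps}$ is positive definite; this is assumed outright in \Cref{assump: ps_model}(1b), so your detour through invertibility plus being a limit of PSD matrices is unnecessary, and your projection interpretation and bookkeeping remarks are optional extras not in the paper. (Strictly, since $\mat{U}\in\mathbb{R}^{K\times 2}$, the scalar reduction is $\mat{J}\mat{U}^\top\mat{\Sigma}_{ps}^{-1}\mat{U}\mat{J}^\top\ge 0$, which follows from the same fact.)
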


The asymptotic variance of the proposed estimator cannot be consistently estimated because we only observe one time series per unit. This limitation holds regardless of whether we use the true or estimated propensity scores. Following \cite{zhou2024estimating}, we consider an upper bound on the variance.
\begin{proposition}[Variance upper bound and its consistent estimation] \label{prop: bound} \singlespacing
    The asymptotic variance of $\hatVPS$ is bounded by $$\mathrm{Var}^{\text{UB}}(\hatVPS) = \frac{1}{(T-M+1)^2}\sum_{t=M}^T\E\left[\rhotPS^2\left( \overline{Y}_t-\left(\lambda^\ast V^{\ppi^{(M)}}+(1-\lambda^\ast)\overline{Y}\right)\right)^2\mid \lHt[-M]\right],$$ which can be consistently estimated by
\begin{equation*}\label{eq: var_bound}
\varub(\hatVPS) = \frac{1}{(T-M+1)^2}\sum_{t=M}^T\tilde\rho_t^{(\bbeta)}(\ppi^{(M)};\hat\ttheta)^2\left(\overline{Y}_{t}-\left(\hat\lambda\hatVPS+(1-\hat\lambda)\overline{Y}\right)\right)^2,
\end{equation*}
where $\tilde\rho_t^{(\bbeta)}(\ppi^{(M)};\hat\ttheta) = \hat\lambda\dfrac{\rhotPS}{\rhobar}+(1-\hat\lambda)(\rhotPS+(1-\rhobar))$.
\end{proposition}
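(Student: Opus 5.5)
The plan has two parts. First I show that the displayed quantity dominates the asymptotic variance in \cref{thm:addIPW_est}. Then I show that the plug-in estimator converges to it after rescaling by $T-M+1$.

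\textbf{Upper bound.} By \cref{thm:addIPW_est}, the asymptotic variance of $\hatVPS$, scaled by $1/(T-M+1)$, is $\widetilde{\sigma}^2_{\lambda^\ast}/(T-M+1)$, where $\widetilde{\sigma}^2_{\lambda^\ast} = \sigma^2_{\lambda^\ast}-\mat{U}^\top \mat{\Sigma}_{ps}^{-1}\mat{U}$. Since $\mat{\Sigma}_{ps}$ is positive definite (\cref{assump: ps_model}), $\mat{U}^\top \mat{\Sigma}_{ps}^{-1}\mat{U}\ge 0$, and dropping it gives $\widetilde{\sigma}^2_{\lambda^\ast}\le\sigma^2_{\lambda^\ast}$. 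By \cref{lemma:addIPW_true}, $\sigma^2_{\lambda^\ast}$ is the probability limit of the time-average of $\Var(\rhot Z_t\mid\lHt[-M])$, with $Z_t=\overline{Y}_t-\lambda^\ast V^{\ppi^{(M)}}-(1-\lambda^\ast)\overline{Y}$. Each term obeys $\Var(\rhot Z_t\mid \lHt[-M])\le \E[\rhot^2Z_t^2\mid\lHt[-M]]$. This is the standard reason the bound is not sharp: the squared conditional mean $\E[\rhot Z_t\mid\lHt[-M]]^2$ involves unobservable counterfactuals and is discarded. Summing gives $\mathrm{Var}^{\text{UB}}$. The weights at $\ttheta^\ast$ coincide with $\rhotPS$ evaluated at the truth, so the statement is consistent.

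One subtlety is that $\overline{Y}$ and $V^{\ppi^{(M)}}$ are not $\lHt[-M]$-measurable. Outcomes are fixed potential values and both quantities have deterministic limits, so I would interpret them at their limits, as is already done in defining $\sigma^2_\lambda$. The $o_p(1)$ difference vanishes because outcomes are bounded (\cref{assump:Policy_est1}).

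\textbf{Consistency of the estimator.} I would compare $(T-M+1)\varub$ with $(T-M+1)\mathrm{Var}^{\text{UB}}$ in three steps.

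\begin{enumerate}
\item Replace every estimated object by its limit. The replacements are $\hat\lambda\to\lambda^\ast$ (consistency from \cref{a:sec:lambda}), $\hat\ttheta\to\ttheta^\ast$ (MLE consistency), $\hatVPS\to V^{\ppi^{(M)}}$ (\cref{thm:addIPW_est}), and $\rhobar\to 1$. The last holds because $\E[\rhot\mid\lHt[-M]]=1$ under \cref{assump: factor_ps}, together with a martingale law of large numbers. With these limits, $\tilde\rho_t\to\rhot$ uniformly in $t$. Positivity ($\eta<e_{ti}<1-\eta$) makes $\rhot$ bounded by a constant depending on $n,\bbeta,M,\eta$. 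Outcomes are bounded, and the propensity model is smooth (Lipschitz in $\ttheta$ near $\ttheta^\ast$). Hence the average of the squared summands differs from the version with true quantities by $O_p(|\hat\lambda-\lambda^\ast|+\|\hat\ttheta-\ttheta^\ast\|+|\rhobar-1|+|\hatVPS-V^{\ppi^{(M)}}|)=o_p(1)$.
\item Show that $\frac{1}{T-M+1}\sum_t \{f_t-\E[f_t\mid\lHt[-M]]\}\pto 0$, where $f_t=\rhot^2 Z_t^2$.
\item Combine the two steps.
\end{enumerate}

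\textbf{Main obstacle.} Step 2 is the hard part, because $f_t$ is measurable with respect to $\lHt$ but is centered only given $\lHt[-M]$. The differences $\xi_t=f_t-\E[f_t\mid\lHt[-M]]$ are therefore not a martingale difference sequence. I would split the index set into the $M$ residue classes $t\equiv r \pmod M$. Within each class, $\xi_t$ is a martingale difference sequence with respect to the filtration $\{\lHt\}$ sampled every $M$ periods: $\lHt[-M]$ contains the information from the previous element of the class, and randomness comes only from treatment assignment. Each $\xi_t$ is uniformly bounded, so each class average is $O_p(T^{-1/2})$ by the bounded-increment martingale weak law (or Azuma's inequality). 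Summing the $M$ classes, with $M$ fixed, completes the argument.
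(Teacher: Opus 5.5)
Your proposal is correct and follows the paper's route, which is \cref{rmk: bound_trueps} with $\ttheta^\ast$ then replaced by $\hat\ttheta$: bound each conditional variance by the conditional second moment, apply a martingale law of large numbers to $\rhot^2 Z_t^2-\E[\rhot^2 Z_t^2\mid\lHt[-M]]$, and substitute $\hat\lambda$, $\hatVPS$, $\rhobar\to 1$ and $\hat\ttheta$ using boundedness. Your residue-class splitting is a more careful version of the step where the paper applies Cs\"org\H{o}'s martingale strong law directly to a sequence that is centered only given $\lHt[-M]$, which is not a martingale difference sequence when $M>1$; your one overstatement is that $\overline{Y}$ and $V^{\ppi^{(M)}}$ need not have deterministic limits, but you do not need them, because expanding $Z_t^2$ lets you pull the bounded scalar $c=\lambda^\ast V^{\ppi^{(M)}}+(1-\lambda^\ast)\overline{Y}$ outside each time average, as the paper implicitly does.
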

The proofs of all the theoretical results for this section appear in \Cref{a: Asymptotic_property}.

% In most cases outcome models with two-way interactions are
% sufficient for effective policy evaluation and learning. If the true model includes
% higher-order interactions, $\hatV[H,][\beta][\ppi]PS$ estimate the projection of the true policy value $V^{\ppi}$ into the linear treatment vector space, i.e.,
% \begin{equation}
%     \frac{1}{T}\sum_{t=1}^T\bm g^{\text{Proj.}}_{ti}\left(\lHt[-1]\right)^\top\phi^{(\beta)}(\ppi\left(\lHt[-1]\right)),
% \end{equation}
% where $g^{\text{Proj.}}_{ti}\left(\lHt[-1]\right) = \arginf_g \frac{1}{n}\sum_{i=1}^n\E\left[\left(\E[Y_{ti}\mid W_t,\lHt[-1]]-g_{ti}\left(\lHt[-1]\right)\right)^2\mid \lHt[-1]\right].$ 

\section{Policy learning}\label{sec:policy_learning}

In this section, we consider the problem of finding optimal policies over multiple time periods from time series data with arbitrary interference. We present our theoretical results in the case of estimated propensity scores while leaving the corresponding results for known propensity scores to \cref{a:subsec:proof-regret}.

\subsection{Finding an optimal policy} \label{subsec: find_optimal_policy}

We obtain an optimal policy by maximizing the estimated policy value function. Specifically, the estimated optimal policy $\hat\ppi^{(M)}$ is defined as a solution to the following multi-dimensional optimization problem,
\begin{equation}\label{eq: optimize}
\hat\ppi^{(M)}=(\hat\ppi_1,\dots,\hat\ppi_M) = \argmax_{\ppi^{(M)} \in \Pi^M} \hatVPS.
\end{equation}
Since the objective is nonsmooth and nonconvex, we approximate the deterministic policy class with a smooth stochastic policy class, which allows for gradient-based optimization.

For illustration, we focus on the linear policy class defined in \cref{eq: linearclass}, though optimization over a different policy class would proceed similarly. 
We approximate the deterministic linear policy class $\Pi$ by the following compact stochastic policy class,
\[
\Pi^{\text{stoch}}(k) = \left\{\ppi^{\text{stoch}}(k):
\pi_{i}^{\text{stoch}}\left(k; \lHt[-1]\right) = \expit(k\bm X_{ti}^P\ggamma),\,
||\ggamma||_\infty\le 1\right\},
\]
where $\bm X_{ti}^P$ is the covariate matrix in the linear policy class of \cref{eq: linearclass}. 
Under a stochastic policy $\ppi^{\text{stoch}}(k) \in \Pi^{\text{stoch}}(k)$, the quantity 
$\pi_{i}^{\text{stoch}}(k; \lHt[-1])$ represents the probability that unit $i$ is assigned treatment based on the observed history $\lHt[-1]$.

The value of a stochastic policy is defined analogously to the value of a deterministic policy except that it represents a weighted average of potential outcomes with weights equal to the probabilities induced by the policy (see \cref{a:subsec:stochastic_policy_value} for the exact definition).
Therefore, instead of solving \cref{eq: optimize} directly, % over the policy class $\Pi$, 
we optimize an estimated policy value function for stochastic policies over the stochastic policy class $\Pi^{\text{stoch}}(k)$.

\paragraph{The choice of tuning parameter $k$.}
For a fixed policy in $\ppi^{\text{stoch}}(k) \in \Pi^{\text{stoch}}(k)$, the tuning parameter $k$ controls how closely this policy approximates the corresponding deterministic rule. Specifically, as $k$ increases, $\ppi^{\text{stoch}}(k)$ will tend to assign treatment to units with $\bm X_{ti}^P\ggamma > 0$, and control otherwise. Meanwhile, the class $\Pi^{\text{stoch}}(k)$ expands as $k$ increases. As a result, for sufficiently large $k$, $\Pi^{\text{stoch}}(k)$ contains a policy close to the optimal deterministic policy. %, whereas for small $k$, it may not.

%In practice, $k$ should be set to be a sufficiently large constant to approximate the deterministic rule, but not excessively large to avoid vanishing gradients during optimization. 
% It can be chosen based on the scale of $\bm X_{ti}^P$. For example, 
If all covariates in $\bm X_{ti}^P$ are standardized, we have found that when $k=5$ many functions in the policy class often yield probabilities close to $0$ or $1$ over the observed range of $\bm X_{ti}^P$. If the learned $\ggamma$ has norm close to its constraint boundary and yet the resulting policy is far from being deterministic, this indicates that a larger $k$ may be needed.

\paragraph{Estimation of the optimal policy and theoretical guarantees.}
To estimate the optimal stochastic policy, we decompose the multi-period optimization problem into a series of single-period optimization problems over the stochastic policy class. Specifically, we first compute the first-lag optimal policy
$
\hat\ppi_1 = \argmax_{\ppi\in\Pi^{\text{stoch}}(k)} \hat V^{\ppi}.
$
Then, we compute the second-lag optimal policy while fixing the first-lag policy at $\hat\ppi_1,$ i.e.,
$
\hat\ppi_2 = \argmax_{\ppi\in\Pi^{\text{stoch}}(k)} \hat V^{\hat\ppi_1,\ppi}.
$
We continue in this manner until we obtain the $M$-th lag optimal policy $\hat\ppi_M$.
Approximation based on stochastic policies allows us to solve the sequential optimization problem using gradient-based methods. %In particular, we use the limited-memory Broyden-Fletcher-Goldfarb-Shanno (L-BFGS) algorithm, a widely used quasi-Newton optimization method \citep[e.g.][]{liu1989limited}. %that approximates the Hessian matrix using a small number of past gradient evaluations.  

The next theorem shows that our approach yields policies whose value is asymptotically no less than that of the optimal deterministic policy, up to an arbitrarily small error.
\begin{theorem}\label{thm: approx_optimize} \singlespacing
Suppose that \Cref{assump: unconfoundedness,assump: additive_model_multi,assump: factor_ps,assump:Policy_est1,assump: ps_model} hold. 
Let $\hat\ppi^{\text{stoch},(M)}(k)$ denote the policy obtained by optimizing over the stochastic policy class $\Pi^{\text{stoch}}(k)$. Then there exists $K\in\mathbb{N}$ such that for any $\epsilon>0$ and $k>K$,
\[
\P(V^{\hat\ppi^{\text{stoch},(M)}}(k)>V^{\hat\ppi^{(M)}}-\epsilon)\to 1,\quad \text{as  } T\to\infty.
\]
\end{theorem}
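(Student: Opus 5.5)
Fix $\epsilon>0$. The plan is to compare the learned stochastic policy with the deterministic optimum through a chain of four inequalities:
\[
V^{\hat\ppi^{\text{stoch},(M)}(k)} \approx \hat V^{\hat\ppi^{\text{stoch},(M)}(k)} \ge \hat V^{\ppi^{\text{stoch},(M)}_{\circ}(k)} \approx V^{\ppi^{\text{stoch},(M)}_{\circ}(k)} \approx V^{\hat\ppi^{(M)}}.
\]
Here $\ppi^{\text{stoch},(M)}_\circ(k)$ is a stochastic policy built to mimic the deterministic policy $\hat\ppi^{(M)}$. Each $\approx$ will hold up to $\epsilon/3$, with probability tending to one for $k>K$.

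The first and third links are uniform consistency of the estimated value over $\Pi^{\text{stoch}}(k)^M$. I would write $\hatVPS$ for a stochastic policy in the same way as for a deterministic one. The unit-level weight $\rho_{ti}$ is replaced by $\pi^{\text{stoch}}_i(W_{ti}\mid \lHt[-1])/e_{ti}(W_{ti};\hat\ttheta)$, which is Lipschitz in $\ggamma$ over the compact set $\|\ggamma\|_\infty\le1$ because the covariates are bounded (\Cref{assump:Policy_est1}). Pointwise consistency then follows exactly as in \Cref{thm:addIPW_est}: $\hat\ttheta$ is consistent, and $\hat\lambda$ and $\rhobar$ converge.

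To upgrade to uniform convergence I would combine three facts: the Lipschitz property, a finite $\delta$-net of the $Mp$-dimensional parameter cube, and the martingale-difference structure of $\rhot\overline{Y}_t-\E[\cdot\mid\lHt[-M]]$. Together these give
\[
\sup_{\ppi}\bigl|\hatVPS-V^{\ppi^{(M)}}\bigr|=o_p(1).
\]
The second link, $\ge$, is just the definition of $\hat\ppi^{\text{stoch},(M)}(k)$ as the maximizer.

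I have to be careful on one point: the paper computes the stochastic maximizer sequentially, lag by lag. I would handle this by running the same comparison at each lag $m$. At lag $m$ the earlier lags are held fixed at their estimated values, and I use the sequential additive structure (\Cref{assump: additive_model_multi}) to express each $\hat V^{\hat\ppi_1,\dots,\ppi_m}$ as a functional of single-period weights. Errors then accumulate additively over the finitely many lags, so I split $\epsilon$ into $M$ pieces.

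The last link is where $k$ enters, and it is the main obstacle. Write $\hat\ppi^{(M)}$ with coefficients $\hat\ggamma_m$, rescaled so that $\|\hat\ggamma_m\|_\infty=1$; the indicator $\mathds{1}(\bm X^P\ggamma>0)$ is scale invariant, so this is harmless. Take $\ppi_\circ(k)$ to be $\expit(k\bm X^P\hat\ggamma_m)$. For each $(t,i)$, the stochastic policy then differs from the deterministic one by at most $\expit(-k|\bm X_{ti}^P\hat\ggamma_m|)$. Policy values depend on the policy probabilities multilinearly, through $\phi^{(\beta_m)}$ composed with bounded $\bm g$. Hence
\[
\bigl|V^{\ppi_\circ(k)}-V^{\hat\ppi}\bigr|\le C\,\frac{1}{Tn}\sum_{t,i}\expit(-k|\bm X^P_{ti}\hat\ggamma|).
\]
The right-hand side is small for large $k$ except on units lying near the decision margin. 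To control the margin units I would argue through the true value rather than the estimated one. Replacing the deterministic policy by its smoothed version changes only the treatment probabilities for units with small $|\bm X^P\ggamma|$, and $V^{\ppi}$ is continuous in those probabilities. I would then impose, or borrow from \Cref{assump:Policy_est1}, a margin-type condition: the empirical fraction of $(t,i)$ with $|\bm X^P_{ti}\ggamma|<\delta$ tends to zero as $\delta\to0$, uniformly in $\ggamma$. Under that condition a single $K$ works for all $\epsilon$ only in the sense the statement allows, namely by choosing $\delta$ and then $k$. If the margin condition is unavailable, the fallback is to note that $\sup$ over $\Pi^{\text{stoch}}(k)$ dominates, in the limit, the value of any deterministic policy that perturbs $\hat\ggamma$ slightly to avoid ties. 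Combining the three approximations with the maximizer inequality gives the claim.
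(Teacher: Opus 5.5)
Your chain is genuinely different from the paper's. The paper argues $V^{\hat\ppi^{\text{stoch},(M)}(K)}\approx\hat V(\hat\ppi^{\text{stoch},(M)}(K))\ge\hat V(\tilde\ppi^{\text{stoch},(M)}(K))>\hat V(\hat\ppi^{(M)})-\epsilon_1\approx V^{\hat\ppi^{(M)}}$. That is, it smooths at the level of the \emph{estimator} on the realized data, where only observed histories enter. It then invokes the fixed-policy consistency of \Cref{thm:addIPW_est} at the two data-dependent policies. You smooth at the level of the \emph{true value}, and you need consistency only over the smooth class.

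This buys two things. First, your Lipschitz/net/martingale argument gives a uniform law over $\Pi^{\text{stoch}}(k)^M$ that genuinely follows from the stated boundedness conditions. By contrast, consistency of $\hat V$ at the deterministic maximizer $\hat\ppi^{(M)}$ would require a uniform law over the discontinuous threshold class (something like \Cref{assump:coord-capacity}, which is not assumed here), and your chain never uses it. Second, the comparator in your chain is arbitrary, so you actually obtain $V^{\hat\ppi^{\text{stoch},(M)}(k)}>\sup_{\Pi^M}V-\epsilon$.

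The price is your margin condition, and this is not a weakness of your route. The paper's proof needs $|\tilde\pi_{Kmi}-\hat\pi_{mi}|<\epsilon$ for all $t\le T$. That forces $K$ to grow like the inverse of the smallest margin $\min_{t\le T,\,i}|\bm X^P_{ti}\tilde\ggamma|$, which depends on the data and on $T$. So ``fix $k>K$, let $T\to\infty$'' is not justified there either. Without some margin condition the conclusion can fail. Take $n=1$, $\bm X^P_t=(1,X_t)$, $X_t=(-1)^t/t$, $Y_t(w)=\mathrm{sign}(X_t)\,w$, and $e_t\equiv 1/2$. The rule $\mathds{1}(X_t>0)$ has value about $1/2$, and $\hat\ppi^{(M)}$ recovers it, since the deterministic class induces only $O(T)$ labelings of this fixed sequence. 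Yet every policy in $\Pi^{\text{stoch}}(k)$ has value at most $1/T+k(1+\log T)/(4T)$.

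For the same reason, your fallback of perturbing $\hat\ggamma$ to break exact ties does not help. It removes ties, but not a margin that shrinks as $T$ grows. Drop the fallback and state the margin condition as an explicit extra hypothesis. (As in the paper, your $K$ depends on $\epsilon$.)

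Two further repairs are needed.
\begin{enumerate}
\item For $M\ge 2$, the true value evaluates $\ppi_1,\dots,\ppi_{M-1}$ at counterfactual histories $\lHt[-m]^{\bm w}$. Your smoothing bound and margin condition must therefore be imposed on the potential covariates along those paths, and on all of them since $\ppi_\circ$ randomizes, not only on the observed $\bm X^P_{ti}$. This is the one place where the paper's estimator-level smoothing is cleaner.
\item The lag-by-lag remark does not work as stated. At lag $m$, the stochastic and deterministic sequences have different earlier lags. Those earlier-lag comparisons were made on observed histories, not on the histories generated by the lag-$m$ policy. Hence optimality of $\hat\ppi^{\text{stoch}}_m$ given $\hat\ppi^{\text{stoch}}_1,\dots,\hat\ppi^{\text{stoch}}_{m-1}$ yields no additive error accumulation against $\hat\ppi^{(M)}$. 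The paper's proof treats both policies as joint maximizers over $\Pi^{\text{stoch}}(k)^M$ and $\Pi^M$. Under that reading your main chain suffices, and the remark should be dropped.
\end{enumerate}
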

The proof is given in  \Cref{a:subsec:stochastic_proof}. While similar stochastic approximations are used in the presence of interference with cross-sectional data \citep{zhang2023individualized} and dynamic policy learning without interference among units \citep{fang2023fairness}, we establish the theoretical guarantees for dynamic policy learning under semiparametric additive interference.
The stochastic approximation framework and the result of \cref{thm: approx_optimize} extend, under analogous regularity conditions, to other commonly used policy classes, including fixed-depth decision trees and treatment sets with piecewise-linear boundaries \citep{kitagawa2018should,zhang2023individualized,zhou2023offline,viviano2025policy}.

%In such cases, a stochastic approximation can be obtained by replacing the hard decision with a smooth probability function, allowing the policy to assign treatment with probability close to 0 or 1.

In general, the learned stochastic policy is not guaranteed to converge to a deterministic policy as $T \to \infty$.  This is because $\Pi^{\text{stoch}}(k)$ always contains policies that differ from deterministic ones, irrespective of the choice of $k$. Such stochastic policies might lead to higher policy value than deterministic ones. However, \cref{thm: approx_optimize} shows that the learned stochastic policy will be at least as good as the optimal deterministic one. Even though the learned stochastic policy could differ from a deterministic rule, in practice, we have found that the learned stochastic policy often produces treatment assignment probabilities close to $0$ or $1$  in all of our simulation and empirical studies (\cref{sec:application}). %with probabilities either smaller than $0.01$ or larger than $0.99$. 
In the case where a deterministic policy is required, one can add a penalty term to the optimization objective function (see \cref{a:subsec:deterministic_penalty}).

%There are cases where the optimization problem can be solved directly without invoking the stochastic policy class approximation. 
For certain policy classes, if the mixing parameter $\lambda$ is fixed to a specific value and the outcome model contains no interaction term (i.e., $\beta=1$), the problem in \cref{eq: optimize} admits a mixed integer programming (MIP) formulation based on auxiliary variables \citep{kitagawa2018should,zhang2023individualized}. 
However, when $\lambda$ is estimated or the outcome model has interaction terms, the resulting objective cannot in general be formulated as an MIP, and a stochastic policy class approximation needs to be employed.

\paragraph{Addressing propensity overfitting in policy learning with weighting estimators.}
The policy learning step in \cref{eq: optimize} requires the estimation of the policy value under different policies.
In principle, different estimators of the policy value can be used. When such estimator employs inverse propensity weights, estimation of an optimal policy can suffer from propensity overfitting, a phenomenon in which the algorithm increases the objective function by selecting policies that align with the propensity score rather than improving the true policy value \citep{swaminathan2015self}.

To address this issue, the proposed estimators for the policy value in $\mathcal{C}(\ppi^{(M)})$ introduced in \cref{sec:estimation} employ standardized weights, making themselves shift-invariant so that adding a constant to the outcome variable shifts the policy value estimator by the same constant. Since artificial outcome shifts cannot be exploited to inflate the objective of the optimization problem without improving the true policy value, shift-invariant estimators are not susceptible to propensity overfitting.  We do not consider the broader class of estimators
$\left\{\frac{1}{T}\sum_{t=1}^T\rhot\overline{Y}_t+\Lambda(1-\rhobar): \Lambda\in\mathbb{R}\right\}$,
which includes $\mathcal{C}(\ppi^{(M)})$ as a special case \citep{khan2023adaptive}, because the estimators in this class are generally not shift-invariant.

%%%%%%%%%%%%%%%%%%%%%%%%%%%%%

% We estimate the optimal policy by maximizing the estimated policy value function,
% \begin{equation}\label{eq: optimize}
% \hat\ppi^{(M)}=(\hat\ppi_1,\dots,\hat\ppi_M) = \argmax_{\ppi^{(M)} \in \Pi^M} \hatVPS.
% \end{equation}

% We first present the details for solving the optimization problem in \cref{eq: optimize} for policy $\hat\ppi$ over one time period. This optimization is challenging because the objective is nonconvex. Moreover, the object involves the policy indicator variable $\mathds{1}\left\{W_{ti}=\pi_{i}\left(\lHt\right)\right\}$, which is non-smooth and non-differentiable, making gradient-based optimization infeasible.

% We propose an approach that approximates the deterministic policy by a stochastic policy. For illustration, we present the details for a linear policy class defined in \cref{eq: linearclass}.

% Let $k \ge 1$. We approximate the linear policy class $\Pi$ by a compact stochastic policy class $\Pi_k^{\text{stoch}} = \left\{\ppi^{\text{stoch}}(k):\pi_{ki}^{\text{stoch}}\left(\lHt[-1]\right) = \expit(kX_{ti}^P\gamma), ||\ggamma||_\infty\le 1\right\}$, where each 

\subsection{Regret analysis}\label{subsec: regret}
To assess the performance of a learned policy relative to the optimal one, we measure its regret, defined in \cref{subsec: policyvalue}.
The following theorem establishes a regret bound for the learned policy with
estimated propensity scores.
The result is derived under a regularity condition on the coordinate policy class
$
\mathcal G=\big\{\overline{H}_{t-1}\to \pi_{i}(\overline{H}_{t-1}):\ \ppi\in\Pi,\ i=1,\ldots,n\big\}.
$ has bounded sequential Rademacher complexity \citep{rakhlin2015sequential}, as formalized in
\Cref{assump:coord-capacity}. This regularity condition limits how flexibly
unit-level treatment decisions can depend on past histories. For the deterministic linear policy class, this condition holds if the variables in the history have bounded support, together
with a separation condition that prevents the linear combination used to assign treatment from being
arbitrarily close to the threshold defining the treatment decision. We provide further discussion in
\Cref{a:subsec:regret-regularity}.

\begin{theorem}[Regret bound with estimated propensity scores]\label{thm: regret_estimate_ps} \singlespacing
Suppose that 
\Cref{assump: unconfoundedness,assump: additive_model_multi,assump: factor_ps,assump:Policy_est1,assump:coord-capacity,assump: ps_model} hold.
% Let $\hat\ttheta$ be the maximum likelihood estimator whose score function $\psi\left(W_t, \lHt[-1]; \ttheta\right)$ satisfies Assumption~\ref{assump: ps_model}, and and 
Let the learned policy under the estimated propensity score be $\hat\ppi^{(M)}_{\hat\ttheta} = \argmax_{\ppi^{(M)}\in\Pi^M}\hatVPS$, 
then $R(\hat\ppi_{\hat\ttheta}^{(M)})=O_p\left(\sqrt{1/T}\right)$.
\end{theorem}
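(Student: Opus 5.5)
The argument will use the standard regret decomposition for empirical welfare maximization. Write $\hat V(\ppi^{(M)})$ for $\hatVPS$ evaluated at a generic $\ppi^{(M)}\in\Pi^M$. Since $\hat\ppi^{(M)}_{\hat\ttheta}$ maximizes $\hat V$, we have
\begin{align*}
R(\hat\ppi^{(M)}_{\hat\ttheta}) &= \bigl\{V^{\ppi^{(M)\ast}}-\hat V(\ppi^{(M)\ast})\bigr\}+\bigl\{\hat V(\ppi^{(M)\ast})-\hat V(\hat\ppi^{(M)}_{\hat\ttheta})\bigr\}+\bigl\{\hat V(\hat\ppi^{(M)}_{\hat\ttheta})-V^{\hat\ppi^{(M)}_{\hat\ttheta}}\bigr\}\\
&\le 2\sup_{\ppi^{(M)}\in\Pi^M}\bigl|\hat V(\ppi^{(M)})-V^{\ppi^{(M)}}\bigr|,
\end{align*}
because the middle term is nonpositive. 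It therefore suffices to show that this uniform deviation is $O_p(T^{-1/2})$.

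For the uniform bound, I will linearize the estimator, as in the proofs of \Cref{lemma:addIPW_true} and \cref{thm:addIPW_est}, uniformly over $\Pi^M$. First treat the true propensity score $\ttheta^\ast$. Both $\hatV[H,]$ and $\hatV[S,]$ are smooth functions of the averages $\rhobar$ and $\frac{1}{T}\sum_t\rhot\overline{Y}_t$. Under \Cref{assump: unconfoundedness,assump: additive_model_multi,assump: factor_ps}, the sequential additive structure gives $\E[\rhot\overline{Y}_t\mid\lHt[-M]]=\frac1n\sum_i Y_{ti}^{\ppi^{(M)}}(\lHt[-M])$ and $\E[\rhot\mid\lHt[-M]]=1$. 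Hence $\frac{1}{T-M+1}\sum_t\{\rhot\overline{Y}_t-\frac1n\sum_iY_{ti}^{\ppi^{(M)}}\}$ and $\rhobar-1$ are sums of $M$-dependent martingale-difference-type terms indexed by the policy. Splitting the time index into $M$ residue classes modulo $M$ turns each piece into a genuine martingale difference sequence with respect to the filtration. The weights $\rhot$ are bounded by a constant depending only on $\eta$, $n$, $\beta_m$ and $M$ (\Cref{assump: factor_ps}), and outcomes are bounded (\Cref{assump:Policy_est1}). So a uniform martingale deviation bound via sequential Rademacher complexity \citep{rakhlin2015sequential} applies. Each weight is a polynomial in the indicators $\mathds{1}(W_{ti}=\pi_i(\cdot))$ composed with bounded Lipschitz operations, so its sequential Rademacher complexity is controlled by that of the coordinate class $\mathcal G$ (\Cref{assump:coord-capacity}), times a constant depending on $n,\beta,M,\eta$. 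This yields $\sup_{\Pi^M}|\cdot|=O_p(T^{-1/2})$ for both averages. Then $\sup_{\Pi^M}|\rhobar-1|=O_p(T^{-1/2})$ keeps $\rhobar$ bounded away from zero uniformly, so the Hájek ratio is uniformly controlled. Since $\hat\lambda\in[0,1]$ and both components are uniformly $O_p(T^{-1/2})$-close to $V^{\ppi^{(M)}}$, the mixture is too, regardless of how $\hat\lambda$ depends on $\ppi^{(M)}$.

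To pass to the estimated propensity score, I will use the smoothness in \Cref{assump: ps_model}. A mean-value expansion gives $\rhotPS-\rhotTruePS=\partial_{\ttheta}\rhotTruePS[\bbeta][\ppi^{(M)}][\tilde\ttheta](\hat\ttheta-\ttheta^\ast)$. The derivative is bounded uniformly in the policy because it is a polynomial in indicators times derivatives of $1/e_{ti}$, which are bounded on a neighborhood of $\ttheta^\ast$. The MLE satisfies $\hat\ttheta-\ttheta^\ast=O_p(T^{-1/2})$ (\Cref{assump: ps_model}). Hence the perturbation of both averages is $O_p(T^{-1/2})$ uniformly over $\Pi^M$, and the triangle inequality finishes the argument.

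I expect the main obstacle to be the uniform martingale concentration step. The summands are products over $M$ lags of additive weights, each depending on the policy through $n$ coordinate indicators, and the data are a single dependent time series. Relating the sequential Rademacher complexity of the class of products $\rhot\overline{Y}_t$ to that of $\mathcal G$ requires a structural (contraction/composition) lemma for sequential complexity under products of bounded classes. I would prove it first, by expanding each weight into finitely many monomials in the $\mathds{1}(W_{ti}=\pi_{mi})$ and applying the Lipschitz contraction for products of $[0,1]$-valued classes term by term. The $M$-dependence also needs separate handling, which I plan to do with the residue-class blocking described above.
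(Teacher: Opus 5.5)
Your proposal is correct and follows essentially the same route as the paper. Regret is bounded by twice the uniform deviation; the true-propensity part is a uniform martingale bound via sequential Rademacher complexity reduced to the coordinate class $\mathcal G$ (the paper's \Cref{thm:regret}); and the estimated-propensity part is a mean-value expansion in $\ttheta$ combined with $\|\hat\ttheta-\ttheta^\ast\|=O_p(T^{-1/2})$. Your residue-class blocking of the $M$-lag dependence and your convex-combination treatment of $\hat\lambda\in[0,1]$ are, if anything, more careful than the paper's write-up.

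The one step that needs more than the stated assumptions is your uniform pointwise bound on $\partial_{\ttheta}\rho_t^{(\bbeta)}(\ppi^{(M)};\ttheta)$. That bound requires a bounded score, which holds for example for a logistic model with the bounded covariates of \Cref{assump:Policy_est1}, whereas \Cref{assump: ps_model} only controls $\psi$ through averaged second moments. It is safer to bound $\sup_{\ppi}\|\partial_{\ttheta}\rho_t^{(\bbeta)}(\ppi^{(M)};\ttheta)\|$ by a constant times the unit-level score norms at lags $t,\dots,t-M+1$. Then control their time average by Cauchy--Schwarz together with an averaged second-moment bound on the score near $\ttheta^\ast$, which is how the paper handles this term.
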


The proof is in \cref{a: regret-bound-proof}. \Cref{thm: regret_estimate_ps} shows that when the propensity scores are estimated, the regret of the learned policy decreases at the rate $\sqrt{1/T}$ in probability. A smaller regret bound means that the performance of the learned policy is closer to that of the true optimal policy. Notably, this rate matches the benchmark in policy learning with independent and identically distributed observations \citep{kitagawa2018should,athey2021policy}. Despite the presence of temporal dependence and interference among units in our setting, we do not incur any loss in the regret rate relative to the i.i.d.\ case.

When the true propensity scores are known, we establish a finite-sample regret bound that is also of order $\sqrt{1/T}$. This bound depends on the complexity of the policy class, the magnitude of the outcome, and the order of interaction $\bbeta$ (see \Cref{thm:regret} in \cref{a: regret-bound-proof}).

\section{Selecting the order of interactions}\label{sec:interaction}

The estimator $\hatVPS$ depends on $\bbeta = (\beta_1,\dots,\beta_M)$, where $\beta_m$ denotes the highest order of interaction in the $m$-th model in \Cref{assump: additive_model_multi}. We introduce a statistical test for choosing the order of interaction. % Each $\beta_m$ can in principle take any value up to $n$. 
Lower values for each $\beta_m$ impose stronger restrictions on the potential outcomes, and are thus expected to lead to estimators with smaller variance. However, such low-rank models run the risk of model misspecification if the true model includes interactions of higher order. % At the same time, higher-order models can capture complex dependencies at the cost of higher variability.
We focus on a practical setting with $\bbeta \in \{1,2\}^M$.

\subsection{Testing Procedure}

Our test proceeds sequentially by performing one hypothesis test for each of the $M$ models in \Cref{assump: additive_model_multi}. At the $m$-th step, given the selected order of interaction for the first $m - 1$ models, $\beta_1,\dots,\beta_{m-1}$ and the optimal policy for the first $m - 1$ time periods, $\hat\ppi^{(m-1)}_{\hat\ttheta} = (\hat\ppi_{1,\hat\ttheta},\dots,\hat\ppi_{m-1,\hat\ttheta})$, we test whether the lower-order $m$-th model with $\beta_m = 1$ suffices for the potential outcomes relative to the higher-order model with $\beta_m=2$. Let $\bbeta_{mk} = (\beta_1,\dots,\beta_{m-1},k)$ for $k=1,2$ denote the two candidate interaction-order sequences at step $m$. The proposed hypothesis test compares these two candidates. 

For a policy $\ppi_m \in \Pi$ in the policy class, let $\projVt[\bbeta_{mk}][\hat\ppi^{(m-1)}_{\hat\ttheta},\ppi_m][-m]$
be the projected policy value of $(\hat\ppi^{(m-1)}_{\hat\ttheta},\ppi_m)$ based on the outcome model with the order of interaction specified in $\bbeta_{mk}$. For a correctly-specified model, the projected policy value reduces to the true policy value $V_t^{\hat\ppi^{(m-1)}_{\hat\ttheta},\ppi_m}$ (see \Cref{a:projection} for the definitions of the projected and true policy values and details on the projection).
We use
\[
\Delta^\ast_{mt}(\ppi_m)
=
\projVt[\bbeta_{m1}][\hat\ppi^{(m-1)}_{\hat\ttheta},\ppi_m][-m]
-
\projVt[\bbeta_{m2}][\hat\ppi^{(m-1)}_{\hat\ttheta},\ppi_m][-m]
\]
to denote the difference between the projected policy values of
$(\hat\ppi^{(m-1)}_{\hat\ttheta},\ppi_m)$ at time $t$ under the two model specifications indexed by $\bbeta_{m2}$ and $\bbeta_{m1}$.

Formally, our null hypothesis is given by,
\[
H_{m,0}: \Delta_{mt}^\ast(\ppi_m)=0
\quad \forall\, (t,\ppi_m)\in\{m,\ldots,T\}\times\Pi.
\]
If higher order interactions are not present, both models indexed by $\bbeta_{m1}$ and $\bbeta_{m2}$ are correctly specified and their corresponding projected policy values 
equal the true policy value at time $t$, leading to $\Delta^\ast_{mt}(\ppi_m) = 0$. In this case, the null hypothesis holds, and the lower order $\bbeta_{m1}$ can be used for estimation.
In contrast, support for the alternative hypothesis implies that the lower-interaction model does not adequately capture interactions in the true potential outcome model.

\subsection{Theoretical Properties}
We use the asymptotic distribution of the estimators derived in \Cref{thm:addIPW_est} to perform this hypothesis test.
The null hypothesis requires the equality to hold for every policy in $\Pi$, but evaluating it over the entire policy class is generally infeasible. We therefore randomly select $L$ candidate policies $\ppi_{m1},\dots,\ppi_{mL} \in \Pi$ to cover a diverse set of decision rules. For the linear policy class in \cref{eq: linearclass}, these candidate policies can be selected by drawing random coefficient vectors $\ggamma$.
% First, we randomly select $L$ candidate policies $\ppi_{m1},\dots,\ppi_{mL} \in \Pi$. 
For each single-period candidate policy $\ppi_{ml}$, we define the compounded $m-$time period policy
\(
\ppi_l^{(m)}
=
(\hat\ppi_{1,\hat\ttheta},\dots,\hat\ppi_{m-1,\hat\ttheta},\ppi_{ml}),
\) for $l=1,\dots,L$,
which % denotes the $m$-period policy formed by 
combines the previously selected policies for periods $1,\dots,m-1$ with the $l$-th candidate policy for period $m$. Then, we estimate the policy value for each one of them using the mixing estimator in \cref{eq:hatV_lambdahat}, yielding $\hatVPS[][\bbeta_{mk}][\ppi_l^{(m)}]$. An empirical analogue of $\Delta_{mt}^\ast(\ppi_l^{(m)})$ is $\hat\Delta_{mt}(\ppi_l^{(m)}) = \hat V_t^{(\bbeta_{m1})}(\ppi_l;\hat\ttheta)-\hat V_t^{(\bbeta_{m2})}(\ppi_l;\hat\ttheta),$ where $\hat V_t^{(\bbeta_{mk})}(\ppi_l^{(m)};\hat\ttheta)$ obtained from  \cref{a:eq:hatVt}, is the time-$t$ estimator averaged over in $\hatVPS[][\bbeta_{mk}][\ppi_l^{(m)}]$. 
We collect these estimated differences across the $L$ candidate policies in  
\(
\hat{\bm\Delta}_{mt} = \left(\hat\Delta_{mt}(\ppi_1^{(m)}),\dots,\hat\Delta_{mt}(\ppi_L^{(m)})\right)^\top
\).

Under the null hypothesis, 
% We aggregate these $\hat{\bm\Delta}_{mt}$ over time by defining
\( \displaystyle 
\hat{\bm\Delta}_m
=
\frac{1}{\sqrt{T-m+1}}
\sum_{t=m}^T
\hat{\bm\Delta}_{mt}.
\)
% Under $H_{m,0}$, $\hat{\bm\Delta}_m$ 
is asymptotically normally distributed with mean zero. Therefore, we form the test statistic based on
\[
T_m
=
\left\lVert\hat{\bm\Delta}_m\right\rVert_2^2
=
\frac{1}{T-m+1}
\left\lVert
\sum_{t=m}^T
\hat{\bm\Delta}_{mt}
\right\rVert_2^2,
\]
where a large value indicates a systematic difference between the projected policy values under different model specification.

We approximate the null distribution of the test statistic using a multiplier bootstrap. 
Although one could estimate the asymptotic covariance matrix of $\hat{\bm\Delta}_m$ and its eigenvalues, the resulting estimates tend to be unstable in finite samples.
For each $b=1,\dots,B$, let $\{Z_t^{(b)}\}_{t=m}^T$ be independent standard normal random variables. Define
\[
T_m^{(b)}
=
\frac{1}{T-m+1}
\left\lVert
\sum_{t=m}^T
Z_t^{(b)}\hat{\bm\Delta}_{mt}
\right\rVert_2^2.
\]
We compute the $p$-value as
\[
\frac{1}{B}
\sum_{b=1}^B
\mathds{1}\left\{
T_m^{(b)}\geq T_m
\right\}.
\]
We conduct the sequential testing procedure at an overall significance level of $\alpha$. We reject $H_{m,0}$ and select $\beta_m = 2$ if the $p$-value is smaller than $\alpha/M$. Otherwise, we set $\beta_m=1$. We then compute the corresponding policy $\hat\ppi_{m,\hat\ttheta}$ by maximizing $\hatVPS[][\beta_1,\dots,\beta_m][\ppi]$ over $\Pi$.

The following theorem establishes that, under the null hypothesis, the proposed test asymptotically controls the Type I error rate as the number of time periods increases.

\begin{theorem}[Asymptotic validity of the test for multiple time periods]\label{thm:test_multi} \spacingset{1}
Let $1\le m\le M$ and $p\text{-value}_m$ denote the $p$-value derived from the conditional distribution of the bootstrap statistic given the data. Then, under \Cref{assump: unconfoundedness,assump: additive_model_multi,assump: factor_ps,assump:Policy_est1,assump: test_regularity,assump: ps_model}, and under the null hypothesis $H_{m,0}$,
\[
\limsup_{T\to\infty} \P\left( p\text{-value}_m < \frac{\alpha}{M} \right) \leq \frac{\alpha}{M}.
\]
Consequently, the sequential procedure controls the overall type I error at level $\alpha$.
% via the Bonferroni correction.
\end{theorem}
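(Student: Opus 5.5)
The plan is to show that, under $H_{m,0}$, the statistic $\hat{\bm\Delta}_m$ is asymptotically a normalized sum of martingale differences with a Gaussian limit, and that the multiplier bootstrap reproduces a covariance that dominates the true one. Then the test is conservative.

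\emph{Step 1: linearize each time-$t$ estimator.} Following the proofs of \Cref{lemma:addIPW_true,thm:addIPW_est}, I will write each $\hat V_t^{(\bbeta_{mk})}(\ppi_l^{(m)};\hat\ttheta)$, together with its $V$-centering, as three pieces:
\[
\xi_{t,lk} \;=\; \tilde\rho_t^{(\bbeta_{mk})}\Big(\overline{Y}_t-\lambda^\ast V - (1-\lambda^\ast)\overline{Y}\Big),
\]
a propensity-score correction term $-\mat U^\top\mat\Sigma_{ps}^{-1}\psi_t$, and a remainder that is $o_p(1)$ after scaling by $(T-m+1)^{-1/2}$. Three facts make this work.
\begin{itemize}
\item $\hat\lambda$ is consistent (\cref{a:sec:lambda}).
\item $\hat\ttheta-\ttheta^\ast$ admits the usual score expansion (\Cref{assump: ps_model}).
\item $\rhobar\to 1$.
\end{itemize}
The differences of these linearizations give a vector $\bm D_t\in\mathbb R^L$.

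Under Assumptions~\ref{assump: unconfoundedness}, \ref{assump: factor_ps} and \ref{assump: additive_model_multi}, the additive weights are conditionally unbiased for the projected values. So $\E[\bm D_t\mid \lHt[-m]]$ equals the vector of $\Delta^\ast_{mt}(\ppi_{ml})$ plus terms common to both $k$. Under $H_{m,0}$ these vanish, so $\bm D_t$ is a martingale difference sequence with respect to the filtration generated by $\lHt[-m]$.

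There is one subtlety: the first $m-1$ policies $\hat\ppi_{j,\hat\ttheta}$ are data-dependent. I would handle this by using \Cref{assump: test_regularity} to treat them as converging to fixed limits, or uniformly over $\Pi^{m-1}$ via the sequential-complexity argument used for \Cref{thm: regret_estimate_ps}.

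\emph{Step 2: martingale CLT.} Bounded outcomes and weights (\Cref{assump:Policy_est1}, with $\eta$ bounding the weights) give a Lindeberg condition. Convergence of the conditional covariance $\frac1{T-m+1}\sum_t \E[\bm D_t\bm D_t^\top\mid\cdot]\to\bm\Sigma$ is assumed. Together these yield
\[
\hat{\bm\Delta}_m \dto N(0,\bm\Sigma).
\]
By the continuous mapping theorem, $T_m\dto \sum_j \mu_j\chi^2_{1,j}$, where the $\mu_j$ are the eigenvalues of $\bm\Sigma$.

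\emph{Step 3: the bootstrap (the main obstacle).} Conditionally on the data, $T_m^{(b)}$ is exactly a quadratic form in a $N(0,\hat{\bm\Sigma}^{B})$ vector, with
\[
\hat{\bm\Sigma}^{B}=\frac{1}{T-m+1}\sum_t\hat{\bm\Delta}_{mt}\hat{\bm\Delta}_{mt}^\top .
\]
The difficulty is that $\hat{\bm\Delta}_{mt}$ contains the nonrandom conditional means and the plug-in errors, so $\hat{\bm\Sigma}^{B}$ does not converge to $\bm\Sigma$. As in \Cref{prop: bound}, I would show that $\hat{\bm\Sigma}^{B}\to\bm\Sigma+\bm B$ in probability, where $\bm B$ is the positive semidefinite average of outer products of conditional means. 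This follows from a martingale law of large numbers plus the Step 1 expansions, after checking that the propensity-score correction is reproduced: one can either include the estimated $\hat{\mat U}^\top\hat{\mat\Sigma}_{ps}^{-1}\psi_t$ in $\hat{\bm\Delta}_{mt}$, or argue that omitting it only enlarges the covariance, in line with \Cref{thm: Hajek efficiency}.

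\emph{Step 4: conclude.} Since $\bm\Sigma+\bm B\succeq\bm\Sigma$, the generalized chi-square with the larger covariance stochastically dominates the one with the smaller covariance. This holds because $\|\bm\Sigma^{1/2}Z\|^2$ is stochastically monotone in the Loewner order, by Anderson's inequality applied to the symmetric convex sets $\{\|x\|\le c\}$. The conditional bootstrap quantile therefore asymptotically exceeds the true null quantile. A Pólya-type uniform convergence argument for the continuous limit gives
\[
\limsup_{T\to\infty}\P\big(p\text{-value}_m<\alpha/M\big)\le\alpha/M .
\]
A Bonferroni union bound over $m=1,\dots,M$ then gives overall level $\alpha$.
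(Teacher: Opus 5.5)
Your outline matches the paper's proof step for step: a martingale CLT for $\hat{\bm\Delta}_m$, continuous mapping, a bootstrap covariance that dominates the true one, Anderson's inequality, and Bonferroni. The step you flag as the main obstacle, however, does not hold as stated, and the paper's appeal to Remark~\ref{rmk: bound_trueps} fails at the same point.

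\textbf{Why Step 3 fails.} Under $H_{m,0}$ the resampled increments $\hat{\bm\Delta}_{mt}$ have entries $(\tilde\rho_t^{(\bbeta_{m1})}-\tilde\rho_t^{(\bbeta_{m2})})\overline Y_t$, in the notation of Proposition~\ref{prop: bound}. Their conditional mean vanishes, so your $\bm B$ is zero and you would need $\hat{\bm\Sigma}^{B}\to\bm\Sigma$.
\begin{itemize}
\item $\hat{\bm\Sigma}^{B}$ estimates the second moment of $(\rho_t^{(\bbeta_{m1})}-\rho_t^{(\bbeta_{m2})})\overline Y_t$.
\item The martingale increments of $\hat{\bm\Delta}_m$ are instead $\rho_t^{(\bbeta_{m1})}(\overline Y_t-c_1)-\rho_t^{(\bbeta_{m2})}(\overline Y_t-c_2)+c_1-c_2$, with $c_k=\lambda_k^\ast V+(1-\lambda_k^\ast)\overline Y$ for each candidate policy.
\item The constants $c_k$ enter $\hat{\bm\Delta}_m$ only through $\rhobar$. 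They move each $\hat{\bm\Delta}_{mt}$ by $O_p(T^{-1/2})$ but move $\hat{\bm\Delta}_m$ by $O_p(1)$, and the multipliers $Z_t^{(b)}$ never see them.
\end{itemize}
The two covariances are not ordered in general. Adding a constant $a$ to every outcome leaves $T_m$ unchanged, because both mixing estimators are shift-equivariant and $\hat\lambda$ is shift-invariant (provided $a$ does not feed into $\hat\ttheta$ or the policies). The same shift turns $\hat{\bm\Delta}_{mt}$ into $\hat{\bm\Delta}_{mt}+a(\tilde\rho_t^{(\bbeta_{m1})}-\tilde\rho_t^{(\bbeta_{m2})})$. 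So the bootstrap covariance is a convex quadratic in $a$, and its minimum can lie strictly below the fixed true covariance.

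\textbf{A concrete case.} Take $n=2$, $m=M=L=1$ and $\ppi\equiv(1,1)$. Let $W_{t1}\sim\mathrm{Bern}(1/2)$ and $W_{t2}\sim\mathrm{Bern}(4/5)$ be independent, fitted by $\operatorname{logit}e_{ti}=\theta x_i$ with $x=(0,1)$, and let $\overline Y_t=-0.1+2W_{t1}-4.5W_{t2}$.
\begin{itemize}
\item The outcome model is additive, so $H_{1,0}$ holds, and both $\lambda_k^\ast$ equal $1$.
\item The limiting variance of $\hat{\bm\Delta}_1$ is $2.75-0.25=2.5$, while the bootstrap variance tends to $1.06$.
\item Hence a nominal $5\%$ test rejects with probability about $0.2$.
\end{itemize}
Adding a small non-interacting term such as $\gamma W_{t-1,1}$ to the outcomes secures the positive-definiteness conditions and, by continuity, preserves the gap.

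\textbf{The repair, and a slip in Step 1.} Your Step~1 points to the fix. As written it has a slip: the $c$'s are in general not common to $k=1,2$, because the optimal mixing weights differ across interaction orders. So $\bm D_t$ has conditional mean $\bm c_2-\bm c_1\neq\bm 0$ and is not a martingale difference. Since $\sum_t\tilde\rho_t^{(\bbeta_{mk})}$ equals the number of summands identically, you get $(T-m+1)^{-1/2}\sum_t\bm D_t=\hat{\bm\Delta}_m-\sqrt{T-m+1}\,(\bm c_1-\bm c_2)$.

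Now suppose the multiplier bootstrap were applied to these residual-form increments $\hat{\bm D}_t$, i.e.\ your $\xi_{t,lk}$ with $\hat\lambda$ and $\hat V$ plugged in. Its covariance would converge to $\bm\Sigma_{\mathrm{known}}+(\bm c_1-\bm c_2)(\bm c_1-\bm c_2)^\top\succeq\bm\Sigma$. Here $\bm\Sigma_{\mathrm{known}}$ is the known-propensity covariance, and $\bm\Sigma$ subtracts the positive semidefinite score correction from it. Your Steps~3 and~4 would then go through unchanged.

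For the increments actually used in the theorem they do not. The difference $\hat{\bm\Delta}_{mt}-\hat{\bm D}_t$ has entries $\hat c_1\tilde\rho_t^{(\bbeta_{m1})}-\hat c_2\tilde\rho_t^{(\bbeta_{m2})}$. This is a random $O(1)$ term in every period, and its sum over $t$ is only the constant $(T-m+1)(\hat c_1-\hat c_2)$.
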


The proof is provided in \cref{a:sec:test_multi}.
In the simulations and application, we use $L=1000$ candidate policies and $B=1000$ bootstrap realizations. The value of $L$ should be large enough so that the sampled candidate policies provide adequate coverage of the policy class $\Pi$. Once $L$ is reasonably large, we have found that the performance of the procedure is not sensitive to its exact value (see \cref{a:subsec:L_sensitivity}).
Although our procedure is developed for policy learning, the same procedure can be used to choose the order of interactions in the context of policy evaluation with the test defined based on the single policy under consideration ($L = 1$).

%, without choosing $L$ policies as in policy learning, using the same logic to select the vector $\bbeta$. 
In many applications, models with interactions up to second order are typically considered sufficient, as higher-order interactions are expected to be weak or nonexistent. This is supported by our simulation results in \Cref{sec:simulation}, where estimators based on low-order interaction models perform well even when moderate higher-order interactions are present. 

Nevertheless, the test introduced above can be extended to allow for general values of $\bbeta$ and higher-order models.
For example, in some settings, we might want to allow for complex interactions among treatments of recent time periods (larger values of $\beta_m$ for small $m$), while restricting the interactions of older treatments for their effect on current outcomes (smaller values of $\beta_m$ for large $m$).

Lastly, simultaneous testing of all $\beta$ values is not straightforward under our sequential estimation procedure. We can employ the Bonferroni correction, which controls the overall Type I error rate but may be too conservative in practice. We leave the development of less conservative testing approaches for the order of interaction to future work.

\section{Simulation studies}
\label{sec:simulation}

We conduct simulation studies to evaluate the performance of the proposed methods. Due to space constraints, we focus on simulations for policy learning here, while we summarize conclusions for policy evaluation below and in detail in \cref{a:subsec:additional_policy_eval}.

\subsection{Study design}

We consider time series of length $T \in \{50, 100, 300, 500, 1000\}$ with $n \in \{10,50,100\}$ units. At each time period $t$ and for each unit $i$, we independently generate two covariates $\bm X_{ti} = (X_{1ti}, X_{2ti})$ from the standard normal distribution. The treatment $W_{ti}$ is generated from a Bernoulli distribution with success probability depending on the observed history:
\begin{align*}
\P(W_{ti} \mid \lHt)
= \expit\Big(
    0.8 
    - 0.3 X_{1ti} 
    + 0.5 X_{2ti} 
    - 0.8 W_{t-1,i} 
    + \frac{1}{n-1} \sum_{j\neq i} W_{t-1,j}
    - 0.2 Y_{t-1,i}
    \Big),
\end{align*}
where $\expit(x) = 1/(1+e^{-x})$ is the logistic function. The outcomes are generated as $Y_{ti}\mid \lHt[-1] \sim N(\mu_{ti}, 1)$, where
\begin{align*}
    & \mu_{ti} = (-0.5 X_{1ti} + X_{2ti}) W_{ti}
    \,+\, \sum_{j\neq i} X_{1tj} W_{tj}
    \,-\, \frac{0.3}{n-1} \sum_{j\neq i} W_{t-1,j}
    \,+\, X_{2ti} \\[2pt]
    &\hspace{1.9cm}
    +\, \frac{c}{n-1}\, X_{2ti} \left(\sum_{j\neq i}W_{tj}\right) W_{ti}
    \,+\, 0.2 Y_{t-1,i}.
\end{align*}
The parameter $c \in \{0, -2, -10\}$ represents three scenarios with different strengths of interactions between a unit's treatment and the treatments of other units in the outcome model. When $c = 0$, the linear additive model with $\beta_m = 1$ for all $1\le m\le M$ satisfies \Cref{assump: additive_model_multi}. The cases $c = -2$ and $c = -10$ represent moderate and strong two-way interactions in $\bm W_{t}$, respectively. 
Since the outcome at time $t-1$ affects the outcome at time $t$, previous treatments also form interactions in the outcome model when $c \neq 0$. To quantify the interpretable magnitude of the interaction term involving $c$, we compared its absolute value with the absolute value of the remaining terms in $\mu_{ti}$. Averaged across the simulated settings, this ratio was approximately 0.13 for $c=-2$, and 0.69 for $c=-10$. For each setting, we generate 500 datasets.

Next, we assess policy learning performance considering the following policy class, which we use to estimate the optimal policy over one and two time periods ($M \in \{ 1, 2 \}$):
\[
\Pi = \left\{ \ppi : \ppi_i\left(\lHt[-1]\right) = \mathds{1}\left(\gamma_0 + \gamma_1 X_{1ti} + \gamma_2 X_{2ti} > 0\right),\  i=1,\dots,n \right\}.
\]
In each setting, we employ estimators under linear and quadratic additive models over one or two time periods, as well as the test-based estimator. This yields three estimators in the $M = 1$ setting ($\hat V^{(1)}(\ppi), \hat V^{(2)}(\ppi)$ and test-based), and five estimators in the $M = 2$ setting ($\hat V^{(1,1)}(\ppi)$, $\hat V^{(2,2)}(\ppi)$, $\hat V^{(1,2)}(\ppi)$, $\hat V^{(2,1)}(\ppi)$, and test-based).
%For single-period policies, we compare three estimators for policy value: $\hat V^{(1)}(\ppi)$ and $\hat V^{(2)}(\ppi)$, which assume linear and quadratic outcome models respectively, and the test-based estimator described in \cref{sec:interaction}. For two-period policies, we allow either linear or quadratic outcome models at each stage, leading to five estimators: $\hat V^{(1,1)}(\ppi)$, $\hat V^{(2,2)}(\ppi)$, $\hat V^{(1,2)}(\ppi)$, $\hat V^{(2,1)}(\ppi)$, and the test-based estimator. The bootstrap test is conducted with $\alpha = 0.1$, $L = 1000$ candidate policies, and $B=1000$ bootstrap realizations.
% For both policy evaluation and learning, 
We estimate the propensity scores under correct model specification.

\subsection{Results} 

\begin{figure}[!t]
    \centering
    \includegraphics[width=\linewidth]{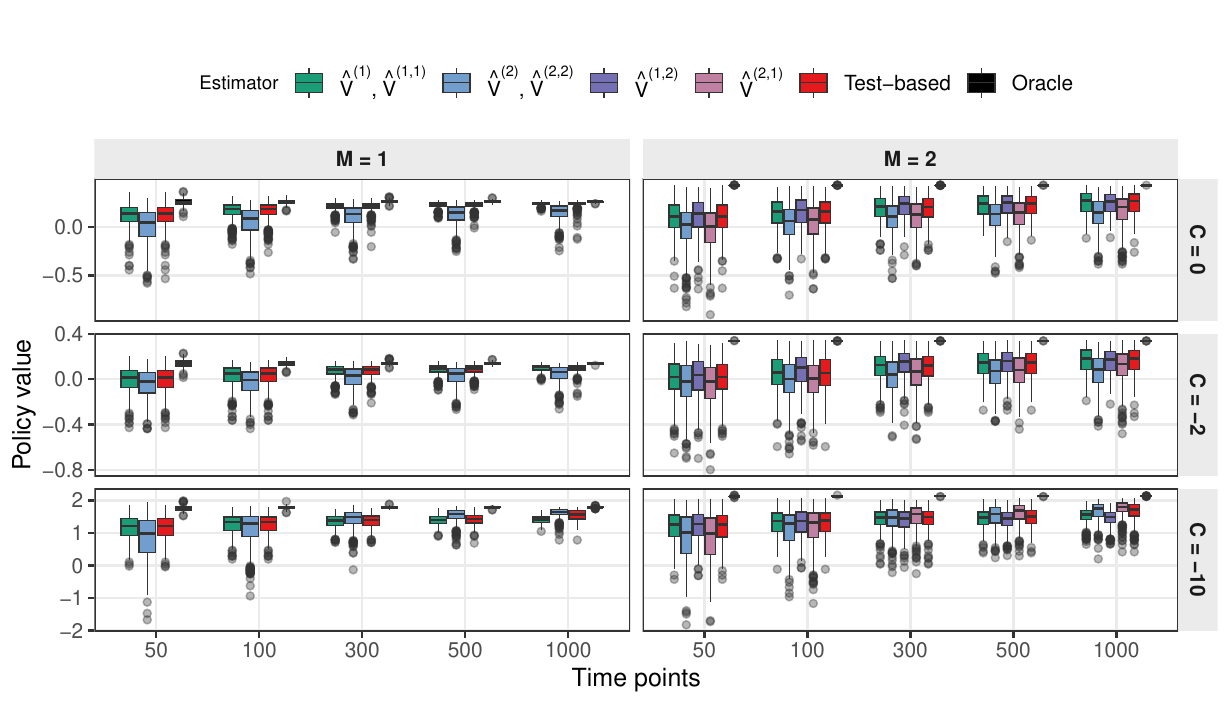}
    \caption{Boxplots of the policy value $V^{\hat\ppi_{\hat\ttheta}}$ under the learned policies, based on different estimators, for $n=50$ across different model specifications (rows) and policies of different lengths (columns). We consider estimators assuming a linear additive outcome model ($\hatVPS[][\bm 1]$, green), a quadratic additive outcome model ($\hatVPS[][\bm 2]$, blue) for all time periods, and a test-based estimator with $\alpha=0.05$ (red). For policies over two time periods ($M=2$), we additionally include two estimators, $\hatVPS[][1,2]$ (purple) and $\hatVPS[][2,1]$ (pink), which assume a linear additive outcome model for one time period and a quadratic additive outcome model for the other.}
    \label{fig:sim_res_M1M2_mix_estPS}
\end{figure}

\cref{fig:sim_res_M1M2_mix_estPS} shows the performance of the proposed policy learning methodology for $n=50$, where the first column reports the results for learning the optimal single-period policy, and the second column corresponds to learning the optimal two-period policy.  The rows correspond to the three scenarios for the strength of interactions (no, moderate, and strong interactions, respectively).

For single-period policies and in the scenario without interactions ($c=0$), the linear additive outcome model is correctly specified. We define the oracle policy as the policy that maximizes the true policy value over the specified policy class. As expected, $\hat V^{(1)}(\ppi)$ leads to policies with values closer to the oracle, with policy value converging to the oracle policy value as $T$ increases. Although $\hat V^{(2)}(\ppi)$ remains unbiased, its larger variance leads to substantially worse performance and lower policy value for the estimated optimal policy in finite samples. The test-based estimator achieves performance comparable to $\hat V^{(1)}(\ppi)$. 

A similar pattern is observed in the presence of moderate interactions in the potential outcome model ($c=-2$) and the estimator based on the linear additive model outperforms the estimator based on the correctly specified model that allows for interactions. However, when interactions are strong ($c=-10$), the estimator based on the model that allows for second-order interactions, $\hat V^{(2)}(\ppi)$, outperforms the estimator based on the linear additive model, $\hat V^{(1)}(\ppi)$, particularly as $T$ grows. The test-based estimator moves towards the best-performing estimator as $T$ increases. 

For policies over two time periods, the findings are broadly consistent with those from the single-period case. When the additive outcome model is correct ($c=0$) or involves only moderate second-order interactions ($c=-2$), estimators based on linear additive models tend to perform better than those with interactions, which can suffer from higher variance. However, when the interaction effects are strong ($c=-10$), the estimator based on the outcome model that allows for second-order interactions among the most recent treatments, $\hat V^{(2,1)}(\ppi)$,
performs best when $T$ is large, followed closely by the estimator based on the model that allows for interactions among treatments at the most recent two time periods, $\hat V^{(2,2)}(\ppi)$.
% which allows for second-order interactions among treatments at both time periods, does not necessarily outperform $\hat V^{(2,1)}(\ppi)$, which assumes a quadratic outcome model at the first time period and a linear outcome model at the second. 
% This is likely due to the high variability of $\hat V^{(2,2)}(\ppi)$. 
The test-based estimator again improves as $T$ increases and approaches the performance of the better-performing model choice. This convergence is slower in the strong-interaction setting, possibly due to the low test power of detecting the second-order interaction model when $T$ is relatively small.

\cref{a:sec:additional_sim} reports the policy evaluation results, together with the policy learning results for $n=10$ and $n=100$. Overall, the policy evaluation results demonstrate a bias-variance tradeoff in the choice of the outcome model. Higher-order models reduce bias under strong interaction effects but produce more variable estimates. Across the considered scenarios, the proposed mixing estimator generally has smaller variability and RMSE than the shifted estimator, particularly under quadratic additive outcome models and for small $T$. The mixing estimator also exhibits more stable finite-sample behavior than the H\'ajek-type estimator, which occasionally produces extreme estimates. 

In policy learning, the test-based estimator approaches the best-performing estimator more quickly in the lower-variance case with $n=10$. For $n=100$, the same qualitative pattern remains, but the improvement is slower in the strong-interaction setting because the test has lower power. In addition, the policy learning approach based on additive IPW weights consistently outperforms the standard H\'ajek estimator, as shown in \cref{fig:sim_res_M1M2_Hajek_compare_estPS}.

\section{Application}
\label{sec:application}
In this section, we apply the proposed methodology to real-world data from Iraq and study the optimal allocation of aid projects for reducing insurgent attacks. We find that, compared with the observed allocation, the learned policies reduce the probability of starting new projects in some districts that received projects frequently and increase it in districts that were treated less often.

\subsection{Setup}
We consider the optimal district-level aid assignment for minimizing insurgent attacks. Aid data are only available at the level of the district, and allocation is sparse over time for many districts. We therefore study district-level aid allocation at the weekly level. This also provides a reasonable time scale for our study, where the assignment of aid might have similar effects in practice regardless of the exact day of assignment within the short time window of a week. 

Our data cover 71 weeks between February 23, 2007, and July 5, 2008. The aid data contain records from several reconstruction, stabilization, and security programs, including CERP, ESF, ISFF, and IRRF. During the study period, the file contains 21,910 project records with positive recorded construction costs, totaling approximately \$4.97 billion. CERP and ESF account for about 95\% of these records. We define $W_{ti}=1$ if at least one new aid project began in district $i$ during week $t$, and $W_{ti}=0$ otherwise. We analyze the two most frequent types of insurgent attacks as the outcomes of interest. We use $Y_{ti}$ to represent the number of improvised explosive device (IED) or small arms fire (SAF) attacks in district $i$ during week $t$.

% Our data cover 71 weeks between February 23, 2007, and July 5, 2008. We define $W_{ti} = 1$ if district $i$ receives new aid in week $t$, and $W_{ti}=0$ otherwise. We analyze the two most frequent types of insurgent attacks as the outcomes of interest. 
% We use $Y_{ti}$ to represent the number of improvised explosive device (IED) or small arms fire (SAF) attacks in district $i$ during week $t$.

\cref{fig:aid_and_outcome} shows the observed aid assignment and outcomes for each district averaged over all time periods. Both IED and SAF attacks were concentrated near the major cities of Baghdad, Mosul, and Basra. In terms of aid, districts experiencing a large number of insurgent attacks were more likely to receive a new aid project, with some districts near Baghdad receiving aid in nearly every week of the study period. Five districts received no aid and are therefore excluded from the analysis.  

\begin{figure}[!b]
\centering
\subfloat[Aid]{
\includegraphics[width = 0.34\textwidth,trim = 0 0 0 0, clip]{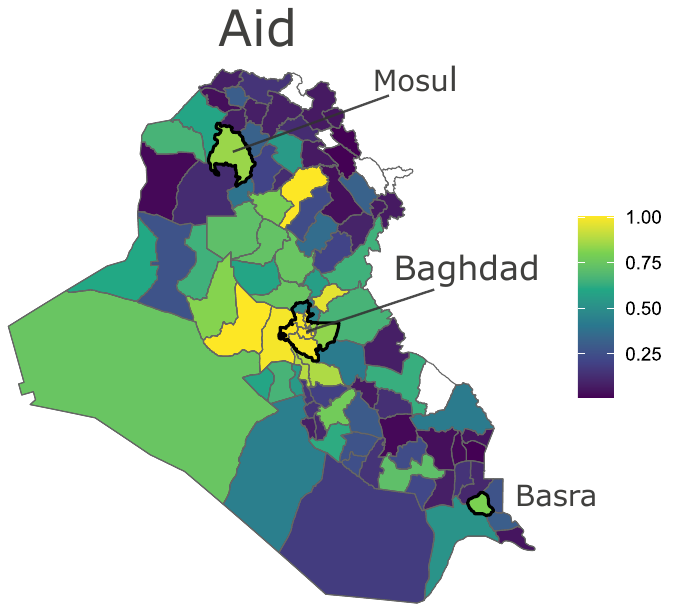}
\label{fig:aid}
}
\subfloat[Insurgent violence]{
\includegraphics[width = 0.6\textwidth,trim = 0 0 0 0, clip]{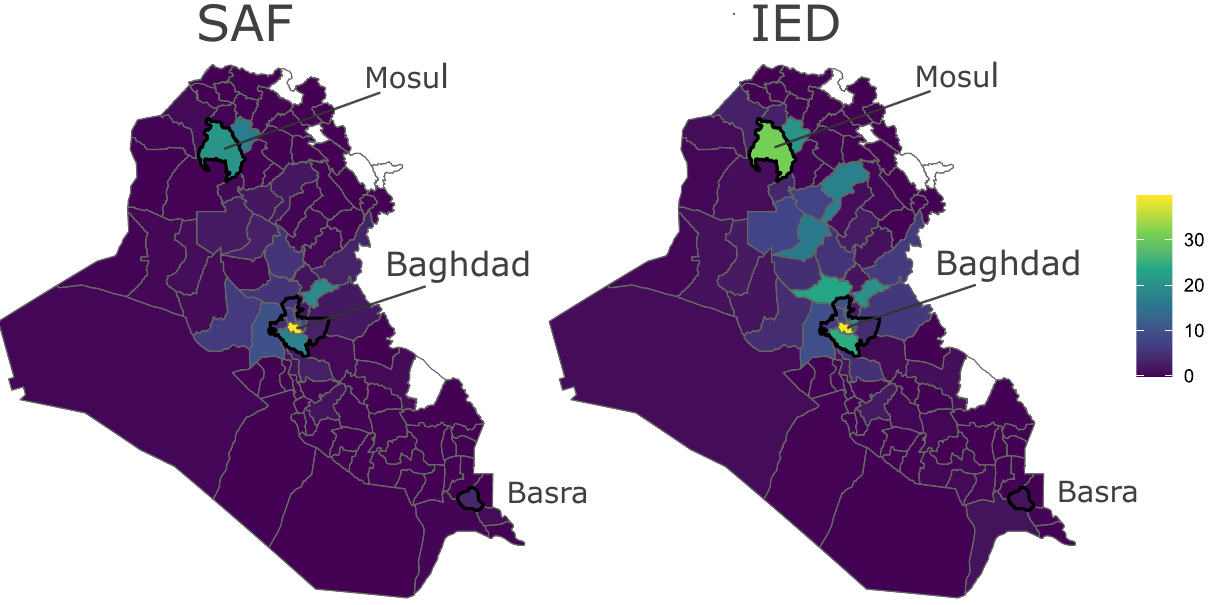}
\label{fig:outcome}
} 
\caption{Distribution of aid projects and insurgent attacks. Panels (a) and (b) show the district-level averages of aid and insurgent attacks (IED and SAF attacks), respectively, from February 23, 2007, to July 5, 2008, averaged across all time periods.}
\label{fig:aid_and_outcome}
\end{figure}

We study the optimal aid assignment over horizons of $M=1,2,3,4$ weeks. We consider a linear policy class based on whether aid was dispensed, the number of airstrikes, and the number of outcome events in the preceding time period. We also include U.S. troop density and an indicator for missing troop density in the policy covariate set. The troop density is measured as soldiers per 1,000 local residents for each district in the preceding week. The missingness indicator is included because missing troop records are spatially clustered, with missingness concentrated in northern and southern districts rather than scattered uniformly.
 
To minimize extrapolation to treatment assignments not observed frequently, we specify that the optimal policy always assigns treatment to districts that receive aid in at least 95\% of weeks during the study period, and control to districts that received aid in fewer than 5\% of weeks.
The optimal policy aims to minimize the number of outcome events across all $M$ periods (see \cref{a:sec:alternative_policy_value} for the definition of the policy value when using a summary outcome over multiple time periods).  

For the propensity score, we consider a logistic regression model of the form
$\P(W_{ti} \mid \bm X_{ti}) = \expit\left(\bm X_{ti}\ttheta\right)$,
where the treatment assignment is assumed to be conditionally independent across $i$ in each time period, in accordance with \Cref{assump: factor_ps}.
Here, $\bm X_{ti}$ includes spatial, temporal, and spatio-temporal information. For spatio-temporal variables, we include aid, airstrikes, shows of force, IED attacks, and SAF attacks over the previous one, two, and four weeks for each district, as well as district-specific U.S. troop density in the preceding week. Prior aid is included as whether aid was disbursed in the district and the amount given, during each time window. These 19 variables are designed to capture immediate conditions and recent accumulated activity. 

The adjustment set also includes variables that vary across districts but are fixed over time. These are distance to major cities, distance to road networks, distance to rivers, the log of population density measured in 2003 and an indicator variable for whether troop density is missing, giving five spatial covariates.  Finally, we include variables that vary over time but not across districts. These are three spline terms for time and an indicator for the surge period (March 2007--January 2008), giving four temporal covariates. In total, the propensity score model includes 28 covariates.

We evaluate the fitted propensity score model by examining the resulting covariate balance. We find that the 
absolute standardized differences of covariates are generally at or below the 0.1 threshold, indicating satisfactory balance across treatment groups (see \Cref{a: subsec:balance}).

\begin{table}[!t]
\centering
\begin{tabular}[t]{crrrrrrrr}
\toprule
\multicolumn{1}{c}{ } & \multicolumn{4}{c}{IED} & \multicolumn{4}{c}{SAF} \\
\cmidrule(l{3pt}r{3pt}){2-5} \cmidrule(l{3pt}r{3pt}){6-9}
$M$ & $\hat V$ & Lower & Upper & Observed & $\hat V$ & Lower & Upper & Observed\\
\midrule
1 & 1.86 & 1.63 & 2.08 & 2.215 & 1.77 & 1.30 & 2.24 & 2.230\\
2 & 3.42 & 3.07 & 3.77 & 4.430 & 3.20 & 2.58 & 3.83 & 4.460\\
3 & 4.84 & 4.40 & 5.29 & 6.645 & 4.54 & 3.65 & 5.43 & 6.689\\
4 & 6.19 & 5.68 & 6.70 & 8.860 & 4.92 & 1.12 & 8.72 & 8.919\\
\bottomrule
\end{tabular}
\caption{Estimated policy values ($\hat V$) of the learned optimal policies 
and corresponding 95\% confidence intervals for the IED and SAF analyses 
for $M=1,2,3,4$. Lower and Upper denote the lower and upper confidence 
limits, respectively, and Observed denotes the observed average number of 
attacks across all time periods and spatial units.}
\label{tab: policy_res}
\end{table}

\subsection{Results}\label{subsec:app_res}
The results of the learned aid allocation policy are reported in \cref{tab: policy_res}. The estimated policy value, $\hat V$, represents the expected number of outcome events over $M$ weeks if aid is allocated according to the learned policy, averaged across all districts and weeks.
We find that, for policies lasting 2 weeks or longer, the estimated policy values and their confidence intervals are below the corresponding observed averages for both IED and SAF attacks. The estimated reductions become larger as the policy horizon increases.

\begin{figure}[!t]
\centering
\includegraphics[width = \textwidth,trim = 0 0 0 0, clip]{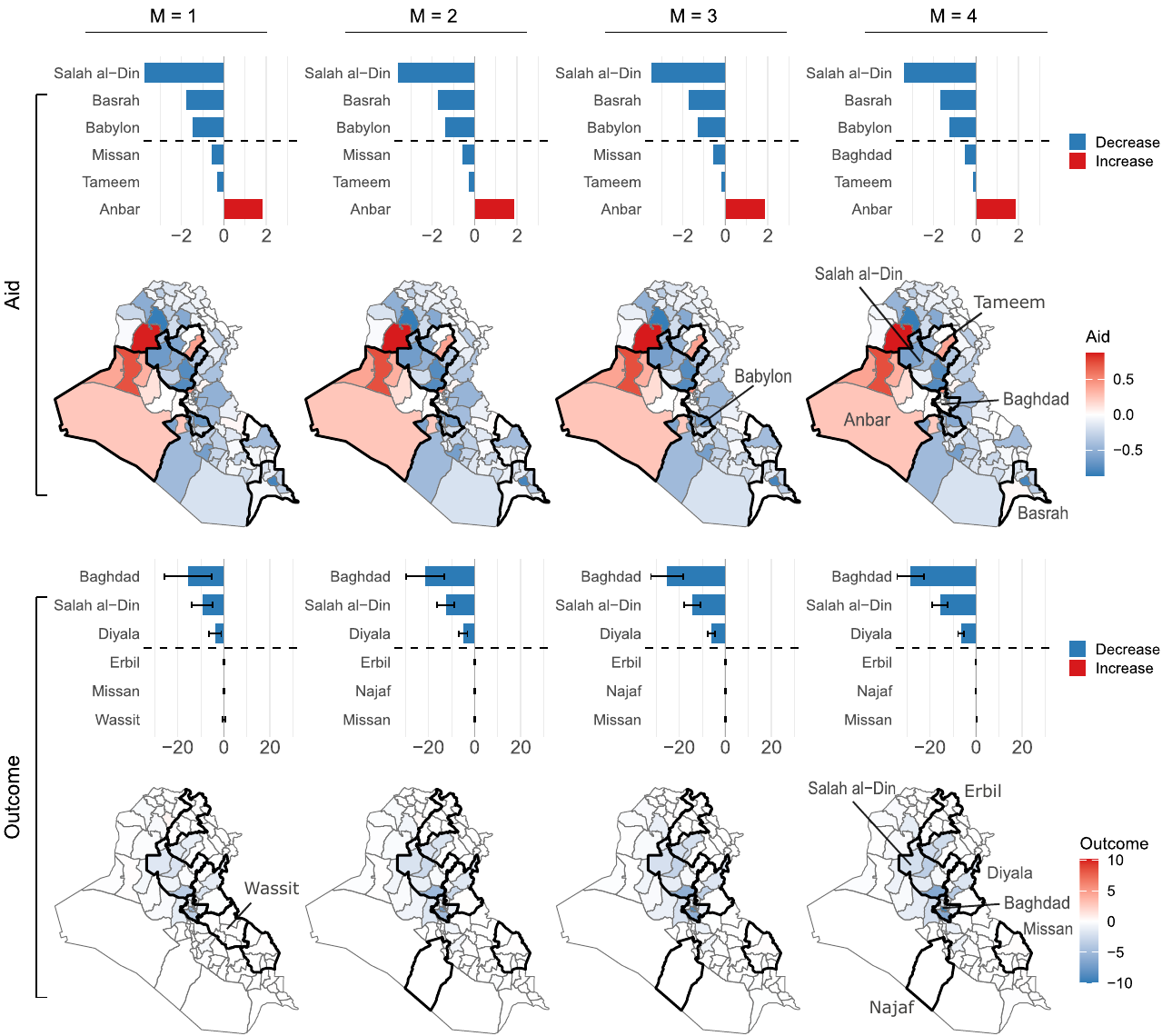}
\caption{Results for the analysis of IED attacks. The four columns correspond to policies learned over horizons of $M=1,2,3,4$ weeks. The first two rows display the difference in average aid assignment between the learned policy and the observed allocation, while the last two rows show the difference in the expected number of IED attacks relative to observed outcomes. The bar plots highlight the three governorates with the largest decreases and the three with the smallest (or most negative) decreases, aggregated at the governorate level. For the outcome differences, the 95\% confidence intervals are also shown.} 
\label{fig:IED_results}
\end{figure}

Here, we focus on results for IED attacks.
\Cref{fig:IED_results} displays the differences in aid assignment and expected number of attacks between the observed and learned policies, averaged across all policy periods. The spatial patterns are similar across different values of $M$. 
%, with the policy over four weeks producing the largest reduction in expected outcomes.
The learned policies often increase the probability of starting new projects in Anbar and other sparsely populated western governorates, while lowering this probability in Baghdad and nearby governorates, where new projects began especially frequently under the observed allocation. Although the learned policies lower the probability of starting a new project in Baghdad districts, much of the expected reduction in IED attacks occurs there. For $M=4$, Baghdad has the largest estimated reduction, followed by Salah al-Din and Diyala. This suggests that aid allocation can have far reaching spatial effects, with optimal assignments not necessarily occurring in the areas with the highest number of IED attacks. Results for SAF attacks in \Cref{a:subsec:SAF} return similar conclusions.

The learned policy coefficients for the IED analysis are summarized in \cref{tab:policy_coefficients_ied}. % The four rows describe the decision rules applied in the first, second, third, and fourth weeks of the policy horizon. The coefficients suggest that 
We find that the learned allocation rule is broadly similar across time horizons. In general, optimal aid assignment concentrates in districts with fewer IED attacks and airstrikes that did not receive a new aid project in the previous week, and higher U.S. troop density.

In \Cref{a:subsec:morecovariates}, we conduct sensitivity analyses that use an expanded list of covariates for the policy class. The results are broadly similar to the main analysis.
%with estimated policy values for learned policies remaining below the observed outcomes for both IED and SAF attacks. The expanded policy class gives slightly lower estimated values for the IED analysis, while the SAF results are very close to those from the simpler policy class and show a similar spatial pattern.

\section{Conclusion}
\label{sec:conclusion}
In this paper, we propose a methodology for selecting optimal policies over multiple time periods in a spatio-temporal setting with spillover and carryover effects. We introduce semiparametric model assumptions for the potential outcomes, which allow us to develop efficient weighting estimators based on the propensity score. We derive finite-sample regret bounds and propose a test for selecting the appropriate model within these assumptions.

There are several directions for future research. First, our asymptotic properties and regret analysis rely on the correct specification of the propensity score model. Future work could develop estimators that are robust to model misspecification by, for example, targeting covariate balance directly \citep{sengupta2025low}. Second, while our analysis has primarily focused on binary treatments, it would be valuable to extend the framework to continuous treatments. Lastly, the variance of the proposed estimator grows rapidly as the policy length increases, which limits the study of optimal policies over long time horizons. Future work could investigate methods to stabilize the estimator along the time dimension.

\bibliographystyle{chicago}
\bibliography{references,policylearning}

%-------------------------------------------------------
%------------------------Appendix-----------------------
%-------------------------------------------------------

\newpage

\doparttoc % Tell to minitoc to generate a toc for the parts
\faketableofcontents % Run a fake tableofcontents command for the partocs
\part{} % Start the document part

\setcounter{page}{1}

\vspace{-20pt}
\begin{center}
{\sc \LARGE Supplementary Appendix for ``Optimal Dynamic Policy for Spatio-temporal Data''}
\end{center}

\allowdisplaybreaks
\appendix
\setcounter{equation}{0}
\renewcommand{\theequation}{A.\arabic{equation}}

\setcounter{table}{0}
\renewcommand{\thetable}{A.\arabic{table}}

\setcounter{figure}{0}
\renewcommand{\thefigure}{A.\arabic{figure}}

\setcounter{theorem}{0}
\renewcommand{\thetheorem}{A.\arabic{theorem}}

\setcounter{assumption}{0}
\renewcommand{\theassumption}{A.\arabic{assumption}}

\setcounter{corollary}{0}
\renewcommand{\thecorollary}{A.\arabic{corollary}}

\setcounter{proposition}{0}
\renewcommand{\theproposition}{A.\arabic{proposition}}

\setcounter{lemma}{0}
\renewcommand{\thelemma}{A.\arabic{lemma}}

\setcounter{remark}{0}
\renewcommand{\theremark}{A.\arabic{remark}}

\makeatletter
\renewcommand{\theHequation}{A.\arabic{equation}}
\renewcommand{\theHtable}{A.\arabic{table}}
\renewcommand{\theHfigure}{A.\arabic{figure}}
\renewcommand{\theHtheorem}{A.\arabic{theorem}}
\renewcommand{\theHassumption}{A.\arabic{assumption}}
\renewcommand{\theHcorollary}{A.\arabic{corollary}}
\renewcommand{\theHproposition}{A.\arabic{proposition}}
\renewcommand{\theHlemma}{A.\arabic{lemma}}
\renewcommand{\theHremark}{A.\arabic{remark}}
\makeatother

\vspace{-60pt}
\addcontentsline{toc}{section}{Supplement} % Add the appendix text to the document TOC
\part{ } % Start the appendix part
\parttoc
\clearpage

\setstretch{1.3}

\section{Optimal mixing parameter}\label{a:sec:lambda}

In this appendix, we derive the expression of $\lambda^\ast$ that minimizes the asymptotic variance of estimator in \cref{eq: estimator_family}. Let $\ppi$ be the targeting policy. Under \Cref{assump: factor_ps,assump: unconfoundedness,assump: additive_model_multi} and the regularity condition in \cref{a: Asymptotic_property}, \cref{lemma:addIPW_true} shows that each estimator in \cref{eq: estimator_family} is asymptotically normal with mean $V^{\ppi}$ and variance 
$$\plim_{T\to\infty}\frac{1}{(T-M+1)^2}\sum_{t=M}^T\Var\left(\rhot(\overline{Y}_t-\lambda \hatV-(1-\lambda)\overline{Y})\mid \lHt[-M]\right).$$
We wish to find $\lambda\in[0,1]$ that minimizes the variance. Note
\begin{equation}
\begin{aligned}
    &\plim_{T\to\infty}\shiftmean\Var\left(\rhot(\overline{Y}_t-\lambda \hatV-(1-\lambda)\overline{Y})\mid \lHt[-M]\right)\\
    = \ &\plim_{T\to\infty}\shiftmean\Var\left(\rhot(\overline{Y}_t-\overline{Y})\mid \lHt[-M])\right)\\
    & -2\lambda\plim_{T\to\infty}\shiftmean(V^{\ppi}-\overline{Y})\Cov(\rhot(\overline{Y}_t-\overline{Y}),\rhot\mid\lHt[-M])\\
    & + \lambda^2\plim_{T\to\infty}\shiftmean(V^{\ppi}-\overline{Y})^2\Var(\rhot\mid\lHt[-M]).
\end{aligned} \label{eq:min_var}
\end{equation}
where we used the fact that $\hatV$ is consistent for $V^{\ppi}$.
Note that \cref{eq:min_var} is quadratic in $\lambda$ under the assumption that $V^{\ppi}-\overline{Y}\neq 0$, which is generally the case so long as the targeting policy is not equal to the observed one. Since $\E[\rhot\mid\lHt[-M]] = 1$ by \cref{prop1_multi}, we have 
\begin{align*}
 \lambda^\ast &= \text{Proj}_{[0,1]}\left(\frac{\plim_{T\to\infty}\shiftmean\Cov(\rhot(\overline{Y}_t-\overline{Y}),\rhot\mid \lHt[-M])}{\plim_{T\to\infty}\shiftmean(V^{\ppi}-\overline{Y})\Var(\rhot\mid\lHt[-M])}\right)\\
 &=\text{Proj}_{[0,1]}\left(\frac{\plim_{T\to\infty}\shiftmean \E[\rhot(\overline{Y}_t-\overline{Y})(\rhot-1)\mid\lHt[-M]]}{\plim_{T\to\infty}\shiftmean(V^{\ppi}-\overline{Y})\E[(\rhot-1)^2\mid\lHt[-M]]}\right), 
\end{align*}
where $\text{Proj}_{[0,1]}(x) = \min(\max(0,x),1),$ is the unique minimizer over $\lambda\in[0,1].$
\paragraph{Consistent estimator for $\lambda^\ast$.} Next we construct a consistent estimator for $\lambda^\ast.$ Define,
\begin{align*}
A_t & = \E[\rhot(\overline{Y}_t-\overline{Y})(\rhot-1)\mid\lHt[-M]],\\ 
B_t & = (V^{\ppi}-\overline{Y})\E[(\rhot-1)^2\mid\lHt[-M]].
\end{align*}
Moreover, define
$\hat A_t = \rhot(\overline{Y}_t-\overline{Y})(\rhot-1)$ and $\hat B_t(V^{\ppi}) = (V^{\ppi}-\overline{Y})(\rhot-1)^2.$ Observe that 
\[
\E\left[\hat A_t-A_t\mid \lHt[-M]\right] = 0\quad\text{and} \quad \E\left[\hat B_t(V^{\ppi})-B_t\mid \lHt[-M]\right]=0
\]

Moreover, by \Cref{assump:Policy_est1}.\ref{assump: 1a} and \Cref{assump: factor_ps}, $\hat A_t-A_t$ and $\hat B_t(V^{\ppi})-B_t$ are bounded, and so is $(\hat A_t-A_t)^2$ and $(\hat B_t(V^{\ppi})-B_t)^2.$ Then, Theorem~1 of \cite{csorgHo1968strong} implies that $\plim_{T\to\infty}\shiftmean(\hat A_t-A_t) = 0,$ and $\plim_{T\to\infty}\shiftmean(\hat B_t(V^{\ppi})-B_t) = 0.$
Since $\hatV[H,]$ is consistent for $V^{\ppi}$, $\plim_{T\to\infty}\shiftmean(\hat B_t(\hatV[H,])-B_t) = 0$ by Slutsky's theorem. Since $\text{Proj}_{[0,1]}(x)$ is continuous, a consistent estimator for $\lambda^\ast$ is given by,$$\text{Proj}_{[0,1]}\left(\dfrac{\shiftmean\hat A_t}{\shiftmean\hat B_t(\hatV[H,])}\right).$$
To ensure the smoothness of the objective function in policy learning step, we consider a smooth approximation of $\text{Proj}_{[0,1]}(x).$ Define
\[
g_T(x) =
\begin{cases}
0, & x \le 0, \\
\dfrac{x^{2}}{2\epsilon_T}, & 0 < x < \epsilon_T, \\
x, & \epsilon_T \le x \le 1-\epsilon_T, \\
1 - \dfrac{(1-x)^{2}}{2\epsilon_T}, & 1-\epsilon_T < x < 1, \\
1, & x \ge 1,
\end{cases}
\]
where $\epsilon_T$ is a positive value with $\lim_{T\to\infty}\epsilon_T=0.$ Then, it is not difficult to show that $g(x)$ converges to $\text{Proj}_{[0,1]}(x)$ uniformly. Thus, $$g_T\left(\dfrac{\shiftmean\hat A_t}{\shiftmean\hat B_t(\hatV[H,])}\right)$$ is a consistent estimator for $\lambda^\ast.$

\section{Asymptotic properties of the proposed estimator}\label{a: Asymptotic_property}

In this section, we present asymptotic properties of the proposed estimator under both true and estimated propensity scores for policies over multiple time periods and provide the corresponding proofs. This section complements \cref{sec:estimation}.

\subsection{Regularity conditions}\label{subsec: regularity}

Consider a policy sequence over $M$ time period $\ppi=(\ppi_1,\dots,\ppi_M).$ We first state the regularity assumptions for establishing the asymptotic normality of the proposed estimator $\hat V^{(\bbeta)}(\ppi)$ when the true propensity score is known, where $\bbeta = (\beta_1,\dots,\beta_M)$ is the sequence of the selecting order of interaction.

\begin{assumption}[Regularity conditions for the proposed estimator with the true propensity score]\label{assump:Policy_est1}  For $\bbeta\in\{1,\dots,n\}^M$ and $\ppi\in\Pi^M.$ The following two conditions hold.
\begin{enumerate}[label=(\alph*)]
    \item\label{assump: 1a}(Bounded outcome and covariates) There exist positive constants $0<\delta_{Y}<\infty$ and $0<\delta_{X}<\infty$ such that $|Y_{ti}|<\delta_{Y}$ and $||\bm X_{ti}||<\delta_{X}$ for all $Y_{ti}\in\lct{\mathcal{Y}}{T}$, $\bm X_{ti}\in\lct{\mathcal{X}}{T}.$
    \item \label{assump: 1b} (Asymptotic variance) There exists a positive definite matrix $\widetilde{\mat{\Sigma}}$ such that 
    $$\frac{1}{T-M+1}\sum_{t=M}^T \Var[\aat\mid \lHt[-M]^\ast]\pto \widetilde{\mat{\Sigma}}$$ as $T\to \infty,$ where $\aat = \left( \widetilde V_{t}^{(\bbeta)}(\ppi;\ttheta^\ast),\rho_{t}^{(\bbeta)}(\ppi;\ttheta^\ast)\right)^\top.$
\end{enumerate}

\end{assumption}
\Cref{assump:Policy_est1}.\ref{assump: 1a} states that there exist uniform upper bounds on the outcome over all time periods and units. \Cref{assump:Policy_est1}.\ref{assump: 1b} requires that the mean of a sequence of conditional variance converges to a positive definite matrix. The assumption about the convergence of the covariance matrix is relatively mild, given that the random vector $\aat$ is bounded above and bounded away from $\bm 0$.

The following assumptions are needed for the asymptotic normality of the estimator with the estimated propensity score.

\begin{assumption}[Regularity conditions for the propensity score model]\label{assump: ps_model}
Assume that the parametric form of the propensity score indexed by $\ttheta,  
\P\left(\bm W_t=\bm w_t \mid \lHt[-1] ; \ttheta\right)$, is correctly specified and differentiable with respect to $\ttheta\in\mathbb{R}^K$, and let $$\psi\left(W_t, \lHt[-1] ; \ttheta\right)=\frac{\partial}{\partial \ttheta} \log \P\left(W_t=w_t \mid \lHt[-1]=\lct{h}{t-1} ; \ttheta\right)$$ be twice continuously differentiable score functions. Let $\ttheta^\ast$ denote the true values of the parameters, where $\ttheta^\ast$ is in an open subset of the Euclidean space. Denote $\mathcal{F}_t=\lHt[-M+1]^\ast$. We assume that the following conditions hold: 
\begin{enumerate}
    \item \label{assump: ps_model1}
    \begin{enumerate}[label = (\alph*)]
        \item $\E_{\ttheta^\ast}\left[\left\|\psi\left(W_t, \lHt[-1] ; \ttheta^\ast\right)\right\|^2\right]<\infty$,
        \item \label{assump: ps_model1b}There exists a positive definite matrix $\mat{\Sigma}_{ps}$ such that
        $$
        \frac{1}{T} \sum_{t=1}^T \E_{\ttheta^\ast}\left(\psi\left(W_t, \lHt[-1] ; \ttheta^\ast\right) \psi\left(W_t, \lHt[-1] ; \ttheta^\ast\right)^{\top} \mid \mathcal{F}_{t-1}\right) \pto \mat{\Sigma}_{ps}
        $$
        \item $\frac{1}{T} \sum_{t=1}^T \E_{\ttheta^\ast}\left[\left\|\psi\left(W_t, \lHt[-1] ; \ttheta^\ast\right)\right\|^2 I\left(\left\|\psi\left(W_t, \lHt[-1] ; \ttheta^\ast\right)\right\|>\epsilon \sqrt{T}\right) \mid \mathcal{F}_{t-1}\right] \pto 0$, for all $\epsilon>0$, as $T\to\infty.$
    \end{enumerate}
\item

For all $k, j$, if we denote the $k^{th}$ element of the $\psi\left(w_t, \lHt[-1] ; \ttheta\right)$ vector by $\psi_k\left(w_t, \lHt[-1] ; \ttheta\right)$ and $P_{kj t}=\frac{\partial}{\partial \ttheta_j} \psi_k\left(W_t, \lHt[-1] ; \ttheta\right)|_{\ttheta^\ast}$, then $\E_{\ttheta^\ast}\left[\left|P_{k j t}\right|\right]<\infty$ and there exists $0<r_{k j} \leq 2$ such that $\sum_{t=1}^T \frac{1}{t^{r_{k j}}} \E_{\ttheta^\ast}\left(\left|P_{k j t}-\E_{\ttheta^\ast}\left(P_{k j t} \mid \mathcal{F}_{t-1}\right)\right|^{r_{k j}} \mid \mathcal{F}_{t-1}\right) \pto 0$
\item\label{assump: IPWest3} There exists an integrable function $\ddot{\psi}\left(w_t, \lct{h}{t-1}\right)$ such that $\ddot{\psi}\left(w_t, \lct{h}{t-1}\right)$ dominates the second partial derivatives of $\psi\left(w_{t}, \lct{h}{t-1} ; \ttheta\right)$ in a neighborhood of $\ttheta^\ast$ for all $\left(w_{t}, \lct{h}{t-1}\right)$
\item  Let  $\psi_T(\ttheta) = \sum_{t = M}^T\psi\left(W_t, \lHt[-1] ; \ttheta\right)$ and $\psi(\ttheta) = E_{\ttheta^\ast}[\psi_T(\ttheta)|\lHt[-1]^\ast].$ There exists a unique solution $\hat{\ttheta}$ to $\psi_T(\ttheta)=0$ and 
$\sup_{{\ttheta}\in \mathcal{N}}|\psi_T(\ttheta)-\psi(\ttheta)|\pto0$ as $T\to\infty,$ where $\mathcal{N}$ is an open set that contains $\hat\ttheta$ and $\ttheta^\ast.$
\item\label{assump: ps_model5} Let $$s(\lHt[-1],\bm W_t,\bm Y_t;\ttheta) = \left(\widetilde{V}^{(\bbeta)}_{t}(\ppi;\ttheta)-V_t^{\ppi}\left(\lHt[-M]\right),\ \rho_{t}^{(\bbeta)}(\ppi)-1\right)^\top,$$ the following conditions hold.
\begin{enumerate}[label=(\alph*)]
    \item\label{assump: IPWestadda} There exists $\mat{U}\in\mathbb{R}^{K\times2}$ such that $$\shiftmean \E_{\ttheta^\ast}[\psi(W_t,\lHt[-1];\ttheta^\ast)s(\lHt[-1],W_t,Y_t;\ttheta^\ast)^\top\mid\lHt[-M]^\ast]\pto \mat{U},\ \mathrm{and}\ $$
    $\begin{bmatrix}
        \widetilde{\mat{\Sigma}} & \mat{U}^\top\\
        \mat{U} & \mat{\Sigma}_{ps}
        \end{bmatrix}$ is positive definite.
    \item\label{assump: IPWestaddb} If $P_{jt} = \frac{\partial}{\partial \ttheta_j} s(\lHt[-1],W_t,Y_t;\ttheta)\Big|_{\ttheta^\ast}$, where $\ttheta_j$ is the $j^{th}$ entry of $\ttheta$, then there exists $r_j\in(0,2]$ such that
 $$\shiftmean \frac{1}{t^{r_j}}(|P_{jt}-\E_{\ttheta^\ast}[P_{jt}\mid\lHt[-M]^\ast]|)\pto 0.$$
\end{enumerate}
\end{enumerate}

\end{assumption}

\Cref{assump: ps_model} states that the model is correctly specified and the score function is bounded in the $L^2$-norm. Moreover, the average expectation of the product of the score function with itself over all time periods converges to a specific positive definite matrix. The assumption also limits the extent to which the derivative of the score function varies around its conditional expectation.  Moreover, the assumption ensures that the second partial derivatives of the score function are bounded in magnitude by an integrable function, which controls their growth and guarantees stability in a neighborhood of $\ggamma^\ast.$ Lastly, the assumption regulates the behavior of function $s(\lHt[-1],W_t,Y_t;\ttheta)$ and the propensity score function. Specifically, it states that the average expectation of the outer product of $s(\lHt[-1],W_t,Y_t;\ttheta)$ and $s(\lHt[-1],W_t,Y_t;\ttheta)$ converges to a matrix as $T$ grows and $\mat{\Sigma}^\ast$ is positive definite. Furthermore, it limits how much the derivative of $s(\lHt[-1],W_t,Y_t;\ttheta)$
 can vary around its conditional
expectation.

\begin{remark}\label{rmk: ps}
    If Assumption~\ref{assump: unconfoundedness} holds, and $\psi(\bm w_t,\lct{h}{t-1};\ttheta)$ are score functions that satisfy Assumption~\ref{assump: ps_model}, then
    \begin{enumerate}
        \item $E_{\ttheta^\ast}[\psi(\bm W_t,\lHt[-1];\ttheta^\ast)] = 0$, $E_{\ttheta^\ast}\left[||\psi(\bm W_t,\lHt[-1];\ttheta^\ast)||\right]<\infty$, and
        \item $E_{\ttheta^\ast}\left(-\frac{\partial}{\partial\ttheta^\top}\psi(\bm W_t,\lHt[-1];\ttheta)\big|_{\ttheta^\ast}\mid \ft\right) = E_{\ttheta^\ast}\left(\psi(\bm W_t,\lHt;\ttheta)\psi(\bm W_t,\lHt;\ttheta)^\top\mid \ft\right)$ which implies that $\frac{1}{T}\sum_{t= 1}^T E_{\ttheta^\ast}\left(-\frac{\partial}{\partial\ttheta^\top}\psi(\bm W_t,\lHt[-1];\ttheta)\big|_{\ttheta^\ast}\mid \ft\right)\pto\mat{\Sigma}_{ps}.$
        \item $\hat\ttheta\pto\ttheta^\ast$ as $T\to\infty.$
    \end{enumerate}
\end{remark}
\begin{proof}
    The proof of 1 and 2 is similar to the proof of Lemma A.1 in \cite{papadogeorgou2022causal}, and thus omitted. We will show that $\hat\ttheta$ is consistent for $\ttheta^\ast.$ By Remark~\ref{rmk: ps} and the definition of $\hat\ttheta,$ we have
    $$|\psi(\hat\ttheta)-\psi(\ttheta^\ast)| = |\psi(\hat\ttheta)-\psi_T(\hat\ttheta)|\leq \sup_{\ttheta\in\mathcal{N}}|\psi(\ttheta)-\psi_T(\ttheta)|\pto 0 $$ as $T\to 0.$ Let $\epsilon>0$ be arbitrary. Since $\hat\ttheta$ is the unique solution to $\psi_T(\ttheta)=0$, there exist $\delta>0$ such that $|\psi(\ttheta)-\psi(\ttheta^\ast)|>\delta$ for all $\ttheta$ with $|\ttheta-\ttheta^\ast|>\epsilon.$ Thus,
    $$\P(|\hat\ttheta-\ttheta^\ast|>\epsilon)\leq \P(|\psi_T(\ttheta)-\psi(\ttheta^\ast)|>\delta)\to 0,$$ as $T\to\infty$. So $\hat\ttheta\pto\ttheta^\ast,$ as $T\to\infty$.
\end{proof}

\subsection{Asymptotic normality based on true propensity score}\label{a:subsec:normality_true_ps}
In this section, we establish the asymptotic normality of the proposed estimator based on true propensity score by utilizing the martingale theorem. The results shown in this section is similar to the asymptotic result in \cite{zhou2024estimating}. Consider a policy over $M$ time periods, $\ppi^{(M)} = (\ppi_1,\dots,\ppi_M)$ and $\bbeta\in\{1,\dots,n\}^M$.
Define 
\begin{equation}\label{a:eq:tildeV}
 \tildeVPS[\bbeta][\ppi][\ttheta][_{ti}][] = \rhotPS[\bbeta][\ppi][\ttheta][]Y_{ti}\quad\text{and}\quad \tildeVPS[\bbeta][\ppi][\ttheta][_{t}][] = \frac{1}{n}\sum_{i=1}^n \tildeVPS[\bbeta][\ppi][\ttheta][_{ti}][].
\end{equation}
Let
\begin{equation}\label{a:eq:hatVt}
\hat V_t^{(\bbeta)}(\ppi^{(M)};\ttheta)=\hat\lambda\dfrac{\rhotPS[\bbeta][\ppi^{(M)}][\ttheta][]}{\rhobar}+(1-\hat\lambda)(\rhotPS[\bbeta][\ppi^{(M)}][\ttheta][]+(1-\rhobar))\overline{Y}_t
\end{equation} denote the time-$t$ contribution to $\hatVPS[][\bbeta][\ppi^{(M)}][\ttheta
].$ Let 
\begin{equation*}
  V_t^{\ppi^{(M)}}\left(\lHt[-M]\right) = \frac{1}{n}\sum_{i=1}^nY_{ti}^{\ppi^{(M)}}\left(\lHt[-M]\right)  
\end{equation*}
be the true policy value averaged across all units for time period $t.$
We begin with proving the following sequence is a martingale difference sequence. 

\begin{lemma}\label{lemma1}
Let $\bbeta=(\beta_1,\dots,\beta_M)\in\{1,2,\dots,n\}^M,$ and $\ppi^{(M)} = (\ppi_1,\dots,\ppi_M)\in\Pi^M$  Under \Cref{assump: unconfoundedness}, $\E_{\ttheta^\ast}[\rho^{(\bbeta)}_{t}(\ppi^{(M)};\ttheta^\ast)\mid \lHt[-M]]$ for all $1\le t\le T.$
\end{lemma}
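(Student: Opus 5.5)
The statement is evidently missing its right-hand side; given the text that follows ("$\E[\rhot\mid\lHt[-M]] = 1$ by \cref{prop1_multi}"), we read it as $\E_{\ttheta^\ast}[\rho^{(\bbeta)}_{t}(\ppi^{(M)};\ttheta^\ast)\mid \lHt[-M]]=1$. We prove this in two stages: first the single-period identity, then iterated conditioning over the $M$ periods.

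\emph{Single period.} Fix $s$ and a policy $\ppi\in\Pi$. Conditional on $\lHt[-1]$ (with $t$ replaced by $s$), the policy decisions $\pi_i(\overline H_{s-1})$ are fixed. Hence
\[
\E[\rho_{si}(\ppi)\mid \overline H_{s-1}] = \frac{\P(W_{si}=\pi_i\mid \overline H_{s-1})}{e_{si}(\pi_i)} = 1 .
\]
By \Cref{assump: factor_ps}, the $W_{si}$ are conditionally independent across $i$ given $\overline H_{s-1}$. Therefore, for any set of distinct indices $i_1<\dots<i_k$,
\[
\E\Big[\prod_{j=1}^k(\rho_{si_j}(\ppi)-1)\,\Big|\,\overline H_{s-1}\Big]=\prod_{j=1}^k \E[\rho_{si_j}(\ppi)-1\mid\overline H_{s-1}]=0 .
\]
Every term of the expansion in \cref{eq: addIPW_weight} beyond the leading $1$ is such a product with $k\ge1$. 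Summing gives $\E[\rho^{(\beta)}_s(\ppi)\mid\overline H_{s-1}]=1$ for any $\beta$. The positivity bound in \Cref{assump: factor_ps} keeps every term bounded, so all the expectations exist.

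\emph{Multiple periods.} Write $\rhot=\prod_{m=1}^M\rho^{(\beta_m)}_{t-m+1}(\ppi_m)$. The factor with index $m$ is a function of $\bm W_{t-m+1}$ and $\overline H_{t-m}$ only, so it is measurable with respect to $\overline H_{t-m+1}$.

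We peel off one factor at a time, starting from the latest period. Condition first on $\overline H_{t-1}$. All factors with $m\ge2$ are then fixed, and the $m=1$ factor has conditional mean $1$ by the single-period step. Next condition on $\overline H_{t-2}$ and repeat. By the tower property,
\[
\E[\rhot\mid\lHt[-M]]=\E\big[\cdots\E[\E[\rhot\mid\lHt[-1]]\mid \lHt[-2]]\cdots\mid\lHt[-M]\big]=1 .
\]

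The one delicate point is that the propensity scores in the weights must be the true conditional probabilities given the observed history, i.e.\ evaluated at $\ttheta^\ast$. This is exactly why the statement uses $\E_{\ttheta^\ast}$ and $\ttheta^\ast$.

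\Cref{assump: unconfoundedness} is not needed for this identity. It matters only if one conditions on the larger $\sigma$-field $\lHt[-M]^\ast$, which includes the potential outcomes. In that case \Cref{assump: unconfoundedness} guarantees that $\P(\bm W_s\mid \overline H^\ast_{s-1})=\P(\bm W_s\mid\overline H_{s-1})$, and the argument goes through unchanged.
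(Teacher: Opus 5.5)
Your proof is correct and follows the paper's argument: you establish the unit-level identity $\E[\rho_{si}(\ppi)\mid \overline H_{s-1}]=1$, use the conditional independence in \Cref{assump: factor_ps} to make every centered product in \cref{eq: addIPW_weight} vanish, and then peel off one period at a time by iterated conditioning, which is the paper's induction. The differences are minor. The paper conditions on the full history $\overline H^\ast_{t-1}$ in the single-period step and invokes \Cref{assump: unconfoundedness} there, which is the point you flag in your last paragraph. You also spell out why the higher-order centered products have zero mean, which the paper leaves implicit after checking only the pairwise product.
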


\begin{proof}
        For all $1\le i<j\le n$, $1\le t\le T$, we have
\begin{align*}
   \E_{\ttheta^\ast}[\rho_{ti}(\ppi_1;\ttheta^\ast)\mid\lHt[-1]^\ast] & = \E_{\ttheta^\ast}\left[\frac{\mathds{1}(W_{ti} = \pi_{1i}\left(\lHt[-1]\right))}{e_{ti}(W_{ti};\ttheta^\ast)}\mid \lHt[-1]^\ast\right] \\
   & = \frac{\P(W_{ti} = \pi_{1i}\left(\lHt[-1]\right)\mid\lHt[-1]^\ast)}{e_{ti}(W_{ti};\ttheta^\ast)}\\
   & = \frac{\P(W_{ti} = \pi_{1i}\left(\lHt[-1]\right)\mid\lHt[-1])}{e_{ti}(W_{ti};\ttheta^\ast)} \hspace{2cm}\text{(Assumption 1)}\\
   & = \frac{e_{ti}(W_{ti};\ttheta^\ast)}{e_{ti}(W_{ti};\ttheta^\ast)}=1
\end{align*}
Moreover, by Assumption \ref{assump: factor_ps}, we have $$\E_{\ttheta^\ast}[\rho_{ti}(\ppi_1;\ttheta^\ast)\rho_{tj}(\ppi_1;\ttheta^\ast)\mid\lHt[-1]^\ast] = \E_{\ttheta^\ast}[\rho_{ti}(\ppi_1;\ttheta^\ast)\mid\lHt[-1]^\ast]\E_{\ttheta^\ast}[\rho_{tj}(\ppi_1;\ttheta^\ast)\mid\lHt[-1]^\ast] = 1.$$ 
Thus,
\begin{align*}
    &\E_{\ttheta^\ast}\left[\rho_t^{(\beta_1)}(\ppi_1;\ttheta^\ast)\mid\lHt[-1]^\ast\right] \\
    =& \E_{\ttheta^\ast}\Big[1+\sum_{i=1}^n(\rho_{ti}(\ppi_1;\ttheta^\ast) -1) +\sum_{1\leq i_1<i_2\leq n}(\rho_{ti_1}(\ppi_1;\ttheta^\ast) -1)(\rho_{ti_2}(\ppi_1;\ttheta^\ast) -1)+\dots\\
    &+\sum_{1\le i_1<\dots<i_{\beta_1}\le n}(\rho_{ti_1-1}(\ppi_1;\ttheta^\ast)-1 )\dots(\rho_{ti_{\beta_1}}(\ppi_1;\ttheta^\ast) -1)\mid\lHt[-1]^\ast\Big]=1.
\end{align*}
Observe that
\begin{align*}
    &\E_{\ttheta^\ast}[\rho_t^{(\beta_1,\beta_2)}(\ppi_1,\ppi_2;\ttheta^\ast)\mid \lHt[-2]]\\ = &\E_{\ttheta^\ast}\left[\rho_{t-1}^{(\beta_2)}(\ppi_2;\ttheta^\ast)\E_{\ttheta^\ast}[\rho_t^{(\beta_1)}(\ppi_1;\ttheta^\ast)\mid \lHt[-1]]\mid\lHt[-2]\right]\\
    = &\E_{\ttheta^\ast}[\rho_{t-1}^{(\beta_2)}(\ppi_2;\ttheta^\ast)\mid\lHt[-2]].
\end{align*}
Using a similar argument as above, $\E_{\ttheta^\ast}[\rho_{t-1}^{(\beta_2)}(\ppi_2;\ttheta^\ast)\mid\lHt[-2]]=1$, and thus $\E_{\ttheta^\ast}[\rho_t^{(\beta_1,\beta_2)}(\ppi_1,\ppi_2;\ttheta^\ast)\mid \lHt[-2]]=1.$
By induction, we can show that $E_{\ttheta^\ast}[\rho_{t}^{(\bbeta)}(\ppi^{(M)};\ttheta^\ast)\mid\lHt[-M]] = 1,$ as desired.
\end{proof}

\begin{lemma}\label{a: lemma_MDS}
  Under \Cref{assump: unconfoundedness,assump: additive_model_multi,assump: factor_ps,assump:Policy_est1}, the sequence \\
  $\aat = \left(\widetilde{V}^{(\bbeta)}_{t}(\ppi^{(M)};\ttheta^\ast)-V_t^{\ppi}\left(\lHt[-1]\right),\rho_{t}^{(\bbeta)}(\ppi^{(M)};\ttheta^\ast)-1\right)^\top$ is a martingale difference sequence with respect to filtration $\mathcal{F}_t = \lHt[-M+1]^\ast.$
\end{lemma}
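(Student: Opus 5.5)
The plan is to verify the three defining properties of a martingale difference sequence for $\aat$ with respect to $\mathcal{F}_t=\lHt[-M+1]^\ast$. These are adaptedness, integrability, and $\E[\aat\mid\mathcal{F}_{t-1}]=\bm 0$, where $\mathcal{F}_{t-1}=\lHt[-M]^\ast$.

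\textbf{Adaptedness and integrability.} Adaptedness follows from the construction. The components of $\aat$ are built from $\bm W_{t-M+1},\dots,\bm W_t$, the policies evaluated on observed histories, the true propensity scores, and fixed potential outcomes. All of these are determined by the filtration. Integrability follows because $|Y_{ti}|<\delta_Y$ (\Cref{assump:Policy_est1}.\ref{assump: 1a}) and each $\rho_{ti}\le 1/\eta$ (\Cref{assump: factor_ps}). Together these make $\rho^{(\bbeta)}_t$ bounded by a constant depending only on $n,\bbeta,M,\eta$.

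\textbf{Second coordinate.} The second coordinate has conditional mean zero by \Cref{lemma1}, which gives $\E[\rho^{(\bbeta)}_t(\ppi^{(M)};\ttheta^\ast)\mid \lHt[-M]]=1$. To pass from conditioning on $\lHt[-M]$ to conditioning on $\lHt[-M]^\ast$, I use \Cref{assump: unconfoundedness}. Under it, the law of $(\bm W_{t-M+1},\dots,\bm W_t)$ given the complete history equals its law given the observed history, applied step by step in the tower argument of \Cref{lemma1}.

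\textbf{Key algebraic identity.} The first coordinate rests on the following fact. Fix $\overline H^\ast_{s-1}$ and a single-period policy $\ppi$, and write $\rho_i=\rho_{si}(\ppi)$, $e_i=e_{si}(1)$, $\pi_i=\pi_i(\lct{H}{s-1})$. Then for any $S\subseteq\{1,\dots,n\}$ with $|S|\le\beta$,
\[
\E\Big[\rho^{(\beta)}_s(\ppi)\prod_{i\in S}W_{si}\,\Big|\,\overline H^\ast_{s-1}\Big]=\prod_{i\in S}\pi_i .
\]
To prove it, expand $\rho^{(\beta)}_s(\ppi)=\sum_{|A|\le\beta}\prod_{i\in A}(\rho_i-1)$ and use the conditional independence across units from \Cref{assump: factor_ps}. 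Any $i\in A\setminus S$ contributes the factor $\E[\rho_i-1]=0$, so only subsets $A\subseteq S$ survive. Since $|S|\le\beta$, all such $A$ appear in the expansion. For $i\in S\cap A$ the factor is $\E[(\rho_i-1)W_{si}]=\pi_i-e_i$, and for $i\in S\setminus A$ it is $\E[W_{si}]=e_i$. Summing over $A\subseteq S$ therefore gives
\[
\prod_{i\in S}\big((\pi_i-e_i)+e_i\big)=\prod_{i\in S}\pi_i .
\]
Consequently $\E[\rho^{(\beta)}_s(\ppi)\,\phi^{(\beta)}(\bm W_s)\mid\overline H^\ast_{s-1}]=\phi^{(\beta)}(\ppi(\lct{H}{s-1}))$.

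\textbf{First coordinate by backward induction.} I show that $\E[\rho^{(\bbeta)}_t Y_{ti}\mid \lHt[-M]^\ast]=Y_{ti}^{\ppi^{(M)}}(\lHt[-M])$, peeling off one period at a time.
\begin{enumerate}
\item Condition first on $\lHt[-1]^\ast$. The earlier factors $\rho^{(\beta_m)}_{t-m+1}$ for $m\ge2$ are measurable with respect to it. The first equation of \Cref{assump: additive_model_multi} gives $Y_{ti}=\bm g_{1ti}(\lHt[-1])^\top\phi^{(\beta_1)}(\bm W_t)$, and the identity turns this into $Y^{\ppi_1}_{ti}(\lHt[-1])$.
\item Viewed as a function of the realized $\bm W_{t-1}$, $Y^{\ppi_1}_{ti}(\lHt[-1])$ is exactly $Y^{\ppi_1}_{ti}(\lHt[-1]^{\bm w_{t-1}})$ at $\bm w_{t-1}=\bm W_{t-1}$. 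By the second equation of \Cref{assump: additive_model_multi} it is linear in $\phi^{(\beta_2)}(\bm W_{t-1})$ with coefficient $\bm g_{2ti}(\lHt[-2])$.
\item Conditioning on $\lHt[-2]^\ast$ and applying the identity again yields $Y^{\ppi_1,\ppi_2}_{ti}(\lHt[-2])$.
\item Iterating $M$ times gives the claim. Averaging over $i$ then shows the first coordinate has mean zero.
\end{enumerate}

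\textbf{Main obstacle.} The delicate step is this induction. One must check that substituting the realized $\bm W_{t-m+1}$ into the counterfactual history $\overline H^{\bm w}_{t-m+1}$ reproduces the observed history, so the sequential models can be applied to observed quantities. One must also check that the policy arguments $\ppi_m(\lHt[-m])$ are measurable with respect to the conditioning set, so they can be treated as constants inside the identity.
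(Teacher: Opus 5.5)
\textbf{The gap: adaptedness fails for $M\ge 2$.} Your claim that $\aat$ is adapted is false whenever $M\ge 2$. The fault lies in the statement itself, not only in your write-up. The filtration is $\mathcal F_t=\lHt[-M+1]^\ast=\{\overline{\bm W}_{t-M+1},\overline{\mathcal Y}_T,\overline{\mathcal X}_T\}$, which contains treatments only through time $t-M+1$. But $\aat$ depends on $\bm W_{t-M+2},\dots,\bm W_t$, through the factors $\rho^{(\beta_1)}_t(\ppi_1),\dots,\rho^{(\beta_{M-1})}_{t-M+2}(\ppi_{M-1})$ and through $Y_{ti}$.

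No other filtration rescues the claim. A square-integrable martingale difference sequence has $\E[\bm A_t\bm A_s^\top]=\bm 0$ for $s<t$. Take $n=1$, $M=2$, $\bbeta=(1,1)$, $\ppi_1=\ppi_2\equiv 1$, and i.i.d.\ fair-coin assignment ($e_{t1}(1)\equiv 1/2$). The identification assumptions hold trivially when $n=1$, and the outcomes play no role. The second coordinate of $\aat$ is $4W_{t1}W_{t-1,1}-1$, and
\[
\E\big[(4W_{t1}W_{t-1,1}-1)(4W_{t-1,1}W_{t-2,1}-1)\big]=2-1-1+1=1\neq 0 .
\]
So the lemma is true for $M=1$, where $\mathcal F_t=\lHt^\ast$ and your adaptedness argument is fine, but false in general for $M\ge 2$.

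\textbf{What is actually proved, and how to restate it.} Your argument establishes the $M$-step property $\E[\aat\mid\lHt[-M]^\ast]=\bm 0$ with $\aat$ bounded. This is also all the paper's proof establishes: it checks $\E[\aat\mid\lHt[-M]]=\bm 0$ and boundedness and never addresses adaptedness. You should state the conclusion in a form that is true. For example, each interleaved subsequence $\{\bm A_{r+kM}\}_{k\ge 1}$, $r=0,\dots,M-1$, is a bounded martingale difference sequence with respect to $\{\overline H^\ast_{r+kM}\}_{k}$. Alternatively, use the telescoping decomposition $\aat=\sum_{m=0}^{M-1}\big(\E[\aat\mid\lHt[-m]^\ast]-\E[\aat\mid\lHt[-m-1]^\ast]\big)$. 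Regrouped by conditioning time, it writes $\sum_t\aat$ as a genuine martingale plus boundary terms. A CLT built on this picks up covariances at lags $1,\dots,M-1$. Also, the centering must be $V_t^{\ppi}(\lHt[-M])$, not $V_t^{\ppi}(\lHt[-1])$ as typed in the statement; you correctly used the former.

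\textbf{Comparison with the paper's route.} Your mean-zero computation is correct, and it takes a different route from the paper. The paper rewrites $\widetilde V^{(\beta_1)}_t$ in projection form,
$\frac1n\sum_i\phi^{(\beta_1)}(\ppi_1(\lHt[-1]))^\top\E[\phi^{(\beta_1)}(\bm W_t)\phi^{(\beta_1)}(\bm W_t)^\top\mid\lHt[-1]]^{-1}\phi^{(\beta_1)}(\bm W_t)Y_{ti}$.
It asserts invertibility and the equivalence with the additive weight without proof there; the equivalence is only verified later, in the network-restricted extension, via a centered basis with diagonal Gram matrix. It then cancels the Gram matrix against $\E[\phi^{(\beta_1)}(\bm W_t)\phi^{(\beta_1)}(\bm W_t)^\top\bm g_{1ti}(\lHt[-1])\mid\lHt[-1]]$.

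You instead prove the reproducing identity $\E[\rho^{(\beta)}_s(\ppi)\phi^{(\beta)}(\bm W_s)\mid\overline H^\ast_{s-1}]=\phi^{(\beta)}(\ppi(\lct{H}{s-1}))$ directly, by expanding the weight and using unit-wise independence. This is self-contained and needs no matrix inversion. The paper's form has the advantage of immediately giving the projection interpretation used when $\bbeta$ is misspecified. You also appeal explicitly to unconfoundedness to pass from observed- to complete-history conditioning. And you check that the counterfactual history at $\bm w_{t-1}=\bm W_{t-1}$ is the observed one. Both steps are left implicit in the paper's induction.
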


\begin{proof}
By \cref{lemma1}, we have $\E_{\ttheta^\ast}[\rhot\mid\lHt[-M]]=1.$ Then we show that $\E_{\ttheta^\ast}[\widetilde{V}_t^{(\bbeta)}(\ppi;\ttheta^\ast)\mid \lHt[-M]] = V_t^{\ppi}\left(\lHt[-M]\right).$
Note that under \Cref{assump: factor_ps}, the matrix $$\E\left[\phi^{(\beta_1)}(\bm W_t)\phi^{(\beta_1)}(\bm W_t)^\top \mid \lHt[-1]\right]$$ is invertible and $\widetilde{V}^{(\beta_1)}(\ppi_1)$ can be rewritten as 
\begin{align}\label{eq:V_matrix_form}
    \widetilde{V}^{(\beta_1)}_t(\ppi_1;\ttheta^\ast) &= \frac{1}{n}\sum_{i=1}^n\phi^{(\beta_1)}(\ppi_1\left(\lHt[-1]\right))^\top\E\left[\phi^{(\beta_1)}(\bm W_t)\phi^{(\beta_1)}(\bm W_t)^\top \mid \lHt[-1]\right]^{-1}\phi^{(\beta_1)}(\bm W_t)Y_{ti}
\end{align}
By \Cref{assump: additive_model_multi} and \cref{eq:V_matrix_form}
\begin{align*}
    &\E_{\ttheta^\ast}\left[\widetilde V_t^{(\beta_1)}(\ppi_1;\ttheta^\ast)\mid \lHt[-1] \right]\\
    =&\E_{\ttheta^\ast}\left[\frac{1}{n}\sum_{i=1}^n\phi^{(\beta_1)}(\ppi_1\left(\lHt[-1]\right))^\top\E\left[\phi^{(\beta_1)}(\bm W_t)\phi^{(\beta_1)}(\bm W_t)^\top \mid \lHt[-1]\right]^{-1}\phi^{(\beta_1)}(\bm W_t)Y_{ti}\mid\lHt[-1]\right]\\
    =&\frac{1}{n}\sum_{i=1}^n\phi^{(\beta_1)}(\ppi_1\left(\lHt[-1]\right))^\top\E\left[\phi^{(\beta_1)}(\bm W_t)\phi^{(\beta_1)}(\bm W_t)^\top \mid \lHt[-1]\right]^{-1}\E\left[\phi^{(\beta_1)}(\bm W_t)\phi^{(\beta_1)}(\bm W_t)^\top\bm g_{1ti}\left(\lHt[-1]\right)\mid\lHt[-1]\right]\\
    =& \frac{1}{n}\sum_{i=1}^n\bm g_{1ti}\left(\lHt[-1]\right)^\top \phi^{(\beta_1)}(\ppi_1\left(\lHt[-1]\right)) \\
    =&\frac{1}{n}\sum_{i=1}^n Y_{ti}^{\ppi_1}\left(\lHt[-1]\right) = V_t^{\ppi_1}\left(\lHt[-1]\right).
\end{align*}
Similarly,
 $\widetilde{V}^{(\beta_1,\beta_2)}(\ppi_1,\ppi_2)$ can be rewritten as 
\begin{align}
    \widetilde{V}^{(\beta_1,\beta_2)}_t(\ppi_1,\ppi_2;\ttheta^\ast) &= \frac{1}{n}\sum_{i=1}^n\phi^{(\beta_2)}\left(\ppi_2(\lHt[-2])\right)^\top\E\left[\phi^{(\beta_2)}(\bm W_{t-1})\phi^{(\beta_2)}(\bm W_{t-1})^\top \mid \lHt[-2]\right]^{-1}\phi^{(\beta_2)}(\bm W_t)\widetilde{V}^{(\beta_1)}_{ti}(\ppi_1),
\end{align}
and we have
\begin{align*}
    &\E_{\ttheta^\ast}\left[\widetilde V_t^{(\beta_1,\beta_2)}(\ppi_1,\ppi_2;\ttheta^\ast)\mid \lHt[-2] \right]\\
    =&\E_{\ttheta^\ast}\left[E_{\ttheta^\ast}\left[\widetilde V_t^{(\beta_1,\beta_2)}(\ppi_1,\ppi_2;\ttheta^\ast)\mid\lHt[-1]\right]\mid \lHt[-2] \right]\\
    =&\E_{\ttheta^\ast}\left[\phi^{(\beta_2)}(\ppi_2\left(\lHt[-2]\right))^\top\E\left[\phi^{(\beta_2)}(\bm W_{t-1})\phi^{(\beta_2)}(\bm W_{t-1})^\top \mid \lHt[-2]\right]^{-1}\phi^{(\beta_2)}(\bm W_t)E_{\ttheta^\ast}\left[\widetilde V_t^{(\beta_1)}(\ppi_1;\ttheta^\ast)\mid\lHt[-1]\right]\mid \lHt[-2] \right]\\
    =&\E_{\ttheta^\ast}\left[\phi^{(\beta_2)}(\ppi_2\left(\lHt[-2]\right))^\top\E\left[\phi^{(\beta_2)}(\bm W_{t-1})\phi^{(\beta_2)}(\bm W_{t-1})^\top \mid \lHt[-2]\right]^{-1}\phi^{(\beta_2)}(\bm W_t)Y_{ti}^{\ppi_1}\left(\lHt[-1]\right)\mid \lHt[-2] \right]\\
    =&\frac{1}{n}\sum_{i=1}^n\phi^{(\beta_2)}(\ppi_2\left(\lHt[-2]\right))^\top\E\left[\phi^{(\beta_2)}(\bm W_{t-1})\phi^{(\beta_2)}(\bm W_{t-1})^\top \mid \lHt[-2]\right]^{-1}\\&\qquad\E\left[\phi^{(\beta_2)}(\bm W_{t-1})\phi^{(\beta_2)}(\bm W_{t-1})^\top\bm g_{2ti}\left(\lHt[-2]\right)\mid\lHt[-2]\right]\\
    =& \frac{1}{n}\sum_{i=1}^n\bm g_{2ti}\left(\lHt[-2]\right)^\top \phi^{(\beta_2)}(\ppi_2\left(\lHt[-2]\right)) \\
    =&\frac{1}{n}\sum_{i=1}^n Y_{ti}^{\ppi_1,\ppi_2}\left(\lHt[-2]\right) = V_t^{\ppi_1,\ppi_2}\left(\lHt[-2]\right).
\end{align*}
By induction, we can show that $\E_{\ttheta^\ast}[\widetilde{V}_t^{(\bbeta)}(\ppi;\ttheta^\ast)\mid \lHt[-M]]=V_t^{\ppi}.$
Combining these results, we have $\E[\aat\mid \lHt[-M]] = \bm 0.$ According to Assumption ~\Cref{assump:Policy_est1}.\ref{assump: 1a} and Assumption~\ref{assump: factor_ps}, both $\widetilde V_t^{(\bbeta)}(\ppi^{(M)};\ttheta^\ast)$ and $\rho_t^{(\bbeta)}(\ppi^{(M)};\ttheta^\ast)$ are bounded, so are their expectation. Thus, $\aat$ is a martingale difference sequence as desired.

\end{proof}

\begin{proof}[Proof of \cref{lemma:addIPW_true}]
    Let $$\aat^\ast = \left(\widetilde{V}^{(\bbeta)}_{t}(\ppi^{(M)};\ttheta^\ast)-V_t^{\ppi}\left(\lHt[-M]\right),\rhotPS[\bbeta][\ppi][\ttheta^\ast][]\right)^\top,$$ By \cref{a: lemma_MDS}, $\aat^\ast$ is a martingale difference sequence. By Theorem A.4 in \cite{zhou2024estimating},
    \[
    \frac{1}{\sqrt{T-M+1}}\sum_{t=M}^T\aat^*\dto N(\bm 0, \widetilde{\mat{\Sigma}}),
    \] as $T\to 0.$
Let
\begin{align*}
    &\mmu =\left(V^{\ppi},\;\rho\right)^\top,\\
    &\hat\mmu = \left(\tildeVTruePS,\;\rhobar\right)^\top,
\end{align*}
where $\rho=1.$
Finally, we apply the Delta method.
Define $$h(\mmu) = \lambda\frac{\tildeVTruePS}{\rho}+(1-\lambda)\left(\tildeVTruePS+(1-\rhobar)\overline{Y}\right)$$ The Jacobian matrix is 
$$
\mat{J}:=\mat{J}(\mmu^\ast) = 
\begin{bmatrix}
    1 &-\lambda V^{\ppi}-(1-\lambda)\overline{Y}
\end{bmatrix}
$$ Then, we obtain, 
$$\sqrt{T-M+1}(h(\hat\mmu)-h(\mmu^\ast))\dto N(\bm 0,\mat{J}\widetilde{\mat{\Sigma}}\mat{J}^\top)$$ as $T\to\infty.$ Since $h(\hat\mmu) = \hat V^{(\bbeta)}(\ppi,\lambda)$ and $h(\mmu^\ast) = V^{\ppi},$ we have
$$\sqrt{T-M+1}\left(\hat V^{(\bbeta)}(\ppi,\lambda)-V^{\ppi}\right)\overset{d}{\to}N(\bm 0,\mat{J}\widetilde{\mat{\Sigma}}\mat{J}^\top),$$ as $T\to\infty.$
\end{proof}

\begin{theorem}[Asymptotic normality of the normalized estimator using the true propensity score]\label{thm:addIPW_true}
Suppose that \Cref{assump: unconfoundedness,assump: additive_model_multi,assump: factor_ps,assump:Policy_est1} hold.  Then
$$\sqrt{T-M+1}\left(\hatV-V^{\ppi}\right)\overset{d}{\to}N(\bm 0,\mat{J}\widetilde{\mat{\Sigma}}\mat{J}^\top),$$ where $\widetilde{\mat{\Sigma}}$ represents the probability limit of $\frac{1}{T-M+1}\sum_{t=M}^T \mat{\Sigma}_t$ as $T\to \infty$ with
$\mat{\Sigma}_t = \Var(\widetilde V^{(\bbeta)}_t(\ppi;\ttheta^\ast)\mid \lHt[-M]^\ast)$ for $t\ge M$, and $\mat{J} = \begin{pmatrix}
    1  &-\lambda^\ast V^{\ppi}-(1-\lambda^\ast)\overline{Y}
\end{pmatrix}.$
\end{theorem}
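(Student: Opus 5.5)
The plan is to reduce \cref{thm:addIPW_true} to the fixed-$\lambda$ result already established in the proof of \cref{lemma:addIPW_true}, and then show that the randomness of $\hat\lambda$ is asymptotically negligible. The first step is an exact algebraic decomposition. Write $\hat V^{(\bbeta)}(\ppi^{(M)},\lambda)=\lambda\hatV[H,]+(1-\lambda)\hatV[S,]$. Then
\[
\hatV-\hat V^{(\bbeta)}(\ppi^{(M)},\lambda^\ast)=(\hat\lambda-\lambda^\ast)\left(\hatV[H,]-\hatV[S,]\right).
\]
With $\lambda=\lambda^\ast$ held fixed, the martingale CLT for the difference sequence $\aat$ of \cref{a: lemma_MDS} (Theorem A.4 of \cite{zhou2024estimating}) combined with the delta method (Jacobian $\mat J$ evaluated at $\lambda^\ast$) already gives
\[
\sqrt{T-M+1}\left(\hat V^{(\bbeta)}(\ppi^{(M)},\lambda^\ast)-V^{\ppi}\right)\dto N(\bm 0,\mat{J}\widetilde{\mat{\Sigma}}\mat{J}^\top).
\]
By Slutsky's theorem, it therefore suffices to show that $\sqrt{T-M+1}\,(\hat\lambda-\lambda^\ast)(\hatV[H,]-\hatV[S,])=o_p(1)$.

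For this I would factor the term as the product of $\hat\lambda-\lambda^\ast$, which is $o_p(1)$ by the consistency argument in \cref{a:sec:lambda}, and $\sqrt{T-M+1}\,(\hatV[H,]-\hatV[S,])$. The key step is to show that this second factor is $O_p(1)$. The two estimators differ by
\[
\hatV[H,]-\hatV[S,]=\frac{1}{T}\sum_t\rhot\overline{Y}_t\left(\frac{1}{\rhobar}-1\right)-(1-\rhobar)\,\overline{Y}.
\]
Both terms are multiples of $(1-\rhobar)$: the first equals $(1-\rhobar)\rhobar^{-1}\frac{1}{T}\sum_t\rhot\overline{Y}_t$. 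Hence
\[
\hatV[H,]-\hatV[S,]=(1-\rhobar)\left(\frac{\frac{1}{T}\sum_t\rhot\overline{Y}_t}{\rhobar}-\overline{Y}\right).
\]
Now $\sqrt{T-M+1}\,(1-\rhobar)$ is $O_p(1)$, because it is the (negated) second coordinate of the normalized martingale sum, which converges in distribution. The bracketed factor converges in probability to $V^{\ppi}-\overline{Y}$, which is bounded by \Cref{assump:Policy_est1}.\ref{assump: 1a}, since $\rhobar\pto1$ and $\frac1T\sum_t\rhot\overline{Y}_t\pto V^{\ppi}$. Multiplying, the remainder is $o_p(1)\cdot O_p(1)\cdot O_p(1)=o_p(1)$.

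Some bookkeeping is also needed. The averages are written with $1/T$ in some places and over $t=M,\dots,T$ in others; this offset is $O(1/T)$ and negligible after scaling by $\sqrt{T}$ because the summands are bounded. The $\overline{Y}$ appearing in $\mat J$ is itself random, so I would treat it as converging (or argue conditionally on the fixed potential-outcome array). If the smoothed projection $g_T$ is used in place of $\text{Proj}_{[0,1]}$, it still converges uniformly, so $\hat\lambda\pto\lambda^\ast$ continues to hold.

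The main obstacle is the consistency of $\hat\lambda$ itself, specifically when $\lambda^\ast$ sits at a kink of the projection or when the denominator limit $\plim\frac{1}{T-M+1}\sum_t B_t$ is zero (i.e., $V^{\ppi}=\overline{Y}$). In the degenerate case I would avoid consistency altogether: $\hat\lambda\in[0,1]$ is bounded, and the factor $\sqrt{T-M+1}\,(\hatV[H,]-\hatV[S,])$ is then $O_p(1)\cdot o_p(1)=o_p(1)$ on its own. The same boundedness argument covers any nonunique or boundary behavior, so the conclusion follows in all cases.
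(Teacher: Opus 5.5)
Your proposal is correct and follows essentially the same route as the paper. You both write $\hatV-\hat V^{(\bbeta)}(\ppi^{(M)},\lambda^\ast)=(\hat\lambda-\lambda^\ast)(\hatV[H,]-\hatV[S,])$, invoke the fixed-$\lambda^\ast$ result of \cref{lemma:addIPW_true}, and conclude by Slutsky with $o_p(1)\cdot O_p(1)$. The difference is in how $\sqrt{T-M+1}\,(\hatV[H,]-\hatV[S,])=O_p(1)$ is obtained. The paper gets it from the $\sqrt{T-M+1}$-consistency of both endpoint estimators; its display ``$\sqrt{T-M+1}\hatV[H,]=O_p(1)$'' should be read as centered at $V^{\ppi}$. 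Your exact factorization through $(1-\rhobar)$ makes this step explicit, and it also handles the degenerate case $V^{\ppi}=\overline{Y}$ without needing consistency of $\hat\lambda$.
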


\begin{proof}
 By \cref{lemma:addIPW_true} with $\lambda = \lambda^\ast$,
 \[
 \sqrt{T-M+1} \hat V^{(\bbeta)}(\ppi,\lambda^\ast)-V^{\ppi}\dto N(\bm 0, \bm J\widetilde{\bm\Sigma}\bm J^\top),
 \]
 as $T\to\infty.$ Moreover, by \cref{lemma:addIPW_true} with $\lambda = 1,$ we have $\sqrt{T-M+1}\hatV[H,]$ is asymptotically normal and thus  $\sqrt{T-M+1}\hatV[H,] = O_p(1).$ Similarly, $$\sqrt{T-M+1}\hatV[S,] = O_p(1).$$ Since $\hat\lambda$ is a consistent estimator for $\lambda^\ast,$ we have $(\hat\lambda-\lambda^\ast) = o_p(1).$ Combining these results,
 \begin{align*}
     &\sqrt{T-M+1}\left(\hatV-\hat V^{(\bbeta)}(\ppi,\lambda^\ast)\right)\\
     =&\sqrt{T-M+1}(\hat\lambda-\lambda^\ast)(\hatV[H,]-\hatV[S,])\\
     =& o_p(1)O_p(1) = o_p(1).
 \end{align*}
 So
 \begin{align*}
     &\sqrt{T-M+1}\left(\hatV-V^{\ppi}\right)\\
     =&\sqrt{T-M+1}\left(\hat V^{(\bbeta)}(\ppi,\lambda^\ast)-V^{\ppi}\right)+\sqrt{T-M+1}\left(\hatV-\hat V^{(\bbeta)}(\ppi,\lambda^\ast)\right)\\
     =&\sqrt{T-M+1}\left(\hat V^{(\bbeta)}(\ppi,\lambda^\ast)-V^{\ppi}\right)+o_p(1)\dto N(\bm 0, \bm J\widetilde{\bm\Sigma}\bm J^\top),
 \end{align*}
 as $T\to\infty.$
\end{proof}

\begin{remark}\label{rmk: bound_trueps}
Since $\E[\aat\aat^\top\mid\lHt[-M]]-\Var(\aat^*\mid\lHt[-M])$ is positive semidefinite,
\begin{align*}
B&:=\shiftmean\bm J\E[\aat\aat^\top\mid\lHt[-M]]\bm J^\top
\end{align*}
is an upper bound for $\bm J\widetilde{\bm \Sigma}\bm J^\top.$ We can construct a consistent estimator for this variance bound. Let $B_t = \bm J\E[\aat\aat^\top\mid\lHt[-M]]\bm J^\top.$ By simplification, 
$$B_t = \E\left[\rhotTruePS^2(\overline{Y}_t - \lambda^\ast V^{\ppi}-(1-\lambda^\ast)\overline{Y})^2\mid \lHt[-M]\right].$$ Let $\hat B_t(\lambda^\ast,V^{\ppi}) = \rhotTruePS^2(\overline{Y}_t - \lambda^\ast V^{\ppi}-(1-\lambda^\ast)\overline{Y})^2.$ Note that $\E[\hat B_t(\lambda^\ast,V^{\ppi})-B_t\mid \lHt[-M]]$ and $\hat B_t-B_t$ is bounded.  Then, Theorem~1 of \cite{csorgHo1968strong} implies:
\[
\shiftmean \left(\hat B_t(\lambda^\ast,V^{\ppi})-B_t\right)\pto 0,
\]
as $T\to\infty.$
Since $\hat\lambda\pto\lambda^\ast,$  and $\hatVTruePS\pto V^{\ppi}$, as $T\to\infty,$ we have \[
\shiftmean \left(\hat B_t(\hat\lambda,\hatVTruePS)-B_t\right)\pto 0,
\]
as $T\to\infty.$ Let 
\[
E = \shiftmean\left(\hat\lambda\rhotPS/\rhobar+(1-\hat\lambda)\right)^2\left(\overline{Y}_t - \hat\lambda V^{\ppi}-(1-\hat\lambda)\overline{Y}\right)^2
\]
Since $E$ is bounded and $\rhobar\pto 1$ as $T\to\infty,$ we have $(1-\rhobar)E\pto 0$ and $(1-\rhobar)^2E\pto 0.$ So 
 \[
\shiftmean \left(\hat B_t(\hat\lambda,\hatVTruePS)-2(1-\rhobar)E+(1-\rhobar)^2E-B_t\right)\pto 0,
\] By direct calculation, we have that
$$\text{Var}^{\text{UB}}(\hatVTruePS) = \frac{1}{(T-M+1)^2}\sum_{t=M}^T\E\left[\rhotTruePS^2\left( \overline{Y}_t-\left(\lambda^\ast V^{\ppi}+(1-\lambda^\ast)\overline{Y}\right)\right)^2\mid \lHt[-M]\right]$$ is an upper bound of the asymptotic variance of  $\hatVTruePS.$ Moreover, 
\begin{equation*}
\varub(\hatVPS) = \frac{1}{(T-M+1)^2}\sum_{t=M}^T\tilde\rho_t^{(\bbeta)}(\ppi;\ttheta^\ast)^2\left(\overline{Y}_{t}-\left(\hat\lambda\hatVTruePS+(1-\hat\lambda)\overline{Y}\right)\right)^2,
\end{equation*}
where $\tilde\rho_t^{(\bbeta)}(\ppi;\ttheta^\ast) = \hat\lambda\dfrac{\rhotTruePS}{\rhobar}+(1-\hat\lambda)(\rhotTruePS+(1-\rhobar))$,
is a consistent estimator for $\text{Var}^{\text{UB}}(\hatVTruePS).$
%     Let
%   \begin{align*}
%        &\hat{\mat{\Sigma}}(\ttheta^\ast) = \shiftmean\hat{\mat{\Sigma}}_t(\ttheta^\ast)\text{ with }\hat{\mat{\Sigma}}_t(\ttheta^\ast) = \aat(\ttheta^*){\aat(\ttheta^\ast)}^\top, \\
%        & \aat(\ttheta^\ast) = \left(\tildeVTruePS,\rhotPS\right)^\top,\ \mathrm{and}\
%        \\
%        &\hat{\mat{J}}(\ttheta^\ast) = \begin{bmatrix}
%     \hat\lambda\dfrac{1}{\rhobar}  &-\left(\hat\lambda\hatVTruePS+(1-\hat\lambda)\overline{Y}\right)
% \end{bmatrix} 
% \end{align*}

\end{remark}

\begin{remark}\label{a:projection}

When the interaction orders $\bbeta$ in \cref{assump: additive_model_multi} are misspecified, our estimator in \cref{eq:hatV_lambdahat} targets a projection estimand obtained through sequential projections of the outcome surfaces. In particular, for a one-period policy $\ppi_1$, the projected policy value at time $t$ is defined as
\begin{align*} &\projVt[\beta_1][\ppi_1][-1] \\:= &\phi^{(\beta_1)}(\ppi_1\left(\lHt[-1]\right))^\top\E\left[\phi^{(\beta_1)}(\bm W_t)\phi^{(\beta_1)}(\bm W_t)^\top \mid \lHt[-1]\right]^{-1}\E\left[\phi^{(\beta_1)}(\bm W_t)\overline{Y}_{t}\mid\lHt[-1]\right] 
\end{align*}
The quantity $\projVt[\beta_1][\ppi_1][-1]$ can be interpreted as the policy value obtained by projecting the average outcome surface
$
\overline{Y}_{t}(\lWt[-1],\bm w_t)
$
onto the linear space spanned by $\phi^{(\beta_1)}(\bm w_t)$. The projection is defined with respect to the conditional distribution of $\bm W_t$ given $\lHt[-1]$ and is evaluated at the treatment assignment specified by $\ppi_1$. For a two-period policy $(\ppi_1,\ppi_2)$, we similarly define the projected policy value by projecting $\projVt[\beta_1][\ppi_1][-1]$ at the preceding period:
\begin{align*} &\projVt[\beta_1,\beta_2][\ppi_1,\ppi_2][-2]\\:= &\phi^{(\beta_2)}(\ppi_2\left(\lHt[-2]\right))^\top\E\left[\phi^{(\beta_2)}(\bm W_t)\phi^{(\beta_2)}(\bm W_t)^\top \mid \lHt[-2]\right]^{-1}\E\left[\phi^{(\beta_2)}(\bm W_t)\projVt[\beta_1][\ppi_1][-1]\mid\lHt[-2]\right] \end{align*}

Continuing this construction, we define the projected policy value at time $t$ as
\begin{align}\label{a:eq:projection} &\projVt[\bbeta][\ppi^{(M)}][-M]\notag \\:= &\phi^{(\beta_M)}(\ppi_M\left(\lHt[-M]\right))^\top\E\left[\phi^{(\beta_M)}(\bm W_t)\phi^{(\beta_M)}(\bm W_t)^\top \mid \lHt[-M]\right]^{-1}\\&\quad\E\left[\phi^{(\beta_M)}(\bm W_t) \projVt[\beta_1,\dots,\beta_{M-1}][\ppi_1,\dots,\ppi_{M-1}][-M+1]\mid \lHt[-M]\right]\notag \end{align}

Using arguments similar to those in the proofs of \cref{lemma1} and \cref{thm:addIPW_true}, we can show that $\tildeVPS[\bbeta][\ppi][\ttheta][_{t}][]$ in \cref{a:eq:tildeV} is conditionally unbiased for $\projVt[\bbeta][\ppi^{(M)}][-M]$ given $\lHt[-M]$. Therefore, without requiring the interaction orders $\bbeta$ in \cref{assump: additive_model_multi} to be correctly specified, our mixing estimator in \cref{eq:hatV_lambdahat} is consistent for the projection estimand
$$\projV := \shiftmean \projVt.$$
Let 
\begin{equation}\label{a:eq：true_Value_t}
  V_t^{\ppi^{(M)}}\left(\lHt[-M]\right) = \frac{1}{n}\sum_{i=1}^nY_{ti}^{\ppi^{(M)}}\left(\lHt[-M]\right)  
\end{equation}
be the true policy value averaged across all units for time period $t$. If the interaction orders $\bbeta$ in \cref{assump: additive_model_multi} are correctly specified, the projected policy value equals the unprojected policy value:
\begin{equation*}
\projVt = V_t^{\ppi,(\bbeta)}\quad\text{and}\quad \projV = V^{\ppi,(\bbeta)}.
\end{equation*}

\end{remark}

\subsection{Asymptotic normality based on estimated propensity score}
In this section, we show the asymptotic properties of the proposed estimator based on estimated propensity score.

\begin{theorem}\label{prop:addIPW_est}
Suppose that \Cref{assump: unconfoundedness,assump: additive_model_multi,assump: factor_ps,assump:Policy_est1} hold. Let $\lambda\in\mathbb{R}.$ Define 
$\hat V^{(\beta)}(\ppi^{(M)};\lambda,\hat\ttheta) = \lambda\hatVPS[H,]+(1-\lambda)\hatVPS[S,].$  Then
$$\sqrt{T-M+1}\left(\hat V^{(\bbeta)}(\ppi;\lambda,\hat\ttheta)-V^{\ppi}\right)\overset{d}{\to}N(\bm 0,\mat{J}_\lambda\widetilde{\mat{\Sigma}}\mat{J}_\lambda^\top),$$ where $\widetilde{\mat{\Sigma}}$ represents the probability limit of $\frac{1}{T-M+1}\sum_{t=M}^T \mat{\Sigma}_t$ as $T\to \infty,$ 
where $\mat{\Sigma} = \widetilde{\mat{\Sigma}}-\mat{U}^\top \mat{\Sigma}_{ps}^{-1}\mat{U}$, with $\widetilde{\mat{\Sigma}}$ defined in \Cref{thm:addIPW_true}, $\mat{\Sigma}_{ps}$ and $\mat{U}$ defined in \Cref{assump: ps_model}, and $\mat{J}_\lambda = \begin{pmatrix}
    1  &-\lambda V^{\ppi^{(M)}}-(1-\lambda)\overline{Y}
\end{pmatrix}.$
\end{theorem}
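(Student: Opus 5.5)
The plan is to linearize the estimator jointly in the two martingale components $\aat$ and the propensity score $\hat\ttheta$, following the approach of \cite{papadogeorgou2022causal} and \cite{zhou2024estimating}. For fixed $\lambda$, write $\hat V^{(\bbeta)}(\ppi;\lambda,\hat\ttheta)=h(\hat\mmu(\hat\ttheta))$. Here $\hat\mmu(\ttheta)=(\tildeVPS[\bbeta][\ppi][\ttheta][][],\rhobar(\ttheta))^\top$ collects the time averages of $\widetilde V_t^{(\bbeta)}(\ppi;\ttheta)$ and $\rho_t^{(\bbeta)}(\ppi;\ttheta)$, and $h$ is the same smooth map used in the proof of \cref{lemma:addIPW_true}. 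By the delta method, it then suffices to show that $\sqrt{T-M+1}\,(\hat\mmu(\hat\ttheta)-\mmu)$ is asymptotically normal with covariance $\widetilde{\mat\Sigma}-\mat U^\top\mat\Sigma_{ps}^{-1}\mat U$. Multiplying by $\mat J_\lambda$ then gives the claim.

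\textbf{Step 1: expansion of the score.} Expand the score equation $\psi_T(\hat\ttheta)=0$ around $\ttheta^\ast$. Consistency of $\hat\ttheta$ comes from \cref{rmk: ps}. The Hessian converges to $-\mat\Sigma_{ps}$ by \cref{rmk: ps}(2) together with the law-of-large-numbers condition in \Cref{assump: ps_model}(2). The second-order remainder is controlled by the domination condition \Cref{assump: ps_model}(\ref{assump: IPWest3}). Together these give
\[
\sqrt{T}(\hat\ttheta-\ttheta^\ast)=\mat\Sigma_{ps}^{-1}\tfrac{1}{\sqrt T}\textstyle\sum_t\psi(W_t,\lHt[-1];\ttheta^\ast)+o_p(1).
\]

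\textbf{Step 2: expansion of $\hat\mmu$ in $\ttheta$.} Taylor-expand $\hat\mmu(\hat\ttheta)$ around $\ttheta^\ast$. The derivative $\frac{1}{T-M+1}\sum_t\partial_\ttheta s(\cdot;\ttheta^\ast)$ should converge to its conditional mean by \Cref{assump: ps_model}(\ref{assump: ps_model5})(\ref{assump: IPWestaddb}). The key identity is that this mean equals $-\mat U^\top$. To see this, differentiate $\E_{\ttheta}[\,\cdot\,]$ under the correctly specified model, using the generalized information equality: $\partial_\ttheta \E_{\ttheta^\ast}[s(\ttheta)]=0$ implies $\E[\partial_\ttheta s]=-\E[s\psi^\top]$. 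This yields
\[
\sqrt{T-M+1}\,(\hat\mmu(\hat\ttheta)-\mmu)=\tfrac{1}{\sqrt{T-M+1}}\textstyle\sum_t\big(\aat-\mat U^\top\mat\Sigma_{ps}^{-1}\psi_t\big)+o_p(1).
\]
The identity must hold at every conditioning level $\lHt[-M]$, even though $\rho_t$ involves $M$ periods of propensity scores. I would handle this by applying the information identity period by period, in the same inductive pattern as \cref{lemma1}. Because $\psi_t$ is a martingale difference sequence adapted to a finer filtration than $\aat$, the cross-covariance should be collected into the single matrix $\mat U$, as in the definition in \Cref{assump: ps_model}(\ref{assump: ps_model5}).

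\textbf{Step 3: martingale CLT.} Apply a martingale CLT, namely Theorem A.4 of \cite{zhou2024estimating}, to the stacked sequence $(\aat^\top,\psi_t^\top)^\top$. The joint covariance is the positive definite block matrix in \Cref{assump: ps_model}(\ref{assump: ps_model5})(\ref{assump: IPWestadda}), and the Lindeberg-type conditions follow from boundedness plus \Cref{assump: ps_model}(\ref{assump: ps_model1}). The linear combination then has variance $\widetilde{\mat\Sigma}-2\mat U^\top\mat\Sigma_{ps}^{-1}\mat U+\mat U^\top\mat\Sigma_{ps}^{-1}\mat\Sigma_{ps}\mat\Sigma_{ps}^{-1}\mat U=\widetilde{\mat\Sigma}-\mat U^\top\mat\Sigma_{ps}^{-1}\mat U$.

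\textbf{Main obstacle.} I expect Step 2 to be the hardest part: establishing the derivative identity $\E[\partial_\ttheta s]=-\E[s\psi^\top]$ in the dependent multi-period setting, and reconciling the time-index mismatch between $\psi$, which is summed from $1$, and $\aat$, which is summed from $M$. The boundary terms are only $O(M)$ in number and hence vanish after scaling.
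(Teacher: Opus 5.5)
Your route is the paper's. Its proof stacks $\bigl(\widetilde V_t^{(\bbeta)}-V_t^{\ppi}-\mu_1,\ \rho_t^{(\bbeta)}-1-\mu_2,\ \psi_t\bigr)$ into one estimating equation and invokes Theorem A.2 of \cite{zhou2024estimating} for the sandwich $\mat{A}^{-1}\mat{B}\mat{A}^{-\top}$ with $\mat{A}=\bigl[\begin{smallmatrix}\mat{I}_2 & \mat{U}^\top\\ \bm 0 & \mat{\Sigma}_{ps}\end{smallmatrix}\bigr]$. It then reads off the top-left block $\widetilde{\mat{\Sigma}}-\mat{U}^\top\mat{\Sigma}_{ps}^{-1}\mat{U}$ and applies the delta method. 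Your Steps 1--3 are that sandwich unpacked; the block $\mat{U}^\top$ of $\mat{A}$ is exactly your identity $\E[\partial_\ttheta s]\to-\mat{U}^\top$. You also correctly target $\mat{J}_\lambda(\widetilde{\mat{\Sigma}}-\mat{U}^\top\mat{\Sigma}_{ps}^{-1}\mat{U})\mat{J}_\lambda^\top$ rather than the misprinted covariance in the statement. For $M=1$ this is sound.

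For $M\ge 2$, however, your resolution of the obstacle you flagged does not work as stated. Write $s_t=(\widetilde V_t^{(\bbeta)}-V_t^{\ppi},\ \rho_t^{(\bbeta)}-1)^\top$ and $\psi_t=\psi(\bm W_t,\lHt[-1];\ttheta^\ast)$. The conditional law of $(\bm W_{t-M+1},\dots,\bm W_t)$ given $\lHt[-M]$ contributes $M$ scores, so the period-by-period information identity gives
\[
\E\bigl[\partial_\ttheta s_t\mid\lHt[-M]\bigr]=-\sum_{m=1}^{M}\E\bigl[s_t\,\psi_{t-m+1}^\top\mid\lHt[-M]\bigr].
\]
The lagged terms do not vanish. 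With $n=1$, $M=2$ and $\bbeta=(1,1)$, one finds $\E[(\rho_t-1)\psi_{t-1}^\top\mid\lHt[-2]]=\psi\bigl(\ppi_2(\lHt[-2]),\lHt[-2];\ttheta^\ast\bigr)^\top$. This is the score at the policy's treatment, which is nonzero, e.g., for a logistic model with an intercept. Since $\mat{U}$ in \Cref{assump: ps_model} pairs $s_t$ only with the contemporaneous $\psi_t$, these terms cannot be ``collected into $\mat{U}$ as defined.''

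Step 3 has the matching defect. $\aat$ is not $\lHt[-M+1]^\ast$-measurable, and consecutive terms are correlated; in the same example with $\ppi_1=\ppi_2$, $\E[(\rho_t-1)(\rho_{t+1}-1)\mid\lHt[-2]]>0$. So $(\aat^\top,\psi_t^\top)^\top$ is not a martingale difference sequence. The average conditional variance $\widetilde{\mat{\Sigma}}$ is then not the limiting variance of $(T-M+1)^{-1/2}\sum_t\aat$.

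Your structure survives once the objects are redefined. The key fact is that the long-run cross-covariance has the same lag window as the derivative identity. Indeed, $\E[s_t\psi_u^\top]=0$ for $u>t$ by the martingale property of the score, and for $u\le t-M$ because $\E[s_t\mid\lHt[-M]]=0$. So take the following definitions.
\begin{itemize}
\item Take $\mat{U}^\top$ to be the probability limit of $\shiftmean\sum_{m=1}^{M}\E[s_t\psi_{t-m+1}^\top\mid\lHt[-M]^\ast]$.
\item Take $\widetilde{\mat{\Sigma}}$ to be the long-run variance, including autocovariances at lags $1,\dots,M-1$.
\item Get the CLT from the martingale approximation $s_t=\sum_{m=0}^{M-1}\bigl(\E[s_t\mid\lHt[-m]]-\E[s_t\mid\lHt[-m-1]]\bigr)$. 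Regroup it as $D_u=\sum_{m=0}^{M-1}\bigl(\E[s_{u+m}\mid\overline{H}_u]-\E[s_{u+m}\mid\overline{H}_{u-1}]\bigr)$, which is a genuine martingale difference sequence jointly with $\psi_u$, up to $O(M)$ bounded boundary terms.
\end{itemize}
With these changes your Steps 1--3 go through and yield $\widetilde{\mat{\Sigma}}-\mat{U}^\top\mat{\Sigma}_{ps}^{-1}\mat{U}$.

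The paper's proof does not address this either. The issue is hidden inside the cited theorem and in the claim of \cref{a: lemma_MDS} that $\aat$ is a martingale difference sequence with respect to $\lHt[-M+1]^\ast$; that claim already affects \cref{lemma:addIPW_true}. With the paper's literal definitions of $\widetilde{\mat{\Sigma}}$ and $\mat{U}$, the stated covariance is justified only for $M=1$.
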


\begin{proof}[Proof of \cref{prop:addIPW_est}]
    The estimator $\tildeVPS$ based on estimated propensity can be constructed using the estimating equation
    $$s(\lHt[-1],\bm W_t,\bm Y_t;\mmu)=\begin{pmatrix}
    \hat V_{t}^{(\bbeta)}(\ppi^{(M)};\ttheta)-V_t^{\ppi^{(M)}}-\mu_1\\ 
    \rho_t^{(\bbeta)}(\ppi^{(M)};\ttheta)-1-\mu_2 \\
    \psi(\bm W_t,\lHt[-1];\ttheta)
\end{pmatrix}$$ where $\mu_1 = \mu_2=0,$ and $\mmu=(\mu_1,\mu_2,\ttheta^\top)^\top$, and the delta method.
Let $s(\lHt[-1],\bm W_t,\bm Y_t;\mmu)$ denote the first two entries of $s(\lHt[-1],\bm W_t,\bm Y_t;\mmu),$ and $\hat\mmu_T$ be the solution to
\[
\shiftmean s(\lHt[-1],\bm W_t,\bm Y_t;\mmu)=0.
\]
By the proof of Theorem A.2 in \cite{zhou2024estimating}, as $T\to\infty,$ \[
\sqrt{T-M+1}(\hat\mmu_T -\mmu^\ast)\dto N(0,\bm\Sigma_{\mmu}),
\]
where $\bm\Sigma_{\mmu} = \mat{A}^{-1}\mat{B}(\mat{A}^{-\top})$ with
\begin{equation}\label{a:eq4}
   \mat{A} = \begin{bmatrix}
 \mat{I}_2 &\mat{U}^\top\\
 \bm 0 &\mat{\Sigma}_{ps}
 \end{bmatrix}\hspace{4mm}\mathrm{and}\hspace{4mm}\mat{B} = 
 \begin{bmatrix}
     \widetilde{\mat{\Sigma}} &\mat{U}^\top\\
     \mat{U} & \mat{\Sigma}_{ps}
 \end{bmatrix}.  
\end{equation}
Note that $\hat\mmu_T = (\tildeVPS, \rhobar, \hat\ttheta^\top)^\top,$ we have
\[
\sqrt{T-M+1}\left((\tildeVPS, \rhobar)^\top-(V^{\ppi},1)^\top\right)\dto N(0, \bar{\bm \Sigma}),
\]
where $\bar{\bm\Sigma} = \widetilde{\bm\Sigma}-\bm U^\top \bm \Sigma_{ps}^{-1}\bm U,$ as $T\to\infty.$ 
Recall  \[\hat V(\ppi^{(M)};\lambda,\hat\ttheta) = \lambda\frac{\tildeVPS}{\rhobar}+(1-\lambda)(\hatVPS+(1-\rhobar)\overline{Y}).\]
By applying the delta method, we have
\[
\sqrt{T-M+1}(\hatVPS-V^{\ppi})\dto N(\bm J_\lambda\bar{\bm\Sigma}\bm J_\lambda^\top),\quad\text{as  }\;T\to\infty,
\]
where
$$
\mat{J}_\lambda= 
\begin{bmatrix}
    1 &-\lambda V^{\ppi^{(M)}}-(1-\lambda)\overline{Y}
\end{bmatrix}.
$$
\end{proof}

\begin{proof}[Proof of \cref{thm:addIPW_est}]
    According to \cref{prop:addIPW_est}, this result holds when $\hat\lambda = \lambda^\ast$. Then by similar arguments in \cref{thm:addIPW_true}, replacing $\lambda^\ast$ by a consistent estimator will not change the limiting distribution.  
\end{proof}

\begin{proof}[Proof of \cref{thm: Hajek efficiency}]
  According to Assumption~\ref{assump: ps_model}, $\mat{\Sigma}_{ps}$ is positive definite. Therefore, $\mat{U}^\top\mat{\Sigma}_{ps}^{-1}\mat{U}$ is positive semidefinite, which implies that the asymptotic variance of the proposed estimator based on the estimated propensity score is no greater than that based on the true propensity score.  
\end{proof}

\begin{proof}[Proof of \cref{prop: bound}]
   The proof follows directly from \cref{rmk: bound_trueps}, by replacing the true parameter $\ttheta^\ast$ in the propensity score model by the consistent estimator $\hat\ttheta.$
\end{proof}

\section{Standard IPW estimator}\label{a:sec:IPW estimator}
In this section, we introduce the standard IPW estimator for estimating policy values. We first present its form and state a proposition establishing the connection between additive IPW weights and standard IPW weights in \cref{a:subsec: connection}. We then provide the proof of the proposition in \cref{a:subsec:proof_prop}.

\subsection{Connection between addictive IPW weights and IPW weights}\label{a:subsec: connection}
Let $\ppi^{(M)} \in \Pi^M$ denote a policy over $M$ time periods. An alternative estimator for the policy value $V^{\ppi^{(M)}}$ is the standard IPW estimator. In our setting, the standard IPW estimator takes the following form:
\[
\hat V^{I}\left(\ppi^{(M)}\right) = \shiftmean \rho_t^{I}\left(\ppi^{(M)}\right)\,\overline{Y}_{t},
\]
where
\[
\rho_t^{I}\left(\ppi^{(M)}\right) = \prod_{m=1}^M \prod_{i=1}^n \rho_{t-m+1,i}\left(\ppi^{(M)}\right)
\]
is referred to as the IPW weight. By similar arguments in \cite{papadogeorgou2022causal}, the standard IPW estimator is asymptotically normal as $T\to\infty.$

\cref{prop1_multi} establishes the relationship between the additive IPW weights and the standard IPW weights, and characterizes their key properties.
% \begin{proposition}\label{prop1}
% For a policy $\ppi$ be a policy over a single time period. Under \Cref{assump: unconfoundedness,assump: factor_ps,assump: additive_model}, the following statements hold:
%     \begin{enumerate}[label=(\alph*)]
%         \item $\rhot[n][\ppi] = \prod_{i=1}^n \rho_{ti}(\ppi),$ 
%         \item  $\Var(\rhot[\beta][\ppi]\mid \lHt[-1]) \leq \Var(\rho_t^{(n)}(\ppi)\mid \lHt[-1])$ for all $\beta \in \{1,2,\dots,n\},$ 
%         \item $\E[\rhot[\beta][\ppi] Y_{ti}\mid\lHt[-1]] = Y_{ti}^{\ppi}\left(\lHt[-1]\right),$ and $\E[\rhot[\beta][\ppi]\mid\lHt[-1]]=1.$
%     \end{enumerate}
% \end{proposition}

\begin{proposition}\label{prop1_multi}
For a policy $\ppi^{(M)} = (\ppi_1,\dots,\ppi_M)$ over multiple time periods. Under \Cref{assump: unconfoundedness,assump: factor_ps,assump: additive_model_multi}, the following statements hold:
    \begin{enumerate}[label=(\alph*)]
        \item $\rhot[\bm n] = \prod_{m=1}^M\prod_{i=1}^n \rho_{t-m+1,i}(\ppi),$ 
        \item  $\Var(\rhot\mid \lHt[-M]) \leq \Var(\rho_t^{I}(\ppi)\mid \lHt[-M])$ for all $\bbeta \in \{1,2,\dots,n\}^M,$ 
        \item $\E[\rhot Y_{ti}\mid\lHt[-M]] = Y_{ti}^{\ppi},$ and $\E[\rhot\mid\lHt[-1]]=1.$
    \end{enumerate}
\end{proposition}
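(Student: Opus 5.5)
The plan is to prove the three parts separately. For the single-period factors I will use the centered variables $u_{ti}:=\rho_{ti}(\ppi_m)-1$. By \cref{lemma1} and \Cref{assump: factor_ps}, the $u_{ti}$, $i=1,\dots,n$, are conditionally independent given $\lHt[-1]$ with conditional mean zero. By \cref{eq: addIPW_weight}, each single-period weight is a truncated elementary-symmetric expansion,
\[
\rho_t^{(\beta)}(\ppi_m)=\sum_{S\subseteq\{1,\dots,n\},\,|S|\le\beta}\ \prod_{i\in S}u_{ti},
\]
with the empty product equal to $1$.

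\textbf{Part (a).} Take $\beta=n$, so the sum runs over all subsets. The binomial identity $\sum_{S}\prod_{i\in S}u_{ti}=\prod_{i=1}^n(1+u_{ti})=\prod_{i=1}^n\rho_{ti}$ then gives $\rho_t^{(n)}(\ppi_m)=\prod_i\rho_{ti}(\ppi_m)$. Multiplying over $m=1,\dots,M$, using the product definition of $\rhot[\bm n]$, yields exactly $\rho^I_t(\ppi)$.

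\textbf{Part (c).} This is essentially \cref{lemma1} together with the unit-level version of the computation in the proof of \cref{a: lemma_MDS}. For $m=1$, I rewrite $\rho_t^{(\beta_1)}(\ppi_1)$ in the projection form \cref{eq:V_matrix_form}. The matrix $\E[\phi^{(\beta_1)}(\bm W_t)\phi^{(\beta_1)}(\bm W_t)^\top\mid\lHt[-1]]$ is invertible by positivity and independence. By \Cref{assump: additive_model_multi}, $Y_{ti}=\bm g_{1ti}^\top\phi^{(\beta_1)}(\bm W_t)$, which gives $\E[\rho^{(\beta_1)}_tY_{ti}\mid\lHt[-1]]=Y^{\ppi_1}_{ti}(\lHt[-1])$. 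I then iterate down the tower $\lHt[-1]\supset\lHt[-2]\supset\cdots$. At step $m$, the factor $\rho^{(\beta_m)}_{t-m+1}$ multiplies the already-reduced quantity $Y^{\ppi_1,\dots,\ppi_{m-1}}_{ti}(\lHt[-m+1]^{\bm W_{t-m+1}})$, and the $m$-th model of \Cref{assump: additive_model_multi} reduces it in the same way. The mean-one claim is \cref{lemma1}; it also follows by taking $Y\equiv1$.

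\textbf{Part (b).} Since both weights have conditional mean $1$, it suffices to compare conditional second moments. For one period, the terms $\prod_{i\in S}u_{ti}$ for distinct $S$ are conditionally orthogonal: any symmetric difference contains an index $i$ with $\E[u_{ti}\mid\lHt[-1]]=0$, and conditional independence lets this factor out. Hence
\[
\E[(\rho_t^{(\beta)})^2\mid\lHt[-1]]=\sum_{|S|\le\beta}\prod_{i\in S}\E[u_{ti}^2\mid\lHt[-1]],
\]
which is nondecreasing in $\beta$ and maximal at $\beta=n$. Combined with (a), this gives (b) for $M=1$, pointwise in the history.

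For $M>1$, I write $\E[(\rhot)^2\mid\lHt[-M]]$ as an iterated conditional expectation. I then replace the factors one at a time, starting from the most recent period $t$, each time using that the remaining earlier squared factors are nonnegative and $\lHt[-1]$-measurable. The first replacement is clean:
\[
\E\bigl[A^2\,\E[(\rho^{(\beta_1)}_t)^2\mid\lHt[-1]]\bigr]\le\E\bigl[A^2\,\E[(\rho^{(n)}_t)^2\mid\lHt[-1]]\bigr].
\]

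I expect the next step to be the main obstacle. After the first replacement, the factor at period $t-1$ is weighted by $C:=\E[(\rho^{(n)}_t)^2\mid\lHt[-1]]=\prod_i\E[\rho_{ti}^2\mid\lHt[-1]]$, and $C$ can depend on $\bm W_{t-1}$ through the propensity scores. Under such a weight, the orthogonality of distinct subsets used above no longer holds automatically. My first attempt will be to absorb $C$ into a tilted conditional measure of $\bm W_{t-1}$ given $\lHt[-2]$, and to show that the unit-level product structure (hence orthogonality of the centered terms, recentered under the tilt) survives. If that fails, I will aim for the inequality under an additional condition, for instance that the conditional second moment of the later weights does not depend on the earlier treatments, so that $C$ factors out. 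I will state clearly which of these the argument uses.
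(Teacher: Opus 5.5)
Your arguments for (a), for (c), and for (b) with $M=1$ are essentially the paper's. The paper writes $\rho_t^{I}(\ppi)=\rho_t^{(\beta)}(\ppi)+R_t(\ppi)$ and uses the same conditional orthogonality of distinct centered products to get $\Cov(\rho_t^{(\beta)}(\ppi),R_t(\ppi)\mid\overline{H}_{t-1})=0$; this is your subset expansion of the second moment in another form. One caveat on (c): the mean-one claim must be read conditionally on $\overline{H}_{t-M}$, as in \cref{lemma1} and as your ``$Y\equiv 1$'' argument gives. Conditionally on $\overline{H}_{t-1}$ the expectation equals $\prod_{m\ge 2}\rho^{(\beta_m)}_{t-m+1}(\ppi_m)$, which is not $1$ when $M\ge 2$.

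The obstacle you anticipate for $M>1$ is real, and the tilted-measure route cannot overcome it, because (b) is false for $M\ge 2$ under the stated assumptions. Here is a counterexample.
Take $n=M=2$, $\bbeta=(1,1)$, $\ppi_1\equiv\ppi_2\equiv(1,1)$ and $e_{t-1,i}(1)=1/2$.
Let $e_{t1}(1)=e_{t2}(1)$ equal $0.1,\,0.5,\,0.5,\,0.9$ for $\bm W_{t-1}=(0,0),(1,0),(0,1),(1,1)$ respectively.
With constant potential outcomes and $\eta=0.05$, all hypotheses of the proposition hold.
In the same order:
\begin{itemize}
\item $\rho^{(1)}_{t-1}(\ppi_2)=(-1,1,1,3)$ and $\rho^{(2)}_{t-1}(\ppi_2)=(0,0,0,4)$;
\item $\E[\rho^{(\beta_1)}_t(\ppi_1)^2\mid\overline{H}_{t-1}]$ equals $(19,3,3,\tfrac{11}{9})$ for $\beta_1=1$, and $(100,4,4,\tfrac{100}{81})$ for $\beta_1=2$ (the latter is your $C$).
\end{itemize}
Hence
\[
\E\bigl[\rho_t^{(1,1)}(\ppi)^2\mid\overline{H}_{t-2}\bigr]=\tfrac14\bigl(19+3+3+9\cdot\tfrac{11}{9}\bigr)=9,
\qquad
\E\bigl[\rho_t^{I}(\ppi)^2\mid\overline{H}_{t-2}\bigr]=\tfrac14\cdot 16\cdot\tfrac{100}{81}=\tfrac{400}{81}.
\]
Both weights have conditional mean one, so the conditional variances are $8>319/81$.

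The mechanism is the one you identified. The weight $\rho^{I}_{t-1}$ is supported only on $\bm W_{t-1}=\ppi_2(\overline{H}_{t-2})$. The weight $\rho^{(\beta_2)}_{t-1}$ instead puts mass on treatment vectors where the later conditional second moment is large. The paper disposes of $M>1$ by saying it ``follows by applying the law of total variance.'' That reduction fails at exactly this point, so it does not supply the missing step either.

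Your fallback is the right repair. For $m\ge1$, let $a_m$ and $b_m$ be the conditional second moments, given $\overline{H}_{t-m}$, of $\prod_{k=1}^{m}\rho^{(\beta_k)}_{t-k+1}(\ppi_k)$ and $\prod_{k=1}^{m}\rho^{(n)}_{t-k+1}(\ppi_k)$. Given $\overline{H}_{t-m}$, $b_{m-1}$ is a function of $\bm W_{t-m+1}$. Write $b_{m-1}(\ppi_m)$ for its value at $\bm W_{t-m+1}=\ppi_m(\overline{H}_{t-m})$.

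The induction needs, for each $m\ge 2$, that $b_{m-1}$ is maximized at $\bm W_{t-m+1}=\ppi_m(\overline{H}_{t-m})$. Your condition, that $b_{m-1}$ does not depend on $\bm W_{t-m+1}$, is a special case. Under this condition,
\[
a_m\le\E\bigl[\rho^{(\beta_m)}_{t-m+1}(\ppi_m)^2\,b_{m-1}\mid\overline{H}_{t-m}\bigr]
\le b_{m-1}(\ppi_m)\,\E\bigl[\rho^{(\beta_m)}_{t-m+1}(\ppi_m)^2\mid\overline{H}_{t-m}\bigr]
\le b_{m-1}(\ppi_m)\,\E\bigl[\rho^{(n)}_{t-m+1}(\ppi_m)^2\mid\overline{H}_{t-m}\bigr]=b_m .
\]
The steps are justified as follows:
\begin{itemize}
\item the first inequality uses the induction hypothesis $a_{m-1}\le b_{m-1}$;
\item the second uses the maximization condition;
\item the third is your one-period inequality;
\item the final equality holds because $\rho^{(n)}_{t-m+1}(\ppi_m)$ vanishes unless $\bm W_{t-m+1}=\ppi_m(\overline{H}_{t-m})$.
\end{itemize}
Without a condition of this kind, (b) can be asserted only for $M=1$.
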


The additive IPW weights reduce to the standard IPW weights when $\bbeta = \bm n$ and yield a variance no greater than that of the standard IPW weights for all time $t$. 
% Based on \cref{prop1}(c), we can construct a consistent estimator of $V^{\ppi}$ using the additive IPW weights. 

Theoretically, the asymptotic variance of the estimator using standard IPW weights is not necessarily larger than that of a similar estimator based on additive IPW weights. For example, consider $\ppi\in\Pi,$ the asymptotic variance of $\hat V^{I}\left(\ppi\right)$ is $\frac{1}{T}\sum_{t=1}^T \Var\left(\rho_t^{(n)}(\ppi) \overline{Y}_t\mid \lHt[-1]\right).$ If we replace $\rho_t^{I}(\ppi)$ by $\rho_t^{(\beta)}$ for $\beta<n$, the variance of the resulting estimator is $$\frac{1}{T}\sum_{t=1}^T \Var\left(\rho_t^{(\beta)}(\ppi) \overline{Y}_t\mid \lHt[-1]\right).$$ Let $R_t(\ppi) = \rho_t^{(n)}(\ppi) - \rhot[\beta][\ppi]$ denote the higher-order terms in the additive IPW weights. The difference between the two asymptotic variance expressions can then be written as 
\[\frac{1}{T}\sum_{t=1}^T \Var(R_t(\ppi)\overline{Y}_t \mid \lHt[-1]) + 2\Cov(R_t(\ppi)\overline{Y}_t, \rhot[\beta][\ppi]\overline{Y}_t).\] 
This difference can be positive or negative depending on the covariance term. By the Cauchy–Schwarz inequality, the difference is positive if $\Var(R_t(\ppi)\overline{Y}_t \mid \lHt[-1]) \ge 4\Var(\rhot[\beta][\ppi]\overline{Y}_t \mid \lHt[-1])$, which is likely to hold when $\beta$ is small (e.g., $\beta = 1, 2$) and $n$ is large. Moreover, the standard IPW weights may degenerate to zero when $n$ is large. 
% and the standard H\'ajek estimator may not will defined in practice.

\subsection{Proof of \cref{prop1_multi}}\label{a:subsec:proof_prop}
\begin{proof}[Proof of \cref{prop1_multi}]
Let $\bbeta=(\beta_1,\dots,\beta_M)\in\{1,2,\dots,n\}^M,$ and $\ppi^{(M)} = (\ppi_1,\dots,\ppi_M)\in\Pi^M$. Assume \Cref{assump: unconfoundedness,assump: factor_ps} hold.
    Note that for $\ppi\in\Pi$
    \begin{align*}
        \rho_t^{I}(\ppi) &= \prod_{i=1}^n\rho_{ti}(\ppi)\\
        &=\prod_{i=1}^n\left(\left(1+\rho_{ti}(\ppi)-1\right)\right)\\
        &=1+\sum_{i=1}^n(\rho_{ti}(\ppi) -1) +\sum_{1\leq i_1<i_2\leq n}(\rho_{ti_1}(\ppi) -1)(\rho_{ti_2}(\ppi) -1)+\dots\\ &\quad+\sum_{1\le i_1<\dots<i_n\le n}(\rho_{ti_1-1}(\ppi)-1 )\dots(\rho_{ti_n}(\ppi) -1)\\
        &=\rho_t^{(n)}(\ppi)
    \end{align*}

    Similarly, $\rho_{t-m+1}^{I}(\ppi) = \rho_{t-m+1}^{(n)}(\ppi)$ for all $1\le m \le M.$ Thus,
    $$\rho_t^I(\ppi) = \prod_{m=1}^M\rho_{t-m+1}^I(\ppi_m) = \prod_{m=1}^M\rho_{t-m+1}^{(n)}(\ppi_m) = \rhot[\bm n].$$
    Next, we show that 
\[
\Var\!\big(\rhot\mid \lHt[-M]\big) \;\le\; \Var\!\big(\rho_t^{I}(\ppi)\mid \lHt[-M]\big)
\quad\text{for all }\bbeta \in \{1,2,\dots,n\}^M.
\]
We consider $M=1$. The general case $M>1$ follows by applying the law of total variance to reduce to the single-time policy case.

Define
\[
R_t(\ppi) := \rho_t^{I}(\ppi) - \rhot[\beta][\ppi].
\]
By construction, $R_t(\ppi)$ is a sum of terms of the form 
$(\rho_{ti_1}(\ppi)-1)\cdots(\rho_{ti_k}(\ppi)-1)$ with $k>\beta$. 
Note that $\rhot[\beta][\ppi]$ consists of products of the same form as $R_t(\ppi)$ but with strictly smaller length. By \cref{lemma1}, $\E[\rho_{ti}(\ppi)-1]=0$ for all $i$. Therefore every mixed product between $\rhot[\beta][\ppi]$ and $R_t(\ppi)$ has zero expectation, implying
\[
\E[\rhot[\beta][\ppi]R_t(\ppi)\mid\lHt[-1]]=0.
\]
Moreover, by \cref{lemma1},
\[
\E[R_t(\ppi)\mid\lHt[-1]] 
= \E[\rho_t^{(n)}(\ppi)\mid\lHt[-1]]-\E[\rhot[\beta][\ppi]\mid\lHt[-1]]
= 0-0=0.
\]
Thus,
\[
\Cov\!\big(\rhot[\beta][\ppi],R_t(\ppi)\mid\lHt[-1]\big) 
= \E[\rhot[\beta][\ppi]R_t(\ppi)\mid\lHt[-1]] - 0 = 0.
\]

Combining the pieces, we obtain
\begin{align*}
\Var(\rho_t^{I}(\ppi)\mid\lHt[-1])
&= \Var(\rhot[\beta][\ppi] + R_t(\ppi)\mid\lHt[-1]) \\
&= \Var(\rhot[\beta][\ppi]\mid\lHt[-1])
   + \Var(R_t(\ppi)\mid\lHt[-1])\\
&\ge \Var(\rhot[\beta][\ppi]\mid\lHt[-1]).
\end{align*}
Finally, by \cref{a: lemma_MDS,lemma1}, condition (c) holds.
\end{proof}

\section{Alternative definition of multi-period policy value}\label{a:sec:alternative_policy_value}

In this section, we introduce an alternative formulation of policy values over multiple time periods, denoted by $V^{\ppi_1,\dots,\ppi_M}_{\text{all}}$, which incorporates outcomes from all $M$ periods. The definition for a policy over a single time period remains the same. For a two-period policy $(\ppi_{1},\ppi_2)$, we define
\[
V^{\ppi_{1},\ppi_{2}}_{\text{all}} = V^{\ppi_{1},\ppi_{2}} + \frac{1}{(T-M)n}\sum_{t=M}^{T-1} \sum_{i=1}^n Y_{ti}^{\ppi_{2}}\left(\lHt[-1]\right)
\]
Here, the second term is the policy value of $\ppi_{2}$ defined in \cref{subsec: policyvalue} averaged over periods $M$ to $T-1$, rather than $M$ to $T$. By applying the same iterative procedure, one can define $V^{\ppi_1,\dots,\ppi_M}_{\text{all}}$. This formulation is essentially equivalent to the previous one, except that it sums policy values of different lengths. Consequently, the estimation and policy learning procedures can be easily adapted to this setting.

\section{Test validity}\label{a:sec:test_multi}

In this section,  we state the regularity conditions and present the Proof of \cref{thm:test_multi}

\subsection{Regularity assumption}\label{a:subsec: regularity_test}

In this section, we present the regularity conditions used for the asymptotic validity of the test. These conditions are analogous to those used for the asymptotic normality of the proposed estimators, but are stated jointly over the $L$ policies used in the test.

\begin{assumption}\label{assump: test_regularity}
Let $\ppi_1,\dots,\ppi_L$ be the $L$ selected policies in the hypothesis test. For each $1\le m\le M$, the following conditions hold.
\begin{enumerate}
    \item\label{assump: test_regularity1}
    There exists a finite positive semidefinite matrix $\widetilde{\mat{\Sigma}}{}^{H}$ such that
    \[
    \frac{1}{T-M+1}
    \sum_{t=M}^T
    \Var\left(\aat{}_{m} \mid \lHt[-M]^\ast\right)
    \pto
    \widetilde{\mat{\Sigma}}{}^{H},
    \]
    where
    \[
    \begin{aligned}
    \aat{}_{m}
    =
    \Big(
    &\widetilde V_t^{(\bbeta_{m1})}(\ppi_1^{(m)}),\dots,
    \widetilde V_t^{(\bbeta_{m1})}(\ppi_L^{(m)}),
    \widetilde V_t^{(\bbeta_{m2})}(\ppi_1^{(m)}),\dots,
    \widetilde V_t^{(\bbeta_{m2})}(\ppi_L^{(m)}), \\
    &\rho_t^{(\bbeta_{m1})}(\ppi_1^{(m)}),\dots,
    \rho_t^{(\bbeta_{m1})}(\ppi_L^{(m)}),
    \rho_t^{(\bbeta_{m2})}(\ppi_1^{(m)}),\dots,
    \rho_t^{(\bbeta_{m2})}(\ppi_L^{(m)})
    \Big)^\top .
    \end{aligned}
    \]

    \item\label{assump: test_regularity2}
    Let
    \[
    \begin{aligned}
    s(\lHt[-1],\bm W_t,\bm Y_t;\ttheta)
    =
    \Big(
    &\widetilde V_t^{(\bbeta_{m1})}(\ppi_1^{(m)})-V^{\ppi_1^{(m)}},\dots,
    \widetilde V_t^{(\bbeta_{m1})}(\ppi_L^{(m)})-V^{\ppi_L^{(m)}},\\
    &\widetilde V_t^{(\bbeta_{m2})}(\ppi_1^{(m)})-V^{\ppi_1^{(m)}},\dots,
    \widetilde V_t^{(\bbeta_{m2})}(\ppi_L^{(m)})-V^{\ppi_L^{(m)}},\\
    &\rho_t^{(\bbeta_{m1})}(\ppi_1^{(m)})-1,\dots,
    \rho_t^{(\bbeta_{m1})}(\ppi_L^{(m)})-1,\\
    &\rho_t^{(\bbeta_{m2})}(\ppi_1^{(m)})-1,\dots,
    \rho_t^{(\bbeta_{m2})}(\ppi_L^{(m)})-1
    \Big)^\top .
    \end{aligned}
    \]
    The following conditions hold.
    \begin{enumerate}[label=(\alph*)]
        \item\label{assump: IPWestadda_test}
        There exists a conformable matrix $\mat U^{H}$ such that
        \[
        \shiftmean
        \E_{\ttheta^\ast}
        \left[
        \psi(W_t,\lHt[-1];\ttheta^\ast)
        s(\lHt[-1],\bm W_t,\bm Y_t;\ttheta^\ast)^\top
        \mid \lHt[-M]^\ast
        \right]
        \pto
        \mat U^{H}.
        \]

        \item\label{assump: IPWestaddb_test}
        If
        \[
        P_{jt}
        =
        \frac{\partial}{\partial \ttheta_j}
        s(\lHt[-1],\bm W_t,\bm Y_t;\ttheta)
        \bigg|_{\ttheta^\ast},
        \]
        where $\ttheta_j$ is the $j$-th entry of $\ttheta$, then there exists $r_j\in(0,2]$ such that
        \[
        \shiftmean
        \frac{1}{t^{r_j}}
        \left|
        P_{jt}
        -
        \E_{\ttheta^\ast}
        \left[
        P_{jt}\mid\lHt[-M]^\ast
        \right]
        \right|
        \pto 0 .
        \]
    \end{enumerate}
\end{enumerate}
\end{assumption}

These conditions are similar to Assumption~\ref{assump:Policy_est1}.\ref{assump: 1b} and Assumption~\ref{assump: ps_model}. They are used to establish the joint asymptotic distribution of the value estimators evaluated at the $L$ selected policies. 

\subsection{Proof of \Cref{thm:test_multi}}\label{a:subsec:proof_validity}

\begin{proof}
We outline the proof for a fixed step $m$. The argument for the sequential procedure follows by applying the same argument at each step and using the Bonferroni correction.

let $\hat{\bm\Delta}{}_{mt}$ be defined as in \Cref{sec:interaction}.By the same martingale central limit theorem argument used in the proof of \Cref{thm:addIPW_est}, together with \Cref{assump: test_regularity}, we have
\[
\hat{\bm\Delta}_m = \frac{1}{\sqrt{T-m+1}}\sum_{t=m}^M\hat{\bm\Delta}_{mt}
\dto
N\left(\bm 0,\mat{\Sigma}{}_{m}^{H}\right),
\]
for some finite positive semidefinite matrix $\mat{\Sigma}{}_{m}^{H}$. Replacing the limiting mixing weights by their consistent estimators does not affect the limiting distribution by the same argument as in the proof of \Cref{thm:addIPW_est}.

Let
\[
\phi(\bm x)=\bm x^\top\bm x .
\]
The observed  test statistic can be written as
\[
T_m
=
\phi\left(\hat{\bm\Delta}_m\right).
\]
Therefore, by the continuous mapping theorem,
\[
T_m \dto \phi(\bm G_m),
\qquad
\bm G_m\sim N\left(\bm 0,\mat{\Sigma}{}_{m}^{H}\right).
\]

By the same argument as in \Cref{rmk: bound_trueps},
\[
\widehat{\mat{\Sigma}}{}_{m}^{H^\ast}
=
\frac{1}{T-M+1}
\sum_{t=M}^T
\hat{\bm\Delta}{}_{mt}\hat{\bm\Delta}{}_{mt}^{\top}
\pto
\mat{\Sigma}{}_{m}^{H^\ast},
\]
where $\mat{\Sigma}{}_{m}^{H^\ast}-\mat{\Sigma}{}_{m}^{H}$ is positive semidefinite. Thus, $\widehat{\mat{\Sigma}}{}_{m}^{H^\ast}$ consistently estimates a conservative upper bound for the asymptotic covariance of $\sqrt{T-M+1}\bm\Delta{}_{m}$.

For the multiplier bootstrap, let $\{Z_t^{(b)}\}_{t=M}^T$ be independent standard normal random variables. Define
\[
\bm\Delta{}_{m}^{(b)}
=
\frac{1}{\sqrt{T-M+1}}
\sum_{t=M}^T
Z_t^{(b)}\hat{\bm\Delta}_{mt}.
\]
Conditional on the observed data, $\bm\Delta{}_{m}^{(b)}$ is mean-zero Gaussian with covariance $\widehat{\mat{\Sigma}}{}_{m}^{H^\ast}$. Hence, using the convergence of $\widehat{\mat{\Sigma}}{}_{m}^{H^\ast}$,
\[
\bm\Delta{}_{m}^{(b)}
\dto
\bm G_m^\ast,
\qquad
\bm G_m^\ast\sim N\left(\bm 0,\mat{\Sigma}{}_{m}^{H^\ast}\right),
\]
in the bootstrap distribution.

The bootstrap statistic is
\[
T_m^{(b)}
=
\phi\left(\bm\Delta{}_{m}^{(b)}\right).
\]
Let $c_{m,1-\alpha/M}^{\ast}$ denote the $(1-\alpha/M)$ quantile of the distribution of $\phi(\bm G_m^\ast)$. Since
\[
\mat{\Sigma}{}_{m}^{H^\ast}-\mat{\Sigma}{}_{m}^{H}
\]
is positive semidefinite, $\bm G_m^\ast$ is at least as variable as $\bm G_m$ in the positive semidefinite order. By Anderson's inequality for centered Gaussian distributions and Euclidean balls,
\[
\P\left\{\phi(\bm G_m)>c_{m,1-\alpha/M}^{\ast}\right\}
\le
\P\left\{\phi(\bm G_m^\ast)>c_{m,1-\alpha/M}^{\ast}\right\}
\le
\frac{\alpha}{M}.
\]
It follows that
\[
\limsup_{T\to\infty}
\P\left\{
T_m>c_{m,1-\alpha/M}^{\ast}
\right\}
\le
\frac{\alpha}{M}.
\]

Equivalently, if the bootstrap $p$-value is defined as the proportion of bootstrap statistics $T_m^{(b)}$ that exceed the observed statistic $T_m$, then
\[
\limsup_{T\to\infty}
\P\left(
\text{$p$-value}<\frac{\alpha}{M}
\right)
\le
\frac{\alpha}{M}.
\]
Finally, applying the union bound over the $M$ sequential tests gives
\[
\limsup_{T\to\infty}
\P\left(
\text{at least one true null hypothesis is rejected}
\right)
\le
\alpha .
\]
This proves the desired result.
\end{proof}

\section{Stochastic policy approximation}\label{a: Stochastic_policy}

In this section, we introduce the stochastic policy approximation used in our policy learning procedure and discuss its theoretical and practical properties. In \cref{a:subsec:stochastic_policy_value}, we define the policy value under stochastic policies. In \cref{a:subsec:stochastic_proof}, we present the proof for \cref{thm: approx_optimize}. In \cref{a:subsec:deterministic_penalty}, we discuss how to encourage the learned stochastic policy to be close to a deterministic policy by adding a penalization term to the objective function.

\subsection{Policy value under stochastic policies}\label{a:subsec:stochastic_policy_value}

A stochastic policy $\ppi^{\text{stoch}}$ over one time period is a function that maps
$\lHt[-1]$ to $[0,1]^n$.
The quantity $\pi_{i}^{\text{stoch}}\left(\lHt[-1]\right)$ represents the probability that
unit $i$ receives treatment at time $t$ under policy $\ppi^{\text{stoch}}$. We focus on stochastic policies that assign treatments independently across units given the observed history, as this formulation % matches the individualized deterministic policy class and 
is sufficient for approximating deterministic policies arbitrarily well.
For a stochastic policy $\ppi^\text{stoch}$ over a single time period, the policy value at time $t$ for unit $i$ is defined as
\begin{equation}\label{eq: policyvalue_stoch}
  \begin{aligned}
 Y_{ti}^{\ppi^{\text{stoch}}}\left(\lHt[-1]\right) & = \E_{\bm W_t \sim \ppi^\text{stoch}}[Y_{ti}(\lWt[-1],\bm W_t)\mid \lHt[-1]]\\
 & = \sum_{\bm w_t\in \{0,1\}^n}  \prod_{j=1}^n \pi_{j}^{\text{stoch}}\left(\lHt[-1]\right)^{ w_{tj}}\left(1-\pi_{j}^{\text{stoch}}\left(\lHt[-1]\right)\right)^{1-w_{tj}}Y_{ti}(\lWt[-1],\bm w_{t}).
\end{aligned}  
\end{equation}
Therefore, the quantity $Y_{ti}^{\ppi^{\text{stoch}}}\left(\lHt[-1]\right)$ represents the expected outcome
for unit $i$ at time $t$ when the treatment at time $t$ is assigned according to
$\ppi_{i}^{\text{stoch}}\left(\lHt[-1]\right)$ and previous treatments are held fixed at their observed values.

A deterministic policy can be viewed as a special case of a stochastic policy that assigns
probability one to a single treatment vector. Moreover, the policy value under a deterministic policy in \cref{eq: policy_value_single} is a special case of the policy value under a stochastic policy as defined in 
\cref{eq: policyvalue_stoch}.

A stochastic policy over $M$ time periods is defined as a sequence of policies
$\ppi^{\text{stoch,(M)}} = (\ppi^{\text{stoch}}_1,\dots,\ppi^{\text{stoch}}_M)$.
Similar to \cref{subsec: policyvalue}, the policy value of $\ppi^{\text{stoch,(M)}}$
is defined recursively in a backward manner.
We first obtain $Y_{ti}^{\ppi^{\text{stoch}}_1}\left(\lHt[-1]\right)$ according to \cref{eq:policy_value}.
We then define

\begin{equation}
\begin{aligned}
&Y_{ti}^{\ppi^{\text{stoch}}_1,\ppi^{\text{stoch}}_2}\left(\lHt[-2]\right)\\
=& \E_{\bm W_{t-1} \sim \ppi_2^\text{stoch}} \left[
Y_{ti}^{\ppi^{\text{stoch}}_1}\left(\lHt[-1]\right)
\mid \lHt[-2]\right]\\
=& \sum_{\bm w_{t-1}\in \{0,1\}^n}
\prod_{j=1}^n
\pi_{2j}^{\text{stoch}}\left(\lHt[-2]\right)^{w_{t-1,j}}
\left(1-\pi_{2j}^{\text{stoch}}\left(\lHt[-2]\right)\right)^{1-w_{t-1,j}}
Y_{ti}^{\ppi^{\text{stoch}}_1}\left(\lHt[-1]^{\bm w_{t-1}}\right),
\end{aligned}
\end{equation}
where $\overline{H}_t^{\bm w_t} = \{\overline{H}_{t-1},\bm w_{t},\overline{\bm Y}_t(\overline{\bm W}_{t-1},\bm w_t),\overline{\bm X}_{t+1}(\overline{\bm W}_{t-1},\bm w_t)\}$
denotes the history that would have been observed up to time $t$ under the scenario, in which the treatment assignment at $t$ is $\bm w_t$
The quantity $Y_{ti}^{\ppi^{\text{stoch}}_1,\ppi^{\text{stoch}}_2}\left(\lHt[-2]\right)$ represents the expected
outcome for unit $i$ at time $t$ when the treatment at time $t-1$ follows
$\ppi^{\text{stoch}}_2$, the treatment assignment at time $t$ follows
$\ppi^{\text{stoch}}_1$, and the treatment history before time $t-1$
is held fixed at its observed value.
The value of $Y_{ti}^{\ppi^{\text{stoch,(M)}}}\left(\lHt[-M]\right)$ is defined continuing this recursive process.
Finally, the overall policy value is defined by averaging over all time periods and units:
\[
V^{\ppi^{\text{stoch,(M)}}}
= \frac{1}{(T-M+1)n}\sum_{t=M}^T\sum_{i=1}^n Y_{ti}^{\ppi^{\text{stoch,(M)}}}\left(\lHt[-M]\right).
\]

We estimate
the overall policy value \(V^{\ppi^{\text{stoch,(M)}}}\)  using estimators
analogous to those introduced in \cref{sec:estimation}, by replacing the unit-specific IPW weights for the deterministic policy weights defined in \cref{subsec:single}
with
\[
\rho_{ti}(\ppi_{m};\ttheta) = \dfrac{\pi_{mi}^{\text{stoch}}\left(\lHt[-1]\right))^{W_{ti}}(1-\pi_{mi}^{\text{stoch}}\left(\lHt[-1]\right)))^{1-W_{ti}}}{e_{ti}(W_{ti};\ttheta)}.
\]

\subsection{Proof of \cref{thm: approx_optimize}}\label{a:subsec:stochastic_proof}
 In this section, we present the Proof of \cref{thm: approx_optimize}. 

\begin{proof}[Proof of \cref{thm: approx_optimize}] 
Let $M\ge1 $ and $\bbeta\in \{1,\dots,n\}^M.$  Let $\hat\ppi^{(M)} = (\hat\ppi_1,\dots,\hat\ppi_M)$ and $\hat\ppi^{\text{stoch},(M)}(k)= (\hat\ppi^{\text{stoch}}_{1}(k),\dots,\hat\ppi^{\text{stoch}}_{M}(k))$ be the optimizers $\hatVPS$ based the deterministic policy class $\Pi^M$ and stochastic policy $\Pi^{\text{stoch},M}_k$ respectively. Note that $\expit(k\bm X_{ti}^P\ggamma)\to\mathds{1}\{\bm X_{ti}^P\ggamma>0\}$ 
as $k\to\infty$ for all $\ggamma$ with $X_{ti}^P\ggamma\neq 0.$  Moreover, $Y_{ti}$ and $\bm X_{ti}^P$ are bounded. Therefore, for any $\epsilon>0,$ there exist $K\ge 1$ and some $\tilde\ppi^{\text{stoch},(M)}(K)\in\Pi^{\text{stoch},M}(K)$ such that $|\tilde\pi_{Kmi}(\lHt[-1])-\hat\pi_{mi}(\lHt[-1])|<\epsilon$ for all $M\leq t\le T, 1\leq i\leq n$ and $1\le m\leq M.$ 
Here we implicitly assume that $\bm X_{ti}^P\ggamma\neq 0$ for all $t$ and $i$. If not, one may perturb $\ggamma$ by an arbitrarily small amount so that this condition hold.
Let $\epsilon_1>0.$ Then we can choose $K$ sufficiently large such that 
$$\hatVPS[][\bbeta][\hat\ppi^{\text{stoch,(M)}}(K)]\ge\hatVPS[][\bbeta][\tilde\ppi^{\text{stoch,(M)}}(K)] > \hatVPS[][\bbeta][\hat\ppi^{(M)}]-\epsilon_1.$$ According to \cref{thm:addIPW_est}, $\hatVPS[][\bbeta][\hat\ppi^{(M)}]-V^{\hat\ppi^{(M)}}\pto 0$ as $T\to\infty.$ Moreover, by similar arguments in the proof of \cref{thm:addIPW_est}, we can show  $\hatVPS[][\bbeta][\hat\ppi^{\text{stoch,(M)}}(K)]-V^{\hat\ppi^{\text{stoch,(M)}}(K)}\pto 0$ as $T\to\infty.$ 
Then for any $\epsilon_2>0$, we have
\[
\P\!\left(
\hatVPS[][\bbeta][\hat\ppi^{(M)}]
>
V^{\hat\ppi^{(M)}}-\epsilon_2
\right)\to 1
\]
and
\[
\P\!\left(
V^{\hat\ppi^{\text{stoch},(M)}(K)}
>
\hatVPS[][\bbeta][\hat\ppi^{\text{stoch},(M)}(K)]-\epsilon_2
\right)\to 1
\]
as $T\to\infty$. Moreover, by the argument above, we have
\[
\P\!\left(
\hatVPS[][\bbeta][\hat\ppi^{\text{stoch},(M)}(K)]
\ge
\hatVPS[][\bbeta][\hat\ppi^{(M)}]-\epsilon_1
\right)\to 1 .
\]
On the intersection of the above three events,
\begin{align*}
V^{\hat\ppi^{\text{stoch},(M)}(K)}
&>
\hatVPS[][\bbeta][\hat\ppi^{\text{stoch},(M)}(K)]-\epsilon_2 \\
&\ge
\hatVPS[][\bbeta][\hat\ppi^{(M)}]-\epsilon_1-\epsilon_2 \\
&>
V^{\hat\ppi^{(M)}}-\epsilon_1-2\epsilon_2 .
\end{align*}
Therefore,
\begin{align*}
&\P\!\left(
V^{\hat\ppi^{\text{stoch},(M)}(K)}
>
V^{\hat\ppi^{(M)}}-\epsilon_1-2\epsilon_2
\right) \\
\ge\;&
1-\P\!\left(
V^{\hat\ppi^{\text{stoch},(M)}(K)}
\le
\hatVPS[][\bbeta][\hat\ppi^{\text{stoch},(M)}(K)]-\epsilon_2
\right) \\
&-
\P\!\left(
\hatVPS[][\bbeta][\hat\ppi^{\text{stoch},(M)}(K)]
<
\hatVPS[][\bbeta][\hat\ppi^{(M)}]-\epsilon_1
\right) \\
&-
\P\!\left(
\hatVPS[][\bbeta][\hat\ppi^{(M)}]
\le
V^{\hat\ppi^{(M)}}-\epsilon_2
\right)
\to 1,
\end{align*}
as $T\to\infty$. Since $\epsilon_1$ and $\epsilon_2$ are arbitrary,for all $\epsilon>0,$ we have $$\P\!\left(
V^{\hat\ppi^{\text{stoch},(M)}(K)}
>
V^{\hat\ppi^{(M)}}-\epsilon
\right)\to 1 $$ as $T\to\infty.$
\end{proof}

\subsection{Penalization for deterministic policy learning}
\label{a:subsec:deterministic_penalty}

Finding the optimal policy over the stochastic policy class $\Pi^{\text{stoch}}(k)$ may result in an estimated optimal policy that is not close to a deterministic policy, with treatment assignment probabilities that are not close to $0$ or $1$.
The reason is that, under stochastic policies, the estimated policy value is generally not multilinear in the treatment probabilities $\pi_{i}^{\text{stoch}}(k)$.
As a result, even when $k$ is large, the maximizer of the smoothed objective may occur at a stochastic policy for which some assignment probabilities remain away from $0$ and $1$.

If a deterministic policy is required, we propose an adaptation to the optimization objective function that will allow estimated optimal policies to be closer to a deterministic one. Specifically, we propose modifying the algorithm to penalize treatment probabilities that are away from $0$ or $1$ in the objective of the optimization problem. Specifically,
for a single-period stochastic policy $\ppi^{\text{stoch}}(k)\in \Pi^{\text{stoch}}(k)$, we define the penalty 
\[
D_T(\ppi^{\text{stoch}}(k))
=
\frac{1}{nT}\sum_{t=1}^T\sum_{i=1}^n
\pi_{i}^{\text{stoch}}(k; \lHt[-1])
\left\{1-\pi_{i}^{\text{stoch}}(k; \lHt[-1])\right\}.
\]
The term
$\pi_{i}^{\text{stoch}}(k; \lHt[-1])\{1-\pi_{i}^{\text{stoch}}(k; \lHt[-1])\}$
is equal to $0$ when the treatment assignment probability is $0$ or $1$, and is positive otherwise, with larger values for probabilities closer to $0.5$.
To encourage estimated optimal policies that are nearly deterministic, we suggest maximizing the penalized objective
\[
\hat\ppi^{\text{pen}}(k)
=
\argmax_{\ppi^{\text{stoch}}(k)\in \Pi^{\text{stoch}}(k)}
\left\{
\hat V^{\ppi^{\text{stoch}}(k)}
-
c_k\,D_T(\ppi^{\text{stoch}}(k))
\right\},
\]
where $c_k$ may depend on $k$, as we discuss below and in \cref{thm: penalized_stochastic}.

For finding the optimal multi-period policy, we follow the sequential optimization procedure described in \cref{subsec: find_optimal_policy}. At each step $m$, and given the previously obtained policies $(\hat\ppi_1,\dots,\hat\ppi_{m-1})$, we optimize a single-period policy based on the penalized objective function. Therefore, it suffices to discuss the impact of penalization for the single-period optimal policy case.

As $k$ increases, the stochastic policy class expands to include more policies that resemble deterministic policies, with treatment assignment probabilities arbitrarily close to $0$ or $1$. Therefore, including the penalty term $D_T(\ppi^{\text{stoch}}(k))$ favors such policies. The coefficient $c_k$ controls the trade-off between maximizing the estimated policy value and encouraging near-deterministic assignments. If $c_k$ grows too quickly, the optimizationplaces a lot of weight on minimizing $D_T(\ppi^{\text{stoch}}(k))$, potentially at the expense of maximizing the estimated policy value. Therefore, the coefficient $c_k$ should increase with $k$ to encourage treatment assignment probabilities close to $0$ or $1$, but not so quickly that policies with high estimated value are overly penalized. 

In particular, let $\tilde\ppi^{\text{stoch}}(k)$ denote the sequence of stochastic policies constructed in the proof of \cref{thm: approx_optimize} that converges to the learned optimal deterministic policy as $k\to\infty$. We require
\begin{equation}\label{eq:constrain}
 c_k D_T(\tilde\ppi^{\text{stoch}}(k)) \to 0
\qquad \text{as } k\to\infty.   
\end{equation}
This condition ensures that the penalty becomes negligible for policies that closely approximate deterministic rules, while policies whose assignment probabilities remain away from $0$ and $1$ receive an increasingly large penalty as $c_k$ increases.

The next result shows that this penalization preserves the approximation property in \cref{thm: approx_optimize}.  It also shows that the learned stochastic policy is close to deterministic when $k$ is large.

\begin{theorem}
\label{thm: penalized_stochastic}
Suppose that \Cref{assump: unconfoundedness,assump: additive_model_multi,assump: factor_ps,assump:Policy_est1,assump: ps_model} hold.
Assume that \cref{eq:constrain} holds.
Then for any $\epsilon>0$ and $\delta>0$, there exists $K\ge 1$ such that for all $k\ge K$,
\[
\P\!\left(
V^{\hat\ppi^{\text{pen}}(k)}
>
V^{\hat\ppi}-\epsilon
\right)\to 1,
\qquad \text{as } T\to\infty,
\]
and
\[
D_T(\hat\ppi^{\text{pen}}(k))<\delta.
\]
\end{theorem}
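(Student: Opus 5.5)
We will mirror the proof of \cref{thm: approx_optimize}, carrying the penalty term through each comparison. As there, it suffices to treat a single period; the multi-period case follows by the sequential construction in \cref{subsec: find_optimal_policy}. Let $\hat\ppi$ be the deterministic maximizer of $\hat V^{\ppi}$ over $\Pi$, and let $\tilde\ppi^{\text{stoch}}(k)\in\Pi^{\text{stoch}}(k)$ be the approximating sequence from the proof of \cref{thm: approx_optimize}. This sequence satisfies $\tilde\pi_{i}^{\text{stoch}}(k;\lHt[-1])\to\hat\pi_i(\lHt[-1])$ uniformly in $(t,i)$, so $\hat V^{\tilde\ppi^{\text{stoch}}(k)}>\hat V^{\hat\ppi}-\epsilon_1$ for $k$ large.

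The key inequality comes from optimality of $\hat\ppi^{\text{pen}}(k)$ for the penalized objective:
\[
\hat V^{\hat\ppi^{\text{pen}}(k)}-c_kD_T(\hat\ppi^{\text{pen}}(k))\ \ge\ \hat V^{\tilde\ppi^{\text{stoch}}(k)}-c_kD_T(\tilde\ppi^{\text{stoch}}(k)).
\]
By \cref{eq:constrain} the last term is below $\epsilon_1$ for $k\ge K$. Since $D_T\ge 0$, we obtain $\hat V^{\hat\ppi^{\text{pen}}(k)}\ge \hat V^{\hat\ppi}-2\epsilon_1$. From here, the first claim follows exactly as in \cref{thm: approx_optimize}. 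Consistency from \cref{thm:addIPW_est}, applied to $\hat\ppi$ and to $\hat\ppi^{\text{pen}}(k)$ (treating the latter with the same argument used there for the learned stochastic policy), gives three events whose probabilities tend to one. On their intersection, $V^{\hat\ppi^{\text{pen}}(k)}>V^{\hat\ppi}-2\epsilon_1-2\epsilon_2$.

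For the second claim, we rearrange the same inequality:
\[
c_kD_T(\hat\ppi^{\text{pen}}(k))\le \hat V^{\hat\ppi^{\text{pen}}(k)}-\hat V^{\hat\ppi}+\epsilon_1+\epsilon_1\le 2C+2\epsilon_1.
\]
Here $C$ is a deterministic bound on $|\hat V^{\ppi}|$ uniformly over the stochastic class. This bound holds because outcomes are bounded by \Cref{assump:Policy_est1}.\ref{assump: 1a}, propensities are bounded away from $0$ and $1$ by \Cref{assump: factor_ps}, and hence every $\rho_t^{(\beta)}$ is bounded by a constant depending only on $n,\beta,\eta$. The factor $1/\rhobar$ in the Hájek component must also be controlled; on an event of probability tending to one we have $\rhobar\ge 1/2$. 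Hence $D_T(\hat\ppi^{\text{pen}}(k))\le (2C+2\epsilon_1)/c_k$, which is below $\delta$ once $c_k>(2C+2\epsilon_1)/\delta$.

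The main obstacle is reconciling the two requirements on $c_k$: it must diverge so that the $1/c_k$ bound becomes small, yet still satisfy \cref{eq:constrain}. Divergence is not literally implied by \cref{eq:constrain}, so I would make it explicit, as the surrounding discussion intends. It is compatible with \cref{eq:constrain} because $D_T(\tilde\ppi^{\text{stoch}}(k))$ decays exponentially in $k$ under the margin separation $|\bm X_{ti}^P\ggamma|\ge\kappa>0$. A second subtlety is that the bound on $\hat V$ holds only with probability tending to one because of $\rhobar$. Consequently the statement $D_T<\delta$ is understood with probability tending to one, or deterministically if $\hat\lambda$ is restricted so that the Hájek denominator is bounded away from zero. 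Taking $K$ as the maximum of the thresholds from these steps completes the argument.
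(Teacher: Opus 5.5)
Your proposal is correct and takes essentially the same route as the paper. Like the paper, you compare the penalized objective at $\hat\ppi^{\text{pen}}(k)$ with its value at the approximating sequence $\tilde\ppi^{\text{stoch}}(k)$, drop the nonnegative penalty to recover $\hat V^{\hat\ppi^{\text{pen}}(k)}>\hat V^{\hat\ppi}-2\epsilon_1$ and argue as in \cref{thm: approx_optimize}, and then rearrange the same inequality with a uniform bound on $|\hat V|$ to obtain $D_T(\hat\ppi^{\text{pen}}(k))<(2B+2\epsilon_1)/c_k$.

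Your two caveats are well taken. The paper's proof uses $c_k\to\infty$ without listing it as a hypothesis, and it asserts a deterministic bound $B$ despite the $1/\rhobar$ factor in the H\'ajek component. For your in-probability repair, though, $\rhobar\ge 1/2$ must hold uniformly over the stochastic class, since $\hat\ppi^{\text{pen}}(k)$ is data-dependent.
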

In particular, when $k$ is large, the learned stochastic policy is close to deterministic in the sense that its treatment assignment probabilities are close to $0$ or $1$ on average. In practice, one may take $c_k=\sqrt{k}$, which diverges to infinity while increasing slowly with $k$, and is satisfying the conditions of the theorem.

\begin{proof}[Proof of \cref{thm: penalized_stochastic}]
Fix $\epsilon_1>0$. By the argument in the proof of \cref{thm: approx_optimize}, there exists $K_0\ge 1$ such that for all $k\ge K_0$, one can find some $\tilde\ppi^{\text{stoch}}(k)\in\Pi^{\text{stoch}}(k)$ satisfying
\[
\hatVPS[][\bbeta][\tilde\ppi^{\text{stoch}}(k)]
>
\hatVPS[][\bbeta][\hat\ppi]-\epsilon_1.
\]
By assumption, there exists $K_1\ge K_0$ such that for all $k\ge K_1$,
\[
c_k D_T(\tilde\ppi^{\text{stoch}}(k))<\epsilon_1.
\]
Hence, for all $k\ge K_1$,
\[
\hatVPS[][\bbeta][\tilde\ppi^{\text{stoch}}(k)]
-
c_k D_T(\tilde\ppi^{\text{stoch}}(k))
>
\hatVPS[][\bbeta][\hat\ppi]-2\epsilon_1.
\]
By optimality of $\hat\ppi^{\text{pen}}(k)$,
\[
\hatVPS[][\bbeta][\hat\ppi^{\text{pen}}(k)]
-
c_k D_T(\hat\ppi^{\text{pen}}(k))
\ge
\hatVPS[][\bbeta][\tilde\ppi^{\text{stoch}}(k)]
-
c_k D_T(\tilde\ppi^{\text{stoch}}(k)).
\]
Combining the above inequalities yields
\[
\hatVPS[][\bbeta][\hat\ppi^{\text{pen}}(k)]
-
c_k D_T(\hat\ppi^{\text{pen}}(k))
>
\hatVPS[][\bbeta][\hat\ppi]-2\epsilon_1.
\]
Since $c_k D_T(\hat\ppi^{\text{pen}}(k))\ge 0$, it follows that
\[
\hatVPS[][\bbeta][\hat\ppi^{\text{pen}}(k)]
>
\hatVPS[][\bbeta][\hat\ppi]-2\epsilon_1.
\]
Therefore, the same argument as in the proof of \cref{thm: approx_optimize}, with $\hat\ppi^{\text{pen}}(k)$ in place of $\hat\ppi^{\text{stoch}}(k)$, yields
\[
\P\!\left(
V^{\hat\ppi^{\text{pen}}(k)}
>
V^{\hat\ppi}-\epsilon
\right)\to 1
\]
for any $\epsilon>0$ and any fixed $k\ge K_1$.

It remains to show that the learned policy is close to deterministic. Since the outcomes and covariates are bounded, there exists a constant $B<\infty$ such that
\[
\left|
\hatVPS[][\bbeta][\ppi]
\right|
\le B
\qquad\text{for all } \ppi\in \Pi^{\text{stoch}}(k),
\]
uniformly in $k$ and $T$. Since
\[
\hatVPS[][\bbeta][\tilde\ppi^{\text{stoch}}(k)]
>
\hatVPS[][\bbeta][\hat\ppi]-\epsilon_1,
\]
and $\hatVPS[][\bbeta][\hat\ppi]\ge -B$, we have
\[
\hatVPS[][\bbeta][\tilde\ppi^{\text{stoch}}(k)]
>
-B-\epsilon_1.
\]
Thus, for all $k\ge K_1$,
\[
\hatVPS[][\bbeta][\tilde\ppi^{\text{stoch}}(k)]
-
c_k D_T(\tilde\ppi^{\text{stoch}}(k))
>
-B-2\epsilon_1.
\]
By optimality of $\hat\ppi^{\text{pen}}(k)$,
\[
\hatVPS[][\bbeta][\hat\ppi^{\text{pen}}(k)]
-
c_k D_T(\hat\ppi^{\text{pen}}(k))
>
-B-2\epsilon_1.
\]
On the other hand,
\[
\hatVPS[][\bbeta][\hat\ppi^{\text{pen}}(k)]
-
c_k D_T(\hat\ppi^{\text{pen}}(k))
\le
B-c_k D_T(\hat\ppi^{\text{pen}}(k)).
\]
Therefore,
\[
B-c_k D_T(\hat\ppi^{\text{pen}}(k))
>
-B-2\epsilon_1,
\]
which implies
\[
D_T(\hat\ppi^{\text{pen}}(k))
<
\frac{2B+2\epsilon_1}{c_k}.
\]
Since \(c_k\to\infty\), the right hand side converges to zero as
\(k\to\infty\).
Now fix any $\delta>0$. Since $c_k\to\infty$, there exists $K_2\ge K_1$ such that for all $k\ge K_2$,
\[
\frac{2B+2\epsilon_1}{c_k}<\delta.
\]
Hence, for all $k\ge K_2$,
\[
D_T(\hat\ppi^{\text{pen}}(k))<\delta.
\]
This proves the result.
\end{proof}

\section{Proofs for regret bound}\label{a: regret-bound-proof}

In this section we provide a complete proof of Theorem~\ref{thm:regret} and \cref{thm: regret_estimate_ps}.
In \cref{a:subsec:regret-regularity} we state and motivate
a regularity assumption on the capacity of the policy coordinate class and then 
give the proofs of the regret bound based on true and estimated propensity in \cref{a:subsec:proof-regret} and  \cref{a:subsec:proof-regret-estimateps} respectively.

\subsection{Regularity for the policy coordinate class}\label{a:subsec:regret-regularity}

Both the regularity assumption and the main proof involves the sequential
Rademacher complexity. We include the definition introduced in \cite{rakhlin2015sequential} for completeness.

\begin{definition}[Sequential Rademacher complexity]\label{def:src}
Let $\mathcal F\subseteq \mathbb R^{\mathcal Z}$ be a class of real-valued
functions on a domain $\mathcal Z$. A \emph{$\mathcal Z$-valued tree} of
depth $T$ is a sequence of maps $z_t:\{\pm 1\}^{t-1}\to \mathcal Z$,
$t=1,\dots,T$. Let $\epsilon_1,\dots,\epsilon_T$ be i.i.d. Rademacher signs.
The sequential Rademacher complexity of $\mathcal F$ at horizon $T$ is
\[
\mathfrak R_T^{\mathrm{seq}}(\mathcal F)
:= \sup_{(z_1,\dots,z_T)}\ \E_{\epsilon_{1:T}}
\Bigg[\ \sup_{f\in\mathcal F}\ \frac{1}{T}\sum_{t=1}^T
\epsilon_t f\big(z_t(\epsilon_{1:(t-1)})\big)\ \Bigg].
\]
\end{definition}

Sequential Rademacher complexity measures the richness of a function class in a sequential setting, where inputs may depend on past random quantities. 
It generalizes the classical Rademacher complexity to sequential settings by allowing the complexity measure to account for such dependence. In our context, it quantifies the effective complexity of the policy class when each treatment decision may depend on the previously observed history.

\begin{assumption}[Complexity of the policy coordinate class]\label{assump:coord-capacity}
Let $\Pi$ be a class of policies, and let
\[
\mathcal G=\big\{\overline{H}_{t-1}\to \pi_{i}(\overline{H}_{t-1}):\ \ppi\in\Pi,\ i=1,\ldots,n\big\}
\]
be the coordinate class. There exist constants $d<\infty$ such that, for all $T\ge 1$, its sequential
Rademacher complexity satisfies
\[
\mathfrak R_T^{\mathrm{seq}}(\mathcal G)\ \lesssim\ \sqrt{\frac{d}{T}}.
\], where $a \lesssim b$ denotes $a \le Cb$ for some universal constant $C>0$.
\end{assumption}

\Cref{assump:coord-capacity} will be used in our derivation of the regret bound to control the complexity of the policy class over time.  A simple sufficient case is when the coordinate class $\mathcal G$ is finite. This occurs, for example, if the relevant history variables take values in a finite set, so that there are only finitely many distinct mappings from the history to the treatment decision. In this case, standard bounds for sequential Rademacher complexity imply that \Cref{assump:coord-capacity} holds with $d$ of order $\log |\mathcal G|$ \citep{rakhlin2015sequential}.

When covariates are continuous, the condition depends on how the policy class is defined. For stochastic policies such as those introduced in \cref{a: Stochastic_policy}, where the policy is a smooth function of the history, small changes in the input typically lead to small changes in the output. Under additional regularity conditions (e.g., bounded covariates and Lipschitz continuity of the policy class), this smoothness can help control the sequential Rademacher complexity.

In contrast, for deterministic policies, the mapping from history to treatment is discontinuous, so small changes in the policy parameter or the history can lead to abrupt changes in the treatment decision when the history is close to the decision boundary. As a result, bounded covariates alone are generally not sufficient to guarantee \Cref{assump:coord-capacity}.

One way to relax this issue is to restrict attention to a smaller subclass of policies that avoids such boundary behavior. For example, for the linear policy class defined in \cref{eq: linearclass}, instead of allowing all $\ggamma \in \mathbb{R}^p$, one may consider a restricted class satisfying
\[
|\bm X_{ti}^P \ggamma| > \delta
\qquad \text{for all } t=1,\dots,T,\ i=1,\dots,n,
\]
for some constant $\delta > 0$. In this subclass defined by the constraint above, nearby parameter values are more likely to produce the same treatment decision, which can substantially reduce the effective complexity relative to the complete class of deterministic policies.

\subsection{Regret bound based on true propensity score}\label{a:subsec:proof-regret}

\begin{theorem}[Regret for the normalized estimator with true propensity score]\label{thm:regret}
Suppose \Cref{assump: unconfoundedness,assump: additive_model_multi,assump: factor_ps,assump:Policy_est1,assump:coord-capacity} hold. Define
\[
C_\beta(n,\eta)=\sum_{k=0}^{\beta}\binom{n}{k}\Big(\tfrac1\eta-1\Big)^{k}.\]
Then for any $\delta\in(0,1)$, with probability at least $1-\delta$,
\[
R(\hat\ppi^{(M)})\  \lesssim\ \frac{\delta_Y\prod_{m=1}^MC_{\beta_m}(n,\eta)\left(1+2\sqrt{d}+\sqrt{2\log(1/\delta)}\right)}{\sqrt{T}}.
\]
for $\delta_Y$ in \Cref{assump:Policy_est1}(a), $\eta$ in \Cref{assump: factor_ps}, and $d$ in \Cref{assump:coord-capacity}. 
\end{theorem}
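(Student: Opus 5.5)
The plan is the standard argmax-regret decomposition, followed by a uniform martingale concentration bound controlled through sequential Rademacher complexity. Write $\hat V(\ppi^{(M)})$ for the known-propensity mixing estimator and $\widetilde V(\ppi^{(M)})=\shiftmean \tildeVTruePS$ for the unnormalized additive-IPW average. Since $\hat\ppi^{(M)}$ maximizes $\hat V$ over $\Pi^M$,
\[
R(\hat\ppi^{(M)}) \le 2\sup_{\ppi^{(M)}\in\Pi^M}\big|\hat V(\ppi^{(M)})-V^{\ppi^{(M)}}\big| .
\]
For the normalized estimator, the error splits into a martingale term $\widetilde V-V$ and a normalization term. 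The H\'ajek part gives $\hat V^{H}-V=(\widetilde V-V\rhobar)/\rhobar$, and the shifted part adds $(1-\rhobar)\overline{Y}$. Both are controlled by $\sup_{\ppi}|\widetilde V(\ppi)-V^{\ppi}|$ and $\sup_{\ppi}|\rhobar(\ppi)-1|$, times constants of order $\delta_Y$, provided $\rhobar$ stays bounded away from zero. On the event where $\sup|\rhobar-1|\le 1/2$ this holds, and off that event the claimed bound exceeds the trivial regret bound $2\delta_Y$ once the constant is adjusted. Hence it suffices to bound the suprema of the two empirical processes
\[
\sup_{\ppi}\Big|\shiftmean \big(\rhot\,\overline{Y}_t - V_t^{\ppi}\big)\Big|, \qquad \sup_{\ppi}\Big|\shiftmean\big(\rhot-1\big)\Big| .
\]

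\textbf{Step 1 (martingale structure).} By \cref{lemma1} and \cref{a: lemma_MDS}, both summands have conditional mean zero given $\lHt[-M]$. Each summand is a product over the $M$ lags, so I would use a telescoping/sequential decomposition in which each lag contributes an increment adapted to the filtration.

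\textbf{Step 2 (envelope).} Each single-period weight $\rho_{ti}-1$ lies in $[-1,1/\eta-1]$. Expanding \cref{eq: addIPW_weight} therefore gives $|\rho^{(\beta_m)}_{t}|\le C_{\beta_m}(n,\eta)$, so the summands are bounded by $B:=\delta_Y\prod_m C_{\beta_m}(n,\eta)$. This is where the product constant in \Cref{thm:regret} comes from.

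\textbf{Step 3 (uniform deviation).} I would apply the sequential symmetrization / uniform martingale deviation inequality of \citet{rakhlin2015sequential}. With probability at least $1-\delta$, the supremum of the martingale average is at most
\[
2\,\mathfrak R^{\mathrm{seq}}_T(\mathcal F)+B\sqrt{2\log(1/\delta)/T},
\]
where $\mathcal F$ is the class of summand functions indexed by $\ppi^{(M)}$. The $B/\sqrt T$ and $B\sqrt{2\log(1/\delta)/T}$ terms account for the ``$1$'' and the $\sqrt{2\log(1/\delta)}$ in the bound.

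\textbf{Step 4 (contraction to the coordinate class).} This is the main obstacle: reducing $\mathfrak R^{\mathrm{seq}}_T(\mathcal F)$ to $\mathfrak R^{\mathrm{seq}}_T(\mathcal G)$. The policy enters $\rho_{ti}$ only through $\mathds 1(W_{ti}=\pi_i)=W_{ti}\pi_i+(1-W_{ti})(1-\pi_i)$, which is affine in $\pi_i$. The weight $\rhot$ is a multilinear polynomial in these indicators, with total coefficient mass bounded by $\prod_m C_{\beta_m}(n,\eta)$. I would expand $\rhot\overline{Y}_t$ as a sum of monomials and use sub-additivity of sequential Rademacher complexity. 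For each monomial, I would apply the sequential Lipschitz contraction lemma, whose Lipschitz constant is bounded by the sup-norm of the remaining bounded factors. This reduces each monomial's complexity to that of $\mathcal G$ evaluated on a tree of histories. Keeping the constant as $B\cdot\sqrt d$ rather than letting it pick up extra factors of $n$ is the delicate part. To manage it, I would normalize each coordinate's contribution by its coefficient so that the sum telescopes into the $C_{\beta}$ weights. Since products of $[0,1]$-valued functions are handled by contraction only up to a factor of $2$, this is the likely source of the $2\sqrt d$ factor.

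\textbf{Conclusion.} Combining Steps 1–4 with \Cref{assump:coord-capacity} gives
\[
\sup_{\ppi}\big|\hat V(\ppi)-V^{\ppi}\big| \lesssim \frac{B\big(1+2\sqrt d+\sqrt{2\log(1/\delta)}\big)}{\sqrt T},
\]
and the regret decomposition then yields the stated bound.
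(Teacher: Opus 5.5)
Your proposal follows the paper's proof essentially step for step. The paper also bounds $R(\hat\ppi^{(M)})$ by $2\sup_{\ppi}|\hat V^{(\bbeta)}(\ppi)-V^{\ppi}|$, writes the mixing estimator as the unnormalized average $N(\ppi)$ (your $\widetilde V$) plus a normalization term controlled on $\{\sup_{\ppi}|D(\ppi)-1|\le 1/2\}$ with $D(\ppi)=\bar\rho$, and bounds both suprema by sequential symmetrization (Theorem~2 of Rakhlin, Sridharan and Tewari), a reduction to $\mathcal G$, and a $\sqrt{2\log(1/\delta)/T}$ term. The differences are in tools and bookkeeping: the paper gets the reduction to $\mathcal G$ from Corollary~19 of Block, Dagan and Rakhlin rather than a monomial-wise contraction (with the classical sequential contraction lemma your route would lose polylogarithmic factors in $T$) and the tail term from Freedman's inequality, while your use of $R\le 2\delta_Y$ off the good event and your telescoping over the $M$ lags (which the paper glosses by calling $f_t,g_t$ martingale differences) are, if anything, more careful.
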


\cref{a:subsec:proof-regret} establishes a finite-sample regret bound when the true propensity score is known.
 $\delta_Y$ scales with the outcome values, $\eta$ prevents instability from extreme propensity scores, $\prod_{m=1}^MC_{\beta_m}(n,\eta)$ grows with the interaction order and inflates variability, and $d$ measures the complexity of the policy class. These quantities together determine how quickly the learned policy approaches the performance of the optimal policy in finite samples.

\begin{proof}[Proof of Theorem~\ref{thm:regret}]
Let $\ppi^{(M)} = (\ppi_1,\ppi_2,\dots,\ppi_M)\in\Pi^M$ and $\bbeta = (\beta_1,\dots,\beta_M)\in\{1,\dots,n\}^M.$
For notational simplicity, write
Recall that the additive IPW estimator without normalization is
\[
N(\ppi)
\;=\;
\frac1T\sum_{t=1}^T \rhot\,\overline Y_t,
\qquad
D(\ppi)
\;=\;
\frac1T\sum_{t=1}^T \rhot,
\]
and the mixing estimator based on the true propensity scores can be written as
\[
\hatV
=
N(\ppi) + E(\ppi),
\]
where, with $\overline Y = T^{-1}\sum_{t=1}^T\overline Y_t$ and
$\hat\lambda$ is a consistent estimator for $\lambda^\ast$ and ,
\[
E(\ppi)
=
\Big\{\hat \lambda\frac{N(\ppi)}{D(\ppi)}
      + \big(1-\hat \lambda\big)\,\overline Y\Big\}
\big(1-D(\ppi)\big).
\]

By \cref{a: lemma_MDS} (martingale difference property),
\[
\E\big[\rhot\,\overline Y_t\mid \lHt[-1]\big]=V_t^{\ppi},
\qquad
\E\big[\rhot\mid \lHt[-1]\big]=1.
\]
Define
\[
f_t(\ppi):=\rhot\,\overline Y_t - V_t^{\ppi},
\qquad
g_t(\ppi):=\rhot-1.
\]
Then $\{f_t(\ppi)\}$ and $\{g_t(\ppi)\}$ are martingale difference sequences
with respect to the filtration generated by the history $\lHt[-M]$.
Moreover,
\[
N(\ppi)-V^{\ppi}
= \frac1T\sum_{t=1}^T f_t(\ppi),
\qquad
D(\ppi)-1
= \frac1T\sum_{t=1}^T g_t(\ppi),
\]
where $V^{\ppi}=\frac1T\sum_{t=1}^T V_t^{\ppi}$ is the target policy value.

By \Cref{assump: factor_ps} and \Cref{assump:Policy_est1}(a),
\[
|g_t(\ppi)|\ \le\ \prod_{m=1}^M C_{\beta_m}(n,\eta)-1,
\qquad
|\rhot\,\overline Y_t|\ \le\ \delta_Y\prod_{m=1}^MC_{\beta_m}(n,\eta),
\]
where
\[
C_\beta(n,\eta)
=
\sum_{k=0}^{\beta}\binom{n}{k}\Big(\tfrac1\eta-1\Big)^{k}.
\]

\paragraph{Uniform control of $D(\ppi)$.}
Define the function class
\[
\mathcal F_D
=
\Big\{\big(\lHt[-1],\bm W_t\big)\to \rhot : \ppi\in\Pi^M\Big\}.
\]
By \citet[Theorem~2]{rakhlin2015sequential},
\[
\E\Big[\sup_{\ppi\in\Pi^M}\Big|\frac1T\sum_{t=1}^T g_t(\ppi)\Big|\Big]
\ \le\ 2\mathfrak R_T^{\mathrm{seq}}(\mathcal F_D).
\]
The weight $\rhot$ is a polynomial in the coordinate indicators
$\mathds{1}\{W_{t-m+1,i}=\ppi_{i}(\lHt[-m])\}$, $1\le m\le M$, and is uniformly
bounded by $\prod_{m=1}^MC_{\beta_m}(n,\eta)$.
Using \citet[Corollary~19]{block2021majorizing}, we obtain
\[
\mathfrak R_T^{\mathrm{seq}}(\mathcal F_D)
\ \lesssim\
\prod_{m=1}^MC_{\beta_m}(n,\eta)
\Big(\mathfrak R_T^{\mathrm{seq}}(\mathcal G)+\mathfrak R_T^{\mathrm{seq}}(\mathcal G)\Big),
\]
where $\mathcal G$ is the coordinate policy class in \Cref{assump:coord-capacity}.
By \Cref{assump:coord-capacity},
\[
\E\Big[\sup_{\ppi\in\Pi^M}\big|D(\ppi)-1\big|\Big]
\ \lesssim\ 
\prod_{m=1}^MC_{\beta_m}(n,\eta)\sqrt{\frac{d}{T}}.
\]

To obtain a high-probability bound, we apply \citet[Theorem~1.1]{tropp2011freedman}
to the martingale
\[
M_T(\ppi):=\sum_{t=1}^T g_t(\ppi).
\]
The increments satisfy
\[
|g_t(\ppi)|
\;\le\;
\prod_{m=1}^MC_{\beta_m}(n,\eta)-1
=:b_0,
\]
and hence the conditional variance process is bounded by $T b_0^2$.
Freedman’s inequality, combined with the above expectation bound, yields that
for any $\delta\in(0,1)$, with probability at least $1-\delta$,
\begin{equation}\label{eq:denom-final-mixed}
\sup_{\ppi\in\Pi^M}\big|D(\ppi)-1\big|
\;\lesssim\;
\prod_{m=1}^MC_{\beta_m}(n,\eta)\,\sqrt{\tfrac{d}{T}}
\;+\;\Big(\prod_{m=1}^MC_{\beta_m}(n,\eta)-1\Big)\sqrt{\tfrac{2\log(1/\delta)}{T}}.
\end{equation}

\paragraph{Uniform control of $N(\ppi)$.}
Next consider
\[
\mathcal F_N
=
\Big\{\big(\lHt[-1],\bm W_t,\bm Y_t\big)\to \rhot\,\overline Y_t : \ppi\in\Pi^M\Big\}.
\]
By \citet[Theorem~2]{rakhlin2015sequential},
\[
\E\Big[\sup_{\ppi\in\Pi^M}\Big|\frac1T\sum_{t=1}^T f_t(\ppi)\Big|\Big]
\ \le\ 2\mathfrak R_T^{\mathrm{seq}}(\mathcal F_N).
\]
As above, $\rhot$ depends on the coordinate indicators, is bounded by
$\prod_{m=1}^MC_{\beta_m}(n,\eta)$, and $|\overline Y_t|\le\delta_Y$.
By \citet[Corollary~19]{block2021majorizing},
\[
\mathfrak R_T^{\mathrm{seq}}(\mathcal F_N)
\ \lesssim\
\delta_Y\prod_{m=1}^MC_{\beta_m}(n,\eta)
\mathfrak R_T^{\mathrm{seq}}(\mathcal G),
\]
so that, by \Cref{assump:coord-capacity},
\[
\E\Big[\sup_{\ppi\in\Pi^M}\big|N(\ppi)-V^{\ppi}\big|\Big]
\ \lesssim\ 
\delta_Y\prod_{m=1}^MC_{\beta_m}(n,\eta)\sqrt{\frac{d}{T}}.
\]

Applying Freedman’s inequality to the martingale
$M_T(\ppi):=\sum_{t=1}^T f_t(\ppi)$ with increments bounded by
$|f_t(\ppi)|\le \delta_Y\prod_{m=1}^MC_{\beta_m}(n,\eta)$
and conditional variance bounded by
$T\delta_Y^2\prod_{m=1}^MC_{\beta_m}(n,\eta)^2$, we obtain that for any
$\delta\in(0,1)$, with probability at least $1-\delta$,
\begin{equation}\label{eq:numer-final-mixed}
\sup_{\ppi\in\Pi^M}\big|N(\ppi)-V^{\ppi}\big|
\ \lesssim\ 
\delta_Y\prod_{m=1}^MC_{\beta_m}(n,\eta)\sqrt{\frac{d}{T}}
\ +\ 
\delta_Y\prod_{m=1}^MC_{\beta_m}(n,\eta)\sqrt{\frac{2\log(1/\delta)}{T}}.
\end{equation}

\paragraph{From $N(\ppi),D(\ppi)$ to the mixing estimator $\hatV$.}

We now bound the error of the mixing estimator
\[
\hatV = N(\ppi)+E(\ppi)
\]
relative to the target $V^{\ppi}$.
For any $\ppi\in\Pi^M$,
\[
\hatV-V^{\ppi}
=
\big\{N(\ppi)-V^{\ppi}\big\} + E(\ppi),
\]
so that
\[
\sup_{\ppi\in\Pi^M}|\hatV-V^{\ppi}|
\ \le\
\sup_{\ppi}\big|N(\ppi)-V^{\ppi}\big|
\ +\ 
\sup_{\ppi}|E(\ppi)|.
\]

We first express $E(\ppi)$ as
\[
E(\ppi)
=
\big(1-D(\ppi)\big)\Big\{
\hat\lambda\frac{N(\ppi)}{D(\ppi)}
+\big(1-\hat\lambda\big)\,\overline Y
\Big\}.
\]
By
\Cref{assump:Policy_est1}(a) we have $|V^{\ppi}|\le\delta_Y$, while $|\overline
Y_t|\le\delta_Y$ implies $|\overline Y|\le\delta_Y$.
Using $N(\ppi) = V^{\ppi} + \{N(\ppi)-V^{\ppi}\}$, we can write
\[
\frac{N(\ppi)}{D(\ppi)} - \overline Y
=
\frac{N(\ppi)-V^{\ppi}}{D(\ppi)}
+\frac{V^{\ppi}-\overline Y}{D(\ppi)}.
\]
Since $\hat\lambda\pto\lambda^\ast<\infty$ as $T\to\infty,$ $\hat\lambda$ is bounded in probability.
Consider the high-probability event
\[
\mathcal E_T
=
\Big\{\sup_{\ppi\in\Pi^M}|D(\ppi)-1|\le\tfrac12\Big\}.
\]
By \cref{eq:denom-final-mixed}, for $T$ sufficiently large this event has probability at least $1-\delta$.
On $\mathcal E_T$, we have $D(\ppi)\in[\tfrac12,\tfrac32]$ for all $\ppi\in\Pi^M$, and hence
$|1/D(\ppi)|\le 2$ and $|1-D(\ppi)|\le |D(\ppi)-1|$.
Moreover, $|V^{\ppi}-\overline Y|\le 2\delta_Y$ for all $\ppi$.

Using these facts, on $\mathcal E_T$ we obtain
\begin{align*}
|E(\ppi)|
&\le
|1-D(\ppi)|\Big(
\big|\hat \lambda\big|\Big|\frac{N(\ppi)}{D(\ppi)}-\overline Y\Big|
+\big|1-\hat \lambda\big|\,|\overline Y|
\Big)\\
&\le
|1-D(\ppi)|\Big(
C_\lambda\Big(\frac{|N(\ppi)-V^{\ppi}|}{|D(\ppi)|}
+\frac{|V^{\ppi}-\overline Y|}{|D(\ppi)|}\Big)
+ |\overline Y|
\Big)\\
&\lesssim
|D(\ppi)-1|\Big(
|N(\ppi)-V^{\ppi}| + \delta_Y
\Big),
\end{align*}
where $C_\lambda$ is a finite bound on $|\hat\lambda|$ and we used
$|D(\ppi)|^{-1}\le 2$, $|\overline Y|\le\delta_Y$, and $|V^{\ppi}-\overline Y|\le 2\delta_Y$.
Taking the supremum over $\ppi\in\Pi^M$, we obtain on $\mathcal E_T$,
\begin{equation}\label{eq:E-bound}
\sup_{\ppi\in\Pi^M}|E(\ppi)|
\ \lesssim\
\sup_{\ppi}|D(\ppi)-1|\cdot\sup_{\ppi}|N(\ppi)-V^{\ppi}|
\ +\
\delta_Y\sup_{\ppi}|D(\ppi)-1|.
\end{equation}

By \cref{eq:denom-final-mixed} and \cref{eq:numer-final-mixed},
$\sup_{\ppi}|D(\ppi)-1|=O_p\big(\prod_{m=1}^MC_{\beta_m}(n,\eta)\sqrt{(d+\log(1/\delta))/T}\big)$ and
$\sup_{\ppi}|N(\ppi)-V^{\ppi}|=O_p\big(\delta_Y\prod_{m=1}^MC_{\beta_m}(n,\eta)\sqrt{(d+\log(1/\delta))/T}\big)$.
The product term in \cref{eq:E-bound} is then of order $O_p(T^{-1})$ and is dominated by the $T^{-1/2}$ terms.
Combining these bounds yields, for any $\delta\in(0,1)$ and $T$ large enough,
\[
\sup_{\ppi\in\Pi^M}|E(\ppi)|
\ \lesssim\
\delta_Y\prod_{m=1}^MC_{\beta_m}(n,\eta)
\Big(
\sqrt{\tfrac{d}{T}}+\sqrt{\tfrac{\log(1/\delta)}{T}}
\Big)
\]
with probability at least $1-\delta$.

Finally, using
\[
\sup_{\ppi}|\hatV-V^{\ppi}|
\ \le\
\sup_{\ppi}|N(\ppi)-V^{\ppi}|
\ +\
\sup_{\ppi}|E(\ppi)|,
\]
and plugging in \cref{eq:numer-final-mixed} and the above bound on $\sup|E(\ppi)|$, we conclude that, for any $\delta\in(0,1)$, with probability at least $1-\delta$,
\[
\sup_{\ppi\in\Pi^M}\big|\hatV-V^{\ppi}\big|
\ \lesssim\
\frac{\delta_Y\prod_{m=1}^MC_{\beta_m}(n,\eta)\left(1+2\sqrt{d}+\sqrt{2\log(1/\delta)}\right)}{\sqrt{T}}.
\]

\paragraph{From uniform value error to regret.}
By the definition of regret,
\[
R(\hat\ppi)
=
V^{\ppi^{\ast}}-V^{\hat\ppi},
\]
where $\ppi^{\ast}\in\argmax_{\ppi\in\Pi^M}V^{\ppi}$ and
$\hat\ppi\in\argmax_{\ppi\in\Pi^M}\hatV$.
A standard argument gives
\[
R(\hat\ppi)
\le
2\sup_{\ppi\in\Pi^M}|\hatV-V^{\ppi}|.
\]
Combining this with the last display completes the proof of Theorem~\ref{thm:regret}.
\end{proof}

\subsection{Proof of regret bound based on estimated propensity score}\label{a:subsec:proof-regret-estimateps}

\begin{lemma}[$\sqrt{T}$-consistency of the PS estimator]\label{lem:ps_rate}
Under Assumption~\ref{assump: ps_model}, the solution $\hat\ttheta$ to $\psi_T(\ttheta)=0$ satisfies
\[
\sqrt{T}(\hat\ttheta-\ttheta^\ast)=O_p(1).
\]
\end{lemma}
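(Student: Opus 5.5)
The plan is the standard M-estimation argument for a likelihood score equation, with the usual i.i.d. limit theorems replaced by their martingale counterparts.

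\textbf{Step 1 (expansion).} Since $\hat\ttheta$ solves $\psi_T(\ttheta)=0$, we take a mean-value (Taylor) expansion of each coordinate of $\psi_T$ around $\ttheta^\ast$:
\[
0=\psi_T(\hat\ttheta)=\psi_T(\ttheta^\ast)+\mat{P}_T(\bar\ttheta)(\hat\ttheta-\ttheta^\ast).
\]
Here $\mat{P}_T(\ttheta)=\sum_t \partial\psi(\bm W_t,\lHt[-1];\ttheta)/\partial\ttheta^\top$, and $\bar\ttheta$ lies row-wise between $\hat\ttheta$ and $\ttheta^\ast$. By Remark~\ref{rmk: ps}(3), $\hat\ttheta\pto\ttheta^\ast$, so $\bar\ttheta\pto\ttheta^\ast$. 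With probability tending to one, $\bar\ttheta$ lies in the neighborhood where the domination condition (item~\ref{assump: IPWest3} of Assumption~\ref{assump: ps_model}) holds. This gives
\[
\sqrt{T}(\hat\ttheta-\ttheta^\ast)=-\Big(\tfrac1T\mat{P}_T(\bar\ttheta)\Big)^{-1}\tfrac{1}{\sqrt T}\psi_T(\ttheta^\ast).
\]

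\textbf{Step 2 (score term).} By Assumption~\ref{assump: unconfoundedness} and correct specification, Remark~\ref{rmk: ps}(1) shows that $\psi(\bm W_t,\lHt[-1];\ttheta^\ast)$ has conditional mean zero given the past. It is therefore a martingale difference sequence with respect to $\mathcal F_t$. Item~1 of Assumption~\ref{assump: ps_model} supplies the two ingredients of the martingale CLT: convergence of the average conditional variance to $\mat\Sigma_{ps}$, and the conditional Lindeberg condition. The martingale CLT (as in Theorem A.4 of \cite{zhou2024estimating}) then gives $T^{-1/2}\psi_T(\ttheta^\ast)\dto N(\bm 0,\mat\Sigma_{ps})$, and in particular this term is $O_p(1)$.

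\textbf{Step 3 (Hessian term), which I expect to be the main obstacle.} We need $T^{-1}\mat P_T(\bar\ttheta)\pto-\mat\Sigma_{ps}$, which is nonsingular. I split it into three pieces:
\[
T^{-1}\mat P_T(\bar\ttheta)=T^{-1}\big[\mat P_T(\bar\ttheta)-\mat P_T(\ttheta^\ast)\big]+T^{-1}\big[\mat P_T(\ttheta^\ast)-\E(\mat P_T(\ttheta^\ast)\mid\cdot)\big]+T^{-1}\E(\mat P_T(\ttheta^\ast)\mid\cdot).
\]
\begin{enumerate}
\item The third piece converges to $-\mat\Sigma_{ps}$ by Remark~\ref{rmk: ps}(2).
\item The second piece is a normalized sum of martingale differences $P_{kjt}-\E(P_{kjt}\mid\mathcal F_{t-1})$. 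The condition in item~2 of Assumption~\ref{assump: ps_model} is exactly the hypothesis of a martingale strong law (Theorem~1 of \cite{csorgHo1968strong}, already used in Appendix~\ref{a:sec:lambda}), so this piece vanishes in probability.
\item For the first piece, a further mean-value expansion bounds each entry by $|\bar\ttheta-\ttheta^\ast|\cdot T^{-1}\sum_t\ddot\psi(\bm W_t,\lHt[-1])$. The average of the dominating function should be $O_p(1)$ by integrability (or at worst by Markov's inequality applied to its mean), and $|\bar\ttheta-\ttheta^\ast|=o_p(1)$. Hence this piece is $o_p(1)$.
\end{enumerate}
The delicate part is that the domination and law of large numbers must be justified under time dependence rather than i.i.d. sampling. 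If the strong-law route is awkward, my fallback is to use only the $L^1$ bound on $\ddot\psi$ together with Markov's inequality.

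\textbf{Step 4 (conclusion).} Invertibility is preserved with probability tending to one, since matrix inversion is continuous at the positive definite $\mat\Sigma_{ps}$. Slutsky's theorem then gives $\sqrt T(\hat\ttheta-\ttheta^\ast)\dto N(\bm 0,\mat\Sigma_{ps}^{-1})$, which is in particular $O_p(1)$.
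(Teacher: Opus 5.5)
Your proposal is correct and takes essentially the same route as the paper. The paper's proof simply invokes Theorem A.3 of \cite{papadogeorgou2022causal} to get asymptotic normality of $\hat\ttheta$ from the martingale score conditions in Assumption~\ref{assump: ps_model}(1a--c); your Steps 1--4 spell out the standard mean-value expansion, martingale CLT for the score, and martingale LLN plus domination for the Hessian that underlie that cited result.

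Your limiting covariance $\mat{\Sigma}_{ps}^{-1}$ is the correct one; the paper's $\mat{\Sigma}_{ps}$ is a slip that does not affect the $O_p(1)$ conclusion. In Step 3, your Markov fallback already suffices, provided the means of $\ddot\psi$ are uniformly bounded (the natural reading of the paper's "integrable" condition): an average of nonnegative terms with uniformly bounded means is $O_p(1)$ without any independence.
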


\begin{proof}
By Assumption~\ref{assump: ps_model}(1a–c), the score array
$\{\psi(W_t,\lHt[-1];\ttheta^\ast)\}_{t=M}^T$ is an $L^2$ martingale difference sequence with asymptotic variance
$\mat\Sigma_{ps}$. Applying \citet[Theorem~A.3]{papadogeorgou2022causal}, we obtain
\[
\sqrt{T}(\hat\ttheta-\ttheta^\ast)\dto N(\bm 0, \mat\Sigma_{ps}).
\]
In particular, $\sqrt{T}(\hat\ttheta-\ttheta^\ast)=O_p(1)$.
\end{proof}

\begin{proof}[Proof of Theorem~\ref{thm: regret_estimate_ps}]
For notational simplicity, write the mixing estimator with propensity parameter
$\ttheta$ as
\[
\hatVPS[][\bbeta][\ppi][\ttheta][]
=
N_{\ttheta}(\ppi) + E_{\ttheta}(\ppi),
\]
where
\[
N_{\ttheta}(\ppi)
=
\frac{1}{T}\sum_{t=1}^T \rhotPS[\bbeta][\ppi][\ttheta][]\,\overline Y_t,
\qquad
D_{\ttheta}(\ppi)
=
\frac{1}{T}\sum_{t=1}^T \rhotPS[\bbeta][\ppi][\ttheta][],
\]
and
\[
E_{\ttheta}(\ppi)
=
\Big\{\hat\lambda\,\frac{N_{\ttheta}(\ppi)}{D_{\ttheta}(\ppi)}
      + \big(1-\hat\lambda\big)\,\overline Y\Big\}
\big(1-D_{\ttheta}(\ppi)\big),
\]
with $\overline Y = T^{-1}\sum_{t=1}^T\overline Y_t$ and
$\hat\lambda$ as defined in \cref{a:sec:lambda}.

Let $\ppi^{\ast}\in\argmax_{\ppi\in\Pi^M}V^{\ppi}$ be an optimal policy and
$\hat\ppi_{\hat\ttheta}\in\argmax_{\ppi\in\Pi^M}\hatVPS[][\bbeta][\ppi][\hat\ttheta][]$
be the learned policy based on the estimated propensity score.
By optimality and the usual decomposition for regret,
\[
R(\hat\ppi_{\hat\ttheta})
= V^{\ppi^{\ast}}-V^{\hat\ppi_{\hat\ttheta}}
\le 2\sup_{\ppi\in\Pi^M}\big|\hatVPS[][\bbeta][\ppi][\hat\ttheta][]-V^{\ppi}\big|.
\]
We further split
\[
\sup_{\ppi}\big|\hatVPS[][\bbeta][\ppi][\hat\ttheta][]-V^{\ppi}\big|
\;\le\;
\sup_{\ppi}\big|\hatVPS[][\bbeta][\ppi][\ttheta^\ast][]-V^{\ppi}\big|
+
\sup_{\ppi}\big|\hatVPS[][\bbeta][\ppi][\hat\ttheta][]-\hatVPS[][\bbeta][\ppi][\ttheta^\ast][]\big|.
\]
The first term is $O_p(T^{-1/2})$ by Theorem~\ref{thm:regret}, since that result
was proved for the mixing estimator under the true propensity scores
$\ttheta^\ast$. Thus it remains to show that
\[
\sup_{\ppi\in\Pi^M}\big|\hatVPS[][\bbeta][\ppi][\hat\ttheta][]-\hatVPS[][\bbeta][\ppi][\ttheta^\ast][]\big|
=O_p(T^{-1/2}).
\]

\medskip\noindent
\textbf{Step 1: Mean value expansion in $\ttheta$.}
For each $\ppi\in\Pi^M$, by the mean value theorem there exists
a point $\bar\ttheta$ on the line segment joining $\ttheta^\ast$ and $\hat\ttheta$
such that
\[
\hatVPS[][\bbeta][\ppi][\hat\ttheta][]
-
\hatVPS[][\bbeta][\ppi][\ttheta^\ast][]
=
\big[\partial_{\ttheta}\hatVPS[][\bbeta][\ppi][\bar\ttheta][]\big]^\top
(\hat\ttheta-\ttheta^\ast).
\]
Hence
\[
\sup_{\ppi}\big|\hatVPS[][\bbeta][\ppi][\hat\ttheta][]-\hatVPS[][\bbeta][\ppi][\ttheta^\ast][]\big|
\le
\Big(\sup_{\ppi}\big\|\partial_{\ttheta}\hatVPS[][\bbeta][\ppi][\bar\ttheta][]\big\|\Big)
\|\hat\ttheta-\ttheta^\ast\|.
\]
By Lemma~\ref{lem:ps_rate}, $\|\hat\ttheta-\ttheta^\ast\|=O_p(T^{-1/2})$.
Therefore it suffices to show that
\[
\sup_{\ppi}\big\|\partial_{\ttheta}\hatVPS[][\bbeta][\ppi][\bar\ttheta][]\big\|=O_p(1).
\]

\medskip\noindent
\textbf{Step 2: Bounding the derivative of the mixing estimator.}
We first compute the derivatives of $N_{\ttheta}(\ppi)$ and $D_{\ttheta}(\ppi)$.
By definition,
\[
N_{\ttheta}(\ppi)
=
\frac{1}{T}\sum_{t=1}^T \rhotPS[\bbeta][\ppi][\ttheta][]\,\overline Y_t,
\qquad
D_{\ttheta}(\ppi)
=
\frac{1}{T}\sum_{t=1}^T \rhotPS[\bbeta][\ppi][\ttheta][],
\]
so
\[
\partial_{\ttheta} N_{\ttheta}(\ppi)
=
\frac{1}{T}\sum_{t=1}^T
\partial_{\ttheta}\rhotPS[\bbeta][\ppi][\ttheta][]\,\overline Y_t,
\qquad
\partial_{\ttheta} D_{\ttheta}(\ppi)
=
\frac{1}{T}\sum_{t=1}^T
\partial_{\ttheta}\rhotPS[\bbeta][\ppi][\ttheta][].
\]
Assumption~\ref{assump: ps_model} implies
\[
\partial_{\ttheta}\rhotPS[\bbeta][\ppi][\ttheta][]
=
-\rhotPS[\bbeta][\ppi][\ttheta][]\,\psi(W_t,\lHt[-1];\ttheta)
= -\rhotPS[\bbeta][\ppi][\ttheta][]\,\psi_t(\ttheta),
\]
where we write $\psi_t(\ttheta)=\psi(W_t,\lHt[-1];\ttheta)$ for brevity.

By Assumption~\ref{assump:Policy_est1}(a), $|\overline Y_t|\le\delta_Y$ and
$|V^{\ppi}|\le\delta_Y$, and hence $|\overline Y|\le\delta_Y$.
Furthermore, by \Cref{assump: factor_ps} and the form of the additive IPW
weights,
\[
0\ \le\ \rhotPS[\bbeta][\ppi][\ttheta][]\ \le\ \prod_{m=1}^M C_{\beta_m}(n,\eta)
\quad\text{for all $t$, all $\ppi$, and all $\ttheta$ near $\ttheta^\ast$}.
\]

We now bound $\partial_{\ttheta}\hatVPS[][\bbeta][\ppi][\ttheta][]$.
Recall
\[
\hatVPS[][\bbeta][\ppi][\ttheta][]
=
N_{\ttheta}(\ppi)
+
E_{\ttheta}(\ppi),
\]
with
\[
E_{\ttheta}(\ppi)
=
\Big\{\lambda(D_{\ttheta}(\ppi))\frac{N_{\ttheta}(\ppi)}{D_{\ttheta}(\ppi)}
+\big(1-\lambda(D_{\ttheta}(\ppi))\big)\overline Y\Big\}(1-D_{\ttheta}(\ppi)).
\]
Differentiating,
\[
\partial_{\ttheta}\hatVPS[][\bbeta][\ppi][\ttheta][]
=
\partial_{\ttheta}N_{\ttheta}(\ppi)
+
\partial_{\ttheta}E_{\ttheta}(\ppi).
\]
The derivative $\partial_{\ttheta}E_{\ttheta}(\ppi)$ is a linear combination of
$\partial_{\ttheta}N_{\ttheta}(\ppi)$ and $\partial_{\ttheta}D_{\ttheta}(\ppi)$,
with coefficients depending smoothly on $D_{\ttheta}(\ppi)$, $N_{\ttheta}(\ppi)$,
$\overline Y$, and $\hat\lambda$ and its derivative. Since $\hat\lambda$ is a
bounded smooth function in $\ttheta$ and $N_{\ttheta}(\ppi)$, $D_{\ttheta}(\ppi)$ and
$\overline Y$ are uniformly bounded (using the bounds on $\rhotPS$ and 
$|\overline Y_t|$), there exists a constant $C>0$ such that for all $\ttheta$
in a neighborhood of $\ttheta^\ast$,
\[
\big\|\partial_{\ttheta}\hatVPS[][\bbeta][\ppi][\ttheta][]\big\|
\ \le\ 
C\Big(
\big\|\partial_{\ttheta}N_{\ttheta}(\ppi)\big\|
+
\big\|\partial_{\ttheta}D_{\ttheta}(\ppi)\big\|
\Big).
\]

By Cauchy–Schwarz inequality, for any fixed $\ppi$ and $\ttheta$,
\begin{align*}
\Big\|\partial_{\ttheta}N_{\ttheta}(\ppi)\Big\|
&=
\Big\|\frac{1}{T}\sum_{t=1}^T \rhotPS[\bbeta][\ppi][\ttheta][]\,\psi_t(\ttheta)\,\overline Y_t\Big\|
\ \le\
\frac{\delta_Y}{T}\sum_{t=1}^T \rhotPS[\bbeta][\ppi][\ttheta][]\,\|\psi_t(\ttheta)\|,\\[0.5em]
\Big\|\partial_{\ttheta}D_{\ttheta}(\ppi)\Big\|
&=
\Big\|\frac{1}{T}\sum_{t=1}^T \rhotPS[\bbeta][\ppi][\ttheta][]\,\psi_t(\ttheta)\Big\|
\ \le\
\frac{1}{T}\sum_{t=1}^T \rhotPS[\bbeta][\ppi][\ttheta][]\,\|\psi_t(\ttheta)\|.
\end{align*}
Using Cauchy–Schwarz again,
\[
\frac{1}{T}\sum_{t=1}^T \rhotPS[\bbeta][\ppi][\ttheta][]\,\|\psi_t(\ttheta)\|
\le
\Big(\frac{1}{T}\sum_{t=1}^T (\rhotPS[\bbeta][\ppi][\ttheta][])^2\Big)^{1/2}
\Big(\frac{1}{T}\sum_{t=1}^T \|\psi_t(\ttheta)\|^2\Big)^{1/2}.
\]
As in the proof of Theorem~\ref{thm:regret}, the first factor is bounded
uniformly in $\ppi$ and $\ttheta$ near $\ttheta^\ast$:
\[
\sup_{\ppi,\ \ttheta\ \text{near}\ \ttheta^\ast}
\frac{1}{T}\sum_{t=1}^T (\rhotPS[\bbeta][\ppi][\ttheta][])^2
\ \lesssim\
\prod_{m=1}^M C_{\beta_m}(n,\eta),
\]
hence its square root is $O_p(1)$ uniformly. By Assumption~\ref{assump: ps_model}
and a law of large numbers for the score array, we also have
\[
\sup_{\ttheta\ \text{near}\ \ttheta^\ast}
\frac{1}{T}\sum_{t=1}^T \|\psi_t(\ttheta)\|^2
=O_p(1).
\]
Combining these bounds gives
\[
\sup_{\ppi,\ \ttheta\ \text{near}\ \ttheta^\ast}
\frac{1}{T}\sum_{t=1}^T \rhotPS[\bbeta][\ppi][\ttheta][]\,\|\psi_t(\ttheta)\|
=O_p(1),
\]
and therefore
\[
\sup_{\ppi,\ \ttheta\ \text{near}\ \ttheta^\ast}
\Big(\big\|\partial_{\ttheta}N_{\ttheta}(\ppi)\big\|
+
\big\|\partial_{\ttheta}D_{\ttheta}(\ppi)\big\|\Big)
=O_p(1).
\]
By the previous display relating $\partial_{\ttheta}\hatVPS$ to
$\partial_{\ttheta}N_{\ttheta}$ and $\partial_{\ttheta}D_{\ttheta}$, we obtain
\[
\sup_{\ppi,\ \ttheta\ \text{near}\ \ttheta^\ast}
\big\|\partial_{\ttheta}\hatVPS[][\bbeta][\ppi][\ttheta][]\big\|
=O_p(1).
\]
In particular, this holds at $\ttheta=\bar\ttheta$, so
\[
\sup_{\ppi}\big\|\partial_{\ttheta}\hatVPS[][\bbeta][\ppi][\bar\ttheta][]\big\|
=O_p(1).
\]

\medskip\noindent
\textbf{Step 3: Combine with the rate of $\hat\ttheta$.}
Returning to the mean-value bound from Step~1 and using Lemma~\ref{lem:ps_rate},
we have
\[
\sup_{\ppi}\big|\hatVPS[][\bbeta][\ppi][\hat\ttheta][]-\hatVPS[][\bbeta][\ppi][\ttheta^\ast][]\big|
\le
\Big(\sup_{\ppi}\big\|\partial_{\ttheta}\hatVPS[][\bbeta][\ppi][\bar\ttheta][]\big\|\Big)
\|\hat\ttheta-\ttheta^\ast\|
=O_p(1)\cdot O_p(T^{-1/2})
=O_p(T^{-1/2}).
\]

\medskip\noindent
\textbf{Step 4: Conclude the regret rate.}
We have shown that both
\[
\sup_{\ppi}\big|\hatVPS[][\bbeta][\ppi][\ttheta^\ast][]-V^{\ppi}\big|
=O_p(T^{-1/2})
\quad\text{and}\quad
\sup_{\ppi}\big|\hatVPS[][\bbeta][\ppi][\hat\ttheta][]-\hatVPS[][\bbeta][\ppi][\ttheta^\ast][]\big|
=O_p(T^{-1/2}).
\]
Therefore
\[
\sup_{\ppi}\big|\hatVPS[][\bbeta][\ppi][\hat\ttheta][]-V^{\ppi}\big|
=O_p(T^{-1/2}),
\]
and the regret satisfies
\[
R(\hat\ppi_{\hat\ttheta})
\le 2\sup_{\ppi\in\Pi^M}\big|\hatVPS[][\bbeta][\ppi][\hat\ttheta][]-V^{\ppi}\big|
=O_p(T^{-1/2}).
\]
This completes the proof of Theorem~\ref{thm: regret_estimate_ps}.
\end{proof}

\section{Alternative assumptions and estimators}
\label{sec:alternative_assumptions}

In this section, we present two extensions to our methodology. First, we show how the policy value estimator can be refined if we have a priori knowledge about the true network structure governing interference. Second, we discuss how the unit-level independent treatment assignment assumption can be relaxed to accommodate block-level dependence. We present the results for a single-period policy $\ppi$, noting that these results naturally extend to multi-period policies $\ppi^{(M)}$ through the sequential multiplicative structure established in \cref{sec:estimation}.

\subsection{Restricting the outcome model via network information}
\label{subsec:network_restriction}

In \Cref{assump: additive_model}, we allowed for all possible linear and interaction terms up to order $\beta$, meaning that any unit could potentially interact with any other unit in driving potential outcomes. However, researchers often have prior structural information on the form of interference. Here, we discuss how that prior knowledge can be incorporated by restricting the model, thereby reducing the number of nuisance parameters and potentially improving estimation efficiency.

We formalize this by first allowing interactions up to order $\beta$ in the potential outcome model for each unit, and then specifying which interactions are included at each order for that specific unit. Specifically, define $\mathcal{N}_{k,i}$ as a prespecified unit-specific index set of $k$-tuples that specifies exactly which $k$-way interactions are included in the outcome model for unit $i$, for $k = 1, \dots, \beta$. To avoid duplicate representations of the same interaction, we take the indices within each tuple to be ordered. We also assume that these interaction sets are hierarchical, meaning that whenever a higher-order interaction is included, all of its lower-order component interactions are also included. 

Depending on the application and the prior knowledge, these interaction sets can be defined in different ways. For example, one might choose to allow for interactions of treatments among units based on their proximity on a network. Specifically, they might define $\mathcal{N}_{1,i}$ to include unit $i$ and its direct neighbors, then define $\mathcal{N}_{2,i}$ to contain only pairs consisting of unit $i$ and one of its direct neighbors, and restrict $\mathcal{N}_{3,i}$ to contain triplets consisting of unit $i$ and any two of its direct neighbors, and so on. In the model of \Cref{assump: additive_model_multi}, these sets correspond to the collection of all possible $k$-tuples for every unit $i$, so all interactions up to order $\beta$ are included.

Using the sets $\mathcal{N}_{1,i}, \dots, \mathcal{N}_{\beta,i}$ to specify which interactions are included for each unit, we formalize the following restricted semiparametric model:

\begin{assumption}[Network-Informed Additive Outcome Model]
\label{assump: network_model}
    For $1\le t\le T$, $1\le i\le n$, and $\bm w_t\in\{0,1\}^n$, the potential outcomes satisfy
    \begin{equation}
    \label{eq: network_model}
      Y_{ti}(\lWt[-1],\bm w_t)
      =
      \bm g_{ti}\left(\lHt[-1]\right)^\top
      \phi_i^{inf}(\bm w_t),
    \end{equation}
    where $\phi_i^{inf}(\bm w_t)$ is a unit-specific, network-restricted treatment vector at time $t$ that only includes the specific main effects and interaction terms defined by the sets $\mathcal{N}_{1,i}, \dots, \mathcal{N}_{\beta,i}$,
    \[
    \phi_i^{inf}(\bm w_t)
    =
    \left(
    1,
    \{w_{tj}\}_{j\in\mathcal{N}_{1,i}},
    \{w_{tj_1}w_{tj_2}\}_{(j_1,j_2)\in\mathcal{N}_{2,i}},
    \dots,
    \{w_{tj_1}\dots w_{tj_\beta}\}_{(j_1,\dots,j_\beta)\in\mathcal{N}_{\beta,i}}
    \right)^\top.
    \]
\end{assumption}

Under this assumption, we propose a network-restricted policy value estimator,
\[
\widetilde{V}^{inf}(\ppi)
=
\frac{1}{T}
\sum_{t=1}^T
\widetilde{V}_t^{inf}(\ppi),
\]
where the single-period estimator is defined as
\[
\widetilde{V}_t^{inf}(\ppi)
=
\frac{1}{n}
\sum_{i=1}^n
\rho_{ti}^{inf}(\ppi)Y_{ti}.
\]
The corresponding restricted weight for each unit $i$ is given by
\begin{align}
\label{eq: netIPW_weight}
\begin{split}
 \rho_{ti}^{inf}(\ppi)
 = 1
 &+ \sum_{j\in\mathcal{N}_{1,i}}
 \left(\rho_{tj}(\ppi)-1\right) \\
 &+ \sum_{(j_1,j_2)\in\mathcal{N}_{2,i}}
 \left(\rho_{tj_1}(\ppi)-1\right)
 \left(\rho_{tj_2}(\ppi)-1\right)
 + \dots \\
 &+ \sum_{(j_1,\dots,j_\beta)\in\mathcal{N}_{\beta,i}}
 \left(\rho_{tj_1}(\ppi)-1\right)
 \dots
 \left(\rho_{tj_\beta}(\ppi)-1\right),
\end{split}
\end{align}
where
\[
\rho_{tj}(\ppi)
=
\frac{
\mathds{1}\left\{
W_{tj}
=
\pi_j\left(\lHt[-1]\right)
\right\}
}{
e_{tj}(W_{tj})
}.
\]
We next show that $\widetilde{V}_t^{inf}(\ppi)$ is unbiased for $V_t^{\ppi}$ under the modified assumption.

\begin{theorem}[Conditional unbiasedness of the network-restricted estimator]
\label{thm: network_unbiased}
    Under Assumptions \ref{assump: unconfoundedness}, \ref{assump: factor_ps}, and \ref{assump: network_model},
    \[
    \E\left[
    \widetilde{V}_t^{inf}(\ppi)
    \mid
    \lHt[-1]
    \right]
    =
    V_t^{\ppi}\left(\lHt[-1]\right).
    \]
\end{theorem}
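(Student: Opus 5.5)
The argument is unit-by-unit: we show $\E[\rho_{ti}^{inf}(\ppi)Y_{ti}\mid\lHt[-1]]=Y_{ti}^{\ppi}(\lHt[-1])$ for each $i$ and then average over $i$. Fix $i$ and condition on $\lHt[-1]$. By \Cref{assump: network_model}, $Y_{ti}=\bm g_{ti}(\lHt[-1])^\top\phi_i^{inf}(\bm W_t)$, and $\bm g_{ti}(\lHt[-1])$ is fixed given the history. By \Cref{assump: unconfoundedness}, $\bm W_t$ is independent of the counterfactual quantities given $\lHt[-1]$, so we may treat $\bm g_{ti}$ as a constant. It then suffices to verify, for every monomial $w_S:=\prod_{j\in S}w_{tj}$ that appears in $\phi_i^{inf}$ (with $S=\emptyset$ or $S\in\mathcal N_{k,i}$), the identity
\[
\E\left[\rho_{ti}^{inf}(\ppi)\,W_S\mid\lHt[-1]\right]=\prod_{j\in S}\pi_j(\lHt[-1]).
\]
Summing these identities against the coefficients $\bm g_{ti}$ gives $\bm g_{ti}^\top\phi_i^{inf}(\ppi(\lHt[-1]))=Y_{ti}^{\ppi}(\lHt[-1])$, which is the claim.

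To verify the monomial identity, write $\rho_{ti}^{inf}=\sum_{A\in\mathcal A_i}\prod_{j\in A}(\rho_{tj}-1)$, where $\mathcal A_i=\{\emptyset\}\cup\bigcup_k\mathcal N_{k,i}$. By \Cref{assump: factor_ps}, the coordinates $W_{tj}$ are conditionally independent, so each term factorizes as
\[
\E\Big[\prod_{j\in A}(\rho_{tj}-1)W_S\Big]=\prod_{j\in A\setminus S}\E[\rho_{tj}-1]\prod_{j\in S\setminus A}\E[W_{tj}]\prod_{j\in A\cap S}\E[(\rho_{tj}-1)W_{tj}].
\]
By \cref{lemma1}, $\E[\rho_{tj}-1]=0$, so only sets $A\subseteq S$ contribute. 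Write $p_j=\P(W_{tj}=1\mid\lHt[-1])$ and $\pi_j=\pi_j(\lHt[-1])$. A direct computation gives $\E[\rho_{tj}W_{tj}]=\pi_j$ (the indicator equals one only if $W_{tj}=\pi_j$, and then $W_{tj}/e_{tj}(1)$ has expectation $\pi_j$), hence $\E[(\rho_{tj}-1)W_{tj}]=\pi_j-p_j$. The expectation therefore equals
\[
\sum_{A\in\mathcal A_i,\,A\subseteq S}\ \prod_{j\in A}(\pi_j-p_j)\prod_{j\in S\setminus A}p_j .
\]

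The crucial step, and the only place the structure of the sets enters, is the hierarchy assumption. Since $S$ itself is in $\mathcal A_i$, every subset $A\subseteq S$ is also in $\mathcal A_i$. The sum therefore runs over all subsets of $S$, and the binomial expansion gives
\[
\prod_{j\in S}\big((\pi_j-p_j)+p_j\big)=\prod_{j\in S}\pi_j .
\]
The main obstacle is bookkeeping with ordered tuples. I would identify each tuple with its underlying set, using that the tuples are ordered and distinct so that no subset is counted twice, and also check that $A=\emptyset$ is present through the leading $1$ in $\rho_{ti}^{inf}$. Averaging the unit-level identity over $i$ completes the proof.
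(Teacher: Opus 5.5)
Your proof is correct, and it takes a genuinely different route from the paper's.

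\textbf{The paper's route.} It changes basis from $\phi_i^{inf}(\bm w_t)$ to the centered monomials $\prod_{j\in A}(w_{tj}-e_{tj}(1))$. Hierarchy guarantees that both bases span the same space, through an invertible lower-triangular $\bm L_{ti}$. Conditional independence then makes the centered Gram matrix diagonal, with entries $\prod_{j\in A}e_{tj}(1)(1-e_{tj}(1))$. Using the identity $(\pi_j-e_{tj}(1))(W_{tj}-e_{tj}(1))/\{e_{tj}(1)(1-e_{tj}(1))\}=\rho_{tj}(\ppi)-1$, the paper concludes that $\rho_{ti}^{inf}(\ppi)=\phi_i^{inf}(\ppi(\lHt[-1]))^\top\E[\phi_i^{inf}(\bm W_t)\phi_i^{inf}(\bm W_t)^\top\mid\lHt[-1]]^{-1}\phi_i^{inf}(\bm W_t)$. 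Unbiasedness follows because this projection form reproduces $\bm g_{ti}^\top\phi_i^{inf}(\ppi(\lHt[-1]))$.

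\textbf{Your route.} You verify that reproducing property directly, $\E[\rho_{ti}^{inf}W_S\mid\lHt[-1]]=\prod_{j\in S}\pi_j$ for every monomial in the model. This uses coordinatewise factorization, $\E[\rho_{tj}-1]=0$, $\E[(\rho_{tj}-1)W_{tj}]=\pi_j-p_j$, and the binomial theorem.

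\textbf{What each buys.} Your version is shorter and needs no Gram-matrix inversion or change of basis. It also isolates exactly where hierarchy enters: closure of $\mathcal A_i$ under subsets is what makes the binomial sum complete, and missing subsets would generally break the identity.

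The paper's version buys the algebraic identification of the weight with a regression projection. That identification underlies the projection-estimand interpretation under misspecification (\Cref{a:projection}). It is also the viewpoint that carries over to the block-dependent design of \Cref{thm: block_unbiased}. There the weights are built from $\bm D_{tb}^{-1}$, and the coordinatewise factorization of $\E[\prod_{j\in A}(\rho_{tj}-1)W_S]$ that you rely on is no longer available.

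A minor point: invoking unconfoundedness to treat $\bm g_{ti}$ as a constant is harmless but unnecessary, since $\bm g_{ti}$ is a function of $\lHt[-1]$ alone.
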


\begin{proof}
The proof consists of two steps. First, we establish an algebraic equivalence between the unit-specific restricted additive weights $\rho_{ti}^{inf}(\ppi)$ defined in \cref{eq: netIPW_weight} and a linear projection matrix form. Second, we evaluate the conditional expectation using this matrix form.

Recall that
\[
e_{tj}(1)
=
\P\left(
W_{tj}=1
\mid
\lHt[-1]
\right).
\]
We apply a linear change of basis from the raw treatment indicators to the centered variables
\[
C_{tj}
=
W_{tj}-e_{tj}(1).
\]
Let $\bm Z_{ti}^{inf}(\bm w_t)$ be the unit-specific network-restricted basis vector using these centered variables, constructed identically to $\phi_i^{inf}(\bm w_t)$:
\[
\bm Z_{ti}^{inf}(\bm w_t)
=
\left(
1,
\{w_{tj}-e_{tj}(1)\}_{j\in\mathcal{N}_{1,i}},
\left\{
(w_{tj_1}-e_{tj_1}(1))
(w_{tj_2}-e_{tj_2}(1))
\right\}_{(j_1,j_2)\in\mathcal{N}_{2,i}},
\dots
\right)^\top.
\]
Because the interaction sets are hierarchical, $\phi_i^{inf}(\bm w_t)$ and $\bm Z_{ti}^{inf}(\bm w_t)$ span the same linear space. Therefore, after ordering the terms by interaction order, there exists an invertible lower-triangular matrix $\bm L_{ti}$ such that
\[
\phi_i^{inf}(\bm w_t)
=
\bm L_{ti}
\bm Z_{ti}^{inf}(\bm w_t).
\]
Using this property, the projection matrix term for unit $i$ can be rewritten independently of $\bm L_{ti}$:
\begin{align*}
    &\phi_i^{inf}\left(\ppi\left(\lHt[-1]\right)\right)^\top
    \E\left[
    \phi_i^{inf}(\bm W_t)
    \phi_i^{inf}(\bm W_t)^\top
    \mid
    \lHt[-1]
    \right]^{-1}
    \phi_i^{inf}(\bm W_t) \\
    &\qquad =
    \bm Z_{ti}^{inf}\left(\ppi\left(\lHt[-1]\right)\right)^\top
    \E\left[
    \bm Z_{ti}^{inf}(\bm W_t)
    \bm Z_{ti}^{inf}(\bm W_t)^\top
    \mid
    \lHt[-1]
    \right]^{-1}
    \bm Z_{ti}^{inf}(\bm W_t).
\end{align*}

Under \Cref{assump: factor_ps}, treatments are conditionally independent. Thus,
\[
\E\left[
C_{tj}
\mid
\lHt[-1]
\right]
=
0.
\]
Moreover, the conditional expectation of the cross-product of any two distinct centered interaction terms is zero. In particular, for any two distinct interaction tuples, there is at least one centered treatment indicator that appears only once in their cross-product, and its conditional expectation is zero. Therefore, the expected outer product matrix
\[
\bm D_{ti}
=
\E\left[
\bm Z_{ti}^{inf}(\bm W_t)
\bm Z_{ti}^{inf}(\bm W_t)^\top
\mid
\lHt[-1]
\right]
\]
is diagonal.

The diagonal entry corresponding to a $k$-way interaction tuple $(j_1,\dots,j_k)\in\mathcal{N}_{k,i}$ is
\[
\prod_{m=1}^k
e_{tj_m}(1)
\left\{
1-e_{tj_m}(1)
\right\}.
\]
Since the inverse of a diagonal matrix is obtained by taking the reciprocal of each diagonal element, expanding the inner product
\[
\bm Z_{ti}^{inf}\left(\ppi\left(\lHt[-1]\right)\right)^\top
\bm D_{ti}^{-1}
\bm Z_{ti}^{inf}(\bm W_t)
\]
yields, writing $\pi_j=\pi_j(\lHt[-1])$ to simplify notation,
\begin{align}
\label{eq:centered_expansion}
    1
    &+
    \sum_{j\in\mathcal{N}_{1,i}}
    \frac{
    (\pi_j-e_{tj}(1))
    (W_{tj}-e_{tj}(1))
    }{
    e_{tj}(1)(1-e_{tj}(1))
    }
    \nonumber \\
    &+
    \sum_{(j_1,j_2)\in\mathcal{N}_{2,i}}
    \frac{
    (\pi_{j_1}-e_{tj_1}(1))
    (W_{tj_1}-e_{tj_1}(1))
    }{
    e_{tj_1}(1)(1-e_{tj_1}(1))
    }
    \frac{
    (\pi_{j_2}-e_{tj_2}(1))
    (W_{tj_2}-e_{tj_2}(1))
    }{
    e_{tj_2}(1)(1-e_{tj_2}(1))
    }
    +\dots.
\end{align}
For any binary treatment $W_{tj}\in\{0,1\}$ and policy decision $\pi_j\in\{0,1\}$, which is fixed conditional on $\lHt[-1]$, we have the algebraic identity
\[
\frac{
(\pi_j-e_{tj}(1))
(W_{tj}-e_{tj}(1))
}{
e_{tj}(1)(1-e_{tj}(1))
}
=
\frac{
\mathds{1}(W_{tj}=\pi_j)
}{
e_{tj}(W_{tj})
}
-1
=
\rho_{tj}(\ppi)-1.
\]
Substituting this identity into \cref{eq:centered_expansion} exactly recovers the unit-specific network-restricted weight $\rho_{ti}^{inf}(\ppi)$. Therefore, the proposed estimator $\widetilde{V}^{inf}_t(\ppi)$ can be equivalently written in matrix projection form:
\begin{equation}
\label{eq:network_projection_estimator}
\widetilde{V}^{inf}_t(\ppi)
=
\frac{1}{n}
\sum_{i=1}^n
\phi_i^{inf}\left(\ppi\left(\lHt[-1]\right)\right)^\top
\E\left[
\phi_i^{inf}(\bm W_t)
\phi_i^{inf}(\bm W_t)^\top
\mid
\lHt[-1]
\right]^{-1}
\phi_i^{inf}(\bm W_t)Y_{ti}.
\end{equation}

Finally, we evaluate the conditional expectation of this form. By consistency and \Cref{assump: network_model},
\[
Y_{ti}
=
Y_{ti}(\lWt[-1],\bm W_t)
=
\bm g_{ti}\left(\lHt[-1]\right)^\top
\phi_i^{inf}(\bm W_t).
\]
Thus,
\begin{align*}
    &\E\left[
    \widetilde V_t^{inf}(\ppi)
    \mid
    \lHt[-1]
    \right] \\
    &=
    \frac{1}{n}
    \sum_{i=1}^n
    \phi_i^{inf}\left(\ppi\left(\lHt[-1]\right)\right)^\top
    \E\left[
    \phi_i^{inf}(\bm W_t)
    \phi_i^{inf}(\bm W_t)^\top
    \mid
    \lHt[-1]
    \right]^{-1} \\
    &\qquad\qquad \times
    \E\left[
    \phi_i^{inf}(\bm W_t)
    \phi_i^{inf}(\bm W_t)^\top
    \bm g_{ti}\left(\lHt[-1]\right)
    \mid
    \lHt[-1]
    \right] \\
    &=
    \frac{1}{n}
    \sum_{i=1}^n
    \bm g_{ti}\left(\lHt[-1]\right)^\top
    \phi_i^{inf}\left(\ppi\left(\lHt[-1]\right)\right) \\
    &=
    \frac{1}{n}
    \sum_{i=1}^n
    Y_{ti}^{\ppi}\left(\lHt[-1]\right)
    =
    V_t^{\ppi}\left(\lHt[-1]\right).
\end{align*}
This completes the proof.
\end{proof}

We can further normalize these weights and construct the mixing estimator using a procedure similar to that described in \cref{subsec:single}.

\subsection{Relaxing the conditional independence assumption of the treatment assignment}
\label{subsec:block_dependence}

\Cref{assump: factor_ps} requires treatments to be assigned independently across units conditional on the observed history of past treatments, covariates and outcomes. 
This assumption may be violated in settings where spatial dependence influences joint treatment assignments. 

Here, we relax \Cref{assump: factor_ps}. We allow for units to be partitioned into groups such that treatment assignments can be dependent within a group but independent across groups.
Specifically, assume that the $n$ units can be partitioned into $B$ disjoint blocks, denoted as $\mathcal{B}_{1}, \dots, \mathcal{B}_{B}$. Let $\bm W_{t,\mathcal{B}_b}$ be the treatment vector for units in block $b$. 

\begin{assumption}[Factored block-level propensity score]\label{assump: block_ps} 
At time period $t$, the propensity score satisfies 
$$e_t(\bm w_t) = \prod_{b=1}^B e_{t,\mathcal{B}_b}(\bm w_{t,\mathcal{B}_b}),$$
where $e_{t,\mathcal{B}_b}(\bm w_{t,\mathcal{B}_b}) = \P(\bm W_{t,\mathcal{B}_b} = \bm w_{t,\mathcal{B}_b} \mid \lHt[-1])$. In addition, we assume strict joint positivity for each block: there exists $\eta > 0$ such that $e_{t,\mathcal{B}_b}(\bm w_{t,\mathcal{B}_b}) > \eta$ for all possible treatment combinations $\bm w_{t,\mathcal{B}_b} \in \{0,1\}^{|\mathcal{B}_b|}$.
\end{assumption}

Clearly, by setting $B = n$ such that each unit is its own block, \Cref{assump: block_ps} encompasses the factored propensity score \Cref{assump: factor_ps}.

We restrict our focus to the additive outcome model without interactions ($\beta=1$). Let $e_{t,\mathcal{B}_b} = \E[\bm W_{t,\mathcal{B}_b} \mid \lHt[-1]]$ be the vector of marginal propensity scores, and define the block covariance matrix $\bm D_{tb} = \mathrm{Var}(\bm W_{t,\mathcal{B}_b} \mid \lHt[-1])$. Note that \cref{assump: block_ps} ensures $\bm D_{tb}$ is non-singular. 

Under this relaxed assumption, we define the block-adjusted estimator $\widetilde{V}^{block}(\ppi) = \frac{1}{T} \sum_{t=1}^T \widetilde{V}_t^{block}(\ppi)$, where $\widetilde{V}_t^{block}(\ppi) = \frac{1}{n} \sum_{i=1}^n \rho_t^{block}(\ppi)Y_{ti}$ and the corresponding weights are:
\begin{equation}\label{eq: blockIPW_weight}
 \rho_t^{block}(\ppi) = 1 + \sum_{b=1}^B \big(\ppi_{\mathcal{B}_b}\left(\lHt[-1]\right) - e_{t,\mathcal{B}_b}\big)^\top \bm D_{tb}^{-1} \big(\bm W_{t,\mathcal{B}_b} - e_{t,\mathcal{B}_b}\big).
\end{equation}

\begin{theorem}[Conditional unbiasedness of the block-adjusted estimator]\label{thm: block_unbiased}
    Under Assumptions \ref{assump: unconfoundedness}, \ref{assump: additive_model} (with $\beta=1$), and \ref{assump: block_ps}, $\E\left[\widetilde{V}_t^{block}(\ppi) \mid \lHt[-1]\right] = V_t^{\ppi}\left(\lHt[-1]\right)$.
\end{theorem}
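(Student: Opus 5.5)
The plan is to mirror the proof of \cref{thm: network_unbiased}. First rewrite the block weight as a linear projection, then take the conditional expectation using the additive model with $\beta=1$.

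Fix $t$ and condition on $\lHt[-1]$; the policy decisions $\ppi(\lHt[-1])$ are then fixed. Stack the centered indicators as $\bm C_t=(\bm W_{t,\mathcal{B}_1}-e_{t,\mathcal{B}_1},\dots,\bm W_{t,\mathcal{B}_B}-e_{t,\mathcal{B}_B})$, so that $\bm C_t = \bm W_t - \bm e_t$ after reordering the units by block. Let $\bm Z_t(\bm w)=(1,(\bm w-\bm e_t)^\top)^\top$. Then $\phi^{(1)}(\bm w)=(1,\bm w^\top)^\top=\bm L_t\bm Z_t(\bm w)$ for an invertible lower-triangular $\bm L_t$. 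As in the network proof, the projection kernel $\phi^{(1)}(\ppi)^\top \E[\phi^{(1)}\phi^{(1)\top}\mid\lHt[-1]]^{-1}\phi^{(1)}(\bm W_t)$ is therefore invariant under this change of basis, and equals the same expression with $\bm Z_t$ in place of $\phi^{(1)}$.

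Next compute $\E[\bm Z_t(\bm W_t)\bm Z_t(\bm W_t)^\top\mid\lHt[-1]]$. Its first row and column are $(1,\bm 0)$ because $\E[\bm C_t\mid\lHt[-1]]=\bm 0$. The lower block is $\mathrm{Var}(\bm W_t\mid\lHt[-1])$. By \Cref{assump: block_ps}, blocks are conditionally independent, so cross-block covariances vanish and this matrix is block-diagonal $\mathrm{diag}(\bm D_{t1},\dots,\bm D_{tB})$. Each $\bm D_{tb}$ is invertible: joint positivity gives every $\bm w_{t,\mathcal{B}_b}$ mass at least $\eta$, so no nontrivial linear combination $\bm a^\top\bm W_{t,\mathcal{B}_b}$ can be a.s. constant. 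This is the one step that needs a small argument beyond algebra. Inverting blockwise and expanding $\bm Z_t(\ppi)^\top(\cdot)^{-1}\bm Z_t(\bm W_t)$ yields exactly $\rho_t^{block}(\ppi)$ in \cref{eq: blockIPW_weight}. Hence
\[
\widetilde V_t^{block}(\ppi)=\frac1n\sum_{i=1}^n \phi^{(1)}(\ppi(\lHt[-1]))^\top \E\left[\phi^{(1)}(\bm W_t)\phi^{(1)}(\bm W_t)^\top\mid\lHt[-1]\right]^{-1}\phi^{(1)}(\bm W_t)Y_{ti}.
\]

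Finally, by consistency and \Cref{assump: additive_model} with $\beta=1$, $Y_{ti}=\bm g_{ti}(\lHt[-1])^\top\phi^{(1)}(\bm W_t)$ with $\bm g_{ti}$ fixed given $\lHt[-1]$. Taking conditional expectations, the Gram matrix cancels with its inverse, leaving $\frac1n\sum_i \bm g_{ti}^\top\phi^{(1)}(\ppi)=\frac1n\sum_i Y^{\ppi}_{ti}(\lHt[-1])=V_t^{\ppi}(\lHt[-1])$. \Cref{assump: unconfoundedness} is used only to justify that conditioning on $\lHt[-1]$ leaves the potential-outcome functions $\bm g_{ti}$ fixed while $\bm W_t$ follows the propensity law.

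The main obstacle is the block-diagonalization and invertibility step. Once it is in place, the rest is the same linear-algebra identity as in \cref{lemma1} and \cref{thm: network_unbiased}.
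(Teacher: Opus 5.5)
Your proposal is correct and follows essentially the same route as the paper. Both pass to the centered basis $(1,(\bm W_t-\bm e_t)^\top)^\top$, use cross-block independence to make the Gram matrix block-diagonal, invert it blockwise to recover $\rho_t^{block}(\ppi)$, and then apply the $\beta=1$ additive model exactly as in the network-restricted proof. Your explicit change-of-basis invariance and your joint-positivity argument that each $\bm D_{tb}$ is nonsingular fill in steps the paper only asserts.
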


\begin{proof}
As in \cref{thm: network_unbiased}, we use a centered basis to evaluate the matrix inverse. Let the centered design vector for the entire cluster be $\bm Z^{block}_t = (1, \bm C_{t,\mathcal{B}_1}^\top, \dots, \bm C_{t,\mathcal{B}_B}^\top)^\top$, where $\bm C_{t,\mathcal{B}_b} = \bm W_{t,\mathcal{B}_b} - e_{t,\mathcal{B}_b}$.

By \Cref{assump: block_ps}, treatments are independent across blocks. Therefore, $\E[\bm C_{t,\mathcal{B}_b} \bm C_{t,\mathcal{B}_{b'}}^\top \mid \lHt[-1]] = \bm 0$ for all $b \neq b'$. Consequently, the expected outer product matrix $\E\left[\bm Z^{block}_t (\bm Z^{block}_t)^\top \mid \lHt[-1]\right]$ is the block-diagonal matrix:
$$
\E\left[\bm Z^{block}_t (\bm Z^{block}_t)^\top \mid \lHt[-1]\right] = 
\begin{bmatrix}
1 & \mathbf{0} & \dots & \mathbf{0} \\
\mathbf{0} & \bm D_{t1} & \dots & \mathbf{0} \\
\vdots & \vdots & \ddots & \vdots \\
\mathbf{0} & \mathbf{0} & \dots & \bm D_{tB}
\end{bmatrix}
$$
Because the matrix is block-diagonal, its inverse is simply the block-by-block inverse of the submatrices $\bm D_{tb}^{-1}$.

Since the expected outer product matrix is block-diagonal, its inverse decomposes perfectly across the independent blocks. When we evaluate the bilinear projection form by multiplying the partitioned vectors $\bm Z^{block}_t(\ppi)$ and $\bm Z^{block}_t(\bm W_t)$ against this block-diagonal inverse, the cross-block terms vanish, yielding the exact scalar expansion:
\begin{align*}
    &Z^{block}_t(\ppi)^\top \E\left[Z^{block}_t(\bm W_t) Z^{block}_t(\bm W_t)^\top \mid \lHt[-1]\right]^{-1} Z^{block}_t(\bm W_t) \\
    &= 1 \cdot 1 \cdot 1 + \sum_{b=1}^B \big(\ppi_{\mathcal{B}_b}\left(\lHt[-1]\right) - e_{t,\mathcal{B}_b}\big)^\top \bm D_{tb}^{-1} \bm C_{t,\mathcal{B}_b} \\
    &= 1 + \sum_{b=1}^B \big(\ppi_{\mathcal{B}_b}\left(\lHt[-1]\right) - e_{t,\mathcal{B}_b}\big)^\top \bm D_{tb}^{-1} \big(\bm W_{t,\mathcal{B}_b} - e_{t,\mathcal{B}_b}\big).
\end{align*}
This expansion matches the definition of the block-adjusted weight $\rho_t^{block}(\ppi)$ given in \cref{eq: blockIPW_weight}. Therefore, the estimator can be equivalently expressed in the matrix projection form. By substituting the outcome model $Y_{ti} = \bm g_{ti}\left(\lHt[-1]\right)^\top Z^{block}_t(\bm W_t)$ into this form and taking the conditional expectation, and following the exact same steps established in the proof of \cref{thm: network_unbiased}, we recover $V_t^{\ppi}\left(\lHt[-1]\right)$, which completes the proof of conditional unbiasedness.
\end{proof}

Based on the conditional unbiasedness, all subsequent theoretical results derived for the original estimator extend directly to $\widetilde{V}^{block}(\ppi)$ under \Cref{assump: block_ps}. The corresponding arguments follow by similar proofs as in \cref{a: Asymptotic_property}, and are therefore omitted.

Theoretically, the block-adjusted approach can accommodate higher-order interactions ($\beta > 1$) by expanding the block basis vector. However, doing so requires estimating higher-order joint propensity scores and inverting a fully dense covariance matrix whose dimension grows exponentially, rendering higher-order models practically infeasible.

\section{Additional simulation results}
\label{a:sec:additional_sim}

In this section, we present additional simulation results that complement \Cref{sec:simulation}. In \Cref{a:subsec:additional_policy_eval}, we provide policy evaluation results. In \Cref{a:subsec:additional_policy_learning}, we present additional policy learning results for $n=10$ and $n=100$. In \Cref{a:subsec:L_sensitivity}, we summarize the sensitivity analysis for the number of sampled candidate policies $L$ used in the bootstrap test.

\subsection{Results for policy evaluation}
\label{a:subsec:additional_policy_eval}

To evaluate the proposed estimator for policy evaluation, we consider two policies over two time periods: (i) the true optimal policy in the linear class for the $c = 0$ scenario, and (ii) the treat-nobody policy, which assigns no treatments in all scenarios. We compare six estimators: the proposed mixing estimator ($\hatVPS$), the H\'ajek-type estimator ($\hatVPS[H,]$), and the shifted estimator ($\hatVPS[S,]$), each implemented under either a linear additive model ($\bbeta = (1,1)$) or a quadratic additive model ($\bbeta = (2,2)$) for both time periods. We estimate the propensity scores based on the correctly specified propensity score model. 

\Cref{fig:mix_shift_M2} compares the finite-sample performance of the proposed mixing estimator $\hatVPS$ with that of the shifted estimator $\hatVPS[S,]$ for policy value estimation. \Cref{fig:policy_mix_shift_M2} reports estimated policy values based on different estimators. The results illustrate a clear bias--variance tradeoff in selecting the degree of interaction $\bbeta$ in the assumed outcome model. Estimators based on outcome models with two-way interactions exhibit higher variability, reflected in longer error bars, while those assuming a linear additive structure tend to be more stable. However, low-order outcome models may be misspecified, leading to biased estimation. This bias is particularly evident when the true outcome model contains strong interaction effects ($c=-10$), where estimators relying on no interaction terms (addIPW(mix) and addIPW(S)) exhibit noticeable bias in estimating Policy1, despite being nearly unbiased when $c=0$ or $c=-2$. Across all scenarios, the mixing estimator generally outperforms its shifted counterpart. The advantage is modest when assuming linear additive models, but becomes more pronounced under quadratic additive models, where additive IPW weights are more variable, especially with small $T$. When the number of time periods is very large ($T=1000$), the performance of the two estimators becomes nearly indistinguishable.

\Cref{fig:RMSE_mix_shift_M2} presents the RMSE of each estimator across the different scenarios. As $T$ increases, RMSE decreases for all methods, reflecting improved estimation accuracy with larger sample sizes. Estimators based on quadratic additive models consistently have higher RMSE than those using linear additive models, due to increased variability from the more flexible model specification. Under linear additive outcome assumptions, the mixing and shifted estimators perform similarly, with the mixing estimator having a slightly smaller RMSE in most settings. When using quadratic additive models, the mixing estimator yields smaller RMSE than the shifted estimator, particularly for small $T$, while the gap narrows as $T$ increases.

A comparison between the mixing estimator and the H\'ajek-type estimator is shown in \cref{fig:mix_hajek_M2}. Although the H\'ajek-type estimator performs well in most runs, it occasionally produces extreme estimates, resulting in unstable finite-sample behavior. In contrast, the mixing estimator remains substantially more stable across simulations.

We additionally report the policy evaluation results for $n=10$ and $n=100$ in \Cref{fig:mix_shift_M2_n10,fig:mix_hajek_M2_n10,fig:mix_shift_M2_n100,fig:mix_hajek_M2_n100}. Overall, the results are qualitatively similar across different values of $n$. The mixing estimator remains stable across scenarios and generally improves upon the shifted and H\'ajek-type estimators, especially when the quadratic additive model is used and the weights are more variable.

\begin{figure}[!t]
\centering
\subfloat[Estimated policy value]{
\includegraphics[width = \textwidth,trim = 0 0 0 15, clip]{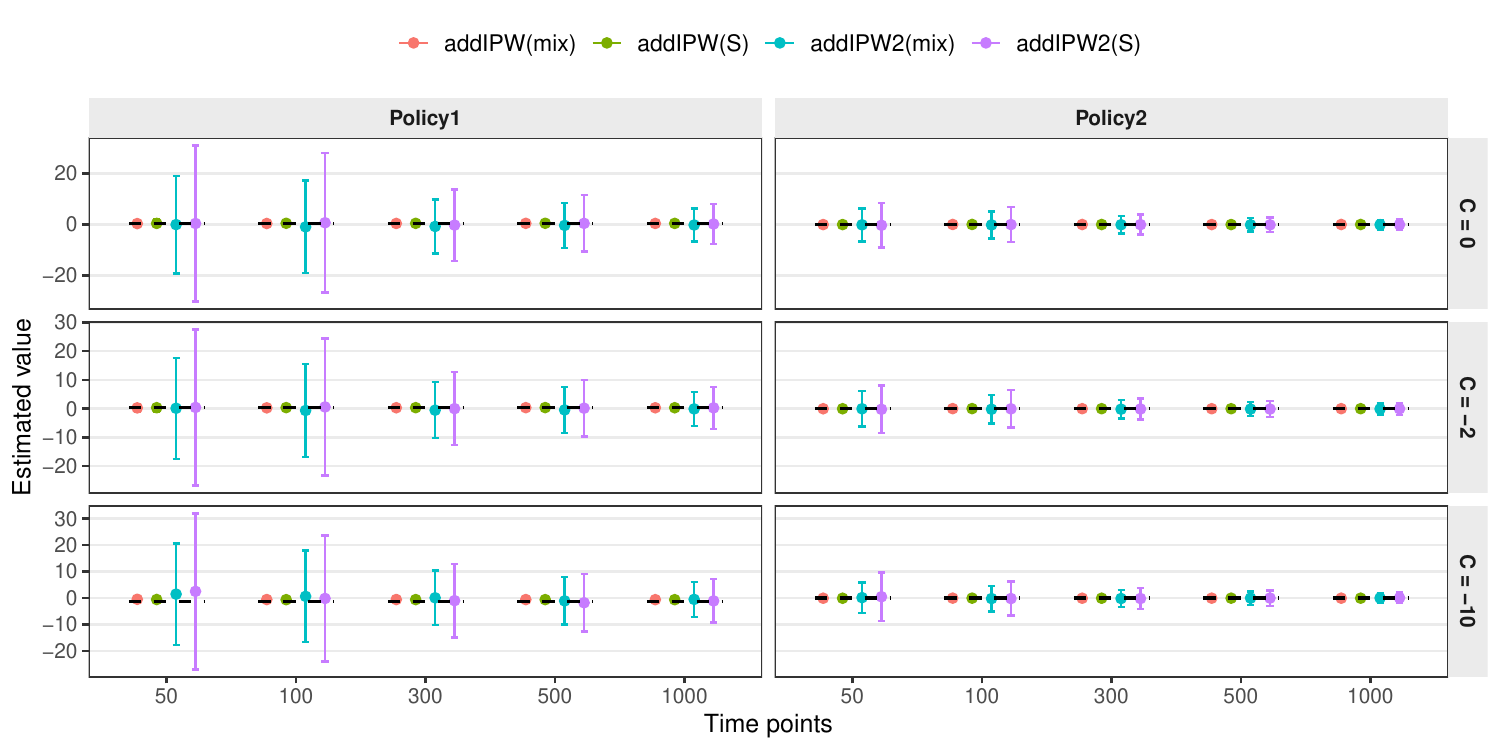}
\label{fig:policy_mix_shift_M2}
}  \\
\subfloat[RMSE]{
\includegraphics[width = \textwidth,trim = 0 0 0 35, clip]{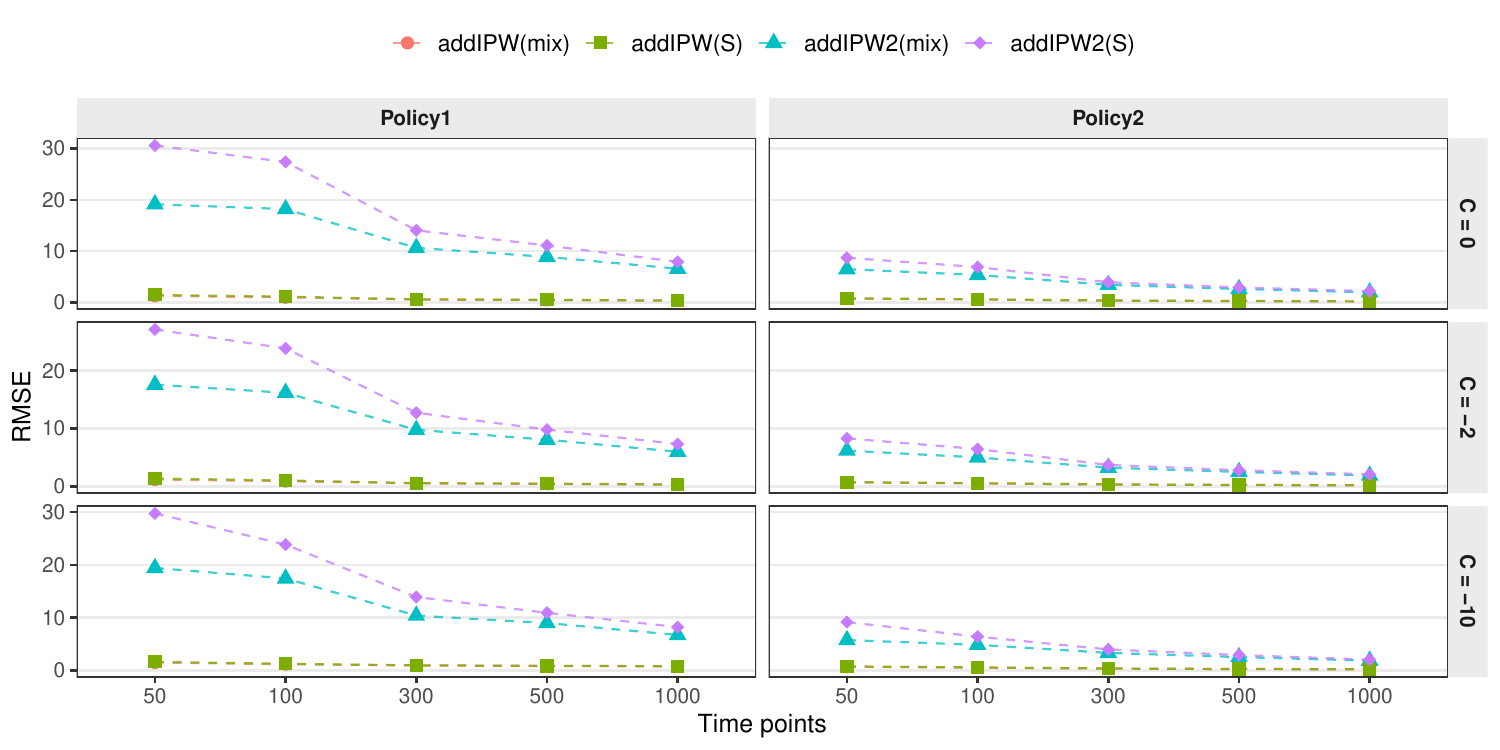}
\label{fig:RMSE_mix_shift_M2}
}
\caption{Performance of policy evaluation using the mixing estimator with $\bbeta=\bm 1$ (red) and $\bbeta=\bm 2$ (blue), and the shifted estimator with $\bbeta=\bm 1$ (green) and $\bbeta=\bm 2$ (purple), for $n=50$ across scenarios with different model specifications (rows) and two different policies (columns). Policy1 is the true optimal policy in the $c = 0$ scenario, and Policy2 is the treat-nobody policy. Panel~(a) reports the estimated policy values. The dots represent the average estimate across simulations, and the lines show one standard deviation above and below the mean. The true policy value is indicated by the dashed black line. Panel~(b) reports the RMSE of each estimator.}
\label{fig:mix_shift_M2}
\end{figure}

\begin{figure}[!t]
\centering
\subfloat[Estimated policy value]{
\includegraphics[width = \textwidth,trim = 0 0 0 15, clip]{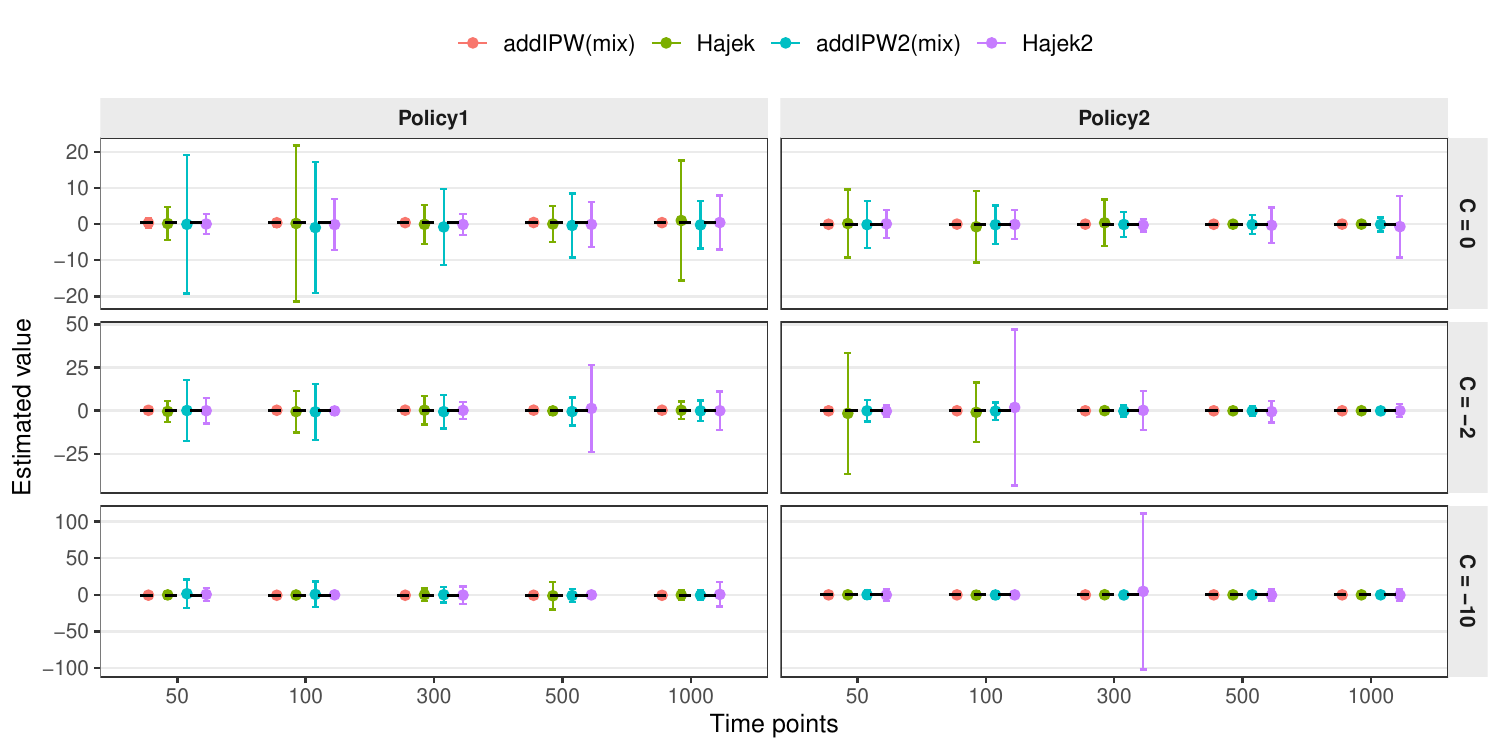}
\label{fig:policy_mix_hajek_M2}
}  \\
\subfloat[RMSE]{
\includegraphics[width = \textwidth,trim = 0 0 0 35, clip]{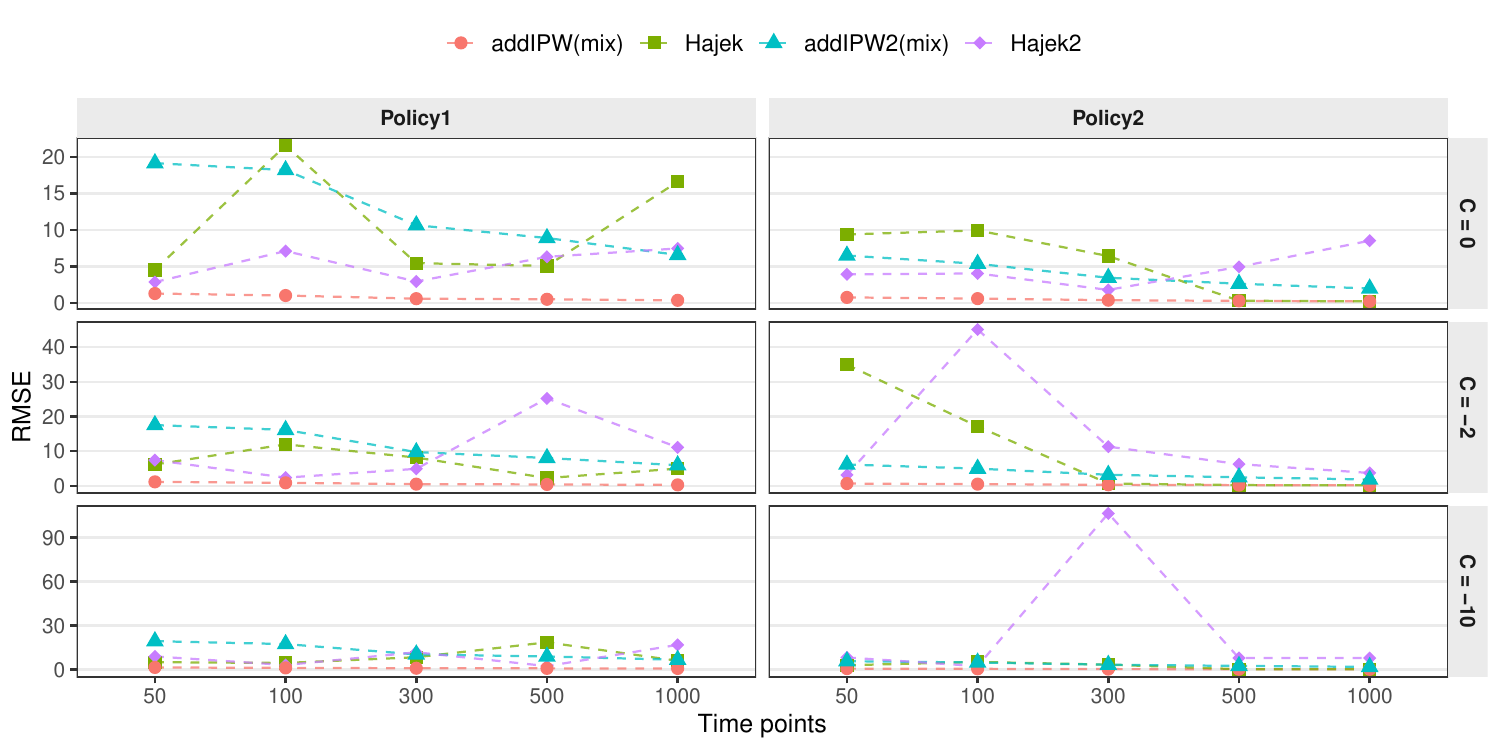}
\label{fig:RMSE_mix_hajek_M2}
}
\caption{Performance of policy evaluation using the mixing estimator with $\bbeta=\bm 1$ (red) and $\bbeta=\bm 2$ (blue), and the H\'ajek-type estimator with $\bbeta=\bm 1$ (green) and $\bbeta=\bm 2$ (purple), for $n=50$ across scenarios with different model specifications (rows) and two different policies (columns). Policy1 is the true optimal policy in the $c = 0$ scenario, and Policy2 is the treat-nobody policy. Panel~(a) reports the estimated policy values. The dots represent the average estimate across simulations, and the lines show one standard deviation above and below the mean. The true policy value is indicated by the dashed black line. Panel~(b) reports the RMSE of each estimator.}
\label{fig:mix_hajek_M2}
\end{figure}

\begin{figure}[!t]
\centering
\subfloat[Estimated policy value]{
\includegraphics[width = \textwidth,trim = 0 0 0 15, clip]{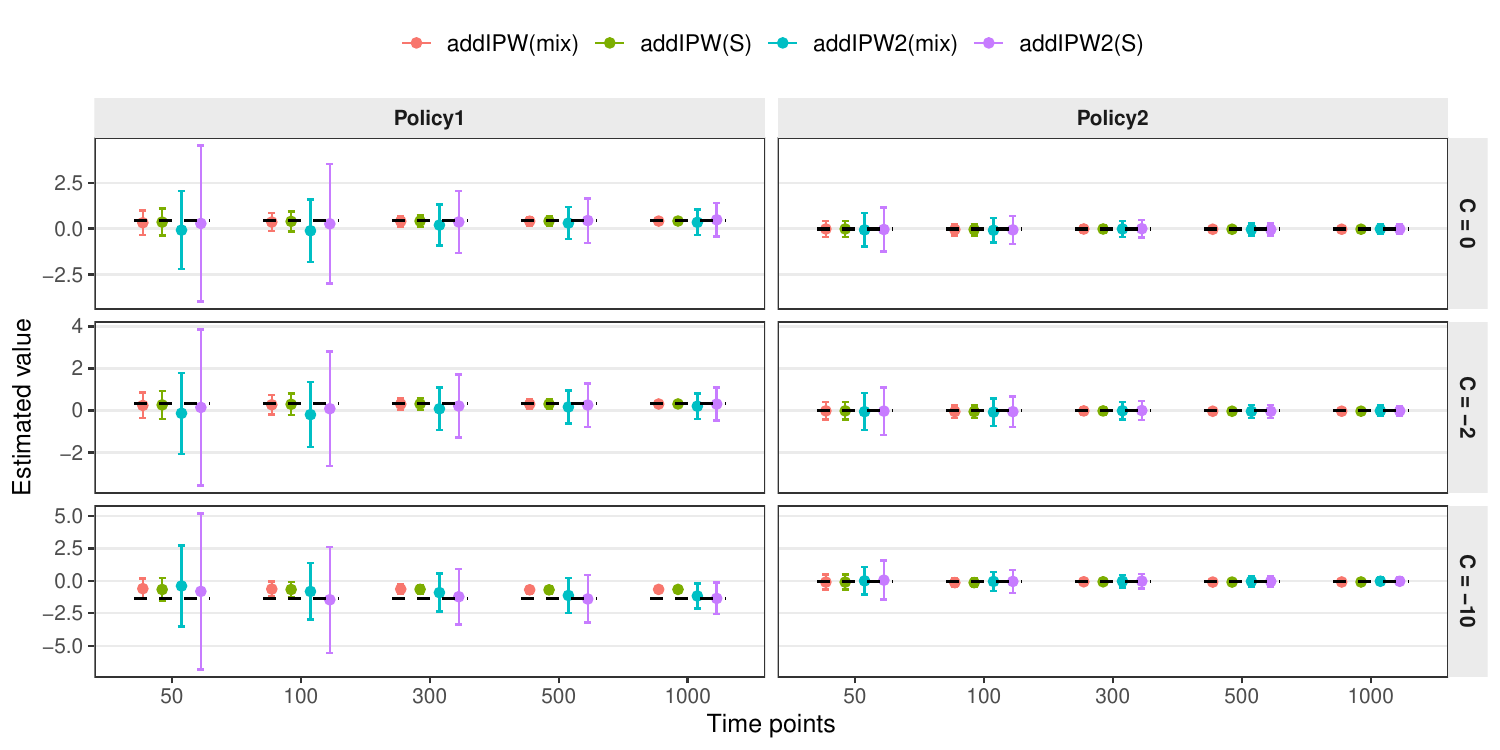}
\label{fig:policy_mix_shift_M2_n10}
}  \\
\subfloat[RMSE]{
\includegraphics[width = \textwidth,trim = 0 0 0 35, clip]{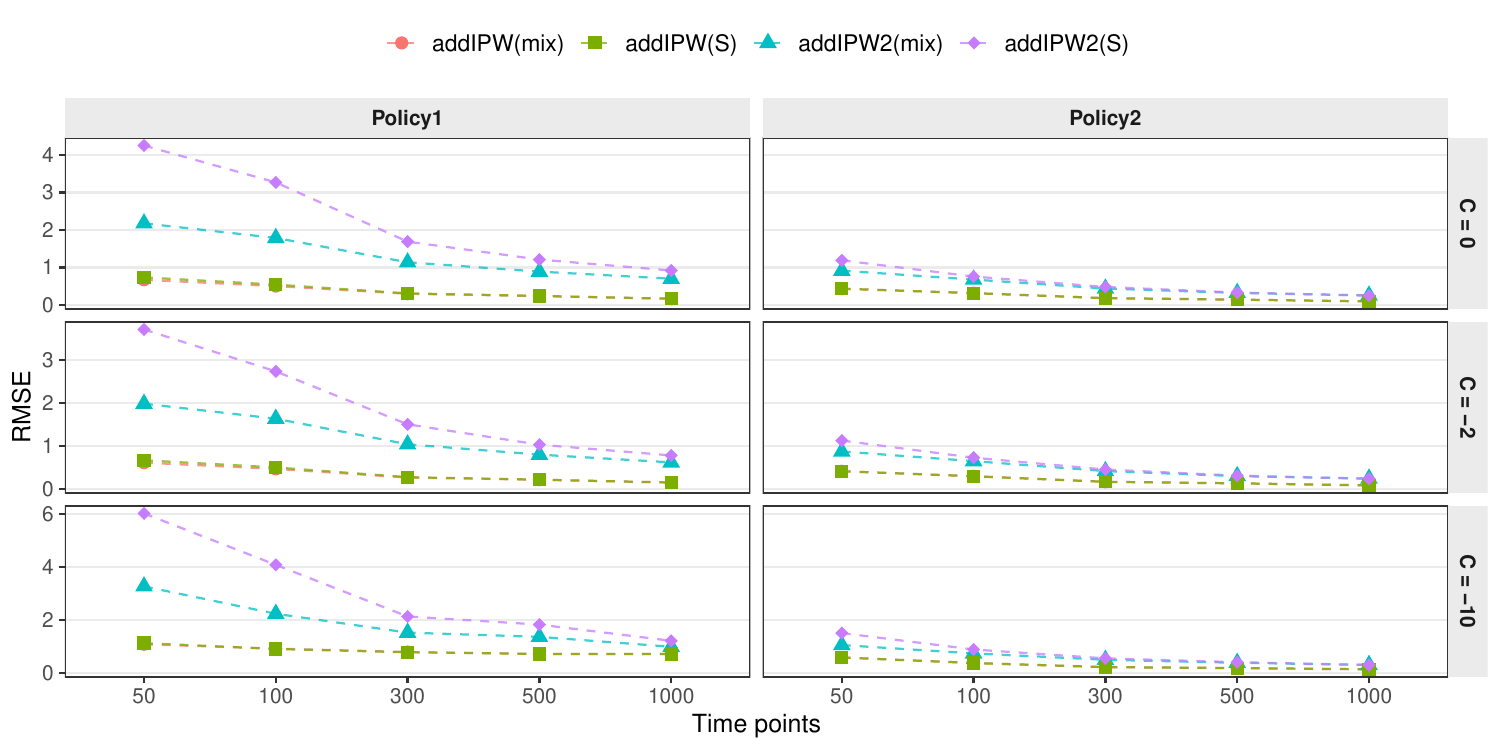}
\label{fig:RMSE_mix_shift_M2_n10}
}
\caption{Performance of policy evaluation using the mixing estimator and the shifted estimator for $n=10$. The plotting convention is the same as in \Cref{fig:mix_shift_M2}.}
\label{fig:mix_shift_M2_n10}
\end{figure}

\begin{figure}[!t]
\centering
\subfloat[Estimated policy value]{
\includegraphics[width = \textwidth,trim = 0 0 0 15, clip]{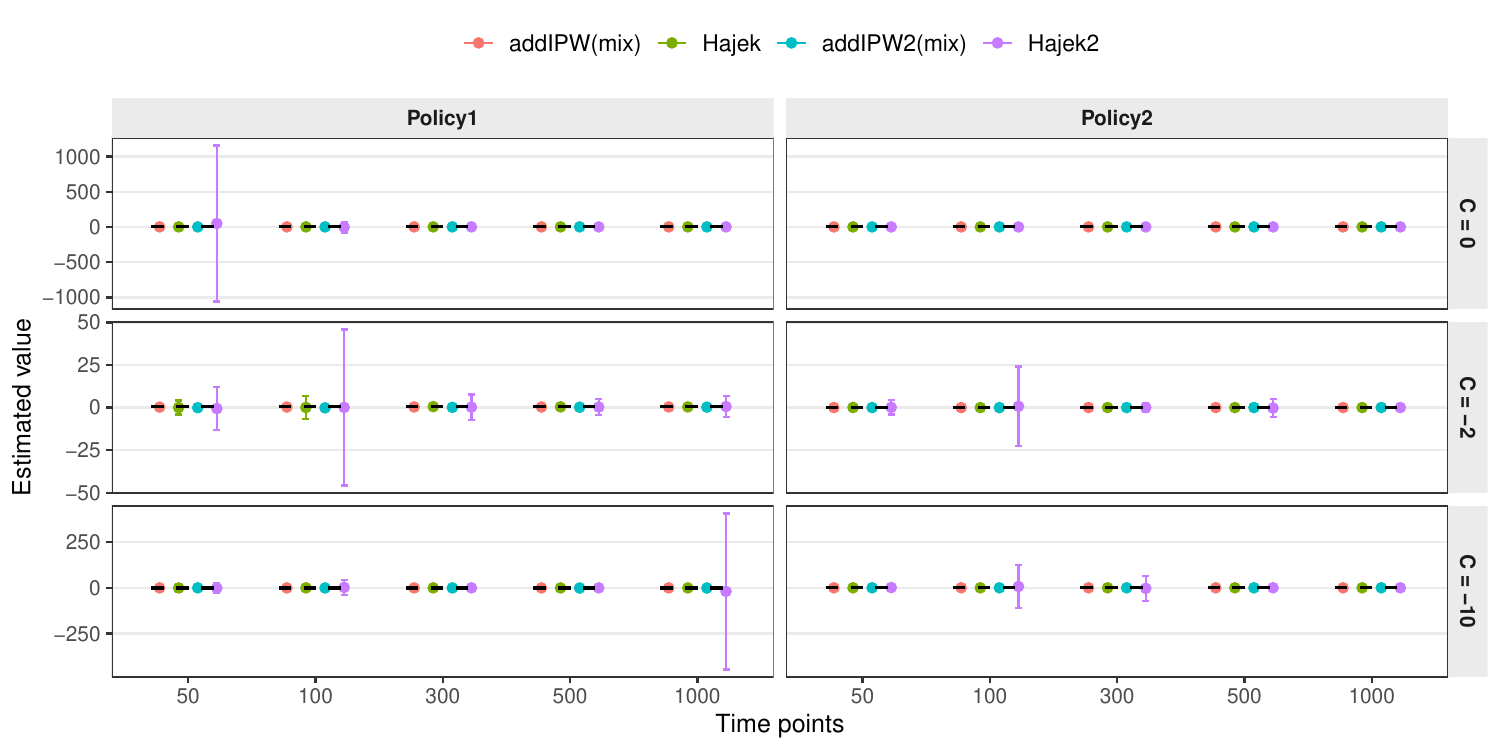}
\label{fig:policy_mix_hajek_M2_n10}
}  \\
\subfloat[RMSE]{
\includegraphics[width = \textwidth,trim = 0 0 0 35, clip]{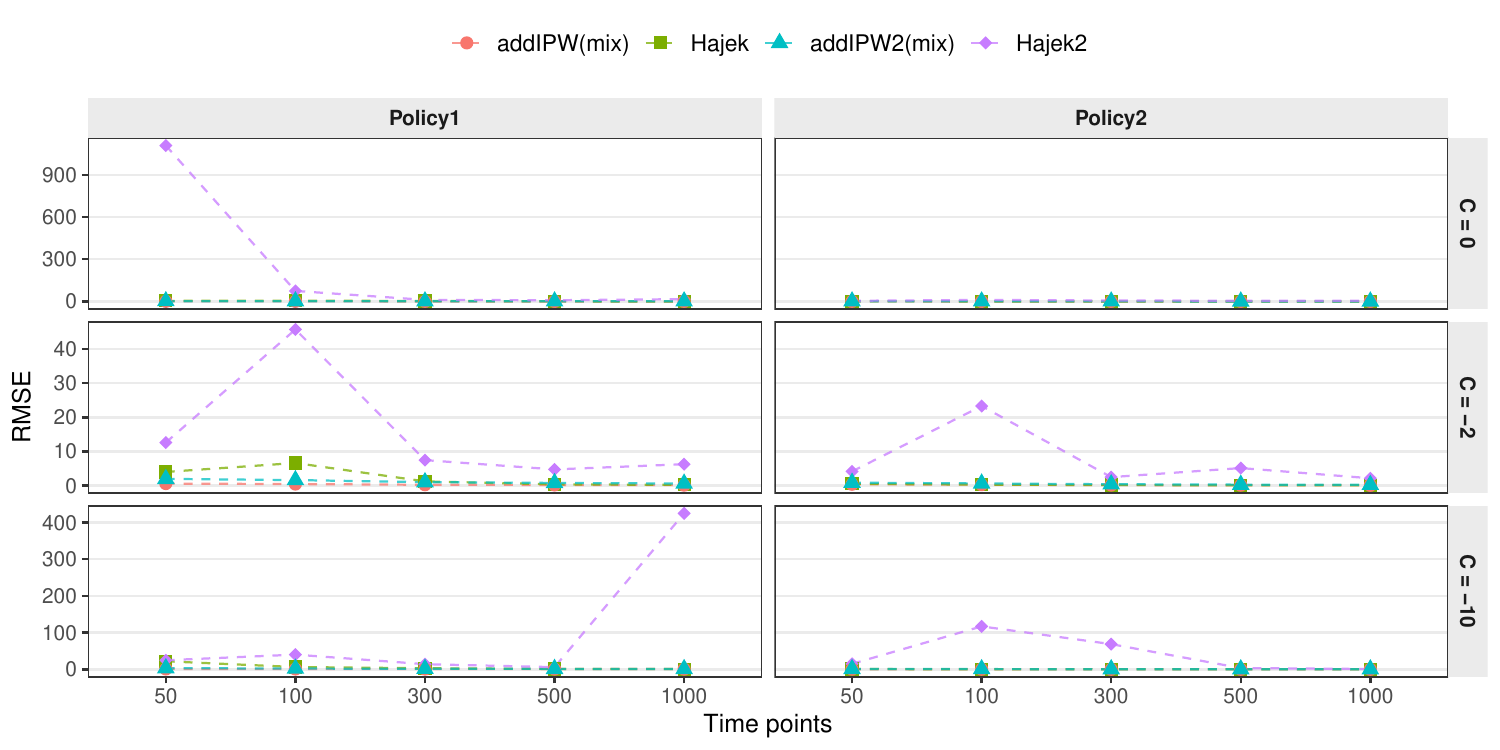}
\label{fig:RMSE_mix_hajek_M2_n10}
}
\caption{Performance of policy evaluation using the mixing estimator and the H\'ajek-type estimator for $n=10$. The plotting convention is the same as in \Cref{fig:mix_hajek_M2}.}
\label{fig:mix_hajek_M2_n10}
\end{figure}

\begin{figure}[!t]
\centering
\subfloat[Estimated policy value]{
\includegraphics[width = \textwidth,trim = 0 0 0 15, clip]{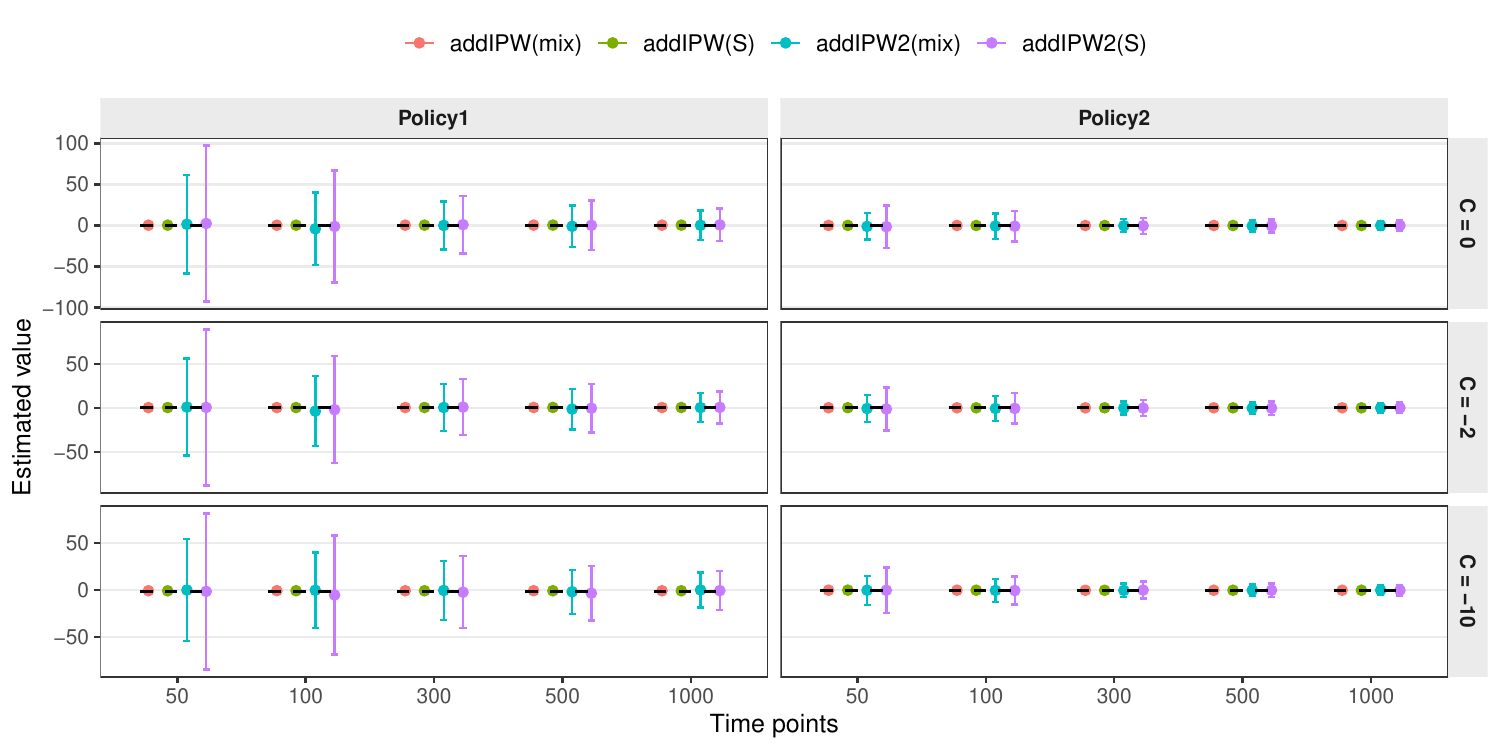}
\label{fig:policy_mix_shift_M2_n100}
}  \\
\subfloat[RMSE]{
\includegraphics[width = \textwidth,trim = 0 0 0 35, clip]{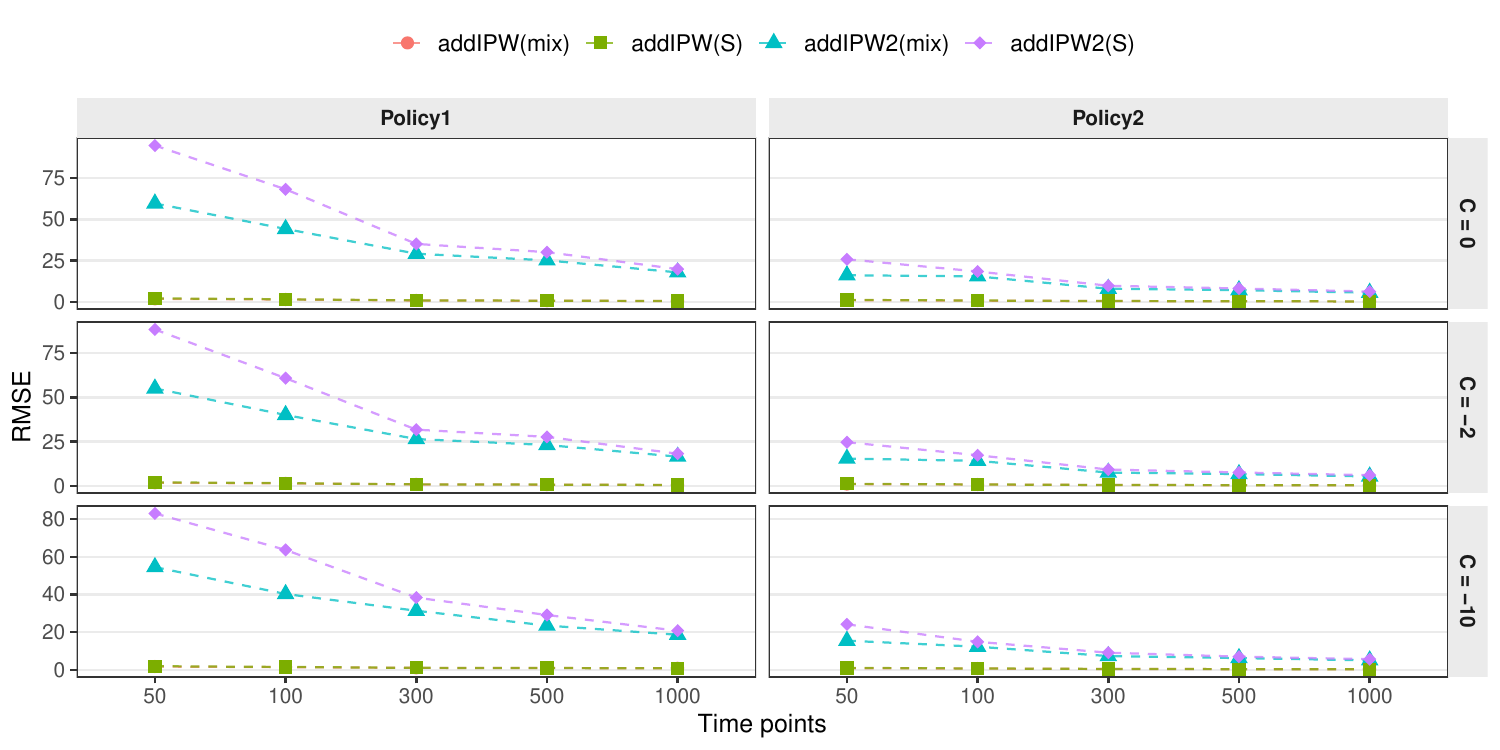}
\label{fig:RMSE_mix_shift_M2_n100}
}
\caption{Performance of policy evaluation using the mixing estimator and the shifted estimator for $n=100$. The plotting convention is the same as in \Cref{fig:mix_shift_M2}.}
\label{fig:mix_shift_M2_n100}
\end{figure}

\begin{figure}[!t]
\centering
\subfloat[Estimated policy value]{
\includegraphics[width = \textwidth,trim = 0 0 0 15, clip]{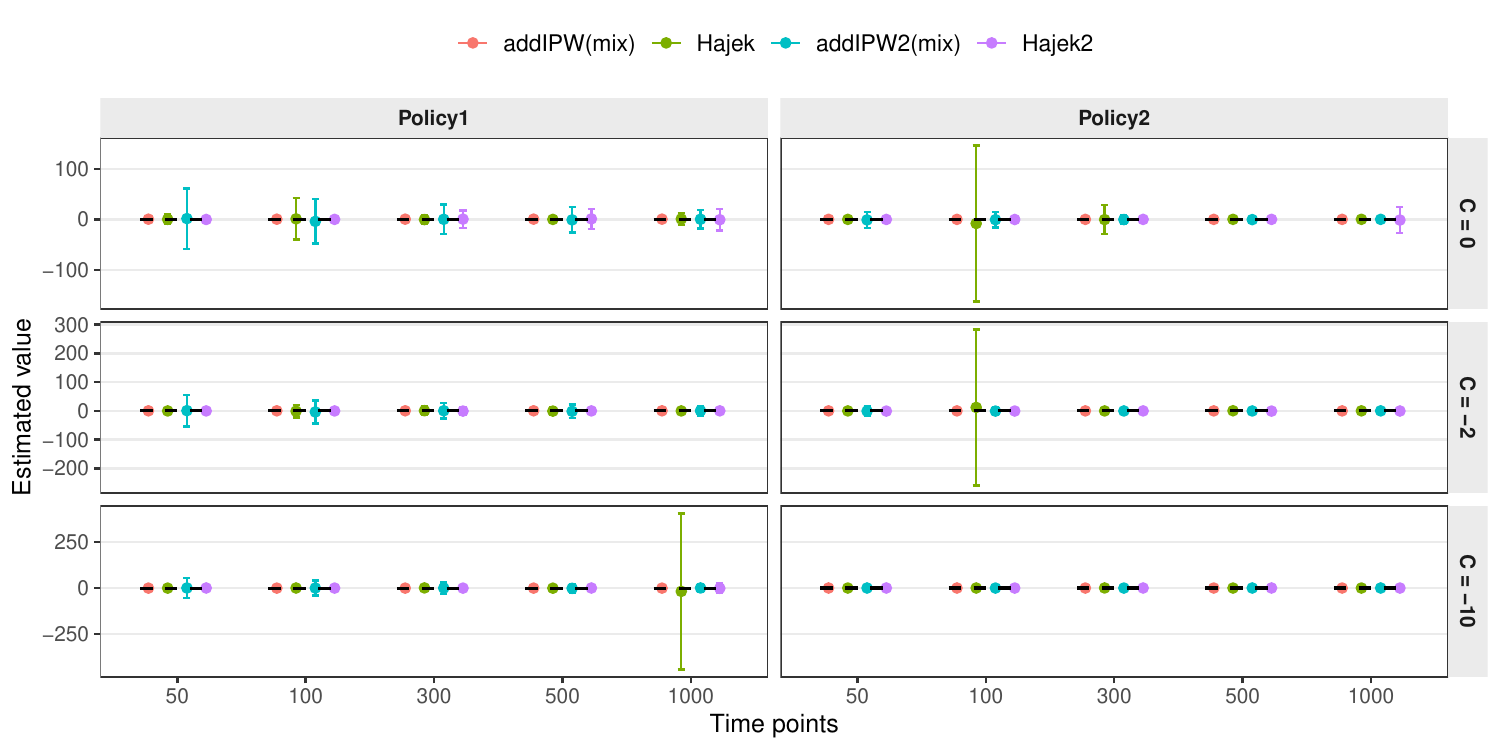}
\label{fig:policy_mix_hajek_M2_n100}
}  \\
\subfloat[RMSE]{
\includegraphics[width = \textwidth,trim = 0 0 0 35, clip]{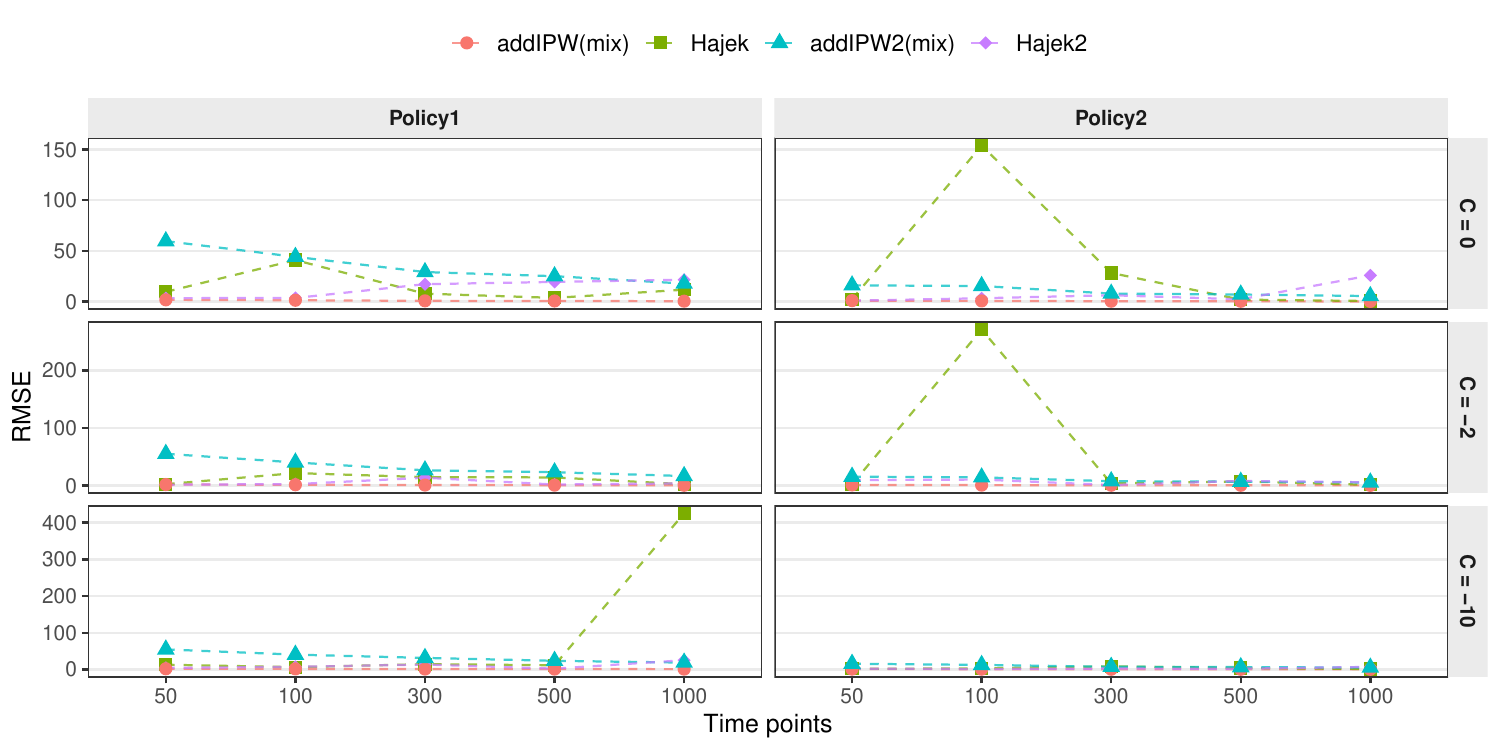}
\label{fig:RMSE_mix_hajek_M2_n100}
}
\caption{Performance of policy evaluation using the mixing estimator and the H\'ajek-type estimator for $n=100$. The plotting convention is the same as in \Cref{fig:mix_hajek_M2}.}
\label{fig:mix_hajek_M2_n100}
\end{figure}

\subsection{Additional results for policy learning}
\label{a:subsec:additional_policy_learning}

Although the H\'ajek estimator based on standard IPW weights is generally not feasible in policy evaluation due to the degeneracy of the average IPW weights toward zero as the number of units $n$ increases, it can still be implemented for policy learning using the stochastic policy approximation introduced in \cref{subsec: find_optimal_policy}. \Cref{fig:sim_res_M1M2_Hajek_compare_estPS} provides the corresponding comparison between the mixing estimators based on additive IPW weights and the standard H\'ajek estimator for $n=50$. The mixing estimators consistently outperform the standard H\'ajek estimator.

We also report policy learning results for $n=10$ and $n=100$ in \Cref{fig:sim_res_M1M2_mix_estPS_n10,fig:sim_res_M1M2_mix_estPS_n100}. The qualitative patterns are similar to those in the main text. When $c=0$ or $c=-2$, the estimators based on lower-order outcome models generally perform well because of their smaller variance. When $c=-10$, estimators allowing second-order interactions become more favorable as $T$ increases. The test-based estimator approaches the best-performing estimator more quickly for $n=10$, where the test has higher power. For $n=100$, the improvement is slower in the strong-interaction setting, reflecting the lower power of the test when the additive IPW weights are more variable.

\begin{figure}[!t]
    \centering
    \includegraphics[width=\linewidth]{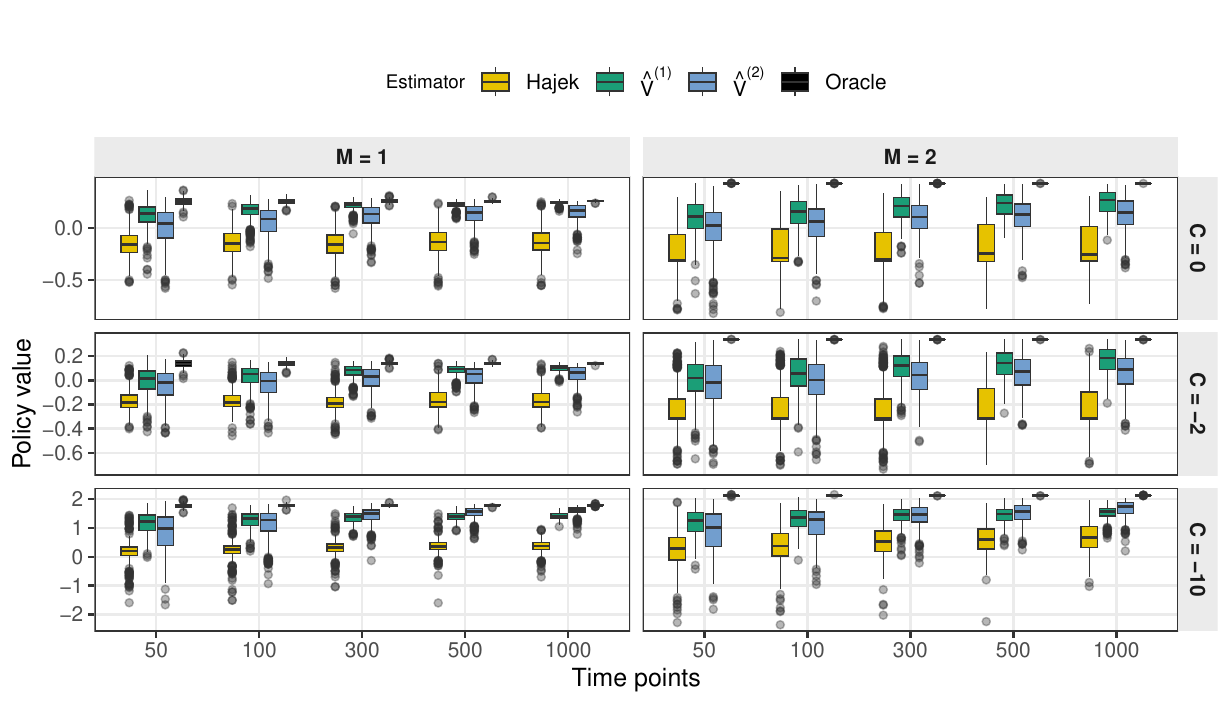}
    \caption{Boxplots of the policy value $V^{\hat\ppi_{\hat\ttheta}}$ under the learned policies for $n=50$, comparing the standard H\'ajek estimator with the mixing estimators based on additive IPW weights. We consider estimators assuming a linear additive outcome model ($\hatVPS[][(\bm 1)]$, green), a quadratic additive outcome model ($\hatVPS[][(\bm 2)]$, blue) for all time periods, the standard H\'ajek estimator (yellow), and the oracle policy (black).}
    \label{fig:sim_res_M1M2_Hajek_compare_estPS}
\end{figure}

\begin{figure}[!t]
    \centering
    \includegraphics[width=\linewidth]{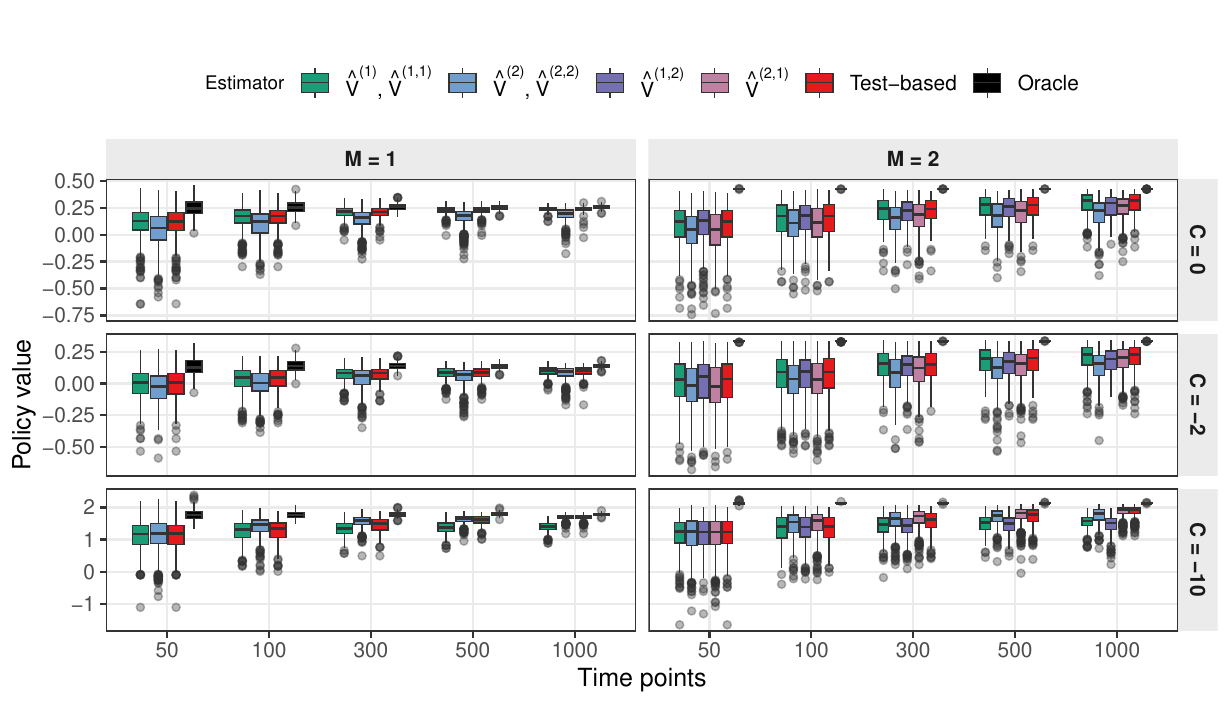}
    \caption{Boxplots of the policy value $V^{\hat\ppi_{\hat\ttheta}}$ under the learned policies for $n=10$, based on different estimators, across different model specifications (rows) and policies of different lengths (columns). The plotting convention is the same as in \Cref{fig:sim_res_M1M2_mix_estPS}.}
    \label{fig:sim_res_M1M2_mix_estPS_n10}
\end{figure}

\begin{figure}[!t]
    \centering
    \includegraphics[width=\linewidth]{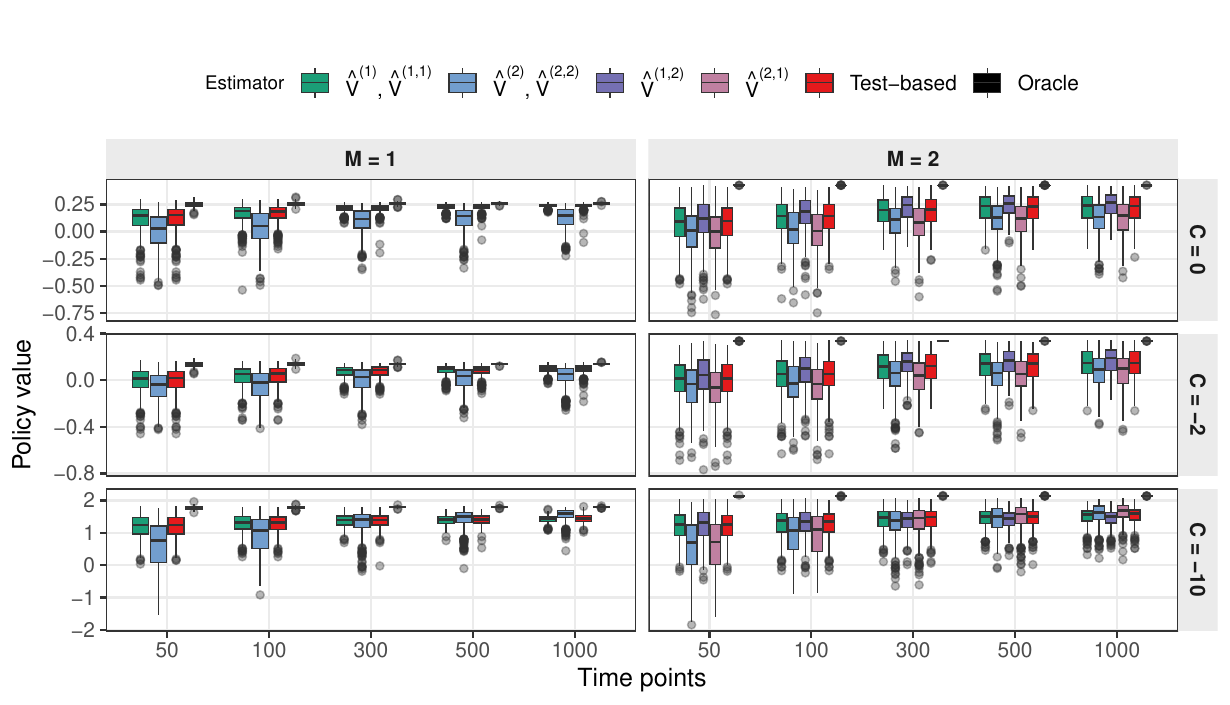}
    \caption{Boxplots of the policy value $V^{\hat\ppi_{\hat\ttheta}}$ under the learned policies for $n=100$, based on different estimators, across different model specifications (rows) and policies of different lengths (columns). The plotting convention is the same as in \Cref{fig:sim_res_M1M2_mix_estPS}.}
    \label{fig:sim_res_M1M2_mix_estPS_n100}
\end{figure}

\begin{figure}[!t]
    \centering
    \includegraphics[width=\linewidth]{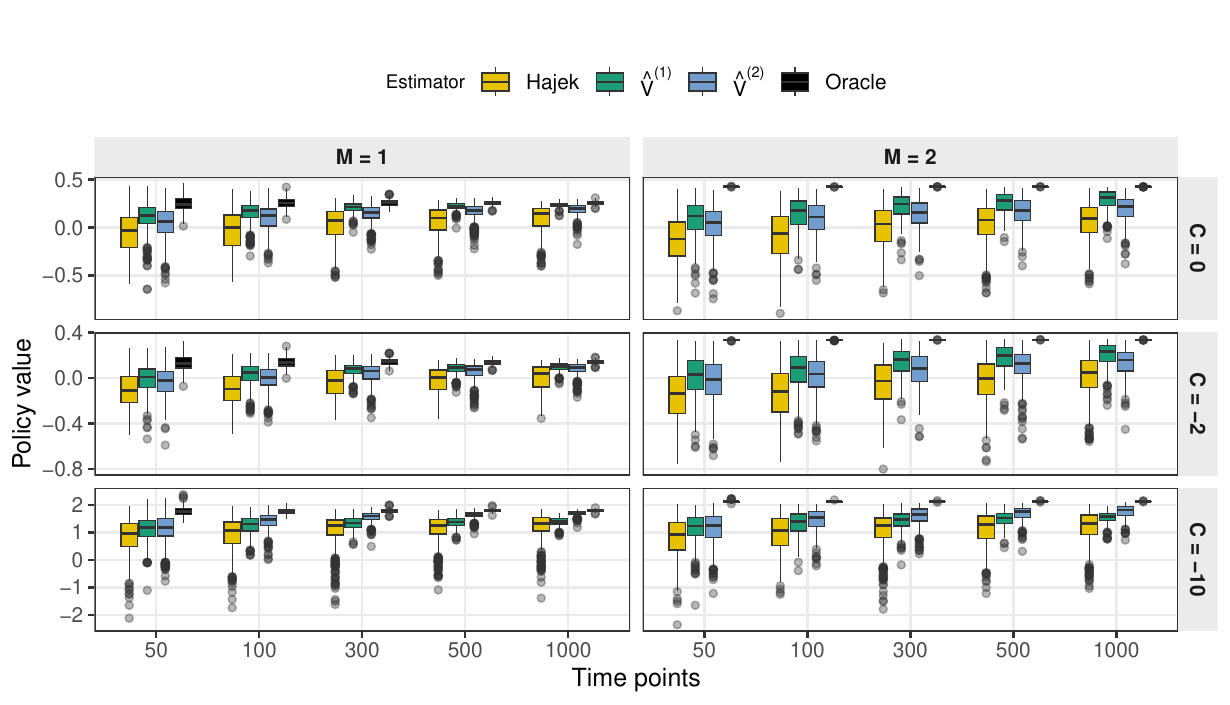}
    \caption{Comparison between the standard H\'ajek estimator and the mixing estimators based on additive IPW weights for policy learning with $n=10$. The plotting convention is the same as in \Cref{fig:sim_res_M1M2_Hajek_compare_estPS}.}
    \label{fig:sim_res_M1M2_Hajek_compare_estPS_n10}
\end{figure}

\begin{figure}[!t]
    \centering
    \includegraphics[width=\linewidth]{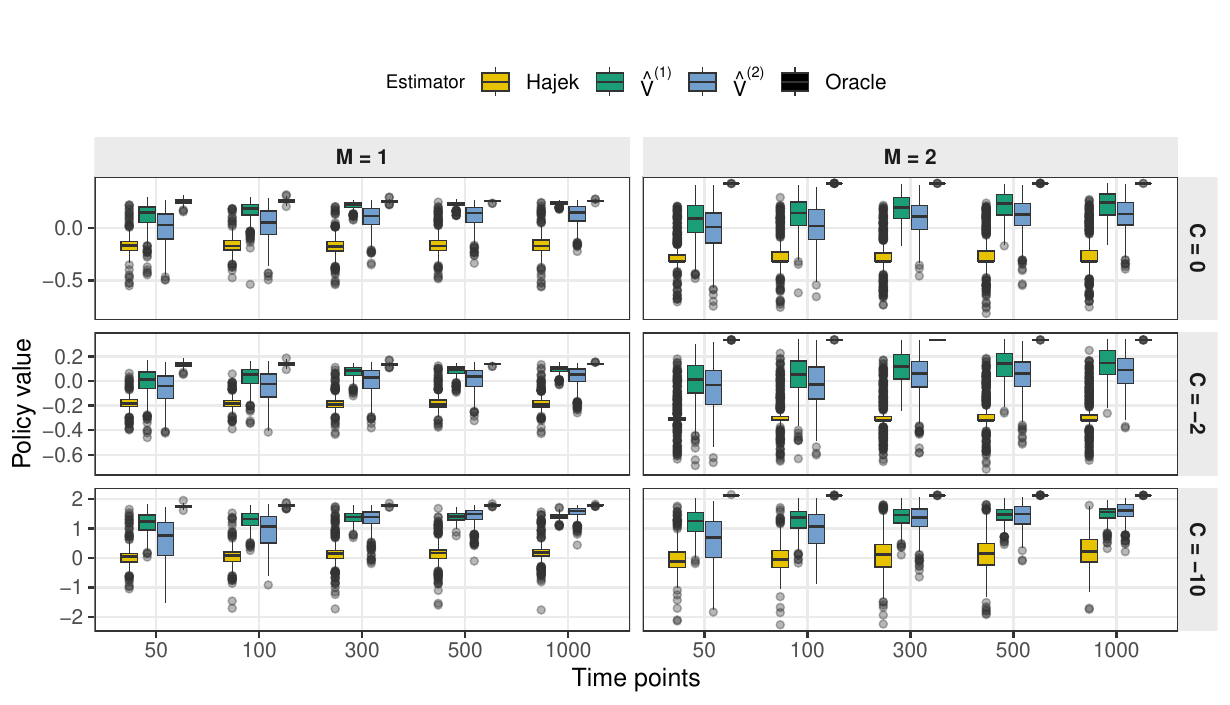}
    \caption{Comparison between the standard H\'ajek estimator and the mixing estimators based on additive IPW weights for policy learning with $n=100$. The plotting convention is the same as in \Cref{fig:sim_res_M1M2_Hajek_compare_estPS}.}
    \label{fig:sim_res_M1M2_Hajek_compare_estPS_n100}
\end{figure}

\subsection{Sensitivity analysis for the number of candidate policies}
\label{a:subsec:L_sensitivity}

The proposed bootstrap test is implemented by sampling $L$ candidate policies from the policy class. The value of $L$ should be large enough so that the sampled policies provide adequate coverage of the policy class, but the simulation results suggest that the performance of the test is not very sensitive to the exact value of $L$ once $L$ is reasonably large.

In \Cref{tab:L_sensitivity}, we report the rejection rate of the bootstrap test for $L\in\{500,1000,2000\}$. The rejection rate is defined as the proportion of simulations in which at least one of the sequential tests rejects the null hypothesis at level $\alpha=0.1$. Across all scenarios, the rejection rates are very similar for the three choices of $L$. For example, when there is no interaction ($c=0$), the rejection rates remain below the chosen $\alpha$ value, and range from approximately 0.004 to 0.073 across all values of $n$, $T$, and $L$. Under moderate interactions ($c=-2$), the rejection rates are also relatively low, with the largest values around 0.11 for $n=10$ and $T=1000$. Under strong interactions ($c=-10$), the rejection rate increases with $T$, especially when $n$ is smaller: for $n=10$, it increases from about 0.04 at $T=50$ to nearly 1 at $T=1000$, while for $n=50$ it increases from about 0.002 at $T=50$ to about 0.78 at $T=1000$. For $n=100$, the rejection rate is lower, reaching about 0.33 at $T=1000$. These results show that increasing $L$ from 500 to 2000 has little effect on the rejection rate, while the power of the test is more strongly affected by the strength of the interaction, the number of time periods, and the variability induced by larger $n$.

\begin{table}[p]
\centering
\caption{Sensitivity analysis for the number of sampled candidate policies $L$. Entries report the rejection rate of the bootstrap test, defined as the proportion of simulations in which at least one sequential test rejects the null hypothesis at level $\alpha=0.1$.}
\label{tab:L_sensitivity}
\fontsize{9}{11}\selectfont
\begin{tabular}[t]{cccrrr}
\toprule
Scenario & $n$ & $T$ & $L=500$ & $L=1000$ & $L=2000$ \\
\midrule
\multirow{15}{*}{$c=0$}
& \multirow{5}{*}{10}
& 50   & 0.028 & 0.028 & 0.030 \\
& & 100  & 0.050 & 0.050 & 0.050 \\
& & 300  & 0.066 & 0.042 & 0.058 \\
& & 500  & 0.058 & 0.073 & 0.066 \\
& & 1000 & 0.066 & 0.073 & 0.072 \\
& \multirow{5}{*}{50}
& 50   & 0.006 & 0.006 & 0.004 \\
& & 100  & 0.026 & 0.030 & 0.030 \\
& & 300  & 0.018 & 0.016 & 0.022 \\
& & 500  & 0.042 & 0.040 & 0.040 \\
& & 1000 & 0.050 & 0.055 & 0.056 \\
& \multirow{5}{*}{100}
& 50   & 0.006 & 0.005 & 0.004 \\
& & 100  & 0.006 & 0.007 & 0.008 \\
& & 300  & 0.025 & 0.028 & 0.023 \\
& & 500  & 0.023 & 0.026 & 0.021 \\
& & 1000 & 0.029 & 0.026 & 0.035 \\
\midrule
\multirow{15}{*}{$c=-2$}
& \multirow{5}{*}{10}
& 50   & 0.022 & 0.020 & 0.018 \\
& & 100  & 0.058 & 0.059 & 0.050 \\
& & 300  & 0.094 & 0.090 & 0.082 \\
& & 500  & 0.084 & 0.092 & 0.084 \\
& & 1000 & 0.114 & 0.106 & 0.106 \\
& \multirow{5}{*}{50}
& 50   & 0.018 & 0.018 & 0.018 \\
& & 100  & 0.032 & 0.030 & 0.028 \\
& & 300  & 0.024 & 0.024 & 0.024 \\
& & 500  & 0.040 & 0.036 & 0.040 \\
& & 1000 & 0.062 & 0.068 & 0.054 \\
& \multirow{5}{*}{100}
& 50   & 0.004 & 0.005 & 0.006 \\
& & 100  & 0.010 & 0.007 & 0.008 \\
& & 300  & 0.033 & 0.033 & 0.031 \\
& & 500  & 0.027 & 0.030 & 0.027 \\
& & 1000 & 0.037 & 0.028 & 0.039 \\
\midrule
\multirow{15}{*}{$c=-10$}
& \multirow{5}{*}{10}
& 50   & 0.042 & 0.045 & 0.044 \\
& & 100  & 0.186 & 0.188 & 0.168 \\
& & 300  & 0.728 & 0.664 & 0.712 \\
& & 500  & 0.932 & 0.933 & 0.932 \\
& & 1000 & 0.998 & 1.000 & 1.000 \\
& \multirow{5}{*}{50}
& 50   & 0.002 & 0.002 & 0.002 \\
& & 100  & 0.012 & 0.012 & 0.014 \\
& & 300  & 0.110 & 0.111 & 0.114 \\
& & 500  & 0.330 & 0.298 & 0.320 \\
& & 1000 & 0.774 & 0.789 & 0.784 \\
& \multirow{5}{*}{100}
& 50   & 0.002 & 0.000 & 0.000 \\
& & 100  & 0.008 & 0.005 & 0.008 \\
& & 300  & 0.037 & 0.042 & 0.041 \\
& & 500  & 0.095 & 0.091 & 0.098 \\
& & 1000 & 0.336 & 0.337 & 0.330 \\
\bottomrule
\end{tabular}
\end{table}

\section{Additional application results}\label{a:sec:additional_app}

\subsection{Covariates balance}\label{a: subsec:balance}

To assess covariate balance after estimation of the propensity score model, we computed the absolute standardized differences  for each  variable.  A widely used rule of thumb considers covariate balance acceptable when the ASD is below 0.1.

Figure \ref{fig:ASD} plots the ASDs for all covariates. Most covariates fall below the 0.1 criterion, indicating that the propensity score model achieves satisfactory balance. However, three variables—total amount of aid received in the previous week, distance to the main route, and distance to river—slightly exceed the 0.1 threshold (by a small margin), suggesting minor imbalance.

Although these deviations indicate that perfect balance was not achieved for these specific covariates, their magnitude remains modest. In practice, slight exceedances of the 0.1 threshold are not uncommon and generally do not compromise causal inference, especially when the overall balance profile is strong. Furthermore, given that these imbalances are small and limited to only a few covariates, we expect them to have minimal impact on our estimated treatment effects.

\begin{figure}
    \centering
    \includegraphics[width=0.8\linewidth]{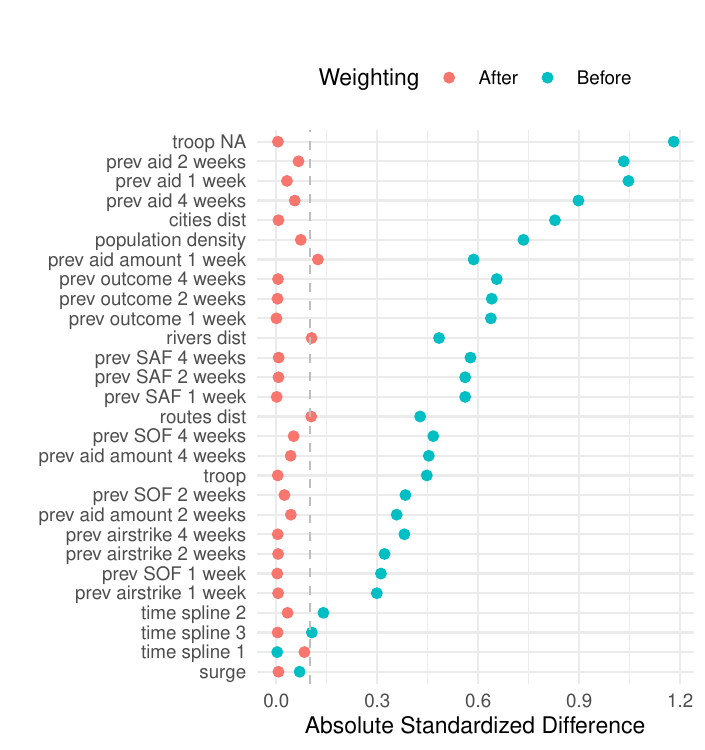}
    \caption{Absolute standardized differences for covariates in the propensity score model. Most covariates have ASDs below the conventional threshold of 0.1. Three variables: total amount of aid received in the previous week, distance to the main route, and distance to river slightly exceed this threshold, indicating minor imbalance}
    \label{fig:ASD}
\end{figure}

\subsection{Learned policy coefficients for the IED analysis}\label{a: subsec:coeff}
For the IED analysis, the covariates are
\[
\bm X^{P}_{ti} = \left(1,W_{t-1,i},\text{Airstrike}_{t-1,i}, \text{IED}_{t-1,i}, \text{Troop}_{t-1,i},\text{Troop (miss.)}_{t-1,i}\right),
\]
and we use a similar specification for the SAF analysis reported in \cref{a:subsec:SAF}. 
\Cref{tab:policy_coefficients_ied} reports the learned policy coefficients across the four policy periods when $M=4$. The table also reports the expected share of treated districts under the learned policy. In the policy learning step, the absolute value of each coefficient is constrained to be no larger than 100. The observed aid allocation treats about 41.4\% of districts on average across weeks. The learned policy treats a smaller share of districts, ranging from 23.7\% to 27.9\% across the four periods.

The coefficient for the log of the number of IED attacks in the previous week is equal to -100 in all four periods. This is the most stable pattern in the table. The coefficient for U.S. troop density is positive throughout. In contrast, the coefficient for previous aid allocation is negative in all four periods, although its magnitude becomes much smaller in the third and fourth periods. The coefficient for the log of the number of airstrikes in the previous week is positive in the first period and negative in the later periods. The missing troop density indicator is negative throughout, with relatively small changes across periods.

Overall, the learned policy for the IED analysis treats fewer districts than the observed allocation while still reducing the estimated policy value. The coefficients suggest that the policy places the strongest weight on recent IED activity, together with troop density, airstrikes, previous aid allocation, and missing troop density information.

\begin{table}[!h]

\caption{Learned policy coefficients across four time periods ($M=4$), with the expected share of treated districts and coefficients for the policy covariates in the IED analysis. Here, W denotes previous aid allocation, IED denotes the number of IED attacks in the previous week, Troop denotes U.S. troop density, Troop (miss.) denotes an indicator for missing troop density data, and Airstrikes denotes the number of airstrikes in the previous week.}
\label{tab:policy_coefficients_ied}
\centering
 \resizebox{\textwidth}{!}{%
\begin{tabular}[t]{llrrrrrr}
\toprule
Lag & \% treated & Intercept & W & Troop & Airstrike (log) & Troop (miss.) & IED (log)\\
\midrule
1 & 27.9 & -17.693 & -8.784 & 32.113 & 1.086 & -4.868 & -100\\
2 & 26.7 & -17.475 & -8.689 & 31.714 & -11.773 & -3.141 & -100\\
3 & 25.2 & -29.922 & -0.234 & 36.354 & -12.314 & -3.188 & -100\\
4 & 23.7 & -29.524 & -0.631 & 36.178 & -12.051 & -2.578 & -100\\
\bottomrule
\end{tabular}
}
\end{table}

\subsection{SAF results}\label{a:subsec:SAF}

This section reports the SAF analysis using the simpler policy class without the population terms. It complements the IED results in Section \ref{subsec:app_res}. The policy covariates include previous aid allocation, U.S. troop density, the log of the number of airstrikes in the previous week, the missing troop density indicator, and the log of the number of SAF attacks in the previous week.

\Cref{fig:SAF_results} shows the spatial results for $M=1,2,3,4$. The learned policy assigns more aid to Anbar across all four horizons. The largest decreases in aid are mainly concentrated in Salah al-Din and Diyala. Ninewa and Basrah also appear among the governorates with larger reductions for some policy horizons. This pattern is fairly stable across $M$, although the set of governorates with the largest decreases changes slightly as the policy horizon increases.

The outcome results show that the largest reductions in expected SAF attacks are concentrated in Baghdad and Diyala. For $M=4$, Baghdad has the most pronounced reduction, about 57 fewer SAF attacks, and Diyala also shows a large reduction of about 20. Salah al-Din also has a visible reduction for $M=4$. The largest positive changes are much smaller than the main reductions, and their confidence intervals include zero.

\Cref{tab:policy_coefficients_saf} reports the learned policy coefficients across the four policy periods when $M=4$. The expected share of treated districts ranges from 13.7\% to 33.8\%. The coefficient for previous aid allocation is positive in all four periods, which suggests persistence in aid assignment. The coefficient for the log of the number of SAF attacks in the previous week is negative throughout, with especially large negative values in the middle periods. The coefficient for U.S. troop density is positive across all periods. The coefficient for the log of the number of airstrikes in the previous week is negative throughout.

The coefficient for the missing troop density indicator changes sign across periods. It is negative in the first and fourth periods, but positive in the second and third periods. Overall, the simpler SAF analysis gives results that are consistent with the main spatial pattern. The learned policy shifts aid toward Anbar, reduces aid in several governorates with high observed allocation, and is associated with lower expected SAF attacks, especially in Baghdad and Diyala.

\begin{figure}[!t]
\centering
\includegraphics[width = \textwidth,trim = 0 0 0 0, clip]{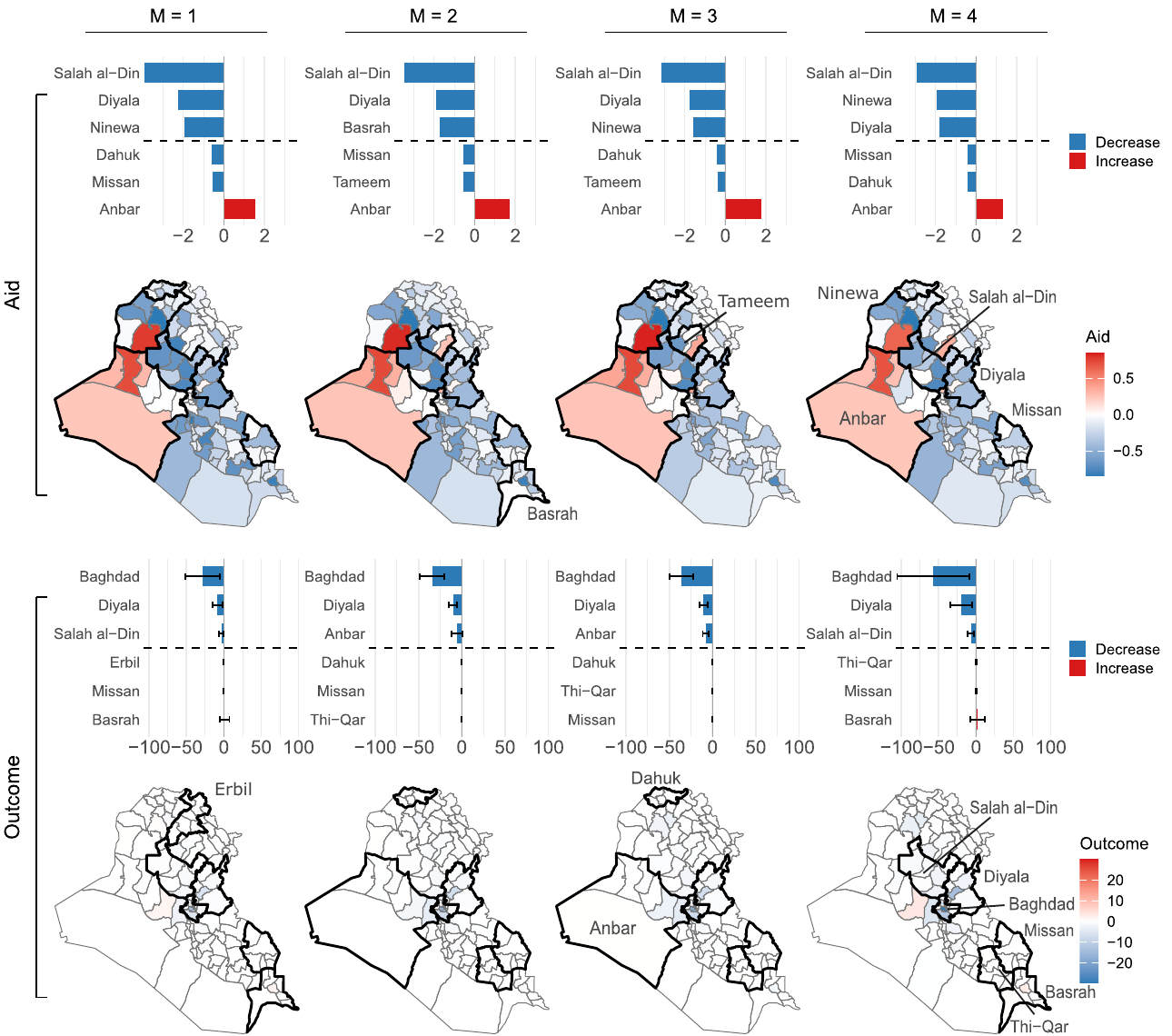}
\caption{Results for the analysis of SAF attacks. The four columns correspond to policies learned over horizons of $M=1,2,3,4$ weeks. The first two rows display the difference in average aid assignment between the learned policy and the observed allocation, while the last two rows show the difference in the expected number of IED attacks relative to observed outcomes. The bar plots highlight the three governorates with the largest decreases and the three with the smallest (or most negative) decreases, aggregated at the governorate level. For the outcome differences, the 95\% confidence intervals are also shown.} 
\label{fig:SAF_results}
\end{figure}

\begin{table}[!h]

\caption{Learned policy coefficients across four time periods ($M=4$), with the expected share of treated districts and coefficients for the policy covariates in the SAF analysis. Here, W denotes previous aid allocation, Troop denotes U.S. troop density, Airstrike (log) denotes the log-transformed number of airstrikes in the previous week, Troop (miss.) denotes an indicator for missing troop density data, and SAF (log) denotes the log transformed number of SAF attacks in the previous week.}
\label{tab:policy_coefficients_saf}
\centering
 \resizebox{\textwidth}{!}{%
\begin{tabular}[t]{llrrrrrr}
\toprule
Lag & \% treated & Intercept & W & Troop & Airstrike (log) & Troop (miss.) & SAF (log)\\
\midrule
1 & 20.1 & -100.000 & 59.290 & 21.416 & -89.150 & -1.309 & -93.363\\
2 & 26.4 & -76.192 & 42.489 & 40.300 & -50.268 & 7.901 & -100.000\\
3 & 33.8 & -86.411 & 53.934 & 38.322 & -83.395 & 43.805 & -100.000\\
4 & 13.7 & -100.000 & 54.601 & 9.744 & -51.225 & -26.464 & -81.042\\
\bottomrule
\end{tabular}
}
\end{table}

\subsection{Analysis with additional policy covariates}\label{a:subsec:morecovariates}

In this section, we repeat the application analysis with a larger policy class. The added covariates are log population density and the interaction term of log population density and the log of the number of attacks in the previous week. The interaction term uses IED attacks for the IED analysis and SAF attacks for the SAF analysis. The interaction term of the previous outcome and the missing troop density indicator is not included.

Table \ref{tab:policy_res_wpop} reports the estimated policy values for $M=1,2,3,4$. In the IED analysis, the estimated value is 1.64 for $M=1$, compared with the observed value of 2.215. For $M=4$, the estimated value is 6.17, compared with the observed value of 8.860. The 95\% confidence interval is [5.66, 6.68]. In the SAF analysis, the estimated value is 1.72 for $M=1$, compared with the observed value of 2.230. For $M=4$, the estimated value is 4.99, compared with the observed value of 8.919. The 95\% confidence interval is [1.21, 8.76]. Across both outcomes and all four values of $M$, the estimated policy values are lower than the observed values.

Figures \ref{fig:IED_results_wpop} and \ref{fig:SAF_results_wpop} show the spatial patterns from this larger policy class. For the IED analysis, the learned policy assigns more aid to Anbar across all four horizons. The increase ranges from 1.81 to 1.87. The largest reductions in aid appear in Salah al-Din, Basrah, and Babylon. For $M=4$, the changes are -3.36 in Salah al-Din, -1.67 in Basrah, and -1.24 in Babylon. The outcome results show the largest reductions in Baghdad, Salah al-Din, and Diyala. For $M=4$, the estimated changes are -28.5 in Baghdad, -15.8 in Salah al-Din, and -6.59 in Diyala. The largest positive change is much smaller, with Missan increasing by 0.24.

For the SAF analysis, Anbar also receives more aid under the learned policy. The increase is 1.55 for $M=1$ and 1.33 for $M=4$. Aid is reduced in Salah al-Din for all four horizons, with changes from -3.89 to -3.04. Diyala and Basrah also receive less aid for most horizons, and Ninewa appears among the largest reductions when $M=4$. The outcome results show the largest reductions in Baghdad and Diyala. For $M=4$, the estimated change is -56.6 in Baghdad and -19.2 in Diyala. Ninewa also decreases by -6.79, although its confidence interval includes zero. The largest positive changes are small relative to these reductions. Basrah increases by 2.04 for $M=4$, with a confidence interval that includes zero.

Tables \ref{tab:policy_coefficients_ied_wpop} and \ref{tab:policy_coefficients_saf_wpop} report the learned policy coefficients across the four time periods when $M=4$. In the IED analysis, the expected treated share decreases from 27.9\% in the first period to 23.7\% in the fourth period. The coefficient for the log of IED attacks in the previous week is at or very close to -100 in all four periods. The coefficient for U.S. troop density is positive throughout. The coefficient for log population density is positive in the first two periods, 3.516 and 3.189, and negative in the last two periods, -1.090 and -2.289. The interaction term of the log of IED attacks in the previous week and log population density is negative in all four periods, ranging from -18.347 to -12.965.

For the SAF analysis, the expected treated share is 19.8\%, 26.4\%, 28.5\%, and 13.7\% across the four periods. The coefficient for previous aid allocation is positive in all four periods, with values between 76.024 and 88.199. The coefficient for the log of SAF attacks in the previous week is -100.000 in the first three periods and -60.913 in the fourth period. The coefficient for U.S. troop density is positive in all four periods. The coefficient for the log of airstrikes in the previous week is negative throughout. The coefficients for log population density are also negative in all four periods. The interaction term of the log of SAF attacks in the previous week and log population density is negative in all four periods. Overall, adding the population terms changes some coefficient patterns, especially in the IED analysis, but the estimated policy values and spatial summaries still show lower expected outcomes under the learned policies than under the observed allocation.

\begin{table}[!h]
\centering
\caption{Estimated policy values ($\hat V$) and corresponding 95\% confidence
intervals for the learned optimal policies in the IED and SAF analyses
incorporating additional policy covariates, for $M=1,2,3,4$. Observed
denotes the average number of attacks across all time periods and spatial
units.}
\label{tab:policy_res_wpop}

\fontsize{10.5}{12}\selectfont
\setlength{\tabcolsep}{5pt}

\begin{tabular}[t]{crrrrrrrr}
\toprule
\multicolumn{1}{c}{ } & \multicolumn{4}{c}{IED} & \multicolumn{4}{c}{SAF} \\
\cmidrule(l{3pt}r{3pt}){2-5} \cmidrule(l{3pt}r{3pt}){6-9}
$M$ & $\hat V$ & Lower & Upper & Observed & $\hat V$ & Lower & Upper & Observed\\
\midrule
1 & 1.64 & 1.38 & 1.89 & 2.215 & 1.72 & 1.23 & 2.21 & 2.230\\
2 & 3.40 & 3.05 & 3.74 & 4.430 & 3.15 & 2.53 & 3.76 & 4.460\\
3 & 4.82 & 4.39 & 5.26 & 6.645 & 4.42 & 3.55 & 5.30 & 6.689\\
4 & 6.17 & 5.66 & 6.68 & 8.860 & 4.99 & 1.21 & 8.76 & 8.919\\
\bottomrule
\end{tabular}
\end{table}

\begin{figure}[!t]
\centering
\includegraphics[width = \textwidth,trim = 0 0 0 0, clip]{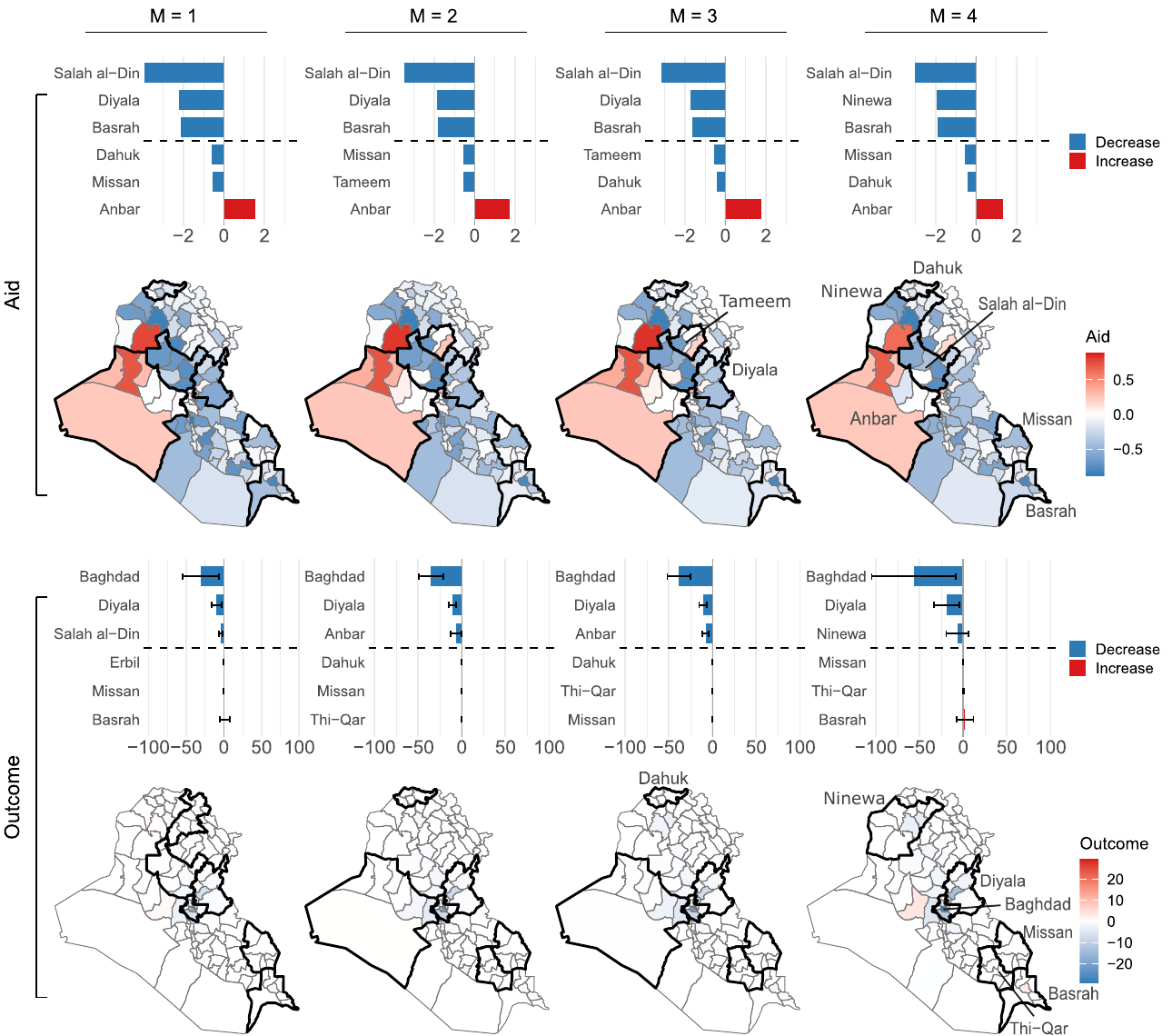}
\caption{Results for the analysis of SAF attacks which more policy covariates. The four columns correspond to policies learned over horizons of $M=1,2,3,4$ weeks. The first two rows display the difference in average aid assignment between the learned policy and the observed allocation, while the last two rows show the difference in the expected number of IED attacks relative to observed outcomes. The bar plots highlight the three governorates with the largest decreases and the three with the smallest (or most negative) decreases, aggregated at the governorate level. For the outcome differences, the 95\% confidence intervals are also shown.} 
\label{fig:SAF_results_wpop}
\end{figure}

\begin{figure}[!t]
\centering
\includegraphics[width = \textwidth,trim = 0 0 0 0, clip]{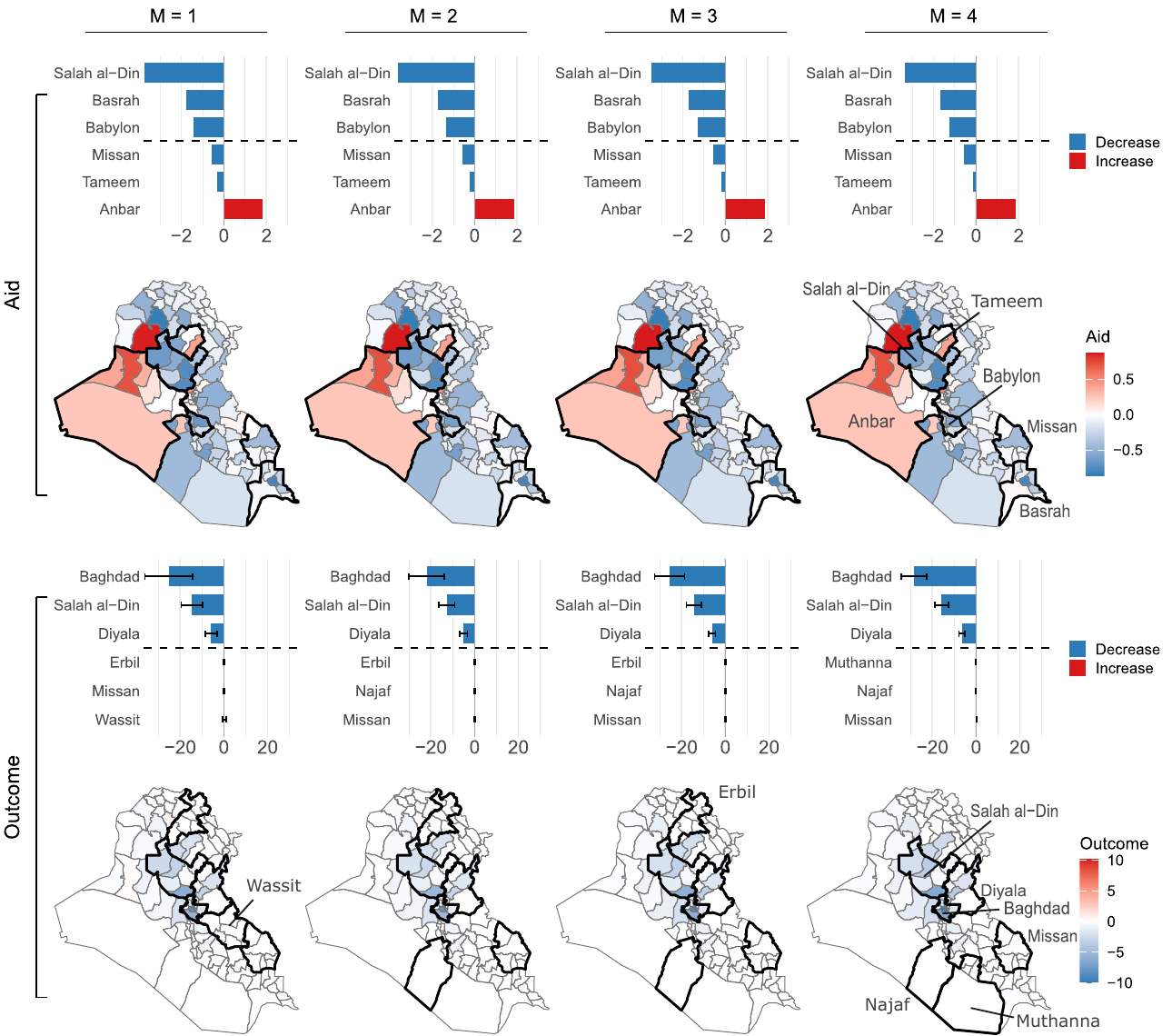}
\caption{Results for the analysis of IED attacks which includes more policy covariates. The four columns correspond to policies learned over horizons of $M=1,2,3,4$ weeks. The first two rows display the difference in average aid assignment between the learned policy and the observed allocation, while the last two rows show the difference in the expected number of SAF attacks relative to observed outcomes. The bar plots highlight the three governorates with the largest decreases and the three with the smallest (or most negative) decreases, aggregated at the governorate level. For the outcome differences, the 95\% confidence intervals are also shown.} 
\label{fig:IED_results_wpop}
\end{figure}

\begin{table}[!h]
  \centering
  \footnotesize
  \setlength{\tabcolsep}{3pt}
  \caption{Learned policy coefficients across four time periods ($M=4$), with the expected share of treated districts and coefficients for the expanded policy covariates in the IED analysis. Here, W denotes previous aid allocation, Troop denotes U.S. troop density, Airstrike (log) denotes the log of the number of airstrikes in the previous week, Troop (miss.) denotes an indicator for missing troop density, IED (log) denotes the log of the number of IED attacks in the previous week, Pop denotes log population density, and IED (log) x Pop denotes the interaction between IED attacks in the previous week and log population density.}
  \label{tab:policy_coefficients_ied_wpop}
  \resizebox{\textwidth}{!}{%
\begin{tabular}[t]{llrrrrrrrr}
\toprule
Lag & \% treated & Intercept & W & Troop & Airstrike (log) & Troop (miss.) & IED (log) & Pop & IED (log) x Pop\\
\midrule
1 & 27.9 & -100.000 & -21.944 & 100.000 & 0.302 & -4.869 & -100.000 & 3.516 & -18.190\\
2 & 26.9 & -96.204 & -21.716 & 100.000 & -35.922 & -3.152 & -99.951 & 3.189 & -18.347\\
3 & 25.2 & -68.701 & -0.753 & 100.000 & -28.935 & -3.188 & -100.000 & -1.090 & -15.904\\
4 & 23.7 & -42.543 & -0.672 & 87.234 & -25.672 & -2.578 & -100.000 & -2.289 & -12.965\\
\bottomrule
    \end{tabular}%
  }
\end{table}

\begin{table}[!h]
\footnotesize
  \setlength{\tabcolsep}{3pt}
\caption{Learned policy coefficients across four time periods ($M=4$), with the expected share of treated districts and coefficients for the expanded policy covariates in the SAF analysis. Here, W denotes previous aid allocation, Troop denotes U.S. troop density, Airstrike (log) denotes the log-transformed number of airstrikes in the previous week, Troop (miss.) denotes an indicator for missing troop density data, SAF (log) denotes the log transformed number of SAF attacks in the previous week, Pop denotes log population density, and SAF (log) x Pop denotes the interaction between previous SAF attacks and log population density.}
\label{tab:policy_coefficients_saf_wpop}
\centering
\resizebox{\textwidth}{!}{%
\begin{tabular}[t]{llrrrrrrrr}
\toprule
Lag & \% treated & Intercept & W & Troop & Airstrike (log) & Troop (miss.) & SAF (log) & Pop & SAF (log) x Pop\\
\midrule
1 & 19.8 & -49.747 & 80.167 & 28.548 & -100.000 & -1.309 & -100.000 & -7.499 & -2.049\\
2 & 26.4 & -99.743 & 76.024 & 67.532 & -85.105 & 7.901 & -100.000 & -2.530 & -5.944\\
3 & 28.5 & -10.718 & 88.083 & 56.104 & -100.000 & 60.841 & -100.000 & -11.101 & -5.537\\
4 & 13.7 & -100.000 & 88.199 & 17.038 & -81.397 & -26.464 & -60.913 & -6.149 & -7.057\\
\bottomrule
\end{tabular}
}
\end{table}

\end{document}